\documentclass[graybox,envcountchap,envcountsame,sectrefs]{svmono}
\usepackage[T1]{fontenc}
\usepackage[utf8]{inputenc}
\usepackage{longtable}
\usepackage{type1cm}
\usepackage{makeidx}         
\usepackage{graphicx}
\usepackage{multicol}        
\usepackage[bottom]{footmisc}
\usepackage{etoolbox}
\undef{\warning}
\usepackage{fourier}
\usepackage[american,provide=*]{babel}
\usepackage{theorem}
\usepackage{index}
\usepackage{color,xcolor}
\usepackage[colorlinks=true,hyperindex=true]{hyperref}
\colorlet{darkblue}{blue!50!black}
\hypersetup{
    colorlinks,%
    citecolor=darkblue,%
    filecolor=red,%
    linkcolor=darkblue,%
    urlcolor=darkblue,%
    pdfnewwindow=true,%
    pdfstartview={FitH}
}
\usepackage{fancyhdr}
\usepackage{amsmath}
\usepackage{bbm} 
\usepackage{bm} 
\usepackage{amssymb}
\usepackage{amsfonts}
\usepackage{mathtools}
\usepackage{enumerate}
\usepackage{enumitem}
\usepackage{xspace}
\usepackage{xparse}
\usepackage{xargs}
\usepackage{needspace}
\usepackage[prependcaption]{todonotes}
\usepackage[style=cap-alpha,texencoding=UTF-8,refsection=chapter]{biblatex}
\def\bep{\begin{proposition}}
\def\eep{\end{proposition}}
\def\bel{\begin{lemma}}
\def\eel{\end{lemma}}
\def\bet{\begin{theorem}}
\def\eet{\end{theorem}}
\def\bed{\begin{definition}}
\def\eed{\end{definition}}
\def\bec{\begin{corollary}}
\def\eec{\end{corollary}}
\def\bex{\begin{example}}
\def\eex{\end{example}}
\newcommand{\beq}{\begin{equation}}
\newcommand{\eeq}{\end{equation}}
\newcommand{\bear}[1]{\begin{array}{#1}}
\newcommand{\ear}{\end{array}}
\newcommand{\ben}{\begin{enumerate}[label=\upshape{\textbf{(\arabic*)}}]}
\newcommand{\een}{\end{enumerate}}
\renewcommand{\part}[1]{\Needspace{5\baselineskip}\medskip\noindent\textbf{#1}\;}
\newcommand{\mypar}[1]{\medskip\noindent{\bf #1.\;}}
\RenewDocumentCommand{\remark}{o}{%
\IfNoValueTF{#1}%
{\medskip\noindent\textbf{Remark. }}%
{\medskip\noindent\textbf{Remark #1.}}}
\RenewDocumentCommand{\proof}{o}{%
\IfNoValueTF{#1}%
{\medskip\noindent\textbf{Proof. }}%
{\medskip\noindent\textbf{Proof #1.}}}
\newcommand{\ds}{\displaystyle}

\let\oldcaption\caption
\renewcommand{\caption}[1]{\oldcaption{\normalsize #1}}
\AtBeginEnvironment{theorem}{\Needspace{5\baselineskip}}
\AtBeginEnvironment{proposition}{\Needspace{3\baselineskip}}
\AtBeginEnvironment{definition}{\Needspace{5\baselineskip}}
\AtBeginEnvironment{corollary}{\Needspace{5\baselineskip}}
\AtBeginEnvironment{lemma}{\Needspace{5\baselineskip}}
\AtBeginEnvironment{quote}{\Needspace{5\baselineskip}}
\renewcommand{\d}{\mathrm{d}}  \newcommand{\e}{\mathrm{e}}  \renewcommand{\i}{\mathrm{i}}
\renewcommand{\qed}{\hfill$\Box$}
\newcommand{\bra}{\langle}
\newcommand{\ket}{\rangle}
\newcommand{\Id}{\operatorname{Id}}
\renewcommand{\Re}{\operatorname{Re}}

\newcommand{\Dom}{\operatorname{Dom}}
\newcommand{\Fix}{\operatorname{Fix}}

\newcommand{\dist}{\operatorname{dist}}
\newcommand{\supp}{\operatorname{supp}}
\newcommand{\tr}{\operatorname{tr}}

\newcommand{\wlim}{\operatorname*{w-lim}}
    
  \newcommand{\bK}{\mathbf{K}}

\newcommand{\bF}{\mathbf{F}}

\newcommand{\bI}{\mathbf{I}}  

\newcommand{\bsI}{{\boldsymbol{I}}}  
\newcommand{\bsa}{{\boldsymbol{a}}}    \newcommand{\bss}{{\boldsymbol{s}}}
\newcommand{\bsb}{{\boldsymbol{b}}}    \newcommand{\bst}{{\boldsymbol{t}}}
    \newcommand{\bsu}{{\boldsymbol{u}}}
    
  \newcommand{\bsn}{{\boldsymbol{n}}}  
    \newcommand{\bsx}{{\boldsymbol{x}}}

  \newcommand{\bsr}{{\boldsymbol{r}}}
\newcommand{\cA}{\mathcal{A}}  \newcommand{\cJ}{\mathcal{J}}  \newcommand{\cS}{\mathcal{S}}
\newcommand{\cB}{\mathcal{B}}    \newcommand{\cT}{\mathcal{T}}
\newcommand{\cC}{\mathcal{C}}  \newcommand{\cL}{\mathcal{L}}  
\newcommand{\cD}{\mathcal{D}}    \newcommand{\cV}{\mathcal{V}}
\newcommand{\cE}{\mathcal{E}}    
\newcommand{\cF}{\mathcal{F}}  \newcommand{\cO}{\mathcal{O}}  \newcommand{\cX}{\mathcal{X}}
\newcommand{\cG}{\mathcal{G}}  \newcommand{\cP}{\mathcal{P}}  
    \newcommand{\cZ}{\mathcal{Z}}
  \newcommand{\cR}{\mathcal{R}}
  \newcommand{\fJ}{\mathfrak{J}}  \newcommand{\fS}{\mathfrak{S}}

	\newcommand{\fM}{\mathfrak{M}}	
		
\newcommand{\fF}{\mathfrak{F}}		\newcommand{\fX}{\mathfrak{X}}
	\newcommand{\fP}{\mathfrak{P}}	\newcommand{\fY}{\mathfrak{Y}}
		
	\newcommand{\fR}{\mathfrak{R}}
    \newcommand{\fs}{\mathfrak{s}}

                                \newcommand{\fp}{\mathfrak{p}}	
\newcommand{\fh}{\mathfrak{h}}		
                          	    
\newcommand{\BB}{\mathbb{B}}     
\newcommand{\CC}{\mathbb{C}}     
   \newcommand{\MM}{\mathbb{M}}  \newcommand{\VV}{\mathbb{V}}
\newcommand{\EE}{\mathbb{E}}   \newcommand{\NN}{\mathbb{N}}  
     
   \newcommand{\PP}{\mathbb{P}}  
   \newcommand{\QQ}{\mathbb{Q}}  \newcommand{\ZZ}{\mathbb{Z}}
\newcommand{\II}{\mathbb{I}}   \newcommand{\RR}{\mathbb{R}}
\newcommand{\Ent}{\mathop{\mathrm{Ent}}\nolimits}

\newcommand{\ep}{\mathop{\mathrm{ep}}\nolimits}
\newcommand{\epr}{\mathrm{epr}}
\newcommand{\wP}{\widehat{\mathbb{P}}}
\newcommand{\wE}{\widehat{\mathbb{E}}}
\newcommand{\st}{\mathrm{st}}
\newcommand{\Orb}{\mathop{\mathrm{Orb}}\nolimits}
\newcommand{\argdot}{{\bm\cdot}}

\newcommand{\ndex}[1]{{\sl #1}\index{#1}}
\newcommand{\hC}{\hyperlink{hyp.C}{\textbf{(C)}}\xspace}
\newcommand{\hH}{\hyperlink{hyp.H}{\textbf{(H)}}\xspace}
\newcommand{\hWPS}{\hyperlink{hyp.WPS}{\textbf{(WPS)}}\xspace}
\newcommand{\hS}{\hyperlink{hyp.S}{\textbf{(S)}}\xspace}
\newcommand{\hUSCE}{\hyperlink{hyp.USCE}{\textbf{(USCE)}}\xspace}
\newcommand{\hED}{\hyperlink{hyp.ED}{\textbf{(ED)}}\xspace}
\newcommand{\hPAP}{\hyperlink{hyp.PAP}{\textbf{(PAP)}}\xspace}
\makeindex

\begin{document}
\pagenumbering{Alph}
\author{No\'e Cuneo\\
Vojkan Jak\v si\'c\\
Claude-Alain Pillet\\
Armen Shirikyan}
\title{What is a Fluctuation Theorem?}
\def\today{}
\maketitle
\frontmatter
\mathtoolsset{mathic}

\begin{dedication}
Dedicated to David Ruelle, on the occasion of his 90th birthday
\end{dedication}


\preface

Fluctuation Relations (FRs for short) describe statistical features of the
fluctuations of some physical quantities in systems subject to a discrete
symmetry. They provide quantitative information on the spontaneous breakdown of
this symmetry. FRs were first discovered in numerical experiments on shear
flows, in the early 1990s, by D.J.~Evans, E.G.D.~Cohen and
G.P.~Morris.\footnote{Phys. Rev. Lett. 71, 2401--4 (1993).} There, the quantity
of interest was entropy production and the symmetry was time-reversal. Its
breakdown led to the emergence of the arrow of time, and the FRs gave
quantitative estimates on the probability of violation of the second law of
thermodynamics. D.J.~Evans and D.J.~Searles\footnote{Phys. Rev. E 50, 1645--1648
(1994).} subsequently showed that, on finite time intervals, the rate of entropy
production of time-reversal invariant systems relaxing to a nonequilibrium
steady state satisfies a FR (such FRs are now called transient). The last and
crucial pioneering step was achieved by G.~Gallavotti and
E.G.D.~Cohen\footnote{Phys. Rev. Lett. 74, 2694--7 (1995).} who proved that,
under an appropriate chaotic hypothesis, FRs also hold in the steady states of
time-reversal invariant systems. These early discoveries and developments
generated an enormous body of numerical, theoretical and experimental works in
the physics literature which have fundamentally modified our understanding of
non-equilibrium statistical physics, with applications extending to chemistry
and biology.

This monograph, addressed to mathematical physicists,  is an introduction to FRs
and the associated Fluctuation Theorems in the context of statistical mechanics,
stochastic dynamics and dynamical systems.

\vspace{\baselineskip}
\begin{flushright}\noindent
Paris, Milan, Toulon, Cergy\hfill {\it Noé Cuneo}\\
April 2025\hfill {\it Vojkan Jak\v si\'c}\\
\hfill {\it Claude-Alain Pillet}\\
\hfill {\it Armen Shirikyan}
\end{flushright}

\newpage
\thispagestyle{empty}
\noindent
Noé Cuneo\\
Université Paris Cité and Sorbonne Université, CNRS,\\
Laboratoire de Probabilités, Statistique et Modélisation,\\
75013 Paris, France

\medskip\noindent
Vojkan Jak\v si\'c\\
Dipartimento di Matematica\\
Politecnico di Milano\\
Piazza Leonardo da Vinci, 32\\
20133 Milan, Italy

\medskip\noindent
Claude-Alain Pillet\\
Centre de Physique Théorique\\
Université de Toulon --- CNRS --- Aix-Marseille Université\\
83957 La Garde, France

\medskip\noindent
Armen Shirikyan\\
Département de Mathématiques\\
CY Cergy Paris Université --- CNRS\\
2, avenue Adolphe Chauvin\\
95302 Cergy-Pontoise, France

\extrachap{Acknowledgements}

We are grateful to R.~Chetrite and H.~Comman for providing us with useful
references, and to L.~Bruneau and C.-E.~Pfister for helpful comments.

\smallskip 
The work of  AS,  CAP, and VJ was partly funded by the CY Initiative grant {\sl
Investissements d'Avenir}, grant number ANR-16-IDEX-0008. We also acknowledge
the support of the ANR project {\sl DYNACQUS}, grant number ANR-24-CE40-5714. VJ
acknowledges the support of NSERC, Canada and the support of the MUR grant {\em Dipartimento di Eccellenza 2023-2027} of Dipartimento di Matematica, Politecnico di Milano.

\smallskip 

This document is the preprint version of Volume 54 of the {\em SpringerBriefs in Mathematical Physics} collection. The final authenticated version is available online at: \url{https://doi.org/10.1007/978-3-032-02095-6}
\tableofcontents
\mainmatter
\pagenumbering{arabic}
\chapter{Introduction}
\label{chap:Introduction}

\section{Historical Overview}

Despite being random, fluctuations in physical systems do carry useful
information. One can hardly overestimate the impact of the earliest and probably
best known application of this fact: the determination of Avogadro's constant
$N_A$ by Perrin~\cite{Perrin1908,Perrin1909b} on the basis of a statistical
analysis of \ndex{Brownian motion}.\footnote{The interested reader should
consult~\cite[Chapter~8]{Gallavotti1999a} for a comprehensive account on the
theory of Brownian motion and its history.}\;The theoretical foundation of
Perrin's experiment was the Einstein--Smoluchowski
relation~\cite{Einstein1905a,Smoluchowski1906}, linking the mobility $\mu$ of a
tracer particle immersed in a fluid to the diffusion constant $D$,
\begin{equation}
D=\mu k_BT,\qquad k_B=\frac{R}{N_A},
\label{eq:Einstein}
\end{equation}
where $T$ is the equilibrium temperature of the fluid, $k_B$ is the  Boltzmann
constant, and $R$ is the ideal gas constant. In the modern language of
statistical mechanics, this is a \ndex{fluctuation--dissipation relation} which,
as the name indicates, relates the fluctuations of some observable, typically
encoded in a correlation function, to some quantity linked to the dissipative
properties of the system. In the case of Brownian motion, the left-hand side
of~\eqref{eq:Einstein} is related to the fluctuations of the velocity $\dot X$
of the tracer by\footnote{Here, $\langle\,\argdot\,\rangle$
denotes the ensemble
average. The factor of $3$ is the dimension of the configuration space $\RR^3$,
and "$\,\cdot\,$" denotes the Euclidean inner product.}
$$
3D=\int_0^\infty\langle\dot X_0\cdot \dot X_t\rangle\d t.
$$
The fluid's viscosity is responsible for the dissipative drag force,
$F=-\mu^{-1}\dot X_t$, opposed to the motion of the tracer. For a spherical
tracer of radius $r$, the mobility appearing on the right-hand side
of~\eqref{eq:Einstein} is related to the dynamic viscosity $\eta$ of the fluid
by Stokes' formula
$$
\mu=\frac{1}{6\pi\eta r}.
$$
We also mention that the velocity power spectrum $S$,
$$
S(\omega)\equiv\lim_{T\to\infty}\frac1T\langle|\hat V_T(\omega)|^2\rangle,\qquad
\hat V_T(\omega)=\int_{0}^T\dot X_t\e^{-\i\omega t}\d t,
$$
is linked to the  correlation function by the Wiener--Khinchin relation~\cite{Khintchine1934},
$$
\langle\dot X_0\cdot \dot X_t\rangle=\int_{-\infty}^\infty S(\omega)\e^{\i\omega t}\frac{\d\omega}{2\pi}.
$$
Since the instantaneous power dissipated by the Stokes force is given by the
quadratic expression $P_t=\mu^{-1}|\dot X_t|^2$, its steady averaged value is
related to the power spectrum as
$$
\langle P_0\rangle=\frac1\mu\int_{-\infty}^\infty S(\omega)\frac{\d\omega}{2\pi}.
$$

\medskip
The \ndex{Johnson–Nyquist noise}~\cite{Johnson1928,Nyquist1928} --- the electrical
counterpart of Brownian motion --- was discovered a century after Brown's
observations~\cite{Brown1828}. The electromotive force $U$ induced by the
thermally generated motion of charge carriers in a resistor of resistance $R$
kept at a constant temperature $T$ has a power spectrum given by the celebrated
Nyquist formula~\cite[Section~XI.2]{Kampen1981}
$$
S(\omega)=2k_BTR.
$$
Invoking the Wiener--Khinchin relation yields the fluctuation--dissipation relation
$$
\int_0^\infty\langle U_0U_t\rangle\d t=2Rk_BT.
$$

In nonequilibrium thermodynamics, \ndex{linear response theory} describes the reaction
of a system driven out of equilibrium by weak external forces~\cite[Chapters~III-IV]{Groot1962}.
Distinctive features of this reaction are the induced currents (particles species,
heat, charges) $J_i$ and the associated \ndex{entropy production rate}
$$
\sigma=\sum_iJ_iX_i,
$$
where $X_i$ is the thermodynamic force, or affinity, conjugated to the current
$J_i$. In the linear response regime, the currents are linear functions of the
forces
$$
J_j=\sum_iL_{ji}X_i.
$$
The $L_{ji}$ are called transport coefficients, and the second law of thermodynamics
requires
$$
\sigma=\sum_{i,j}L_{ij}X_iX_j\ge0.
$$
In other words,  the symmetric part of the matrix $L$ is non-negative: $L+L^T\ge0$.

A few years after the aforementioned works of Johnson and Nyquist, Onsager
developed an approach to the transport coefficients based on the general theory
of fluctuations laid down by Einstein~\cite{Einstein1910} on the basis of
Boltzmann's entropy formula. The main result of these
investigations~\cite{Onsager1931a,Onsager1931b} is a general formulation of the
so-called \ndex{reciprocity relations}, the first instances of which were discovered by
Thomson (Lord Kelvin), Helmholtz and others: if the system satisfies microscopic
reversibility, then the matrix $L$ is symmetric, $L=L^T\ge0$.

Two decades later, invoking quantum mechanics, Callen and Welton derived a
general form of the Nyquist formula~\cite{Callen1951}, including a low
temperature quantum correction, while Green and Kubo~\cite{Green1952,Kubo1957}
derived a general relation between transport coefficients and current-current
correlation functions, since then called the {\sl Green--Kubo formula},\index{Green--Kubo formula}
\begin{equation}
L_{ij}=\frac12\int_{-\infty}^\infty\langle\Phi_i(0)\Phi_j(t)\rangle_{\mathrm{eq}}\d t,
\label{equ:GKFDR}
\end{equation}
where $\Phi_j(t)$ denotes the instantaneous current observable at time $t$ and
$\langle\,\argdot\,\rangle_\mathrm{eq}$ the thermal equilibrium state.

To the best of our knowledge, Bochkov and Kuzovlev were the first to investigate
the formal consequences of microscopic reversibility on the statistical
properties of fluctuations {\sl away from thermal
equilibrium}~\cite{Bochkov1977}. They consider a classical system governed by
the Hamiltonian $H_0$, assumed to be invariant under a time-reversal map, {\sl
i.e.,} $H_0=H_0\circ\theta$ for some anti-symplectic involution $\theta$ of the
phase space. The system can be acted upon through external ``forces'' $X_i$,
with the associated Hamiltonian
$$
H(X)=H_0-\sum_iX_iQ_i,
$$
where  $Q_i$'s are  phase functions that are either even or odd under time-reversal,
$$
Q_i\circ\theta=\theta_iQ_i,\qquad \theta_i=\pm1.
$$
Given a steady forcing $X$, the state of thermal equilibrium at temperature $T$
is described by  the Gibbs probability measure
$$
\d\mu_{X}=\e^{-\beta(H(X)-F(X))}\d\gamma,\qquad F(X)=-\frac1\beta\log\int\e^{-\beta H(X)}\d\gamma,
$$
where $\beta=1/k_BT$, $\gamma$ denotes Liouville's measure on phase space, and $F(X)$
the Helmholtz free energy.

The system, initially in the state $\mu_0$, is driven out of thermal equilibrium
by a time-dependent forcing $X:[0,\tau]\ni t\mapsto X(t)$. We denote by
$[0,\tau]\ni t\mapsto\Phi^t(\,\argdot\,|X)$ the flow generated by the
time-dependent Hamiltonian $[0,\tau]\ni t\mapsto H(X(t))$, and by
$$
\Phi_X:x\mapsto\left(\Phi^t(x|X)\right)_{t\in[0,\tau]}
$$
the induced map between initial conditions and solutions of the Hamilton equation
of motion.

Let $\PP_X=\mu_0\circ\Phi_X^{-1}$ be the probability measure induced by $\mu_0$
on the set of phase space paths $(x_t)_{t\in[0,\tau]}$, and denote by
$\Theta:(x_t)_{t\in[0,\tau]}\mapsto(\theta x_{\tau-t})_{t\in[0,\tau]}$ the
time-reversal map. Bochkov and Kuzovlev derived the relation\footnote{Strictly
speaking, their main result, \cite[Equ.~(7)]{Bochkov1977}, is slightly less
general. See the appendix to this introduction for a derivation
of~\eqref{equ:BKFR} and comparison with other fluctuation relations.}\index{FR!Bochkov--Kuzovlev}
\begin{equation}
\frac{\d\PP_X}{\d\PP_X\circ\Theta}=\e^{S_\tau},
\label{equ:BKFR}
\end{equation}
where
$$
S_\tau=\sum_i\int_0^\tau J_i(t)X_i(t)\d t=\beta W_\tau^\mathrm{ex},
$$
$J_i(t)=\beta\partial_tQ_i\circ\Phi^t$ is the current conjugated to $X_i$, and
$W_\tau^\mathrm{ex}=H_0\circ\Phi^\tau-H_0$ is the increase of the internal
energy $H_0$ during the period $[0,\tau]$. For later reference, we  mention
that $W_\tau^\mathrm{ex}$ is sometimes called the {\sl exclusive work}\index{work!exclusive} performed
on the system by the external forces. The term ``exclusive'' is motivated by the
fact that the free Hamiltonian $H_0$, and not the full one $H(X(t))$ is used in
the energy bookkeeping. The {\sl inclusive work}\index{work!inclusive}
$$
W_\tau^\mathrm{in}=H(X(\tau))\circ\Phi^\tau-H(X(0))
=-\sum_i\int_0^\tau Q_i(t)\dot X_i(t)\d t
$$
will enter the scene below. Rewriting~\eqref{equ:BKFR} as
$\e^{-S_\tau}\d\PP_X=\d\PP_X\circ\Theta$ leads to
\begin{equation}
\int\e^{-\beta W_\tau^\mathrm{ex}}\d\PP_X=1.
\label{equ:BKIFR}
\end{equation}
For an obvious reason, such an identity is often referred to as an {\sl integral
fluctuation relation,} to distinguish it from a more {\sl detailed fluctuation
relation} of the form~\eqref{equ:BKFR}.\index{FR!integral/detailed}

From~\eqref{equ:BKFR}, Bochkov and Kuzovlev further deduced the existence of an
infinite hierarchy of fluctuation--dissipation relations, the Green-Kubo
formula~\eqref{equ:GKFDR} being just a limiting form of one of them. A simple
extension of~\eqref{equ:BKFR} to more general initial
states~\cite{Bochkov1979,Bochkov1981a} allowed them to deal also with thermally
forced systems and to derive fluctuation relations for nonequilibrium steady
states of open systems.

A decade later, while investigating microscopic violations of the $2^\text{nd}$
Law in numerical simulations of sheared viscous fluids, Evans, Cohen and
Morriss~\cite{Evans1993} discovered a fluctuation relation formally similar
to~\eqref{equ:BKFR}. Their model consists of a many particles system driven by
non-conservative forces and coupled to a Gaussian thermostat to prevent heating
up and ensure relaxation toward a nonequilibrium steady state. As pointed out by
Gallavotti and Cohen~\cite{Cohen1999}, a precise statement of this new relation
writes
\begin{equation}
\lim_{\tau\to\infty}\frac1\tau\log\frac{\d P_\tau}{\d P_\tau\circ\vartheta}(s)=s,
\label{eq:GCFR}
\end{equation}
where $P_\tau$ denotes the law of the entropy production rate\footnote{In the
model considered, entropy production rate is simply the phase-space contraction
rate.} over the time period $[0,\tau]$ {\sl induced by the steady state} of the
system, and the map $\vartheta$ time-reversal at the entropic level, {\sl i.e.,}
$\vartheta(s)=-s$. Evans and Searles~\cite{Evans1994} derived a relation for
transient states of thermostated systems that  is  formally similar
to~\eqref{equ:BKFR}. In their version of this relation, the measure $\PP_\tau$
is induced by a time-reversal invariant initial state (e.g., Liouville's measure
or the Gibbs state $\mu_0$) and pertains to a trajectory segment of duration
$\tau$. The time-reversal $\Theta$ maps a trajectory with entropy production
$S_\tau$ to a trajectory with the reversed entropy production $-S_\tau$, and in
this case
\[
\frac1\tau\log\frac{\d P_\tau}{\d P_\tau\circ\vartheta}(s)=s
\]
holds for all $\tau$. Relations of this kind are often referred to as {\sl
Evans--Searles,} {\sl transient} or {\sl finite-time fluctuation relations,} due\index{FR!transient/finite-time}
to the fact that they hold on any time intervals $[0,\tau]$ of finite durations,
with a time-reversal invariant initial state, as opposed to the {\sl
Gallavotti--Cohen} or {\sl steady state fluctuation relations}~\eqref{eq:GCFR},\index{FR!steady-state}
which only hold in the limit $\tau\to\infty$, with a nonequilibrium steady state
typically breaking time-reversal symmetry. While transient fluctuation relations
usually follow from simple calculations, their steady state counterparts are
much deeper and rely on a precise analysis of the relaxation process leading to
the steady state or on  detailed properties of the latter. Gallavotti and
Cohen~\cite{Gallavotti1995b,Gallavotti1995c} proved that the steady state
fluctuation relation holds for systems with sufficiently chaotic dynamics, more
precisely for the SRB measure of Anosov dynamical systems.
Gallavotti~\cite{Gallavotti1995d} further showed how, in such cases, large
deviation estimates can be related to the fluctuation relation, giving rise to
what we shall call a {\sl fluctuation theorem.}

Unaware of the works of Bochkov and Kuzovlev, but working in the same
Hamiltonian setup, Jarzynski~\cite{Jarzynski1997a,Jarzynski2004,Jarzynski2007}
derived the integral fluctuation relation resulting from the replacement of the
exclusive work $W_\tau^\mathrm{ex}$ by the inclusive one $W_\tau^\mathrm{in}$
in~\eqref{equ:BKIFR}, namely\index{FR!Jarzynski--Crooks}\index{work!inclusive}
\begin{equation}
\int\e^{-\beta W_\tau^\mathrm{in}}\d\PP_X=\e^{-\beta(F(X(\tau))-F(X(0)))},
\label{equ:JarIntFR}
\end{equation}
where $\PP_X=\mu_{X(0)}\circ\Phi_X^{-1}$. What makes Jarzynski's fluctuation
relation particularly interesting is that its two sides have a radically
distinct nature. On the left-hand side, one has the ensemble average of a
quantity depending on the work done by a non-reversible process transforming the
equilibrium state $\mu_{X(0)}$ into the nonequilibrium state
$\mu_{X(0)}\circ(\Phi^{\tau})^{-1}$. On the right-hand side, the free energy
difference $\Delta F=F(X(\tau))-F(X(0))$ is the reversible work, {\sl i.e.,} the
work that has to be done on the system to bring it from the initial equilibrium
state $\mu_{X(0)}$ to the final one $\mu_{X(\tau)}$ by an arbitrary reversible
process. This connection between a nonequilibrium process and equilibrium
quantities provides an opportunity to measure the latter,
see~\cite{Jarzynski2008} and references therein for examples of such
applications.

Jarzynski, Crooks and Kurchan~\cite{Jarzynski1997b,Crooks1998,Kurchan1998}
extended the derivation of the above transient and integral fluctuation
relations from the deterministic Hamiltonian setting to stochastic dynamics,
while Lebowitz and Spohn developed  the Gallavotti--Cohen steady-state
fluctuation theorem~\cite{Lebowitz1999} in the Markovian setting.
Maes~\cite{Maes1999} interpreted fluctuation theorem within Gibbs formalism. A
detailed version of Jarzynski's integral relation~\eqref{equ:JarIntFR} was
obtained by Crooks~\cite{Crooks1999} in the stochastic case, and by
Jarzynski~\cite{Jarzynski2000} in the deterministic case. The results  of
Gallavotti--Cohen~\cite{Gallavotti1995b,Gallavotti1995c} were extended to
general chaotic dynamical systems on compact metric spaces in~\cite{Maes2003b}.

The difficulties involved with experimental studies of the fluctuation relations
explain, at least partly, why the works of Bochkov and Kuzovlev remained largely
unnoticed~\cite{Pitaevskii2011}. In this respect, two decades later, new
techniques became available, and the subject generated an enormous body of
theoretical, numerical and experimental works which have fundamentally altered
our understanding of non-equilibrium physics, with applications extending to
chemistry and biology. For reviews of these various developments we refer the
reader to~\cite{Ruelle1999,Evans2002,Jiang2004,Gaspard2005,Rondoni2007,Marconi2008,Jaksic2011}

\section{Appendix: Hamiltonian Fluctuation Relations}

We finish this introduction with a formal derivation of various forms of
detailed fluctuation relations for Hamiltonian dynamics. The setup is similar to
that of~\cite{Bochkov1977} and~\cite{Jarzynski2000}, and any necessary
regularity is assumed without notice.

We consider a system with phase space $\Gamma$, equipped with its symplectic
form, Liouville measure $\gamma$, and anti-symplectic time-reversal map
$\theta$. We assume that the system Hamiltonian $\Gamma\ni x\mapsto H(x|X)$
depends on some control parameters $X=(X_1,\ldots,X_n)\in\RR^n$, describing
adjustable conditions of the model, e.g., some applied voltage bias, the
coupling strengths between some of its subsytems, {\sl etc.} A protocol
is a scenario specifying how these controls vary in the course of time.

For a fixed time $\tau$ and given the protocol
\begin{equation}
X:[0,\tau]\to\RR^n,
\label{eq:Xdef}
\end{equation}
we denote by $[0,\tau]\ni t\mapsto\Phi^t(\,\argdot\,|X)$ the phase space flow
associated with the time-dependent Hamiltonian $[0,\tau]\ni t\mapsto
H(\,\argdot\,|X(t))$. We also assume the time-reversal covariance of the
Hamiltonian,
\begin{equation}
H(\theta(x)|X)=H(x|\vartheta(X)),
\label{eq:HthetaInv}
\end{equation}
where $\vartheta$ is a linear involution of $\RR^n$, and define the
time-reversal of the protocol~\eqref{eq:Xdef} by
$$
\widehat{X}:[0,\tau]\mapsto\vartheta(X(\tau-t)).
$$

For a given protocol $X$, $\Phi_X:x\mapsto(\Phi^t(x|X))_{t\in[0,\tau]}$ maps an
initial condition to its phase space paths, and
$\Theta:(x_t)_{t\in[0,\tau]}\mapsto(\theta(x_{\tau-t}))_{t\in[0,\tau]}$ maps
such a path to its time-reversal.

The following is the mother of all fluctuation relations concerning
deterministic Hamiltonian systems.

\begin{proposition}\label{prop:genHamFR}
Let
$$
\nu_X(\d x)\coloneq\e^{-S(x|X(0))}\gamma(\d x)
$$
be a probability measure on $\Gamma$ and\, $\PP_X=\nu_X\circ\Phi_X^{-1}$ the
induced path-space measure. Then, the time-reversed measure\,
$\PP_{\widehat{X}}\circ\Theta$ is absolutely continuous w.r.t.\;$\PP_X$, with
the Radon--Nikodym derivative
\begin{equation}
\bsx=(x_t)_{t\in[0,\tau]}\mapsto
\frac{\d\PP_{\widehat{X}}\circ\Theta}{\d\PP_X}(\bsx)
=\e^{-S(\theta(x_\tau)|\vartheta(X(\tau)))+S(x_0|X(0))},
\label{eq:genHamFR}
\end{equation}
{\sl i.e.,} for any path function $F$,
$$
\int F(\Theta(\bsx))\PP_{\widehat{X}}(\d\bsx)
=\int\e^{-S(\theta(x_\tau)|\vartheta(X(\tau)))+S(x_0|X(0))}F(\bsx)\PP_X(\d\bsx).
$$
\end{proposition}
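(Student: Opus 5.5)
Both sides of the claimed identity are integrals against $\PP_X$, so the whole proof reduces to a change of variables on phase space. Both path measures are images of measures on $\Gamma$ under the deterministic maps $\Phi_X$, $\Phi_{\widehat X}$. It therefore suffices to establish two facts. The first is the path-level time-reversal identity
\begin{equation*}
\Theta\circ\Phi_{\widehat X}=\Phi_X\circ\theta\circ\Phi^\tau(\,\argdot\,|\widehat X),
\end{equation*}
i.e.\ reversing the $\widehat X$-trajectory started at $y$ gives the $X$-trajectory started at $x=\theta(\Phi^\tau(y|\widehat X))$. The second is that the map $y\mapsto x$ preserves Liouville's measure $\gamma$. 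Granting these, I substitute $x=\theta(\Phi^\tau(y|\widehat X))$ in
\begin{equation*}
\int F(\Theta(\bsx))\PP_{\widehat X}(\d\bsx)=\int F(\Theta\Phi_{\widehat X}(y))\e^{-S(y|\widehat X(0))}\gamma(\d y).
\end{equation*}
Since $\widehat X(0)=\vartheta(X(\tau))$, and inverting gives $y=\Phi^\tau(\,\argdot\,|\widehat X)^{-1}(\theta(x))=\theta(x_\tau)$, where $x_\tau=\Phi^\tau(x|X)$, this becomes
\begin{equation*}
\int F(\Phi_X(x))\e^{-S(\theta(x_\tau)|\vartheta(X(\tau)))}\gamma(\d x).
\end{equation*}
Finally I write $\gamma(\d x)=\e^{S(x|X(0))}\nu_X(\d x)$ and push forward by $\Phi_X$, which yields exactly~\eqref{eq:genHamFR}, with $x_0=x$. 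Absolute continuity is then immediate, because the Radon--Nikodym derivative is an explicit finite positive function (the exponent is finite wherever the densities are defined).

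For the first fact, I set $x_t=\Phi^t(x|X)$ and $z_t=\theta(x_{\tau-t})$. I then check that $z_t$ solves Hamilton's equation for the Hamiltonian $H(\,\argdot\,|\widehat X(t))$. Since $\theta$ is anti-symplectic, $D\theta\, J\, D\theta^T=-J$, and differentiating~\eqref{eq:HthetaInv} gives
\begin{equation*}
\nabla H(\theta z|Y)=D\theta^{-T}\nabla H(z|\vartheta Y)
\end{equation*}
(with $\theta$ an involution). Combining these, one verifies that the Hamiltonian vector field transforms as
\begin{equation*}
D\theta\, V_{H(\cdot|Y)}=-V_{H(\cdot|\vartheta Y)}\circ\theta.
\end{equation*}
The sign $-1$ compensates the reversal $t\mapsto\tau-t$, and $\vartheta(X(\tau-t))=\widehat X(t)$. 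Uniqueness of solutions then gives $z_t=\Phi^t(z_0|\widehat X)$ with $z_0=\theta(x_\tau)$. Equivalently, $\Theta\circ\Phi_X=\Phi_{\widehat X}\circ\theta\circ\Phi^\tau(\,\argdot\,|X)$, which is the identity stated above with the roles of $X$ and $\widehat X$ swapped (note $\widehat{\widehat X}=X$).

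For the second fact, Liouville's theorem for time-dependent Hamiltonian flows says that each $\Phi^t(\,\argdot\,|X)$ is a symplectomorphism and hence preserves $\gamma$. Since $\theta$ is anti-symplectic, it reverses the sign of the symplectic form, and therefore preserves the Liouville volume, which is the absolute value of the top power $\omega^n$. Composites of $\gamma$-preserving maps preserve $\gamma$.

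The main obstacle is the sign and bookkeeping in the first fact. I have to make sure the covariance~\eqref{eq:HthetaInv} and anti-symplecticity combine to give exactly the reversed protocol $\widehat X$ and not, say, $X(\tau-t)$ without $\vartheta$. The remaining steps are routine change-of-variables manipulations.
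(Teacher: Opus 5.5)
Your proposal is correct and follows essentially the same route as the paper. The paper's identity $\Phi^t(x|X)=\widehat\Phi^t(\Phi^\tau(x|X)|X)$ is exactly your $\Theta\circ\Phi_X=\Phi_{\widehat X}\circ\theta\circ\Phi^\tau(\,\argdot\,|X)$, and the paper likewise combines it with the $\theta$- and flow-invariance of Liouville's measure in a change of variables. The only differences are cosmetic: you substitute once with the composite map rather than in two steps, you work with general path functions instead of cylinder functions, and you carry out the vector-field computation that the paper dismisses with ``one easily checks.''
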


\proof Defining the time-reversed Hamiltonian flow by
\begin{equation}
\widehat\Phi^t(x|X)=\theta\circ\Phi^{\tau-t}(\theta(x)|\widehat X),
\label{eq:hatPhiDef}
\end{equation}
one easily checks that
\begin{equation}
\Phi^t(x|X)=\widehat\Phi^t(\Phi^\tau(x|X)|X)
\label{eq:PhiPhi}
\end{equation}
holds for all $x\in\Gamma$ and $t\in[0,\tau]$ (both sides of this identity
satisfy Hamilton's equation of motion with the same final condition). Denoting\,
$\EE_X$ the expectation functional associated to\, $\PP_X$, for
$F(\bsx)=f(x_{t_1},\ldots,x_{t_n})$ one has
$$
\EE_{\widehat{X}}[F\circ\Theta]=\int_\Gamma f\left(\theta\circ\Phi^{\tau-t_1}(x|\widehat{X}),\ldots,\theta\circ\Phi^{\tau-t_n}(x|\widehat{X})\right)
\e^{-S(x|\widehat{X}(0))}\gamma(\d x),
$$
and invoking~\eqref{eq:hatPhiDef},
$$
\EE_{\widehat{X}}[F\circ\Theta]=\int_\Gamma
f\left(\widehat{\Phi}^{t_1}(\theta(x)|X),\ldots,\widehat{\Phi}^{t_n}(\theta(x)|X)\right)
\e^{-S(x|\widehat{X}(0))}\gamma(\d x).
$$
Using the fact that Liouville's measure is $\theta$-invariant, we can write
$$
\EE_{\widehat{X}}[F\circ\Theta]=\int_\Gamma
f\left(\widehat{\Phi}^{t_1}(x|X),\ldots,\widehat{\Phi}^{t_n}(x|X)\right)
\e^{-S(\theta(x)|\widehat{X}(0))}\gamma(\d x),
$$
and its invariance under the Hamiltonian flow further yields
$$
\EE_{\widehat{X}}[F\circ\Theta]=\int_\Gamma
f\left(\widehat{\Phi}^{t_1}(\Phi^\tau(x|X)|X),\ldots,\widehat{\Phi}^{t_n}(\Phi^\tau(x|X)|X)\right)
\e^{-S(\theta(\Phi^\tau(x|X))|\widehat{X}(0))}\gamma(\d x).
$$
In view of~\eqref{eq:PhiPhi}, we finally get
\begin{align*}
\EE_{\widehat{X}}[F\circ\Theta]&=\int_\Gamma
f\left(\Phi^{t_1}(x|X),\ldots,\Phi^{t_n}(x|X)\right)
\e^{S(x|X(0))-S(\theta(\Phi^\tau(x|X))|\widehat{X}(0))}\nu_X(\d x)\\
&=\EE_X\left[F\e^{S(x_0|X(0))-S(\theta(x_\tau)|\vartheta(X(\tau)))}\right].
\end{align*}
\qed

\medskip
Setting
$$
F(X)\coloneq-\frac1\beta\log\int_\Gamma\e^{-\beta H(x|X)}\gamma(\d x),
$$
we now  discuss a few consequences of Proposition~\ref{prop:genHamFR}.

\part{The Bochkov--Kuzovlev FR.} It follows from~\eqref{eq:HthetaInv} that the\index{FR!Bochkov--Kuzovlev}
Hamiltonian $H_0=H(\,\argdot\,|0)$ is time-reversal invariant. With
$S(\,\argdot\,|X)=\beta(H_0-F(0))$, the initial measure $\nu_X=\nu_{\widehat{X}}$ is the
associated equilibrium state at inverse temperature $\beta$
and~\eqref{eq:genHamFR} gives the detailed Bochkov--Kuzovlev fluctuation
relation~\eqref{equ:BKFR},
$$
\frac{\d\PP_{\widehat{X}}\circ\Theta}{\d\PP_X}(\bsx)
=\e^{-\beta(H_0(x_\tau)-H_0(x_0))}=\e^{-\beta W_\tau^\mathrm{ex}}.
$$

\part{The Jarzynski--Crooks FR.} With $S(\,\argdot\,|X)=\beta(H(\,\argdot\,|X)-F(X))$,
the initial\index{FR!Jarzynski--Crooks} measures
$$
\nu_X(\d x)=\e^{-\beta(H(x|X(0))-F(X(0)))}\gamma(\d x),
$$
and
$$
\nu_{\widehat{X}}\circ\theta(\d x)=\e^{-\beta(H(x|X(\tau))-F(X(\tau)))}\gamma(\d x),
$$
are, respectively, the equilibrium measure at inverse temperature $\beta$ for
the initial and final Hamiltonians $H(\,\argdot\,|X(0))$ and
$H(\,\argdot\,|X(\tau))$. In this case, and in view of~\eqref{eq:HthetaInv}, the
relation~\eqref{eq:genHamFR} reads
$$
\frac{\d\PP_{\widehat{X}}\circ\Theta}{\d\PP_X}(\bsx)
=\e^{\beta(F(X(\tau))-F(X(0)))}\e^{-\beta(H(x_\tau|X(\tau))-H(x_0|X(0)))}
=\e^{\beta\Delta F}\e^{-\beta W_\tau^\mathrm{in}},
$$
which is the detailed form of the Jarzynski--Crooks fluctuation relation, from
which~\eqref{equ:JarIntFR} immediately follows.

\part{FR in open systems.} We consider an open system $\cZ$ coupled to thermal\index{FR!in open systems}
reservoirs $\cR_1,\ldots,\cR_m$. The phase space of the joint system
$\cZ+\cR_1+\cdots+\cR_m$ is
$$
\Gamma=\Gamma_\cZ\times\Gamma_\cR,\qquad\Gamma_\cR=\bigtimes_{k=1}^m\Gamma_{\cR_k}.
$$
Accordingly, we denote phase space points by $x=(z,r)$, $r=(r_1,\ldots,r_m)$ and
Liouville's measure by $\d x=\d z\otimes\d r$, $\d r=\otimes_{k=1}^m\d r_k$. The
controls are $X=(Z,R)$ and the Hamiltonian of the coupled system is
$$
H(x|X)=H_\cZ(z|Z)+\sum_{k=1}^mH_k(r_k)+\sum_{k=1}^mR_kQ_k(z,r_k).
$$
Assuming $H_\cZ(\theta(z)|Z)=H_\cZ(z|\vartheta(Z))$, $H_k\circ\theta=H_k$ and
$Q_k\circ\theta=\theta_k Q_k$ with $\theta_k=\pm1$ for each $k$,
implies~\eqref{eq:HthetaInv} with $\vartheta(R)_k=\theta_k R_k$.

Given $\boldsymbol{\beta}=(\beta_1,\ldots,\beta_m)$, consider the initial
measure $\nu_X=\mu_Z\otimes\mu_{\boldsymbol{\beta}}$, where the product of Gibbs
states
$$
\mu_{\boldsymbol{\beta}}(\d r)=\bigotimes_{k=1}^m\e^{-\beta_k(H_k(r_k)-F_k)}\d r_k,
\qquad
F_k=-\frac{1}{\beta_k}\log\int_{\Gamma_{\cR_k}}\e^{-\beta_kH_k(r_k)}\d r_k,
$$
describes each reservoir $\cR_k$ in thermal equilibrium at inverse temperature
$\beta_k$ and
$$
\mu_Z(\d z)=\e^{-s(z|Z)}\d z,
$$
so that
$$
S(x|X)=s(z|Z)+\sum_{k=1}^n\beta_k\left(H_k(r_k)-F_k\right).
$$
Then the relation~\eqref{eq:genHamFR} reads
$$
\frac{\d\PP_{\widehat{X}}\circ\Theta}{\d\PP_X}(\bsx)
=\e^{-\left(s(\theta(z_\tau)|\vartheta(Z(\tau)))-s(z_0|Z(0))\right)}
\e^{-\sum_{k=1}^n\beta_k\Delta E_k},
$$
where
$$
\Delta E_k=H_k(r_{k,\tau})-H_k(r_{k,0})
$$
is the energy dissipated in the reservoir $\cR_k$. Note that, with
$\{\,\argdot\,,\,\argdot\,\}$ denoting the Poisson bracket,
$$
\beta_k\Delta E_k=\int_0^\tau R_k(t)J_k(t)\d t,
\qquad
J_k(t)=\beta_k\{Q_k,H_k\}(x_t),
$$
so that $J_k$ can be identified as the current conjugated to the affinity $R_k$.

Assuming the reservoirs to be large enough to remain in thermal equilibrium
during the whole process, one can also identify
$$
\Delta S=\sum_{k=1}^n\beta_k\Delta E_k=\sum_{k=1}^n\int_0^\tau R_k(t)J_k(t)\d t
$$
with the entropy dumped in the reservoirs, that is the total entropy generated
by the process.

In the special case
$$
s(z|Z)=\beta(H_\cZ(z|Z)-F(Z)),\qquad
F(Z)=-\frac{1}{\beta}\log\int_{\Gamma_{\cZ}}\e^{-\beta H_\cZ(z|Z)}\d z,
$$
one has
$$
\frac{\d\PP_{\widehat{X}}\circ\Theta}{\d\PP_X}(\bsx)
=\e^{\beta\Delta F}\e^{-\beta\Delta E-\Delta S}
$$
where
$$
\Delta F=F(Z(\tau))-F(Z(0)),\qquad
\Delta E = H_\cZ(z_\tau|Z(\tau))-H_\cZ(0|Z(0)),
$$
are, respectively, the free energy difference between the two {\sl equilibrium
states} $\mu_{Z(0)}$ and $\mu_{Z(\tau)}$ of the system $\cZ$, and the change in
the internal energy of this system during the process. This leads to the
fluctuation relation
$$
\EE_X[\e^{-\beta\Delta E-\Delta S}]=\e^{-\beta\Delta F}.
$$

\printbibliography[heading=bibintoc,title={References}]


\chapter{Large Deviations and Fluctuation Relations}
\label{chap:LD and RF}

\abstract*{After introducing the minimal framework needed to make sense of a
fluctuation theorem, we briefly review some basic facts about large deviations
estimates that play a crucial role in the formulation of  fluctuation theorems.
The reader is referred to~\cite{Dembo2000,Hollander2000} for detailed
introductions to this vast subject.
}

\abstract{After introducing the minimal framework needed to make sense of a
fluctuation theorem, we briefly review some basic facts about large deviations
estimates that play a crucial role in the formulation of  fluctuation theorems.
The reader is referred to~\cite{Dembo2000,Hollander2000} for detailed
introductions to this vast subject.
}

\vskip 1cm

Our starting point is a family of measurable
spaces~$(\Omega_\lambda,\fF_\lambda)$ indexed by a parameter~$\lambda$ varying
in a directed set~$\cL$. Each of these spaces is equipped with a pair
$(\PP_\lambda,\wP_\lambda)$ of probability measures. We say that the family
$(\PP_\lambda,\widehat{\PP}_\lambda)_{\lambda\in\cL}$ is {\sl in} \ndex{involution}
whenever, for each $\lambda\in\cL$,
$\widehat{\PP}_\lambda=\PP_\lambda\circ\Theta_\lambda$ where
$\Theta_\lambda:\Omega_\lambda\to\Omega_\lambda$  is a measure space
automorphism satisfying
$\Theta_\lambda\circ\Theta_\lambda=\Id_{\Omega_\lambda}$. Such an
involution implements a $\ZZ_2$-action on $\Omega_\lambda$. In many cases of
interest one takes\footnote{$\ZZ_+=\{0,1,2,\ldots
\}$ and $\RR_+=[0,\infty[.$} $\cL=\ZZ_+$ or $\cL=\RR_+$ and interprets the values
taken by~$\lambda$ as discrete or continuous time instances. In such a situation,
$(\Omega_\lambda,\fF_\lambda,\PP_\lambda)$ describes the space-time statistics
of the physical system under consideration over the finite time
interval~$[0,\lambda]$ and $\wP_\lambda$ is a reference measure related to the
same system running under a different protocol.\footnote{In the physics
literature, the process $(\Omega_\lambda,\fF_\lambda,\wP_\lambda)$ is sometimes
called {\em conjugate} to $(\Omega_\lambda,\fF_\lambda,\PP_\lambda)$.}\;A
possible candidate for this reference measure is
$\widehat{\PP}_\lambda=\PP_\lambda\circ\Theta_\lambda$ where $\Theta_\lambda$
implements the time-reversal symmetry of the system, in which case the resulting
pair of measures is in involution. However, as we shall illustrate in
Chapter~\ref{chap:Illustrations}, our general setup can also be applied to other physical
problems not necessarily related to dynamical processes. We shall denote by
$\EE_\lambda$ and $\wE_\lambda$ the expectation functionals associated to
$\PP_\lambda$ and $\wP_\lambda$.

\section{Large Deviations}
\label{sec:LDP}

Assume that, for each $\lambda$ in the directed set $\cL$, $\xi_\lambda$ is a
random variable on the measurable space $(\Omega_\lambda,\fF_\lambda)$ taking
its values in a Polish ({\sl i.e.,} separable complete metric) space $\fX$, and the
function $\cL\ni\lambda\mapsto r_\lambda>0$ is such that $\lim_{\lambda\in\cL}
r_\lambda=+\infty$. One says that the family $(\xi_\lambda)_{\lambda\in\cL}$
satisfies the {\bf Large Deviation Principle} (LDP) w.r.t.\;the family of
probability measures $(\PP_\lambda)_{\lambda\in\cL}$ with \ndex{scale}
$(r_\lambda)_{\lambda\in\cL}$ and \ndex{rate} $I$ whenever $I:\fX\to[0,+\infty]$ has
closed level sets,\footnote{$I$ has closed/compact level sets whenever
$\{x\in\fX\mid I(x)\le\alpha\}$ is closed/compact for all $\alpha\in\RR$. This implies
in particular that $I$ is lower semicontinuous.} and is such that for any Borel
set $X\subset\fX$,\index{LDP}
\begin{equation}
\begin{split}
-\inf_{x\in\mathring{X}} I(x)
&\le \liminf_{\lambda\in\cL}r_\lambda^{-1}
\log \PP_\lambda\{\xi_\lambda\in X\}\\[4pt]
&\le \limsup_{\lambda\in\cL}r_\lambda^{-1}
\log \PP_\lambda\{\xi_\lambda\in X\}
\le -\inf_{x\in\overline X}I(x),
\end{split}
\label{Eq:LDP}
\end{equation}
where $\mathring{X}/\overline X$ denotes the interior/closure of $X$. $I$ is
said to be a \ndex{good rate} whenever it has compact level sets. If the lower
bound in~\eqref{Eq:LDP} only holds for Borel sets $X$ that are subsets of
$\cX\subset\fX$, we shall say that the {\bf LDP holds locally} in $\cX$. We
remark that considering $X =\fX$ in \eqref{Eq:LDP} implies that $\inf_{x\in \fX}
I(x)=0$, and in particular that $I$ is not everywhere $+\infty$.

In most applications, the index $\lambda$ parametrizes the size of a sample and
$\xi_\lambda$ describes some property of samples of size $\lambda$.  Suppose
that the  law of large numbers holds as the size of the sample increases, {\sl i.e.,}
that there is $\bar{\xi}\in\fX$ such that
$\lim_{\lambda\in\cL}\xi_\lambda=\bar{\xi}$ holds in probability. The LDP then
describes, on the exponential scale, the rate at which this limit is achieved,
and the rate function $I$ has a single zero at $\bar{\xi}$.

The {\bf contraction principle}~\cite[Theorem~4.2.1]{Dembo2000}\footnote{
\cite{Dembo2000} does not address the case where $\cL$ is a general directed
set, but the proof is almost identical.} states that if the above LDP holds with\index{contraction principle}
a good rate $I$, and $\phi:\fX\to\fY$ is a continuous map taking its values in a
Polish space $\fY$, then the family $(\eta_\lambda)_{\lambda\in\cL}$ defined
by $\eta_\lambda=\phi(\xi_\lambda)$ satisfies the LDP
w.r.t.\;$(\PP_\lambda)_{\lambda\in\cL}$ with scale
$(r_\lambda)_{\lambda\in\cL}$ and good rate $J$, where
$$
J(y)=\inf\left\{I(x)\mid x\in\phi^{-1}(\{y\})\right\}.
$$
The same conclusion holds~\cite[Theorem~4.2.23]{Dembo2000} if $\phi$ is merely
measurable and such that, for a sequence of continuous maps $\phi_n:\fX\to\fY$,
one has
$$
\lim_{n\to\infty}\sup\left\{d(\phi(x),\phi_n(x))\mid x\in I^{-1}(]{-}\infty,\alpha])\right\}=0
$$
for all $\alpha\in\RR$.

{\bf Varadhan's Lemma}~\cite[Theorem~4.3.1]{Dembo2000} asserts that if the family
$(\xi_\lambda)_{\lambda\in\cL}$ satisfies the  LDP\index{Varadhan's Lemma}
w.r.t. $(\PP_\lambda)_{\lambda\in\cL}$ with scale
$(r_\lambda)_{\lambda\in\cL}$ and good rate $I$, and $\phi:\fX\to\RR$ is a continuous
function such that
$$
\limsup_{\lambda\in\cL}\frac1{r_\lambda}
\log\EE_\lambda\left[\e^{\alpha r_\lambda\phi(\xi_\lambda)}\right]<\infty
$$
for some $\alpha>1$, then
$$
\lim_{\lambda\in\cL}\frac1{r_\lambda}
\log\EE_\lambda\left[\e^{r_\lambda\phi(\xi_\lambda)}\right]
=\sup\left\{\phi(x)-I(x)\mid x\in\fX\right\}.
$$
Moreover~\cite[Theorem~III.17]{Hollander2000}, defining the tilted measure\footnote{The
notation $\PP_\lambda^\phi\ll\PP_\lambda$ means that $\PP_\lambda^\phi$ is absolutely
continuous with respect to $\PP_\lambda$.} $\PP_\lambda^\phi\ll\PP_\lambda$ with density
$$
\frac{\d\PP_\lambda^\phi}{\d\PP_\lambda}
=\frac{\e^{r_\lambda\phi(\xi_\lambda)}}%
{\EE_\lambda\left[\e^{r_\lambda\phi(\xi_\lambda)}\right]},
$$
on $(\Omega_\lambda,\fF_\lambda)$ and setting
\beq
J(x)=I(x)-\phi(x)-\inf\left\{I(y)-\phi(y)\mid y\in\fX\right\},
\label{Eq:Tilting}
\eeq
the family $(\xi_\lambda)_{\lambda\in\cL}$ satisfies the  LDP
w.r.t.\;$(\PP_\lambda^\phi)_{\lambda\in\cL}$ with scale
$(r_\lambda)_{\lambda\in\cL}$ and rate $J$.

Given the scale $(r_\lambda)_{\lambda\in\cL}$, two families
$(P_\lambda)_{\lambda\in\cL}$ and $(Q_\lambda)_{\lambda\in\cL}$ of
probability measures on $\fX$ are {\bf exponentially equivalent} whenever there\index{exponential equivalence}
exists a family of probability spaces $(\Omega_\lambda,\fF_\lambda,\PP_\lambda)$
and $\fX$-valued random variables  $(\xi_\lambda)_{\lambda\in\cL}$ and
$(\eta_\lambda)_{\lambda\in\cL}$ whose laws under $\PP_\lambda$ are given by
$P_\lambda$ and $Q_\lambda$ respectively, and  such that for all $\delta >0$,
the sets $\Gamma_{\lambda,\delta}=\{\omega\in\Omega_\lambda\mid
d(\xi_\lambda(\omega),\eta_\lambda(\omega))>\delta\}\in\fF_\lambda$ satisfy
\[
\limsup_{\lambda \in\cL}\frac1{r_\lambda}P_\lambda(\Gamma_{\lambda,\delta})=-\infty.
\]

Suppose that $\fX=\RR^d$ and that for all   $\alpha\in\RR^d$ the
limit\footnote{Here $\langle\,\argdot\,,\,\argdot\,\rangle$ denotes the Euclidean inner
product on $\RR^d$.}
\begin{equation}\label{eq:deflambdaalpha}
\Lambda(\alpha)\coloneq\lim_{\lambda\in\cL}\frac1{r_\lambda}
\log\EE_\lambda\left[\e^{r_\lambda\langle\alpha,\xi_\lambda\rangle}\right]\in[-\infty,\infty]
\end{equation}
exists and defines a lower semicontinuous function.
Then, $\Lambda$ is a convex function, its \ndex{essential domain}
$\Dom(\Lambda)=\{\alpha\in\RR^d\mid \Lambda(\alpha)<\infty\}$ is convex,
and if $d=1$ the map $\Dom(\Lambda)\ni\alpha\mapsto\Lambda(\alpha)$ is
continuous.\footnote{If $d>1$, continuity only holds in the relative interior
of $\Dom(\Lambda)$, and in particular in its interior,
see~\cite[Theorem~10.1]{Rockafellar1970}.} Suppose that $0$ belongs to
the interior of $\Dom(\Lambda)$ and that $\Lambda$ is differentiable on
this interior. If either $\Dom(\Lambda)=\RR$ or
\beq
\lim_{\substack{\alpha\to\partial\Dom(\Lambda)\\\alpha\in\Dom(\Lambda)}}|\Lambda'(\alpha)|=+\infty,
\label{Eq:Steep}
\eeq
then the {\bf Gärtner--Ellis theorem}~\cite[Theorem~2.3.6]{Dembo2000}  asserts\index{Gärtner--Ellis theorem}
that the family $(\xi_\lambda)_{\lambda\in\cL}$ satisfies the LDP
w.r.t.\;$(\PP_\lambda)_{\lambda\in\cL}$ with scale
$(r_\lambda)_{\lambda\in\cL}$ and good rate $I$ given by the \ndex{Legendre--Fenchel
transform} of $\Lambda$,
$$
I(x)=\sup_{\alpha\in\RR^d}(\langle\alpha,x\rangle-\Lambda(\alpha)).
$$
If $\Dom(\Lambda)\not=\RR$ and Condition~\eqref{Eq:Steep} fails, then the same
LDP holds locally on the set $\cX$ of all $x\in\RR^d$ for which  there exists
$\alpha$ in the interior of $\Dom(\Lambda)$ such that
\[
I(y)>I(x)+\langle\alpha,y-x\rangle
\]
holds for all $y\not=x$. Such a $x\in\RR^d$ is said to be an \ndex{exposed point} of
$I$, and the corresponding $\alpha$ is the associated \ndex{supporting hyperplane}.

We say that two families $(\xi_\lambda)_{\lambda\in\cL}$ and
$(\eta_\lambda)_{\lambda\in\cL}$ of random variables on
$(\Omega_\lambda,\fF_\lambda)$ taking their values in the same Polish space
$\fX$ are {\bf physically equivalent} whenever\index{physical equivalence}
$$
\lim_{\lambda\in\cL}\frac1{r_\lambda}\sup_{\omega\in\Omega_\lambda}
d(\xi_\lambda(\omega),\eta_\lambda(\omega))=0.
$$
This implies that the laws of $r_\lambda^{-1}\xi_\lambda$ and
$r_\lambda^{-1}\eta_\lambda$ under $\PP_\lambda$ are exponentially equivalent,
{\sl i.e.,} that
$$
\limsup_{\lambda\in\cL}\frac1{r_\lambda}
\log\PP_\lambda\left\{d(\xi_\lambda,\eta_\lambda)
>\varepsilon r_\lambda\right\}=-\infty
$$
holds for all $\varepsilon>0$. It follows~\cite[Theorem~4.2.13]{Dembo2000} that
if $(\xi_\lambda)_{\lambda\in\cL}$ satisfies the LDP
w.r.t.\;$(\PP_\lambda)_{\lambda\in\cL}$ with scale
$(r_\lambda)_{\lambda\in\cL}$ and good rate $I$, so does
$(\eta_\lambda)_{\lambda\in\cL}$. In addition, if
$(\xi_\lambda)_{\lambda\in\cL}$ and $(\eta_\lambda)_{\lambda\in\cL}$ are
physically equivalent, then replacing $\xi_\lambda$ with $\eta_\lambda$ in
\eqref{eq:deflambdaalpha} does not change the function $\Lambda$.

\section{Transient Fluctuation Relation}
\label{Ssec:TransientFR}

In this section we describe a minimal abstract framework allowing for the
formulation of transient fluctuation relations. Our point here is that such
relations are tautological consequences of a simple regularity property of a
pair of probability measures.

To a pair $(\PP_\lambda,\wP_\lambda)$ of equivalent probability measures
on~$(\Omega_\lambda,\fF_\lambda)$ we associate the \ndex{entropy production
observable}
\begin{equation}
\sigma_\lambda\coloneq\log\frac{\d\PP_\lambda}{\d\wP_\lambda}.
\label{Eq:sigma_lambda}
\end{equation}
This is a real-valued random variable, and we denote
by~$P_\lambda$ its law under~$\PP_\lambda$. Recall that the family
$(\PP_\lambda,\wP_\lambda)_{\lambda\in\cL}$ is {\sl in involution}
whenever, for each $\lambda\in\cL$,
$\wP_\lambda=\PP_\lambda\circ\Theta_\lambda$ where
$\Theta_\lambda:\Omega_\lambda\to\Omega_\lambda$  is a measure space
automorphism satisfying
$\Theta_\lambda\circ\Theta_\lambda=\Id_{\Omega_\lambda}$.

The very definition of~$\sigma_\lambda$ implies a number of simple, yet important
properties.

\part{Non-vanishing.}$\sigma_\lambda$ vanishes $\PP_\lambda$-a.e.\;iff\,
$\wP_\lambda=\PP_\lambda$. From our point of view this is a trivial and
uninteresting situation. In the following we assume that
$\wP_\lambda\not=\PP_\lambda$. In cases where $\PP_\lambda$ and
$\wP_\lambda$ are in involution, this means that the
$\ZZ_2$-symmetry $\Theta_\lambda$ is broken.

\part{Alternating.}The entropy production of the pair
$(\wP_\lambda,\PP_\lambda)$ is given by
$\hat{\sigma}_\lambda=-\sigma_\lambda$. If $\PP_\lambda$ and
$\wP_\lambda$ are in involution, then
$\hat{\sigma}_\lambda=\sigma_\lambda\circ\Theta_\lambda$, and hence
$\sigma_\lambda$ is odd under the $\ZZ_2$-action. In particular, the range
of $\sigma_\lambda$ is symmetric w.r.t.\;$0$.

\part{Mean entropy production.}Positive values of $\sigma_\lambda$ are preferred
under $\PP_\lambda$. To see this, recall that the \ndex{relative entropy}\footnote{also
called Kullback--Leibler divergence.} of two equivalent probability measures
$\PP$ and $\QQ$ is defined by
\beq
\Ent(\PP\mid\QQ)\coloneq
\int\log\left(\frac{\d\PP}{\d\QQ}\right)\d\PP.
\label{equ:EntDef}
\eeq
By Jensen's inequality, one has
$$
\Ent(\PP\mid\QQ)=-\int\log\left(\frac{\d\QQ}{\d\PP}\right)\d\PP
\ge-\log\int\frac{\d\QQ}{\d\PP}\d\PP=0,
$$
with equality iff $\PP=\QQ$.
It follows that the mean entropy production
\beq
\ep_\lambda\coloneq\EE_\lambda[\sigma_\lambda]=\Ent(\PP_\lambda\mid\wP_\lambda)
\label{equ:epDef}
\eeq
is non-negative and vanishes iff $\PP_\lambda=\wP_\lambda$.
Expressing this fact in terms of the law $P_\lambda$ of
$\sigma_\lambda$ under $\PP_\lambda$ yields the
inequality
\begin{equation} \label{Eq:Jarzynski_inequality}
\ep_\lambda=\int_\RR s\,P_\lambda(\d s)\ge0.
\end{equation}
Thus, if $\PP_\lambda$ and $\wP_\lambda$ are in involution, then $\ep_\lambda$
plays the role of an order parameter for the $\ZZ_2$-symmetry $\Theta_\lambda$:
Inequality~\eqref{Eq:Jarzynski_inequality} becomes strict whenever this symmetry
is broken.

\part{Jarzynski's identity.}A more quantitative statement
can be obtained by considering the \ndex{Rényi relative $\alpha$-entropy}
defined by
\beq
\Ent_\alpha(\PP\mid\QQ)
\coloneq\log\int\left(\frac{\d\QQ}{\d\PP}\right)^\alpha\d\PP,\qquad(\alpha\in\RR).
\label{equ:EntalphaDef}
\eeq
In terms of entropy production, we have
\beq
e_\lambda(\alpha)\coloneq\Ent_\alpha(\PP_\lambda\mid\wP_\lambda)
=\log\EE_\lambda\left[\e^{-\alpha\sigma_\lambda}\right]
=\log \int_\RR\e^{-\alpha s}P_\lambda(\d s).
\label{Eq:e_alpha_def}
\eeq
The function $\RR\ni\alpha\mapsto e_\lambda(\alpha)\in]{-}\infty,+\infty]$ is
the cumulant generating function of the random variable $-\sigma_\lambda$.
It is convex, lower semicontinuous, and vanishes for
$\alpha\in\{0,1\}$. Hence, $e_\lambda $ is non-positive on~$[0,1]$ and non-negative
outside~$[0,1]$. It admits an analytic continuation to the strip
$\{z\in\CC\mid 0<\Re z<1\}$. The Rényi entropy vanishes identically as a
function of $\alpha$ iff $\ep_\lambda=0$ and is strictly negative
for $\alpha\in]0,1[$ iff $\ep_\lambda>0$. Expressing the relation
$e_\lambda(1)=0$ in terms of~$P_\lambda$, we derive \ndex{Jarzynski
identity}~\cite{Jarzynski2000}
\begin{equation}
\int_\RR\e^{-s}P_\lambda(\d s)=1.
\label{Eq:Jarzynski_identity}
\end{equation}
Invoking Markov's inequality, we derive the following exponential
bound on the tail of $P_\lambda$
\begin{equation}\label{Eq:Jarzy_and_Markov}
P_\lambda(]{-}\infty,-s])=\PP_\lambda\{\sigma_\lambda\le-s\}\le\e^{-s}.
\end{equation}
To formulate the last properties, let us define
$$
\hat{e}_\lambda(\alpha)\coloneq\Ent_\alpha(\wP_\lambda|\PP_\lambda)
$$
and denote by $\widehat{P}_\lambda$ the law of $\sigma_\lambda$ under
$\wP_\lambda$.

\bigskip
\begin{proposition}[Transient Fluctuation Relation]\index{FR!transient/finite-time}
\label{Prop:Transient_FR}
\ben
\item The following relations hold:
\begin{gather}
\hat e_\lambda(\alpha)=e_\lambda(1-\alpha) \quad\mbox{for $\alpha\in\RR$},
\label{Eq:e_lambda_symmetry}\\[6pt]
\frac{\d P_\lambda}{\d \widehat P_\lambda}(s)=\e^s
\quad\mbox{for $s\in\RR$}.
\label{Eq:P_lambda_FR}
\end{gather}
\item If the family $(\PP_\lambda,\wP_\lambda)_{\lambda\in\cL}$ is in
involution, then
\begin{gather}
\hat e_\lambda=e_\lambda,
\label{Eq:hat_e_lambda}\\[6pt]
\widehat P_\lambda=P_\lambda\circ\vartheta,
\label{Eq:hat_P_lambda}
\end{gather}
where $\vartheta$ denotes the reflection~$s\mapsto-s$.
\een
\end{proposition}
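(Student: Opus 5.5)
The plan is to reduce everything to the defining identity $\d\PP_\lambda=\e^{\sigma_\lambda}\d\wP_\lambda$, which holds because the two measures are equivalent. All four relations then follow from change-of-measure computations.

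For \eqref{Eq:e_lambda_symmetry}, I would write
$$
\hat e_\lambda(\alpha)=\log\wE_\lambda\Bigl[\Bigl(\frac{\d\PP_\lambda}{\d\wP_\lambda}\Bigr)^{\alpha}\Bigr]=\log\wE_\lambda\bigl[\e^{\alpha\sigma_\lambda}\bigr].
$$
Since $\d\wP_\lambda=\e^{-\sigma_\lambda}\d\PP_\lambda$, this equals $\log\EE_\lambda[\e^{-(1-\alpha)\sigma_\lambda}]=e_\lambda(1-\alpha)$. The identity holds in $]{-}\infty,+\infty]$, with both sides infinite simultaneously, since the integrands are non-negative. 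For \eqref{Eq:P_lambda_FR}, I would take a bounded Borel $f$ on $\RR$ and compute
$$
\int f\,\d P_\lambda=\EE_\lambda[f(\sigma_\lambda)]=\wE_\lambda\bigl[f(\sigma_\lambda)\e^{\sigma_\lambda}\bigr]=\int f(s)\e^{s}\,\widehat P_\lambda(\d s).
$$
This shows $P_\lambda\ll\widehat P_\lambda$ with density $\e^{s}$. Equivalence follows because $\e^{s}>0$.

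For part (2), the key observation is the alternating property $\sigma_\lambda\circ\Theta_\lambda=-\sigma_\lambda$ almost everywhere. I would prove it directly: for measurable $A$, we have $\wP_\lambda(A)=\PP_\lambda(\Theta_\lambda A)$ and $\PP_\lambda(A)=\wP_\lambda(\Theta_\lambda A)$ because $\Theta_\lambda$ is an involution. Hence
$$
\PP_\lambda(A)=\int_{\Theta_\lambda A}\e^{\sigma_\lambda}\,\d\wP_\lambda=\int_{A}\e^{\sigma_\lambda\circ\Theta_\lambda}\,\d\PP_\lambda,
$$
which gives $\d\PP_\lambda/\d\wP_\lambda\circ\Theta_\lambda=\d\wP_\lambda/\d\PP_\lambda$, that is, $\sigma_\lambda\circ\Theta_\lambda=-\sigma_\lambda$. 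This step is the only mildly delicate point: one must handle pushforward under $\Theta_\lambda$ carefully and note that the equality holds only almost everywhere, which is harmless because the measures are equivalent.

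Given this, \eqref{Eq:hat_P_lambda} follows from
$$
\wE_\lambda[f(\sigma_\lambda)]=\EE_\lambda[f(\sigma_\lambda\circ\Theta_\lambda)]=\EE_\lambda[f(-\sigma_\lambda)],
$$
so $\widehat P_\lambda=P_\lambda\circ\vartheta$. Applying the same computation to $f(s)=\e^{\alpha s}$ gives $\hat e_\lambda(\alpha)=\log\EE_\lambda[\e^{-\alpha\sigma_\lambda}]=e_\lambda(\alpha)$, which is \eqref{Eq:hat_e_lambda}.
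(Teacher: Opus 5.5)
Your proof is correct, and for \eqref{Eq:e_lambda_symmetry} and \eqref{Eq:hat_e_lambda} it is exactly the paper's computation.

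For the two statements about laws, \eqref{Eq:P_lambda_FR} and \eqref{Eq:hat_P_lambda}, you take a different and more elementary route. The paper exponentiates the R\'enyi identities to obtain $\int\e^{-(1-\alpha)s}P_\lambda(\d s)=\int\e^{\alpha s}\widehat P_\lambda(\d s)$ (respectively $\int\e^{\alpha s}\widehat P_\lambda(\d s)=\int\e^{-\alpha s}P_\lambda(\d s)$). It then continues analytically into the strip $0<\Re z<1$ and concludes by uniqueness of characteristic functions. You instead push $\d\PP_\lambda=\e^{\sigma_\lambda}\d\wP_\lambda$ (resp.\ the change of variables under $\Theta_\lambda$) forward with bounded test functions.

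Your version avoids analytic continuation altogether. It also proves the alternating property $\sigma_\lambda\circ\Theta_\lambda=-\sigma_\lambda$, which the paper only asserts before the proposition. What the paper's route buys is the explicit observation, stated right after the proposition, that the law-level relation \eqref{Eq:P_lambda_FR} is equivalent to the generating-function symmetry \eqref{Eq:e_lambda_symmetry}. The latter is the form that passes to the large-$\lambda$ limit through the entropic pressure.

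One slip needs fixing in your part (2): the left-hand side of your display should be $\wP_\lambda(A)$, not $\PP_\lambda(A)$. Indeed $\int_{\Theta_\lambda A}\e^{\sigma_\lambda}\,\d\wP_\lambda=\PP_\lambda(\Theta_\lambda A)=\wP_\lambda(A)$, so the correct chain is
\[
\wP_\lambda(A)=\PP_\lambda(\Theta_\lambda A)=\int_{\Theta_\lambda A}\e^{\sigma_\lambda}\,\d\wP_\lambda=\int_A\e^{\sigma_\lambda\circ\Theta_\lambda}\,\d\PP_\lambda .
\]
As written, your display would force $\sigma_\lambda\circ\Theta_\lambda=0$ almost everywhere. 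The corrected chain gives $\d\wP_\lambda/\d\PP_\lambda=\e^{\sigma_\lambda\circ\Theta_\lambda}$, which is precisely the conclusion you draw, and the rest of your argument goes through unchanged.
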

Relations~\eqref{Eq:e_lambda_symmetry} and~\eqref{Eq:P_lambda_FR} are in fact
equivalent: the validity of one of them implies the other. We shall refer to
them  as the {\it transient FR\/}. It refines (and implies)
Inequality~\eqref{Eq:Jarzynski_inequality} and its basic appeal is
its universal form. In applications to non-equilibrium physics, the transient FR
with $e_\lambda $ not identically vanishing or, equivalently, $P_\lambda$
not concentrated at $s=0$, is a manifestation of time-reversal symmetry breaking
and emergence of the 2$^{\rm nd}$~Law of Thermodynamics.

\proof[of Proposition~\ref{Prop:Transient_FR}]\index{FR!transient/finite-time}
{\bf (1)} Relation~\eqref{Eq:e_lambda_symmetry} is a simple consequence of the
definition of Rényi's entropy,
\begin{align}
e_\lambda(1-\alpha)
&=\Ent_{1-\alpha}(\PP_\lambda\mid\wP_\lambda)\nonumber\\
&=\log\EE_\lambda[\e^{-(1-\alpha)\sigma_\lambda}]
=\log\wE_\lambda[\e^{\alpha\sigma_\lambda}]\label{Eq:sunnyW}\\
&=\log\wE_\lambda[\e^{-\alpha\hat\sigma_\lambda}]
=\Ent_\alpha(\wP_\lambda\mid\PP_\lambda)
=\hat e_\lambda(\alpha).\nonumber
\end{align}
To prove~\eqref{Eq:P_lambda_FR}, we exponentiate~\eqref{Eq:sunnyW} and
rewrite the result in terms of~$P_\lambda$ and $\widehat{P}_\lambda$ to get
$$
\int_\RR \e^{-(1-\alpha)s}P_\lambda(\d s)
=\int_\RR \e^{\alpha s}\widehat P_\lambda(\d s).
$$
Using now the analyticity of $e_\lambda(z)$ in the strip $0<\Re z<1$ and the
fact that the characteristic function uniquely defines the corresponding
probability measure, we deduce that~$\e^{-s}P_\lambda(\d s)$
and~$\widehat{P}_\lambda(\d s)$ coincide. This is equivalent
to~\eqref{Eq:P_lambda_FR}.

\part{(2)} Relation~\eqref{Eq:hat_e_lambda} is a direct consequence of the facts
that $\wP_\lambda=\PP_\lambda\circ\Theta_\lambda$ and
$\hat{\sigma}_\lambda=-\sigma_\lambda=\sigma_\lambda\circ\Theta_\lambda$, which give that
\begin{equation}\label{Eq:chainineqehatl}\begin{split}
\hat e_\lambda(\alpha)
&=\log\wE_\lambda[\e^{-\alpha\hat\sigma_\lambda}]
=\log\wE_\lambda[\e^{\alpha\sigma_\lambda}]\\
&=\log\EE_\lambda[\e^{\alpha\sigma_\lambda\circ\Theta_\lambda}]
=\log\EE_\lambda[\e^{-\alpha\sigma_\lambda}]=e_\lambda(\alpha).
\end{split}
\end{equation}
Exponentiating the third equality in~\eqref{Eq:chainineqehatl} yields
$$
\int\e^{\alpha s}\widehat{P}_\lambda(\d s)=\int\e^{-\alpha s}P_\lambda(\d s)
=\int\e^{\alpha\vartheta(s)}P_\lambda(\d s)
=\int\e^{\alpha s}P_\lambda\circ\vartheta(\d s)
$$
and the same characteristic function argument as in the proof of~(1)
yields~\eqref{Eq:hat_P_lambda}.

\qed

\section{Fluctuation Theorem}
\label{Ssect:FT}

Going back to the abstract setup of Section~\ref{Ssec:TransientFR}, we proceed
beyond the transient regime by selecting a scale function $\cL\ni\lambda\mapsto
r_\lambda$. In a specific context, the choice of $r_\lambda$ is naturally linked
to the structure of~$\cL$. For example, if $\cL=\ZZ_+$ or $\RR_+$, its elements
being interpreted as time instances, then $r_\lambda=\lambda$.

\begin{definition}\label{Def:FT}\index{FT}
\hspace{0em}
\ben
\item We say that the {\sl Fluctuation Theorem} holds for the family
$(\PP_\lambda,\wP_\lambda)_{\lambda\in\cL}$ with
scale $(r_\lambda)_{\lambda\in\cL}$ and rate $(I,\hat I)$ whenever the
following two conditions are satisfied:
\begin{enumerate}[label=(\roman*)]
\item $(r_\lambda^{-1}\sigma_\lambda)_{\lambda\in\cL}$ satisfies the LDP
w.r.t.\;$(\PP_\lambda)_{\lambda\in\cL}$ with scale
$(r_\lambda)_{\lambda\in\cL}$ and rate $I$.

\item $(r_\lambda^{-1}\sigma_\lambda)_{\lambda\in\cL}$ satisfies the  LDP
w.r.t.\;$(\widehat{\PP}_\lambda)_{\lambda\in\cL}$ with scale
$(r_\lambda)_{\lambda\in\cL}$ and rate $\hat I$.
\end{enumerate}

\item For each $\lambda\in\cL$, let $\widetilde{\sigma}_\lambda$ be a
real-valued random variable on $\Omega_\lambda$. We say that
$(\widetilde{\sigma}_\lambda)_{\lambda\in\cL}$ satisfies the {\sl Fluctuation
Theorem} w.r.t.\;the family $(\PP_\lambda,\wP_\lambda)_{\lambda\in\cL}$ with
scale $(r_\lambda)_{\lambda\in\cL}$ and rate $(I,\hat I)$ whenever the above
conditions~(i) and~(ii) are satisfied with $\sigma_{\lambda}$ replaced by
$\widetilde{\sigma}_\lambda$.

\item If the two LDPs mentioned in conditions~(i) and~(ii) only hold locally on
the interval $\cJ\ni0$, then we say that the corresponding Fluctuation Theorem
holds on $\cJ$. This Fluctuation Theorem is called global when it holds on
$\cJ=\RR$, local otherwise.
\een
\end{definition}

\remark[1] Recall that $P_\lambda/\widehat{P}_\lambda$ denotes
the law of $\sigma_\lambda$ under $\PP_\lambda/\wP_\lambda$. Thus, the
LDP estimates expressing the FT can be rewritten as
\begin{align}
-\inf_{s\in\mathring{S}} I(s)
\le \liminf_{\lambda\in\cL}r_\lambda^{-1}\log P_\lambda(r_\lambda S)
&\le \limsup_{\lambda\in\cL}r_\lambda^{-1}\log P_\lambda(r_\lambda S)
\le -\inf_{s\in \overline S}I(s),\label{Eq:FT_P}\\[10pt]
-\inf_{s\in\mathring{S}} \hat I(s)
\le \liminf_{\lambda\in\cL}r_\lambda^{-1}\log\widehat{P}_\lambda(r_\lambda S)
&\le \limsup_{\lambda\in\cL}r_\lambda^{-1}\log\widehat{P}_\lambda(r_\lambda S)
\le -\inf_{s\in \overline S}\hat I(s),\label{Eq:FT_hatP}
\end{align}
for any Borel set $S\subset\RR$.

\remark[2] Suppose that the FT holds on $\cJ$
for the family $(\PP_\lambda,\wP_\lambda)_{\lambda\in\cL}$ with
scale $(r_\lambda)_{\lambda\in\cL}$ and rate $(I,\hat I)$. Let
$(\sigma_\lambda)_{\lambda\in\cL}$ be the associated entropy production.
Then, any family $(\widetilde{\sigma}_\lambda)_{\lambda\in\cL}$ of real valued random
variables which is physically equivalent to $(\sigma_\lambda)_{\lambda\in\cL}$
satisfies the FT on $\cJ$ w.r.t.\;$(\PP_\lambda,\wP_\lambda)_{\lambda\in\cL}$ with scale
$(r_\lambda)_{\lambda\in\cL}$ and rate $(I,\hat I)$.

\section{Asymptotic Fluctuation Relation}
\label{Ssect:AsymptoticFR}

In Section~\ref{Ssec:TransientFR}, the transient FR was obtained as a trivial
consequence of the definition of entropy production. Here, we show how
the Fluctuation Theorem, together with the transient FR, yield an asymptotic
FR, a symmetry relation for the rate functions $(I,\hat I)$.\index{FR!asymptotic}

\begin{proposition}\label{Prop:RateSymmetry}
Suppose that the FT holds on~$\cJ$ for the family
$(\PP_\lambda,\wP_\lambda)_{\lambda\in\cL}$.
Then the corresponding rate functions satisfy the relation
\begin{equation}
\hat I(s)=I(s)+s
\label{Eq:RateSymmetry}
\end{equation}
for all $s\in\mathring{\cJ}$. Moreover, if the family
$(\PP_\lambda,\wP_\lambda)_{\lambda\in\cL}$ is in involution, then
\beq
\hat I(s)=I(-s)
\label{Eq:hatRate}
\eeq
for all $s\in\mathring{\cJ}\cap(-\mathring{\cJ})$.
\end{proposition}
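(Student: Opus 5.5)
The plan is to combine the transient FR~\eqref{Eq:P_lambda_FR} with the two LDP estimates~\eqref{Eq:FT_P} and~\eqref{Eq:FT_hatP}, applied to small intervals. Fix $s\in\mathring{\cJ}$ and $\delta>0$ small enough that $S_\delta=]s-\delta,s+\delta[\subset\cJ$, so the lower bounds are available. By~\eqref{Eq:P_lambda_FR}, $\widehat P_\lambda(\d u)=\e^{-u}P_\lambda(\d u)$. Hence, for the rescaled set $r_\lambda S_\delta$,
$$
\e^{-r_\lambda(s+\delta)}P_\lambda(r_\lambda S_\delta)\le\widehat P_\lambda(r_\lambda S_\delta)\le\e^{-r_\lambda(s-\delta)}P_\lambda(r_\lambda S_\delta).
$$

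Taking $r_\lambda^{-1}\log$ and passing to $\liminf$ and $\limsup$, I will use the LDP bounds on both sides. The upper bound for $\widehat P_\lambda$ on the closed interval, together with the lower bound for $P_\lambda$ on the open interval, gives
$$
-\inf_{|u-s|<\delta}I(u)-(s+\delta)\le-\inf_{|u-s|\le\delta}\hat I(u).
$$
Symmetrically, the lower bound for $\widehat P_\lambda$ and the upper bound for $P_\lambda$ give
$$
-\inf_{|u-s|<\delta}\hat I(u)\le-\inf_{|u-s|\le\delta}I(u)-(s-\delta).
$$

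The main step, and the only delicate one, is letting $\delta\downarrow0$. By lower semicontinuity of the rates (their level sets are closed), $\inf_{|u-s|\le\delta}I(u)\to I(s)$ as $\delta\downarrow0$, and likewise for $\hat I$. The infima over the open intervals are sandwiched between $I(s)$ and the closed-interval infima, so they converge to $I(s)$ as well. Passing to the limit then yields $\hat I(s)\le I(s)+s$ and $I(s)+s\le\hat I(s)$, that is~\eqref{Eq:RateSymmetry}. These inequalities remain meaningful when values are $+\infty$, since the relations hold in $[0,+\infty]$ with $s$ finite, so no extra care is needed there. I do not expect a genuine obstacle beyond this bookkeeping of the lim inf/lim sup and the semicontinuity argument.

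For~\eqref{Eq:hatRate}, involution gives $\widehat P_\lambda=P_\lambda\circ\vartheta$ by~\eqref{Eq:hat_P_lambda}. Therefore $\widehat P_\lambda(r_\lambda S_\delta)=P_\lambda(-r_\lambda S_\delta)$, and $-S_\delta$ is the interval around $-s$. For $s\in\mathring{\cJ}\cap(-\mathring{\cJ})$, both intervals lie in $\cJ$ for small $\delta$. Comparing the lower and upper LDP bounds for $\widehat P_\lambda$ on $S_\delta$ with those for $P_\lambda$ on $-S_\delta$, and letting $\delta\downarrow0$ by the same semicontinuity argument, gives $\hat I(s)=I(-s)$.
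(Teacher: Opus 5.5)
Your proposal is correct and follows the paper's proof essentially step by step. The transient FR~\eqref{Eq:P_lambda_FR} sandwiches $\widehat P_\lambda(r_\lambda S)$ between $\e^{-r_\lambda\sup S}P_\lambda(r_\lambda S)$ and $\e^{-r_\lambda\inf S}P_\lambda(r_\lambda S)$, the LDP upper and lower bounds are applied on an interval $]s-\delta,s+\delta[\subset\cJ$, and lower semicontinuity of $I$ and $\hat I$ lets $\delta\downarrow0$. For the involution part, your explicit comparison of the bounds on $S_\delta$ and $-S_\delta$ is the local version of the paper's appeal to uniqueness of the rate function, and it correctly shows why $s\in\mathring{\cJ}\cap(-\mathring{\cJ})$ is needed.
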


\proof In view of the transient FR~\eqref{Eq:P_lambda_FR} we have
$$
\e^{r_\lambda\inf S}\widehat{P}_\lambda(r_\lambda S)\le
P_\lambda(r_\lambda S)
\le\e^{r_\lambda\sup S}\widehat{P}_\lambda(r_\lambda S)
$$
for any Borel set $S\subset\cJ$. Using~\eqref{Eq:FT_P} and~\eqref{Eq:FT_hatP}
we obtain
\begin{align*}
-\inf_{s\in\mathring{S}}I(s)
&\le\liminf_{\lambda\in\cL}r_\lambda^{-1}\log P_\lambda(r_\lambda S)\\
&\le \limsup_{\lambda\in\cL}r_\lambda^{-1}\log\bigl(\e^{r_\lambda \sup S}\widehat{P}_\lambda(r_\lambda S)\bigr)
\le\sup S-\inf_{s\in\overline{S}}\hat I(s),\\[10pt]
\inf S-\inf_{s\in\mathring{S}}\hat I(s)
&\le\liminf_{\lambda\in\cL}r_\lambda^{-1}\bigl(\e^{r_\lambda\inf S}\log \widehat{P}_\lambda(r_\lambda S)\bigr)\\
&\le\limsup_{\lambda\in\cL}r_\lambda^{-1}\log P_\lambda(r_\lambda S)
\le-\inf_{s\in\overline{S}}I(s).
\end{align*}
Taking $S=]b-\varepsilon,b+\varepsilon[$ with $b\in\mathring{\cJ}$
and $\varepsilon>0$ small enough, we derive
\beq
\begin{split}
\inf_{|s-b|<2\varepsilon}\hat I(s)\le\inf_{|s-b|\le\varepsilon}\hat I(s)
\le b+\varepsilon+\inf_{|s-b|<\varepsilon}I(s),\\[10pt]
b-\varepsilon+\inf_{|s-b|<2\varepsilon}I(s)\le b-\varepsilon+\inf_{|s-b|\le\varepsilon}I(s)
\le \inf_{|s-b|<\varepsilon}\hat I(s).
\end{split}
\label{Eq:loc_4.9}
\eeq
Since the functions~$I$ and~$\hat I$ are lower semicontinuous, we have
$$
I(b)=\lim_{\varepsilon\downarrow0}\inf_{|s-b|<\varepsilon}I(s),\qquad
\hat I(b)=\lim_{\varepsilon\downarrow0}\inf_{|s-b|<\varepsilon}\hat I(s).
$$
Passing to the  limit in~\eqref{Eq:loc_4.9} we get
$$
\hat I(b)\le b+I(b)\le\hat I(b)
$$
for any $b\in\cJ$, which proves~\eqref{Eq:RateSymmetry}.

If the family
$(\PP_\lambda,\wP_\lambda)_{\lambda\in\cL}$ is in involution,
then~\eqref{Eq:hat_P_lambda} gives
$\widehat{P}_\lambda(r_\lambda S)=P_\lambda(-r_\lambda S)$, and the uniqueness
of the rate function yields~\eqref{Eq:hatRate}.\hfill\qed

\medskip
\remark[3] If the family $(\PP_\lambda,\wP_\lambda)_{\lambda\in\cL}$ is in
involution, then \eqref{Eq:hat_P_lambda} implies the following assertions:
\begin{itemize}
\item If the LDP of condition~(i) (resp.~(ii)) in Definition~\ref{Def:FT} holds
with rate function $I$ (resp.~$\hat I$), then the LDP of condition~(ii)
(resp.~(i)) holds with rate $\hat I$ (resp.~$I$) obeying~\eqref{Eq:RateSymmetry}
and~\eqref{Eq:hatRate}.

\item If the LDP of condition~(i) (resp.~(ii)) holds locally on $\cJ$, then the
LDP of condition~(ii) (resp.~(i)) holds locally on $-\cJ$. In particular the FT
holds on $\cJ\cap(-\cJ)$.

\item In Part~(3) of Definition~\ref{Def:FT}, the set $\cJ$ can always be taken
to be symmetric around $0$, since when {\sl both} conditions hold on $\cJ$, they
also hold on $\cJ\cup(-\cJ)$.
\end{itemize}
However, if the two families $(\PP_\lambda,\wP_\lambda)_{\lambda\in\cL}$ are not
assumed to be in involution, then conditions~(i) and~(ii) in
Definition~\ref{Def:FT}  are not equivalent in general. Assuming that $I$ and
$\hat I$ obey~\eqref{Eq:RateSymmetry}, one can still show the following: If the
LDP of condition~(i) (resp.~(ii)) holds, then the leftmost bound in~\eqref{Eq:FT_hatP}
(resp.~\eqref{Eq:FT_P}) holds for every Borel set $S$, and
the rightmost bound holds for every {\em precompact} Borel set $S$. LDPs whose
upper bound is restricted to precompact sets are called the {\sl weak LDPs} in
the literature. While some of the results presented below could be formulated in
terms of weak LDPs, we will refrain from discussing such technicalities.

\section{Entropic Pressure and Rate}
\label{Ssec:Entropic_Pressure}

\begin{definition}
Assume that the FT holds for the family
$(\PP_\lambda,\wP_\lambda)_{\lambda\in\cL}$ with scale
$(r_\lambda)_{\lambda\in\cL}$ and rate $(I,\hat I)$. The associated
\ndex{entropic pressure} is the function defined by
\beq
\RR\ni\alpha\mapsto e(\alpha)\coloneq\sup_{s\in\RR}\left(\alpha s-I(-s)\right).
\label{Eq:ealphaDef}
\eeq
\end{definition}

We note that the entropic pressure is nothing but the \ndex{Legendre--Fenchel
transform} of the function\footnote{The choice of the $-$ sign here is often a
nuisance, but we shall keep it for historical reasons.} $s\mapsto I(-s)$. We
shall see that it is closely related to various notions of pressure and other
thermodynamic potentials. We use the name ``entropic pressure'' to clearly
distinguish it from these other notions and to emphasize its connection with
entropy production.

From the perspective of the theory of convex functions, it
follows from its definition that the entropic pressure is a {\sl closed, proper
convex function}\footnote{A convex function $F:\RR^d\to[-\infty,+\infty]$ is
proper if it never takes the value $-\infty$ and is not everywhere $+\infty$. A
proper, convex function is closed when its epigraph
$\{(x,y)\in\RR^d\times\RR\mid y\ge F(x)\}$ is closed, which is equivalent to $F$
being lower semicontinuous.}~\cite[Theorem~12.2]{Rockafellar1970}. Invoking the
LDP yields $e(0)=0$. Setting
\[
\hat e(\alpha)\coloneq\sup_{s\in\RR}\left(\alpha s-\hat I(s)\right),
\]
the asymptotic FR~\eqref{Eq:RateSymmetry} further implies
$\hat e(\alpha)=e(1-\alpha)$, and the same reasoning yields that $e(1)=\hat e(0)=0$.
It immediately follows that $e(\alpha)\le0$ for $\alpha\in[0,1]$ and
$e(\alpha)\ge0$ for $\alpha\in\RR\setminus[0,1]$. Note that $\hat e$ is another
closed proper convex function. Moreover, if the pair
$(\PP_\lambda,\wP_\lambda)_{\lambda\in\cL}$ is in involution,
then~\eqref{Eq:hatRate} further gives that $\hat e(\alpha)=e(\alpha)$.

Suppose that
$$
\bar e(\alpha)\coloneq\limsup_{\lambda\in\cL}\frac1{r_\lambda}e_\lambda(\alpha)<\infty
$$
for all $\alpha\in\RR$. Then Varadhan's lemma implies that
\beq
e(\alpha)=\bar e(\alpha)=\lim_{\lambda\in\cL}\frac1{r_\lambda}e_\lambda(\alpha)
\label{Eq:LegendreI}
\eeq
for all $\alpha\in\RR$. In this case, all the above properties of the entropic
pressure are inherited from  the corresponding properties of $e_\lambda$
described in Section~\ref{Ssec:TransientFR}. We also note that if
$(\tilde{\sigma})_{\lambda\in\cL}$ is physically equivalent to entropy
production, then
$$
e(\alpha)=\lim_{\lambda\in\cL}\frac1{r_\lambda}
\log\EE_\lambda[\e^{-\alpha\tilde{\sigma}_{\lambda}}]
$$
holds for all $\alpha\in\RR$.

Recalling that the rate function~$I:\RR\to[0,+\infty]$ is lower semicontinuous,
whenever it is convex, a well-known property of the  Legendre--Fenchel
transform~\cite[Theorem~12.2]{Rockafellar1970} leads to the dual identity
\begin{equation}
I(-s)=\sup_{\alpha\in\RR}\bigl(\alpha s -e(\alpha)\bigr)
\label{Eq:IasLegendreTransform}
\end{equation}
which holds for all $s\in\RR$. Denoting by\footnote{See the discussion
following~\eqref{eq:defdome} below for definitions and conventions.}
$\partial^\pm e(\alpha)$ the right/left derivative of $e$ at $\alpha\in\RR$ and
by $\partial e(\alpha)=[\partial^-e(\alpha),\partial^+e(\alpha)]\cap \RR$ its
subdifferential, another well-known property of the Legendre--Fenchel
transform~\cite[Theorem~23.5]{Rockafellar1970} yields
\beq
s\in\partial e(\alpha)\quad\Longrightarrow\quad
I(-s)=\alpha s-e(\alpha).
\label{Eq:LegendreForma}
\eeq
If $I$ is not convex on~$\RR$, then~\eqref{Eq:IasLegendreTransform}
and~\eqref{Eq:LegendreForma} hold with~$I$ replaced by its convex
hull defined by
\beq
\breve{I}(s)\coloneq\sup_{\varphi\in\mathrm{conv}(I)}\varphi(s),
\label{eq:nabucco}
\eeq
where $\mathrm{conv}(I)$ denotes the set of all convex functions
$\varphi:\RR\to]{-}\infty,+\infty]$ such that $\varphi(s)\le I(s)$ for all
$s\in\RR$. Although there are some intriguing examples where the rate function
$I$ is strictly concave on some interval,\footnote{As we shall see in
Section~\ref{Ex:Lattice_gas}, the mean field lattice gas provides such an
example.} in most cases of interest $I$ is convex.

The above discussion can be turned around. Suppose that the function
$$
\bar{e}(\alpha)=\limsup_{\lambda\in\cL}\frac1{r_\lambda}
\log\,\EE_\lambda[\e^{-\alpha\sigma_\lambda}]
$$
is finite on a neighborhood of $[0,1]$, then the LDP upper bounds hold
(see the proofs of~\cite[Theorems~2.3.6 and~4.5.3]{Dembo2000}). Moreover,
if the limit
$$
e(\alpha)=\lim_{\lambda\in\cL}\frac1{r_\lambda}e_\lambda(\alpha)
$$
exists and defines a differentiable function on some open interval
$\fJ_d\supset[0,1]$, then the Gärtner--Ellis theorem implies that the FT holds
locally on the interval $$\cJ=]a_-\vee(-a_+),(-a_-)\wedge a_+[$$ with
$$
a_-=\inf_{\alpha\in\fJ_d}e'(\alpha),\qquad
a_+=\sup_{\alpha\in\fJ_d}e'(\alpha),
$$
and the convex rate functions given by\footnote{The function $I$ defined in
\eqref{eq:defIhatIJd} coincides with the right-hand side of
\eqref{Eq:IasLegendreTransform} at least in the interval  $[-a_+, a_-]\cap
\RR$.}
\begin{equation}\label{eq:defIhatIJd}
I(-s)=\sup_{\alpha\in\fJ_d}\bigl(\alpha s-e(\alpha)\bigr),\qquad
\hat I(s)=s+I(s).
\end{equation}
Moreover, if $e$ is steep, {\sl i.e.,} $a_\pm=\pm\infty$, then $I$ and $\hat I$
are good rates and the global FT holds. This construction provides a practical
route to the proof of the Fluctuation Theorem which has been followed by most
mathematically rigorous studies since the seminal work~\cite{Lebowitz1999}. The
same approach was adopted in~\cite{Jaksic2011}.

\section{Higher-level Fluctuation Theorems}

The main deficiency of the Gärtner--Ellis route to the FT is its failure in
the presence of {\sl phase transitions, i.e.,} non-differentiability of\index{phase transition}
the entropic pressure.\footnote{We shall see in our examples that
non-differentiability of the entropic pressure is indeed the signature of phase
transitions in the usual thermodynamical sense.} The problem is particularly
acute in cases where singularities of $e$ occur in the interval $[0,1]$, since
then even the local FT is out of reach. In Chapter~\ref{chap:Illustrations}, we shall give
some examples where such a situation occurs and for which it is therefore
necessary to develop other routes to the FT.

There are several alternatives to the Gärtner--Ellis theorem in the
literature:
\begin{itemize}
\item The \ndex{Donsker--Varadhan theory}~\cite{Donsker1975a,Donsker1975b} for additive
functionals of Markov processes is of particular interest due to the wide usage of
the Markov approximation in non-equilibrium statistical mechanics.

\item The approach developed by Ruelle and Lanford
in~\cite{Ruelle1965,Lanford1973} is well adapted to studying shift spaces over
finite alphabets under very general decoupling conditions for which the
thermodynamic formalism does not apply. Such decoupling conditions arise
naturally in multifractal analysis, in Gibbs states with hard-core interactions,
and in the statistics of repeated quantum measurement
processes~\cite{Cuneo2019}.

\item Kifer's criterion~\cite{Kifer1990} can be particularly useful for infinite
dimensional systems, see, e.g.~\cite{Jaksic2015b,Jaksic2018}.
\end{itemize}
These alternative methods also provide a way to lift the FT and the FR from
level-1, as formulated above, to level-2 or level-3. In the abstract setting of
Sections~\ref{Ssec:TransientFR} and~\ref{Ssect:FT}, such a lifting can be
formulated as follows. Suppose that for each $\lambda\in\cL$ there exists a
random variable $\xi_\lambda$ on $(\Omega_\lambda,\fF_\lambda)$, taking values
in a Polish space $\fX$ and such that the family
$(\xi_\lambda)_{\lambda\in\cL}$ satisfies the  LDP w.r.t.\;both
$(\PP_\lambda)_{\lambda\in\cL}$ and $(\wP_\lambda)_{\lambda\in\cL}$ with
scale $(r_\lambda)_{\lambda\in\cL}$ and respective rates $\II$ and
$\hat{\II}$. Suppose also that there is a measurable map $\varsigma:\fX\to\RR$
such that $r_\lambda\varsigma(\xi_\lambda)$ is physically equivalent to
$\sigma_\lambda$. In such a case, we shall say that this family satisfies the FT
whenever the lifted FR
\beq
\hat{\II}(\xi)=\II(\xi)+\varsigma(\xi)
\label{Eq:FRlift}
\eeq
holds for $\xi\in\fX$ (see~\cite{Maes2008,Bodineau2008,Barato2015}).
Setting
$$
I(s)\coloneq\inf\{\II(\xi)\mid\xi\in\fX,\varsigma(\xi)=s\},\qquad
\hat I(s)\coloneq\inf\{\hat\II(\xi)\mid\xi\in\fX,\varsigma(\xi)=s\},
$$
and observing that~\eqref{Eq:FRlift} implies that $(I,\hat I)$ satisfies the
FR~\eqref{Eq:RateSymmetry}, it follows from the contraction principle that the
FT holds for $(\PP_\lambda,\wP_\lambda)_{\lambda\in\cL}$ with scale
$(r_\lambda)_{\lambda\in\cL}$ and rates $(I,\hat I)$ provided that the rates
$\II$ and $\hat{\II}$ are good and the function $\varsigma$ is continuous.

\section{On the Meaning of the Fluctuation Theorem}
\label{sec:FR_Meaning}

Steady state FRs reduce to the Einstein--Green--Kubo fluctuation--dissipation formula
and Onsager's reciprocity relations in the limit of weak deviations from thermal
equilibrium, as explained in~\cite{Gallavotti1996a,Lebowitz1999,Jaksic2011}. In this
section, we shall derive consequences of the FR which extend beyond the linear
response regime. We refer to~\cite{Merhav2010,Seifert2012} for
discussions of other consequences of the steady state FRs.

Returning to the level of generality of Sections~\ref{Ssec:TransientFR}
and~\ref{Ssect:FT}, the rate function~$I$ and the associated entropic pressure $e$
can be linked to the \ndex{hypothesis testing} {\sl error exponents\/} of the family
$(\PP_\lambda,\wP_\lambda)_{\lambda\in\cL}$. These exponents
quantify the rate of separation between measures~$\PP_\lambda$
and~$\wP_\lambda$ as~$\lambda$ moves along the index set $\cL$. If the
elements of $\cL$ are instances of time and
$\wP_\lambda=\PP_\lambda\circ\Theta_\lambda$ where~$\Theta_\lambda$ implements
time reversal, these exponents quantify the emergence of the arrow of time and
can be viewed as quantitative forms of the second law of thermodynamics.

\part{Binary Hypothesis Testing.} Given $\omega\in\Omega_\lambda$, consider the
null hypothesis
$$
H_0: \omega \text{ was sampled from }\PP_\lambda,
$$
and the alternative hypothesis
$$
H_A: \omega \text{  was sampled from }\wP_\lambda.
$$
To any $\Gamma\in\fF_\lambda$, we associate the test according to which
$$
\text{if }\omega\in\Gamma\text{, then we reject }H_A\text{, otherwise we reject }H_0.
$$
The probability of the type-I error---wrongly rejecting the null hypothesis---is
$\PP_\lambda(\Gamma^c)$, while the type-II error---wrongly rejecting the
alternative hypothesis---has probability $\wP_\lambda(\Gamma)$. Thus, assuming
for simplicity that $\PP_\lambda$ and $\wP_\lambda$ have equal {\sl a priori}
probabilities,
$$
D(\PP_\lambda,\wP_\lambda,\Gamma)\coloneq\PP_\lambda(\Gamma^c)+\wP_\lambda(\Gamma),
$$
is twice the total error probability of the test $\Gamma$,
and a test $\Gamma_\mathrm{opt}$ is optimal whenever
$$
D(\PP_\lambda,\wP_\lambda,\Gamma_\mathrm{opt})=D(\PP_\lambda,\wP_\lambda)\coloneq
\inf_{\Gamma\in\fF_\lambda}D(\PP_\lambda,\wP_\lambda,\Gamma).
$$
The following variant of the \ndex{Neyman--Pearson lemma} identifies an optimal test:
\begin{proposition}\label{prop:Neyman}
Assume that $\PP_\lambda$ and $\wP_\lambda$ are equivalent. Then
\begin{equation}\label{eq:optimalchernoff}
D(\PP_\lambda,\wP_\lambda)=D(\PP_\lambda,\wP_\lambda,\{\sigma_\lambda\ge0\}),
\end{equation}
where $\sigma_\lambda$ is given by~\eqref{Eq:sigma_lambda}.
\end{proposition}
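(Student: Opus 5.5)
The plan is to write both error probabilities as integrals against the single measure $\wP_\lambda$ and then minimize the resulting integrand pointwise. Since $\PP_\lambda$ and $\wP_\lambda$ are equivalent, the density $\d\PP_\lambda/\d\wP_\lambda=\e^{\sigma_\lambda}$ is finite and strictly positive $\wP_\lambda$-a.e.

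For any $\Gamma\in\fF_\lambda$ we have
$$
D(\PP_\lambda,\wP_\lambda,\Gamma)
=\wE_\lambda\bigl[\e^{\sigma_\lambda}\mathbbm{1}_{\Gamma^c}+\mathbbm{1}_{\Gamma}\bigr].
$$
At each point $\omega$, the integrand is either $\e^{\sigma_\lambda(\omega)}$ (if $\omega\notin\Gamma$) or $1$ (if $\omega\in\Gamma$). It is therefore bounded below by $\min(\e^{\sigma_\lambda(\omega)},1)$.

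Next we check that $\Gamma_\mathrm{opt}=\{\sigma_\lambda\ge0\}$ attains this lower bound pointwise. On $\Gamma_\mathrm{opt}$ the integrand is $1\le\e^{\sigma_\lambda}$. On the complement it is $\e^{\sigma_\lambda}<1$. Hence
$$
D(\PP_\lambda,\wP_\lambda,\Gamma)\ge\wE_\lambda\bigl[\min(\e^{\sigma_\lambda},1)\bigr]=D(\PP_\lambda,\wP_\lambda,\{\sigma_\lambda\ge0\})
$$
for every $\Gamma$. Taking the infimum over $\Gamma$ gives~\eqref{eq:optimalchernoff}, and the infimum is attained.

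As a byproduct, this identifies the optimal value as $D(\PP_\lambda,\wP_\lambda)=\wE_\lambda[\min(\e^{\sigma_\lambda},1)]=\EE_\lambda[\min(1,\e^{-\sigma_\lambda})]$, which is what one would use for the Chernoff-type error exponents that follow.

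There is no real obstacle here. The only care needed is measure-theoretic: $\sigma_\lambda$ is defined only a.e., so $\{\sigma_\lambda\ge0\}$ is determined up to a null set, which does not affect $D$. Equivalence of the measures ensures that null sets for one measure are null for the other.
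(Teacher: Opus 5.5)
Your proof is correct and is essentially the paper's argument: the paper also bounds the integrand pointwise by a minimum that is attained on $\{\sigma_\lambda\ge0\}$. The only differences are that the paper integrates against $\PP_\lambda$, with integrand $1_{\Gamma^c}+\e^{-\sigma_\lambda}1_\Gamma$, rather than against $\wP_\lambda$, and that it proves the slightly more general weighted identity with $\e^{sr_\lambda}\wP_\lambda(\Gamma)$ and threshold $\{\sigma_\lambda\ge sr_\lambda\}$, which it reuses later for the Stein and Hoeffding exponents.
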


\proof For later reference, we prove the slightly more general statement
\beq
\inf_{\Gamma\in\fF_\lambda}\left(\PP_\lambda(\Gamma^c)
+\e^{sr_\lambda}\wP_\lambda(\Gamma)\right)
=\PP_\lambda\{\sigma_\lambda<sr_\lambda\}
+\e^{sr_\lambda}\wP_\lambda\{\sigma_\lambda\ge sr_\lambda\}.
\label{eq:NPgen}
\eeq
Indeed, for any $\Gamma\in\fF_\lambda$ and $s\in\RR$,
one has\footnote{$1_A$ denotes the indicator function of a set $A$.}
\begin{align*}
\PP_\lambda(\Gamma^c)+\e^{sr_\lambda}\wP_\lambda(\Gamma)
&=\EE_\lambda\left[1_{\Gamma^c}+\e^{sr_\lambda-\sigma_\lambda}1_\Gamma\right]\\
&\ge\EE_\lambda\left[\min(1,\e^{sr_\lambda-\sigma_\lambda})\right]\\
&=\EE_\lambda\left[1_{\{\sigma_\lambda<sr_\lambda\}}+\e^{sr_\lambda-\sigma_\lambda}1_{\{\sigma_\lambda\ge sr_\lambda\}}\right]\\
&=\PP_\lambda\{\sigma_\lambda<sr_\lambda\}+\e^{sr_\lambda}\wP_\lambda\{\sigma_\lambda\ge sr_\lambda\},
\end{align*}
which implies the claimed identity.
\hfill\qed

\medskip
Error exponents quantify the exponential decay rate of type-II errors as
$\lambda$ increases through $\cL$, under various conditions on the probability
of type-I errors.

\part{Stein error exponents.} The {\sl lower} and {\sl upper Stein exponents} are defined by
\index{Stein error exponents}
\begin{align*}
\underline{\fs}&\coloneq\inf\biggl\{\liminf_{\lambda\in\cL}r_\lambda^{-1}\log\wP_\lambda(\Gamma_\lambda)\,\bigg|\,
\Gamma_\lambda\in\fF_\lambda,\lim_{\lambda\in\cL}\PP_\lambda(\Gamma_\lambda^c)=0\biggr\},\\[2pt]
\bar\fs&\coloneq\inf\biggl\{\limsup_{\lambda\in\cL}r_\lambda^{-1}\log\wP_\lambda(\Gamma_\lambda)\,\bigg|\,
\Gamma_\lambda\in\fF_\lambda,\lim_{\lambda\in\cL}\PP_\lambda(\Gamma_\lambda^c)=0\biggr\}.
\end{align*}
Given $\gamma\in]0,1[$, we also set
$$
\fp_\gamma(\PP_\lambda,\wP_\lambda)
\coloneq\inf\bigl\{\wP_\lambda(\Gamma)\mid
\Gamma\in\fF_\lambda,\PP_\lambda(\Gamma^c)\le\gamma\bigr\}.
$$
The following result establishes a link between the Stein exponents, the
asymptotics of~$\fp_\gamma(\PP_\lambda,\wP_\lambda)$, and the weak law of large
numbers.

\begin{proposition}
\label{prop:Stein}
\hspace{0em}
\ben
\item Suppose that the net $(r_\lambda^{-1}\sigma_\lambda)_{\lambda\in\cL}$
converges in probability to a deterministic limit
$\epr\in[-\infty,+\infty]$.\footnote{This holds if, for instance, the FT holds
with a rate~$I$ vanishing at a unique point $\epr$.} Then, $\epr\ge0$ and, for
any $\gamma\in{]}0,1{[}$, one has
\begin{equation}
\underline{\fs}=\bar\fs
=\lim_{\lambda\in\cL}r_\lambda^{-1}\log\fp_\gamma(\PP_\lambda,\wP_\lambda)=-\epr.
\label{eq:Angelich}
\end{equation}
\item If the limit~\eqref{Eq:LegendreI} holds for $\alpha$ in a neighborhood
of\/ $0$, and  the entropic pressure~$e$ is differentiable at $\alpha=0$, then
\begin{equation}
\lim_{\lambda\in\cL}r_\lambda^{-1}\Ent(\PP_\lambda\mid\wP_\lambda)=-e^\prime(0).
\label{eq:destroy}
\end{equation}
If, in addition, the net $(r_\lambda^{-1}\sigma_\lambda)_{\lambda\in\cL}$
converges in the mean to a deterministic limit, then the conclusions of Part~(1)
hold with $\epr=-e'(0)$.
\een
\end{proposition}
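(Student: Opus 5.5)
Part~(1) splits into an achievability bound (a good test) and a converse bound (no test does better), both obtained by comparing $\PP_\lambda$ and $\wP_\lambda$ on level sets of $\sigma_\lambda$ via $\d\wP_\lambda=\e^{-\sigma_\lambda}\d\PP_\lambda$.

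\textbf{Step 1: $\epr\ge0$.} Markov's inequality~\eqref{Eq:Jarzy_and_Markov} gives $\PP_\lambda\{\sigma_\lambda\le -\delta r_\lambda\}\le \e^{-\delta r_\lambda}\to0$ for every $\delta>0$. Hence a limit in probability cannot be negative.

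\textbf{Step 2: achievability.} Assume first $\epr<\infty$ and fix $\delta>0$. Take $\Gamma_\lambda=\{\sigma_\lambda\ge(\epr-\delta)r_\lambda\}$. Convergence in probability gives $\PP_\lambda(\Gamma_\lambda^c)\to0$, and
$$
\wP_\lambda(\Gamma_\lambda)=\EE_\lambda[\e^{-\sigma_\lambda}1_{\Gamma_\lambda}]\le \e^{-(\epr-\delta)r_\lambda}.
$$
Therefore $\bar\fs\le-\epr+\delta$. Since $\PP_\lambda(\Gamma_\lambda^c)\le\gamma$ eventually, also $\limsup r_\lambda^{-1}\log\fp_\gamma\le-\epr+\delta$. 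If $\epr=+\infty$, the same argument with an arbitrary threshold $M$ in place of $\epr-\delta$ gives $-\infty$.

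\textbf{Step 3: converse.} This is the main step. Let $\Gamma$ satisfy $\PP_\lambda(\Gamma^c)\le\gamma$, and put $A_\lambda=\{\sigma_\lambda\le(\epr+\delta)r_\lambda\}$. Then
$$
\wP_\lambda(\Gamma)\ge\EE_\lambda[\e^{-\sigma_\lambda}1_{\Gamma\cap A_\lambda}]\ge \e^{-(\epr+\delta)r_\lambda}\,\PP_\lambda(\Gamma\cap A_\lambda).
$$
Moreover $\PP_\lambda(\Gamma\cap A_\lambda)\ge 1-\gamma-\PP_\lambda(A_\lambda^c)$, which is eventually at least $(1-\gamma)/2>0$. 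This gives $\liminf r_\lambda^{-1}\log\fp_\gamma\ge-\epr-\delta$. For the Stein exponents, any admissible sequence $\Gamma_\lambda$ has $\PP_\lambda(\Gamma_\lambda^c)\le\gamma$ eventually, so $\wP_\lambda(\Gamma_\lambda)\ge\fp_\gamma$ eventually and hence $\underline\fs\ge-\epr$. Combining with Step~2 and using $\underline\fs\le\bar\fs$ yields~\eqref{eq:Angelich}. If $\epr=+\infty$, the lower bound is trivial.

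\textbf{Part~(2).} By~\eqref{Eq:e_alpha_def}, $e_\lambda(\alpha)$ is convex with $e_\lambda(0)=0$ and $e_\lambda'(0^\pm)$ related to $-\EE_\lambda[\sigma_\lambda]$. Convexity gives, for small $\alpha>0$,
$$
\frac{e_\lambda(-\alpha)}{-\alpha}\le -\ep_\lambda\le\frac{e_\lambda(\alpha)}{\alpha}.
$$
Here $\ep_\lambda$ is finite by the hypothesis at $\pm\alpha$ near $0$, or the inequalities hold in the extended sense. Divide by $r_\lambda$ and let $\lambda$ run through $\cL$ using~\eqref{Eq:LegendreI}: this gives $e(-\alpha)/(-\alpha)\le\liminf(-r_\lambda^{-1}\ep_\lambda)\le\limsup(-r_\lambda^{-1}\ep_\lambda)\le e(\alpha)/\alpha$. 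Letting $\alpha\downarrow0$ and using differentiability of $e$ at $0$ gives~\eqref{eq:destroy}.

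For the final claim, convergence in the mean to a deterministic limit $c$ implies convergence in probability, so Part~(1) applies with $\epr=c$. Convergence in $L^1$ also gives $c=\lim r_\lambda^{-1}\EE_\lambda[\sigma_\lambda]=-e'(0)$.
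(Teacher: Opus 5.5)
Your proof is correct and is essentially the paper's argument. The achievability bound via $\Gamma_\lambda=\{\sigma_\lambda\ge s r_\lambda\}$ and the converse via $\d\wP_\lambda=\e^{-\sigma_\lambda}\d\PP_\lambda$ on $\{\sigma_\lambda\le s r_\lambda\}$ are what the paper does; it phrases the converse through the Neyman--Pearson bound~\eqref{eq:NPgen}, which rests on the same pointwise estimate. In Part~(2), your chord-slope sandwich is the elementary one-point version of Rockafellar's Theorem~25.7 that the paper cites, and you can justify $e_\lambda(\pm\alpha)\ge\mp\alpha\,\ep_\lambda$ directly by Jensen's inequality, valid in $]{-}\infty,+\infty]$ without any finiteness or differentiability caveat.
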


\proof {\bf (1)} For any $\delta>0$, Equ.~\eqref{Eq:Jarzy_and_Markov} gives that
$$
\PP_\lambda\left\{r_\lambda^{-1}\sigma_{\lambda}\le-\delta\right\}\le\e^{-\delta r_\lambda},
$$
so that $\epr\in[0,+\infty]$. To prove~\eqref{eq:Angelich}, let us fix $\gamma\in{]}0,1{[}$.
We first show that
\begin{equation}
\underline{\fs}_\gamma\coloneq
\liminf_{\lambda\in\cL}r_\lambda^{-1}\log\fp_\gamma(\PP_\lambda,\wP_\lambda)\ge-\epr,
\label{eq:rainy2}
\end{equation}
assuming, w.l.o.g., that  $\epr<\infty$. The assumed convergence in probability
gives that for any $s>\epr$ there is $\lambda_s\in\cL$ such that
$\PP_\lambda\{\sigma_{\lambda}<sr_\lambda\}\ge(1+\gamma)/2$ for all
$\lambda\succ\lambda_s$. For any such $\lambda$ and any $\Gamma\in\fF_\lambda$,
by~\eqref{eq:NPgen},
\begin{align*}
\PP_\lambda(\Gamma^c)+\e^{sr_\lambda}\wP_\lambda(\Gamma)
&\geq \PP_\lambda\{\sigma_\lambda<sr_\lambda\}
+\e^{sr_\lambda}\wP_\lambda\{\sigma_\lambda\ge sr_\lambda\}
\ge\frac{1+\gamma}2,
\end{align*}
and it follows that
$$
\wP_\lambda(\Gamma)\ge\frac{1-\gamma}2\e^{-sr_\lambda}
$$
whenever $\lambda\succ\lambda_s$ and $\PP_\lambda(\Gamma^c)\le\gamma$. Thus,
$r_\lambda^{-1}\log\fp_\gamma(\PP_\lambda,\wP_\lambda)\ge -s$ holds for large
$\lambda$, and we conclude that $\underline{\fs}_\gamma\ge-s$ for any $s>\epr$,
which yields~\eqref{eq:rainy2}. Next, we prove the upper bound
\begin{equation}
\bar{\fs}\le-\epr.
\label{eq:rainy3}
\end{equation}
It follows from our assumption that, for any $s<\epr$, the sets
$\Gamma_\lambda=\{\sigma_\lambda\ge sr_\lambda\}\in\fF_\lambda$ satisfy
$\lim_{\lambda\in\cL}\PP_\lambda(\Gamma_\lambda^c)
=\lim_{\lambda\in\cL}\PP_\lambda\{r_\lambda^{-1}\sigma_\lambda<s\}=0$.
From the estimate
$$
\wP_\lambda(\Gamma_\lambda)=\EE_\lambda\left[\e^{-\sigma_\lambda}1_{\sigma_\lambda\ge sr_\lambda}\right]\le\e^{-sr_\lambda},
$$
we deduce that $r_\lambda^{-1}\log\wP_\lambda(\Gamma_\lambda)\le-s$, and we
conclude that $\bar{\fs}\le-s$ for any $s<\epr$, which yields~\eqref{eq:rainy3}.
The obvious inequalities $\underline{\fs}_\gamma\le\underline{\fs}$, and
$$
\bar{\fs}_\gamma
\coloneq\limsup_{\lambda\in\cL}r_\lambda^{-1}\log\fp_\gamma(\PP_\lambda,\wP_\lambda)\le\bar{\fs},
$$
together with~\eqref{eq:rainy2}-\eqref{eq:rainy3} give~\eqref{eq:Angelich} and
the claimed equalities $\underline{\fs}=\bar{\fs}=-\epr$.

\part{(2)} We repeat the argument in the proof of~\cite[Theorem~II.6.3]{Ellis1985}.
One has
$$
\Ent(\PP_\lambda\mid\wP_\lambda)=\EE_\lambda[\sigma_{\lambda}]=-e_\lambda'(0).
$$
Since the functions
$\alpha\mapsto e_\lambda(\alpha)=\log\EE_\lambda[\e^{-\alpha\sigma_\lambda}]$
are convex, the differentiability of the limiting function
$$
\alpha\mapsto e(\alpha)=\lim_{\lambda\in\cL}r_\lambda^{-1}e_\lambda(\alpha)
$$
at $\alpha=0$ gives, by~\cite[Theorem~25.7]{Rockafellar1970},
$$
e'(0)=\lim_{\lambda\in\cL}r_\lambda^{-1}e_\lambda'(0),
$$
from which~\eqref{eq:destroy} follows. Finally, it follows from the convergence
in the mean
$$
\lim_{\lambda\in\cL}\EE_\lambda[|r_\lambda^{-1}\sigma_\lambda-\epr|]=0,
$$
that
$$
\lim_{\lambda\in\cL}r_\lambda^{-1}\sigma_\lambda=\epr=-e'(0)
$$
holds in probability, so that Part~(1) applies.
\hfill\qed

\medskip
Recall that the {\it total variation distance\/} of two probability
measures~$\PP$ and~$\QQ$ on a measurable space~$(\Omega,\fF)$ is
$$
\|\PP-\QQ\|_{\mathrm{var}}
\coloneq\sup_{\Gamma\in\fF}\left(\PP(\Gamma)-\QQ(\Gamma)\right).
$$
If $\QQ$ is absolutely continuous with respect to~$\PP$, and~$\Delta$ stands for
the corresponding density, then
$$
\PP(\Gamma)-\QQ(\Gamma)=1-(\PP(\Gamma^c)+\QQ(\Gamma))
=1-\int_\Omega\left(1_{\Gamma^c}+\Delta 1_\Gamma\right)\d\PP
\le1-\int_\Omega\min(1,\Delta)\d\PP.
$$
Observing that the last inequality is saturated for $\Gamma=\{\Delta\le1\}$, we get
$$
\|\PP-\QQ\|_{\mathrm{var}}=1-\int_\Omega\min(1,\Delta)\d\PP.
$$
Thus, assuming that $\PP_\lambda$ and $\wP_\lambda$ are equivalent, it follows
from Lemma~\ref{prop:Neyman} that
\begin{equation}
1-\|\PP_\lambda-\wP_\lambda\|_{\mathrm{var}}
=\EE_\lambda\left[\min(1,\e^{-\sigma_\lambda})\right]=D(\PP_\lambda,\wP_\lambda).
\label{eq:NPChernoff}
\end{equation}

\part{Chernoff error exponents.} The {\sl lower} and {\sl upper Chernoff exponents} are
\index{Chernoff error exponents}
\begin{equation*}
\begin{split}
\underline{c}&\coloneq\liminf_{\lambda\in\cL}r_\lambda^{-1}
\log\bigl(1-\|\PP_\lambda-\wP_\lambda\|_{\mathrm{var}}\bigr),\\[6pt]
\bar{c}&\coloneq\limsup_{\lambda\in\cL}r_\lambda^{-1}
\log\bigl(1-\|\PP_\lambda-\wP_\lambda\|_{\mathrm{var}}\bigr).
\end{split}
\end{equation*}
The following result provides bounds on these exponents in terms of the rate $I$
and the entropic pressure $e$.

\begin{proposition}
\hspace{0em}
\ben
\item Suppose that $\PP_\lambda$ and $\wP_\lambda$ are equivalent measures for
all $\lambda\in\cL$, and that the FT holds with the scale
$(r_\lambda)_{\lambda\in\cL}$ and rates~$(I,\hat I)$. Then\footnote{Under the
stated assumptions, a slight refinement of the proof below, using also the LDP
upper bound, shows that actually
$\underline{c}=\bar{c}=-\min\big(-\inf_{s\le0}I(s),-\inf_{s>0}\hat I(s)\big)$.}
\begin{equation}
-I(0)\le\underline{c}\le\bar{c}
\le\inf_{\alpha\in[0,1]}\bar{e}(\alpha).
\label{eq:genere}
\end{equation}
\item If, in addition, the rate $I$ is convex and~\eqref{Eq:LegendreI} holds for
$\alpha\in[0,1]$, then the upper and lower Chernoff exponents coincide, and
\footnote{Note that if the family $(\PP_\lambda,\wP_\lambda)_{\lambda\in\cL}$ is
in involution, then	$-I(0)=e(1/2)$.}
\begin{equation}
\underline{c}=\bar{c}=\lim_{\lambda\in\cL}r_\lambda^{-1}
\log\bigl(1-\|\PP_\lambda-\wP_\lambda\|_{\mathrm{var}}\bigr)
=-I(0) = \inf_{\alpha\in[0,1]}\bar{e}(\alpha).
\label{4.17}
\end{equation}
\een
\end{proposition}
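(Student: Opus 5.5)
The whole argument runs through the identity \eqref{eq:NPChernoff},
$$
1-\|\PP_\lambda-\wP_\lambda\|_{\mathrm{var}}=\EE_\lambda\bigl[\min(1,\e^{-\sigma_\lambda})\bigr].
$$
For the upper bound in Part~(1), I would use the elementary inequality $\min(1,\e^{-x})\le\e^{-\alpha x}$, valid for all real $x$ and all $\alpha\in[0,1]$. It gives
$$
1-\|\PP_\lambda-\wP_\lambda\|_{\mathrm{var}}\le\EE_\lambda[\e^{-\alpha\sigma_\lambda}]=\e^{e_\lambda(\alpha)}.
$$
Hence $\bar c\le\bar e(\alpha)$ for every $\alpha\in[0,1]$, and taking the infimum over $\alpha$ yields the rightmost inequality in \eqref{eq:genere}. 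The middle inequality $\underline c\le\bar c$ is trivial.

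For the lower bound $-I(0)\le\underline c$, I would fix $\varepsilon>0$ and keep only the event $\{|\sigma_\lambda|<\varepsilon r_\lambda\}$. On this event $\min(1,\e^{-\sigma_\lambda})\ge\e^{-\varepsilon r_\lambda}$, so
$$
1-\|\PP_\lambda-\wP_\lambda\|_{\mathrm{var}}\ge\e^{-\varepsilon r_\lambda}P_\lambda\bigl(r_\lambda\,]{-}\varepsilon,\varepsilon[\bigr).
$$
The LDP lower bound \eqref{Eq:FT_P} applied to the open set $S=]{-}\varepsilon,\varepsilon[$ then gives $\underline c\ge-\varepsilon-\inf_{|s|<\varepsilon}I(s)\ge-\varepsilon-I(0)$. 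Letting $\varepsilon\downarrow0$ finishes Part~(1). This uses only condition~(i) of the FT, and no lower semicontinuity is needed here, because $I(0)$ bounds the infimum directly.

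For Part~(2), the inequalities $-I(0)\le\underline c\le\bar c\le\inf_{[0,1]}\bar e$ are already available, so it suffices to prove $\inf_{\alpha\in[0,1]}e(\alpha)\le-I(0)$, using $\bar e=e$ on $[0,1]$ by \eqref{Eq:LegendreI}. Since $I$ is convex and lower semicontinuous, \eqref{Eq:IasLegendreTransform} at $s=0$ gives
$$
-I(0)=\inf_{\alpha\in\RR}e(\alpha).
$$
It then remains to show that this infimum over $\RR$ is attained, or approached, within $[0,1]$. Here I use that $e$ is convex with $e(0)=e(1)=0$, hence $e\le0$ on $[0,1]$ and $e\ge0$ outside $[0,1]$, as noted after \eqref{Eq:ealphaDef}. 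So values outside $[0,1]$ never go below $0\ge\inf_{[0,1]}e$, and therefore $\inf_\RR e=\inf_{[0,1]}e$. Combining this with Part~(1) forces all the quantities in \eqref{4.17} to be equal, so the liminf and limsup coincide and the limit exists.

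The only delicate point is the identification $-I(0)=\inf_\RR e$, which relies on the Legendre duality \eqref{Eq:IasLegendreTransform}. This duality requires $I$ to be convex and closed, which is assumed, and $e$ to be proper, which holds because $e(0)=0$. If $I(0)=+\infty$, the statement still holds in the extended sense, since both sides equal $-\infty$.
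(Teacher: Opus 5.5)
Your proof is correct and follows the paper's argument essentially step for step: the bound $\min(1,\e^{-\sigma_\lambda})\le\e^{-\alpha\sigma_\lambda}$ for the upper estimate, an LDP lower bound near $0$ for $\underline{c}\ge-I(0)$, and in Part~(2) the duality $-I(0)=\inf_{\alpha\in\RR}e(\alpha)$ combined with the sign pattern of $e$ to restrict the infimum to $[0,1]$. The only cosmetic difference is in the lower bound: you bound the integrand on $\{|\sigma_\lambda|<\varepsilon r_\lambda\}$, whereas the paper uses the Neyman--Pearson form to get $\e^{-\varepsilon r_\lambda}\PP_\lambda\{\sigma_\lambda\le\varepsilon r_\lambda\}$; this gives the slightly sharper $-\inf_{s<\varepsilon}I(s)$ but leads to the same conclusion.
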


\proof {\bf(1)} It follows from~\eqref{eq:NPChernoff} that, for $\alpha\in[0,1]$,
$$
1-\|\PP_\lambda-\wP_\lambda\|_{\mathrm{var}}
=\EE_\lambda\left[\min(1,\e^{-\sigma_\lambda})\right]
\le\EE_\lambda\left[\e^{-\alpha\sigma_\lambda}\right]=\e^{e_\lambda(\alpha)}.
$$
Thus, taking the limit over $\lambda\in\cL$,
$$
\overline{c}\le\limsup_{\lambda\in\cL}r_\lambda^{-1}e_\lambda(\alpha)
=\bar{e}(\alpha),
$$
which yields $\overline{c}\le\inf_{\alpha\in[0,1]}\bar{e}(\alpha)$. Since
$\underline{c}\leq \overline{c}$, it remains to prove that
$\underline{c}\ge-I(0)$ in order to establish \eqref{eq:genere}. Observe that
for all $\varepsilon>0$,
\begin{align*}
	1-\|\PP_\lambda-\wP_\lambda\|_{\mathrm{var}}
&=D(\PP_\lambda,\wP_\lambda)
=\PP_\lambda\{\sigma_\lambda <0\}+\wP_\lambda\{\sigma_\lambda\ge 0\}
\ge \e^{-r_\lambda \varepsilon}\PP_\lambda\{\sigma_\lambda\le r_\lambda \varepsilon\},
\end{align*}
where we used again Lemma~\ref{prop:Neyman} and the definition of
$\sigma_\lambda$. Hence, invoking the LDP lower bound, we derive
$\underline{c}\ge-\inf_{s<\varepsilon}I(s)-\varepsilon\ge-I(0)-\varepsilon$, and
since $\varepsilon>0$ was arbitrary, we obtain $\underline{c}\ge-I(0)$.

\part{(2)} Suppose now that~\eqref{Eq:LegendreI} holds and that $I$ is convex.
In view of~\eqref{Eq:IasLegendreTransform}, one has $-I(0)=\inf_{\alpha\in\RR}e(\alpha)$.
Since $e(\alpha)$ is non-positive in $[0,1]$ and non-negative outside $[0,1]$, we
obtain $-I(0)=\inf_{\alpha\in[0,1]}e(\alpha)$, which, together with~\eqref{eq:genere},
implies~\eqref{4.17}.

\hfill\qed

\medskip
Thus, if\,  $\overline c > 0$, then the measures~$\PP_\lambda$ and~$\wP_\lambda$
concentrate on the disjoint subsets $\{\sigma_\lambda>0\}$
and~$\{\sigma_\lambda<0\}$, respectively, and separate with an exponential rate given
by, at least, $r_\lambda \overline c$, in the sense that:
\begin{equation}
\limsup_{\lambda\in\cL}r_\lambda^{-1}\log\PP_\lambda\{\sigma_\lambda\leq0\}\leq \overline c, \qquad
\limsup_{\lambda\in\cL}r_\lambda^{-1}\log\wP_\lambda\{\sigma_\lambda\geq0\}\leq \overline c.
\label{eq:Exposep}
\end{equation}
The two bounds in~\eqref{eq:Exposep} follow from the definition of $\overline c$
and the fact that both $\PP_\lambda\{\sigma_\lambda\leq0\}$ and
$\wP_\lambda\{\sigma_\lambda\geq0\}$ are bounded above by the quantity
$D(\PP_\lambda,\wP_\lambda,\{\sigma_\lambda\ge0\})$
in~\eqref{eq:optimalchernoff}.

The last result in this section provides more detailed information on the rates
at which the measures $\PP_\lambda$ and $\wP_\lambda$ concentrate on
complementary subsets. This information is carried by a third kind of exponents:

\part{Hoeffding error exponents.} Given $u\in\RR$, define\index{Hoeffding error exponents}
\begin{align*}
\underline{\fh}(u)
&\coloneq\inf\Bigl\{\liminf_{\lambda\in\cL}
r_\lambda^{-1}\log\wP_\lambda(\Gamma_\lambda)\,\Big|\,
\Gamma_\lambda\in\fF_\lambda,
\limsup_{\lambda\in\cL}
r_\lambda^{-1}\log\PP_\lambda(\Gamma_\lambda^c)<-u\Bigr\},\\[4pt]
\overline{\fh}(u)
&\coloneq\inf\Bigl\{\limsup_{\lambda\in\cL}
r_\lambda^{-1}\log\wP_\lambda(\Gamma_\lambda)\,\Big|\,
\Gamma_\lambda\in\fF_\lambda,
\limsup_{\lambda \in\cL}
r_\lambda^{-1}\log\PP_\lambda(\Gamma_\lambda^c)< -u\Bigr\},\\[4pt]
\fh(u)
&\coloneq\inf\Bigl\{\lim_{\lambda\in\cL}
r_\lambda^{-1}\log\wP_\lambda(\Gamma_\lambda)\,\Big|\,
\Gamma_\lambda\in\fF_\lambda,
\limsup_{\lambda \in\cL}
r_\lambda^{-1}\log\PP_\lambda(\Gamma_\lambda^c)< -u\Bigr\},
\end{align*}
where the infimum in the last relation is taken over all
families~$(\Gamma_\lambda)_{\lambda\in\cL}$ for which the limit exists.

\medskip
It  immediately follows from these definitions that $\underline{\fh}$,
$\overline{\fh}$, and~$\fh$ are non-decreasing and non-positive functions of $u$
such that
\beq
\underline{\fh}(u)=\overline{\fh}(u)=\fh(u)=-\infty
\label{eq:hone}
\eeq
for $u<0$. Moreover, the inequalities
\beq
\underline{\fh}(u)\le\overline{\fh}(u)\le\fh(u)\le0
\label{eq:htwo}
\eeq
hold for all $u\geq 0$.

\begin{proposition}\label{prop:Hoeffding}
Suppose that, for all $\lambda\in\cL$, $\PP_\lambda$ and $\wP_\lambda$ are
equivalent measures such that the FT holds with the scale
$(r_\lambda)_{\lambda\in\cL}$ and convex rates~$(I,\hat I)$. If the associated
entropic pressure satisfies
$$
e(\alpha)=\lim_{\lambda\in\cL}e_\lambda(\alpha)
$$
for all $\alpha\in[0,1]$, then the Hoeffding error exponents coincide and are given by
\begin{equation}
\underline{\fh}(u)=\overline{\fh}(u)=\fh(u)=-f(u)
\label{eq:hhhufu}
\end{equation}
for all $u\in\RR$, where the function $f$ is related to the entropic pressure by
\beq
f(u)\coloneq\sup_{\alpha\in]0,1]}\frac{-(1-\alpha)u-e(\alpha)}{\alpha}.
\label{Eq:fDef}
\eeq
\end{proposition}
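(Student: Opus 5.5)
The plan is to reduce all three exponents to a single variational quantity built from the rate functions, and then identify that quantity with $f(u)$ by convex duality. Set
$$
\varphi(u)\coloneq\inf\bigl\{\hat I(s)\mid s\in\RR,\ I(s)\le u\bigr\}=\inf\bigl\{s+I(s)\mid I(s)\le u\bigr\},
$$
where the second expression uses the asymptotic FR~\eqref{Eq:RateSymmetry}. For $u<0$ the set is empty, so $\varphi(u)=+\infty$, consistent with~\eqref{eq:hone}. I read the hypothesis on $e$ as $e(\alpha)=\lim_{\lambda\in\cL}r_\lambda^{-1}e_\lambda(\alpha)$, i.e.~\eqref{Eq:LegendreI} on $[0,1]$. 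By~\eqref{eq:htwo} it then suffices to prove two inequalities for $u\ge0$, namely $\underline{\fh}(u)\ge-\varphi(u)$ and $\fh(u)\le-\varphi(u)$, together with the identity $\varphi=f$.

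\textbf{Lower bound.} Fix a family $\Gamma_\lambda$ with $\limsup r_\lambda^{-1}\log\PP_\lambda(\Gamma_\lambda^c)<-u$, so that $\PP_\lambda(\Gamma_\lambda^c)\le\e^{-r_\lambda(u+\eta)}$ eventually, for some $\eta>0$. For any $s\in\RR$, the generalized Neyman--Pearson bound~\eqref{eq:NPgen} gives
$$
\PP_\lambda(\Gamma_\lambda^c)+\e^{sr_\lambda}\wP_\lambda(\Gamma_\lambda)\ge\EE_\lambda\bigl[\min(1,\e^{sr_\lambda-\sigma_\lambda})\bigr]\ge\PP_\lambda\{\sigma_\lambda<sr_\lambda\}.
$$
By the LDP lower bound, the right-hand side is at least $\e^{-r_\lambda(\inf_{t<s}I(t)+\delta)}$ for large $\lambda$. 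If $\inf_{t<s}I(t)<u+\eta$, choosing $\delta$ small makes this term dominate $\PP_\lambda(\Gamma_\lambda^c)$. This yields
$$
\liminf_{\lambda\in\cL} r_\lambda^{-1}\log\wP_\lambda(\Gamma_\lambda)\ge-s-\inf_{t<s}I(t).
$$
Take $s$ slightly larger than a point $t$ with $I(t)\le u$. Since $t<s$, we have $\inf_{t'<s}I(t')\le I(t)\le u<u+\eta$, so the condition holds, and letting $s\downarrow t$ gives the bound $-(t+I(t))$. Optimizing over $t$ gives $\underline{\fh}(u)\ge-\varphi(u)$.

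\textbf{Upper bound.} Use the Neyman--Pearson tests $\Gamma_\lambda=\{\sigma_\lambda\ge sr_\lambda\}$. By the LDP upper bound, $\limsup r_\lambda^{-1}\log\PP_\lambda(\Gamma_\lambda^c)\le-\inf_{t\le s}I(t)$, and this is $<-u$ whenever $\inf_{t\le s}I(t)>u$. By convexity of $I$, the points with $I(s)>u$ to the left of the zero of $I$ form a half-line. By the LDP for $\wP_\lambda$, $r_\lambda^{-1}\log\wP_\lambda(\Gamma_\lambda)$ is squeezed between $-\inf_{t>s}\hat I(t)$ and $-\inf_{t\ge s}\hat I(t)$. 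These coincide by convexity of $\hat I$ at all but at most one $s$, where $\hat I$ jumps to $+\infty$, and one can avoid that point. Hence the limit exists and $\fh(u)\le-\inf_{t\ge s}\hat I(t)$. Letting $s$ increase to the left endpoint of $\{I\le u\}$ and using monotonicity of $\hat I(t)=t+I(t)$ on $\{I\le u\}$ yields $\fh(u)\le-\varphi(u)$. The degenerate cases, where $\{I\le u\}$ is unbounded below or $I\le u$ fails only at the boundary, need separate bookkeeping via lower semicontinuity.

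\textbf{Identification $\varphi=f$.} Write $e(\alpha)=\sup_t(-\alpha t-I(t))$. Then
$$
\frac{-(1-\alpha)u-e(\alpha)}{\alpha}=\inf_t\Bigl(t+u+\beta\bigl(I(t)-u\bigr)\Bigr),\qquad \beta=\alpha^{-1}\in[1,\infty[.
$$
So $f(u)=\sup_{\beta\ge1}\inf_t L(t,\beta)$ with $L$ convex in $t$ and affine in $\beta$. If a minimax exchange is allowed, this becomes $\inf_t\sup_{\beta\ge1}L$. The inner supremum is $+\infty$ when $I(t)>u$ and equals $t+I(t)$ (at $\beta=1$) when $I(t)\le u$, which gives $\varphi(u)$.

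I expect the main obstacle to be justifying this minimax exchange on the non-compact sets, together with the boundary cases: $u=0$, $u$ outside the range of $I$ on the relevant side, and $I$ taking the value $+\infty$. I would try first to prove the inequality $f\le\varphi$ directly, which is trivial by weak duality. For $f\ge\varphi$, I would exhibit, for each $u$ in the interior, a subgradient $\alpha\in]0,1]$ of $e$ corresponding via~\eqref{Eq:LegendreForma} to the point $t$ where $I(t)=u$. The remaining edge values of $u$ would then follow by monotonicity and lower semicontinuity of both sides.
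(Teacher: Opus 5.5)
Your proof is correct. It differs from the paper's mainly in the upper bound.

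\textbf{What is the same.} Your lower bound is the paper's \eqref{eq:fhuis}, repackaged through $\varphi(u)=\inf\{\hat I(s)\mid I(s)\le u\}$. Your identification $\varphi=f$ is exactly the content of Proposition~\ref{Prop:s}(6)--(7): weak duality gives $f\le\varphi$, and a point $\alpha\in]0,1[$ with $-a_u\in\partial e(\alpha)$ gives $f\ge\varphi$, where $a_u$ is the left endpoint of $\{I\le u\}$.

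\textbf{What differs.} The paper derives \eqref{eq:upIs}--\eqref{eq:uphatIs} from the Chernoff bound $\PP_\lambda\{\sigma_\lambda<sr_\lambda\}+\e^{sr_\lambda}\wP_\lambda\{\sigma_\lambda\ge sr_\lambda\}\le\e^{\alpha sr_\lambda+e_\lambda(\alpha)}$. This is the only place where the hypothesis $e=\lim r_\lambda^{-1}e_\lambda$ on $[0,1]$ is used. The paper thereby gets only $\fh(I(s)-\varepsilon)\le-\hat I(s)$. It then removes $\varepsilon$ by a right-continuity and countable-discontinuity argument on $\fh$, and treats $u=0$ separately.

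You read the same bounds directly off the LDP upper bounds for $\PP_\lambda$ and $\wP_\lambda$, using monotonicity of the convex rates on the relevant half-lines. You get existence of the limit from the $\wP_\lambda$ lower bound. You remove $\varepsilon$ by letting the threshold $s\uparrow a_u$ at fixed $u$, which needs only continuity of $\hat I$.

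\textbf{What each buys.} Your route is shorter and handles every $u\ge0$ the same way, including $u=0$ with $\underline s_0=\infty$. It also shows the pressure hypothesis is redundant once the global FT with convex rates holds. The paper's route uses the LDP under $\wP_\lambda$ only through the relation $\hat I(s)=I(s)+s$. The $\wP_\lambda$-probabilities of the tests are controlled via $\e^{-\sigma_\lambda}$ and the R\'enyi entropies $e_\lambda$. That makes it the natural route when upper bounds come from the entropic pressure rather than from a two-sided LDP.

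\textbf{Three points to tighten.}

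First, $\{I\le u\}$ can never be unbounded below. The FR gives $I(s)=\hat I(s)-s\ge-s$, so $a_u\ge-u$.

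Second, letting $s\uparrow a_u$ yields $\fh(u)\le-\inf_{t\ge a_u}\hat I(t)$. You therefore also need $\hat I$ not to dip below $\varphi(u)$ to the right of $b_u\coloneq\sup\{I\le u\}$; monotonicity on $\{I\le u\}$ alone does not give this. It holds because, for $t>b_u$, $\hat I(t)=t+I(t)>t+u>b_u+I(b_u)\ge\varphi(u)$.

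Third, the subgradient step applies for $u\in]0,\underline s_1[$, where $a_u\in]-\underline s_1,\underline s_0[$ and $I(a_u)=u$. The remaining values are immediate. For $u\ge\underline s_1$ both sides vanish: $\alpha=1$ gives $f\ge0$, while $I(-\underline s_1)=\underline s_1$ and $\hat I(-\underline s_1)=0$ give $\varphi\le0$. For $u=0$, one has $f(0)=-\partial^+e(0)=\underline s_0=\varphi(0)$.
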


The proof being quite technical, we postpone it to the final appendix to this chapter.

\section*{Appendix: Proof of Proposition~\ref{prop:Hoeffding}}

We start with some basic properties of the rate function $I$, its convex hull
$\breve{I}$~\eqref{eq:nabucco}, the corresponding entropic pressure
$e$~\eqref{Eq:ealphaDef}, and the function $f$ defined in~\eqref{Eq:fDef}.
Although convexity is assumed in Proposition~\ref{prop:Hoeffding}, we do not
assume that $I$ is convex in the upcoming technical discussion.

We recall that, as the Fenchel--Legendre transform of the function $s\mapsto I(-s)$,
the entropic pressure $e$ is a closed proper convex function on $\RR$.
It is continuous and admits (possibly infinite) left/right derivatives
$\partial^\mp e(\alpha)$ at each point $\alpha$ of its \ndex{essential domain}
\begin{equation}
\Dom(e)\coloneq\{\alpha\in\RR\mid e(\alpha) \text{ is finite}\}\supset[0,1],
\label{eq:defdome}
\end{equation}
see~\cite[Sections~23-24]{Rockafellar1970}. By convention, when $\alpha\notin\Dom(e)$,
we set $\partial^-e(\alpha)=\partial^+e(\alpha)=-\infty$ if $\alpha\leq\inf\,\Dom(e)$
and $\partial^-e(\alpha)=\partial^+e(\alpha)=\infty$ if $\alpha\geq\sup\,\Dom(e)$.
These derivatives satisfy $\partial^-e(\alpha)\le\partial^+e(\alpha)$ with equality
outside a countable subset of $\Dom(e)$, $\partial^+e(\alpha)\le\partial^-e(\alpha')$
whenever $\alpha\le\alpha'$, and for all $\alpha\in\RR$,
\begin{equation}
\lim_{\gamma\downarrow\alpha}\partial^\pm e(\gamma)=\partial^+e(\alpha),\qquad
\lim_{\gamma\uparrow\alpha}\partial^\pm e(\gamma)=\partial^-e(\alpha).
\label{eq:derlim}
\end{equation}
The \ndex{subdifferential} of $e$ at $\alpha\in\Dom(e)$ is defined as the closed interval
$$
\partial e(\alpha)\coloneq[\partial^-e(\alpha),\partial^+e(\alpha)]\cap \RR.
$$

\bep\label{Prop:s}
Assume that the FT holds for the family $(\PP_\lambda,\wP_\lambda)_{\lambda\in\cL}$
with scale $(r_\lambda)_{\lambda\in\cL}$ and rate $(I,\hat I)$. Let
$$
e(\alpha)=\sup_{s\in\RR}(\alpha s-I(-s))
$$
denote the associated entropic pressure, and
\beq
\breve{I}(s)=\sup_{\alpha\in\RR}(\alpha s-e(-\alpha))
\label{eq:breveI}
\eeq
the convex hull of $I$. Let $s_*\in\RR_+$ and
$\underline s_0,\overline s_0,\underline s_1,\overline s_1\in[0,\infty]$
be defined by
\[
s_\ast\coloneq-\inf_{\alpha\in\RR}e(\alpha),\qquad
\partial e(0)\eqcolon[-\overline{s}_0,-\underline{s}_0]\cap\RR,\qquad
\partial e(1)\eqcolon[\underline{s}_1,\overline{s}_1]\cap\RR,
\]
where, following the above conventions, $\overline{s}_0=+\infty$ whenever
$e(\alpha)=+\infty$ for all $\alpha<0$ and $\overline{s}_1=+\infty$
whenever $e(\alpha)=+\infty$ for all $\alpha>1$.

Then, the following hold:
\ben
\item For any $\alpha\in\RR$,
\[
\hat e(\alpha)\coloneq\sup_{s\in\RR}\bigl(\alpha s-\hat I(s)\bigr)=e(1-\alpha),
\]
and in particular
\[
s_\ast=-\inf_{\alpha\in\RR}\hat e(\alpha),\qquad
\partial\hat e(0)=[-\overline{s}_1,-\underline{s}_1]\cap\RR,\qquad
\partial\hat e(1)=[\underline{s}_0,\overline{s}_0]\cap\RR.
\]
Moreover, the convex hull of $\hat I$ is given by
$$
J(s)=\sup_{\alpha\in\RR}(\alpha s-\hat{e}(\alpha))=\breve{I}(s)+s.
$$
\item Denote by $\mathrm{co}(A)$ the convex hull of the set $A\subset\RR$,
{\sl i.e.,} the smallest convex subset of\, $\RR$ containing $A$. If
$\underline{s}_0<\infty$, then
\beq
\mathrm{co}(I^{-1}(\{0\}))\subset\breve{I}^{-1}(\{0\})
=[\underline{s}_0,\overline{s}_0]\cap \RR.
\label{eq:carmen}
\eeq
If $\underline{s}_1<\infty$, then
\beq
\mathrm{co}(\hat{I}^{-1}(\{0\}))\subset J^{-1}(\{0\})
=[-\overline{s}_1,-\underline{s}_1]\cap \RR.
\label{eq:jose}
\eeq
Moreover, the inclusion in~\eqref{eq:carmen} and~\eqref{eq:jose} becomes an
equality whenever its right-hand side is compact.
\item  For all $s\in[-\overline s_0, \overline s_1]\cap \RR$,
\begin{equation}
\breve{I}(-s)=\max_{\alpha\in[0,1]}(\alpha s-e(\alpha)).
\label{eq:invertLTrestr}
\end{equation}
\item One has
\beq
0\le\breve{I}(0)=s_\ast\le\min(\underline{s}_0,\underline{s}_1),
\label{eq:salome}
\eeq
where the two inequalities become equalities iff $e$ vanishes identically on $[0,1]$.
\item $s_\ast=0\,\Longleftrightarrow\,\underline{s}_0=\underline{s}_1=0$.
\item $f$ is a non-increasing closed proper convex function satisfying
\[
f(u)=\left\{
\begin{array}{ll}
+\infty&\text{for }u<0;\\[4pt]
\underline{s}_0&\text{for }u=0;\\[4pt]
s_\ast&\text{for }u=s_\ast;\\[4pt]
0&\text{for }u\ge\underline{s}_1.
\end{array}
\right.
\]
If $\underline{s}_0=\infty$ (resp.\;$\underline{s}_1=\infty$), then the second
(resp.\;last) equality in this list must be replaced by the relation
$$
f(0)=\lim_{u\downarrow0}f(u)=+\infty,
\qquad(\text{resp. }\lim_{u\uparrow\infty}f(u)=0).
$$
\item The function $s\mapsto\breve{I}(s)$ is strictly decreasing on
$]{-}\underline{s}_1,\underline{s}_0{[}$ and maps this interval
homeomorphically onto ${]}0,\underline{s}_1{[}$. The reciprocal
map is given by the restriction to ${]}0,\underline{s}_1{[}$ of the
function $g$ defined by $g(u)=f(u)-u$. In particular,
$$
f\circ\breve{I}(s)=J(s)
$$
holds for all $s\in{]}{-}\underline{s}_1,\underline{s}_0{[}$. Moreover,
if $\underline{s}_0<\infty$ (resp.\;$\underline{s}_1<\infty$), then
$g(0)=\underline{s}_0$ and $\breve{I}(g(0))=0$
(resp.\;$g(\underline{s}_1)=-\underline{s}_1$, and
$\breve{I}(g(\underline{s}_1))=\underline{s}_1$).
\item If $(\PP_\lambda,\wP_\lambda)_{\lambda\in\cL}$ is in involution,
then $\hat{e}=e$, $\underline{s}_0=\underline{s}_1$,
$\overline{s}_0=\overline{s}_1$, $s_\ast=-e(1/2)$, and the function
$f:[0,\underline{s}_0{[}\to[0,\underline{s}_0{[}$ is an involution
such that $f\circ\breve{I}(s)=\breve{I}(-s)$.
\een
\eep

\proof {\bf(1)} Follows directly from the asymptotic FR~\eqref{Eq:RateSymmetry}.

\part{(2)} Since $e$ is convex and vanishes for $\alpha\in\{0,1\}$, we must have
$\underline{s}_0\ge0$. Moreover, $s<\underline{s}_0$, {\sl i.e.,}
$-s>\partial^+e(0)$ implies that there exists $\alpha\in]0,1[$ such that
$-\alpha s>e(\alpha)$. It follows from Relation~\eqref{eq:breveI} that
$\breve{I}(s)>0$ and hence $I(s)>0$. If $\overline{s}_0<\infty$, then a
completely similar argument applies to $s>\overline{s}_0$. For
$-s\in{]}{-}\overline{s}_0,-\underline{s}_0{[}\subset\partial e(0)$,
Relation~\eqref{Eq:LegendreForma} yields that $\breve{I}(s)=0$, so the second
equality in~\eqref{eq:carmen} holds by continuity.

Since $I(s)\ge\breve{I}(s)\ge0$, one has $I^{-1}(\{0\})\subset\breve{I}^{-1}(\{0\})$,
and the convexity of $\breve{I}$
further implies that $\breve{I}^{-1}(\{0\})=\breve{I}^{-1}({]}{-}\infty,0{]})$
is convex, hence $\mathrm{co}(I^{-1}(\{0\}))\subset\breve{I}^{-1}(\{0\})$. In
cases where $\breve{I}^{-1}(\{0\})$ is compact, one has
$\breve{I}^{-1}(\{0\})=[a,b]$, with $\breve{I}(a)=\breve{I}(b)=0$ and
$\breve{I}(s)>0$ for $s\not\in[a,b]$. To prove the claimed equality
$\mathrm{co}(I^{-1}(\{0\}))=\breve{I}^{-1}(\{0\})$, it is enough to show that
$I(a)=0=I(b)$. We consider only the case of $I(a)$, by symmetry the same
reasoning applies to $I(b)$. We argue by contradiction, assuming that $I(a)>0$.
By lower semicontinuity, for some  $A>0$ and $\delta>0$,  $|s-a|<\delta$ implies
$I(s)>A$. It follows that there exists a convex function $\bar{I}\leq I$ such
that $\bar{I}(a)>0$, which contradicts the definition of $\breve{I}$. The
figures~\ref{Fig:Boxes} below show how to modify $\breve{I}$ to get $\bar I$ in
the five possible cases:
\begin{enumerate}[label=(\arabic*)]
\item $a<b$, $0<\breve{I}(s)<\infty$ for $a-\delta<s<a$ and $\breve{I}(s)=0$ for $a\le s<a+\delta$;
\item $a<b$, $\breve{I}(s)=\infty$ for $s<a$ and $\breve{I}(s)=0$ for $a\le s<a+\delta$;
\item $a=b$, $0<\breve{I}(s)<\infty$ for $0<|s-a|<\delta$;
\item $a=b$, $0<\breve{I}(s)<\infty$ for $a-\delta<s<a$ and $\breve{I}(s)=\infty$ for $s>a$;
\item $a=b$, $\breve{I}(s)=\infty$ for $s\neq a$.
\end{enumerate}
\begin{figure}
\begin{center}
\includegraphics[width=1\textwidth]{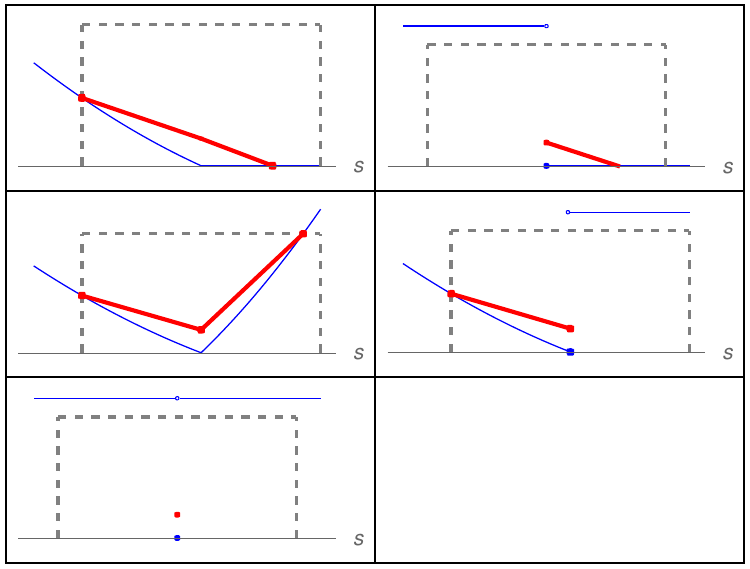}
\end{center}
\caption{The function $\breve{I}$ (thin lines) and the modification (thick
lines) leading to $\bar I$ for the 5 cases listed in the text. Since $I(a)>0$,
lower semicontinuity implies the existence of the dashed box
${]}a-\delta,a+\delta{[}\times[0,A]$ having no intersection with the graph of
$I$.}
\label{Fig:Boxes}
\end{figure}
Relations~\eqref{eq:jose} follow from the same arguments.

\part{(3)} Let $s\in[-\overline s_0, \overline s_1]\cap \RR = [\partial^-e(0),
\partial^+e(1)]\cap \RR$. Then, there exists some $\alpha_s \in [0,1]$ such that
$-s\in \partial e(\alpha_s)$, so~\eqref{Eq:LegendreForma}, applied here to
$\breve I$, implies that $I(-s)=\alpha_s s-e(\alpha_s)$. Since also
$\breve{I}(-s)=\sup_{\alpha\in \RR}(\alpha s-e(\alpha))$ by~\eqref{eq:breveI},
we obtain~\eqref{eq:invertLTrestr}.

\part{(4)}  By \eqref{eq:breveI} we have $\breve{I}(0)=s_\ast$ and
by~\eqref{eq:invertLTrestr} we further obtain
\beq
s_\ast=-\min_{\alpha\in[0,1]}e(\alpha).
\label{Eq:sstarForm}
\eeq
Since $e$ is convex and $e(0)=e(1)=0$, we conclude that $s_\ast=0$ iff $e$
vanishes identically on $[0,1]$. Invoking convexity, one further has
$e(\alpha)\ge\max(-\alpha\underline{s}_0,-(1-\alpha)\underline{s}_1)$ for all
$\alpha\in\RR$. Using Relation~\eqref{Eq:sstarForm} we derive
\[
s_\ast=\sup_{\alpha\in[0,1]}(-e(\alpha))\le
\sup_{\alpha\in[0,1]}\min(\alpha\underline{s}_0,(1-\alpha)\underline{s}_1)
=\min(\underline{s}_0,\underline{s}_1).
\]
Finally, we observe that the right-hand side of the last equality vanishes
iff $e$ vanishes identically on $[0,1]$.

\part{(5)} By~\eqref{eq:salome}, we only have to show that
$s_\ast=0\,\Longrightarrow\,\underline{s}_0=\underline{s}_1=0$. By Part~(4),
$s_\ast=0$ implies that $e$ vanishes on $[0,1]$, which further yields
$\underline{s}_0=-\partial^+e(0)=0$ and $\underline{s}_1=\partial^-e(1)=0$.

\part{(6)} From its definition~\eqref{Eq:fDef} as the pointwise supremum of a
family of non-increasing affine functions $(f_\alpha)_{\alpha\in{]}0,1]}$
vanishing at $\alpha=1$, we immediately infer that $f$ is a non-negative and
non-increasing, lower semicontinuous convex function such that $f(u)=+\infty$
for $u<0$. If $\underline{s}_0<\infty$, then from the fact that
$e(\alpha)\ge-\alpha\underline{s}_0$, we derive that $f(u)\le\underline{s}_0$
for $u\ge0$, with equality for $u=0$. If $\underline{s}_0=\infty$, then the
convexity of $e$ implies
$$
f(0)=\sup_{\alpha\in{]}0,1{]}}\frac{-e(\alpha)}{\alpha}
=-\inf_{\alpha\in{]}0,1{]}}\frac{e(\alpha)-e(0)}{\alpha-0}
=-\lim_{\alpha\downarrow0}\frac{e(\alpha)-e(0)}{\alpha-0}
=-\partial^+e(0)=\underline{s}_0=\infty,	
$$
and lower semicontinuity gives
$$
\liminf_{u\to0}f(u)\ge f(0)=\infty.
$$
If $\underline{s}_1<\infty$, then invoking again convexity we derive
\[
\underline{s}_1=\partial^-e(1)
=\lim_{\alpha\uparrow1}\frac{e(1)-e(\alpha)}{1-\alpha}
=-\lim_{\alpha\uparrow1}\frac{e(\alpha)}{1-\alpha},
=-\inf_{\alpha\in]0,1[}\frac{e(\alpha)}{1-\alpha},
\]
so that, for $\alpha\in]0,1[$,
\[
\frac{-(1-\alpha)u-e(\alpha)}{\alpha}
=\frac{1-\alpha}{\alpha}\left(-u-\frac{e(\alpha)}{1-\alpha}\right)
\le \frac{1-\alpha}{\alpha}\left(\underline{s}_1-u\right),
\]
which gives that $f(u)=0$ for $u\ge\underline{s}_1$.
If $\underline{s}_1=\infty$, then the function
$$
q(\alpha)\coloneq\frac{e(\alpha)-e(1)}{\alpha-1}
$$
is continuous, non-decreasing on the interval ${]}0,1{[}$, and $\lim_{\alpha\uparrow1}q(\alpha)=\underline{s}_1=\infty$.
Setting
$$
\alpha_\ast(u)\coloneq\sup\left\{\alpha\in{]}0,1{[}\mid q(\alpha)<u\right\},
$$
continuity gives $q(\alpha_\ast(u))=u$, so that
\beq
\lim_{u\to\infty}\alpha_\ast(u)=1
\label{eq:alphastar}
\eeq
and
$$
f(u)=\sup_{\alpha\in[\alpha_\ast(u),1]}\frac{(1-\alpha)(q(\alpha)-u)}{\alpha}
\le\frac1{\alpha_\ast(u)}\sup_{\alpha\in[\alpha_\ast(u),1]}\left((\alpha-1)u-e(\alpha)\right)
=\frac{\Delta(u)}{\alpha_\ast(u)},	
$$
where $\Delta(u)\coloneq(\bar{\alpha}(u)-1)u-e(\bar{\alpha}(u))$ is the largest
vertical distance between the graphs of $\alpha\mapsto e(\alpha)$ and
$\alpha\mapsto(\alpha-1)u$ over the interval $[\alpha_\ast(u),1]$, see
Figure~\ref{Fig:q}. Since $\alpha_\ast(u)\le\bar{\alpha}(u)$, for large enough
$u$ one has $\Delta(u)\le -e(\bar{\alpha}(u))\le -e(\alpha_\ast(u))$ and hence,
by~\eqref{eq:alphastar},
$$
\lim_{u\to\infty}f(u)\le\lim_{u\to\infty}\frac{-e(\alpha_\ast(u))}{\alpha_\ast(u)}=0.
$$
\begin{figure}
\begin{center}
\includegraphics[width=0.66\textwidth]{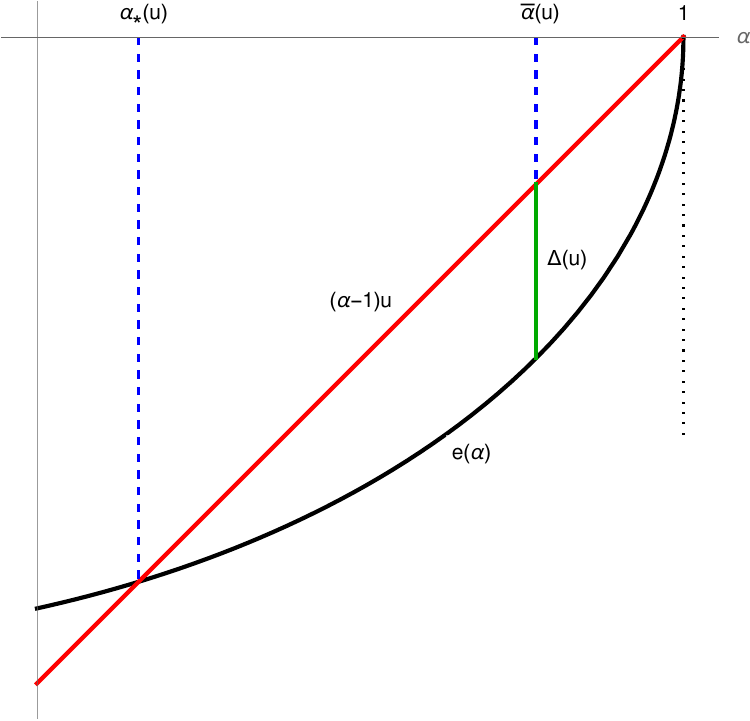}
\end{center}
\caption{\normalsize As the slope $u$ of the secant $\alpha\mapsto(\alpha-1)u$
diverges, its point of intersection $(\alpha_\ast(u),e(\alpha_\ast(u)))$ with
the graph of $e$ approaches $(1,0)$. Consequently, the maximal vertical distance
$\Delta(u)$ vanishes.}
\label{Fig:q}
\end{figure}

If $s_\ast=0$, then Part~(4) implies $f(s)=0$ for $s\ge0$, and we may assume that
$s_\ast>0$. By~\eqref{Eq:sstarForm} one has $e(\alpha)+s_\ast\ge0$, with equality
for some $\alpha\in]0,1[$. We conclude that
\[
f(s_\ast)=\sup_{\alpha\in]0,1]}\frac{-(1-\alpha) s_\ast-e(\alpha)}{\alpha} =  s_\ast -\inf_{\alpha\in]0,1]}\frac{e(\alpha)+s_\ast}{\alpha}=s_\ast.
\]

\part{(7)} As the sum of the non-increasing closed proper convex function $f$
and a strictly decreasing affine function, $g$ is a strictly decreasing closed
proper convex function. In particular, it is injective and, by Part~(6), it maps
${]}0,\underline{s}_1{[}$ homeomorphically onto
${]}{-}\underline{s}_1,\underline{s}_0{[}$ with
$$
\lim_{u\downarrow 0}g(u)=\underline{s}_0,\qquad\lim_{u\uparrow\underline{s}_1}g(u)
=-\underline{s}_1.
$$
For $s\in{]}{-}\underline{s}_1,\underline{s}_0{[}=-{]}\partial^+e(0),\partial^-e(1){[}$,
there exists $\gamma\in{]}0,1{[}$ such that $-s\in\partial e(\gamma)$ and
Relation~\eqref{Eq:LegendreForma} yields that
$$
r\coloneq\breve{I}(s)=-\gamma s-e(\gamma).
$$
By convexity, one further has $e(\alpha)\ge e(\gamma)+(\gamma-\alpha)s$ and it
follows that for $\alpha\in]0,1]$
$$
\frac{-r-e(\alpha)}{\alpha}=\frac{\gamma s+e(\gamma)-e(\alpha)}{\alpha}\le s,
$$
with equality for $\alpha=\gamma$. Thus, we can conclude that
$$
g\circ\breve{I}(s)=\sup_{\alpha\in]0,1]}\frac{-r-e(\alpha)}{\alpha}=s.
$$
If $\underline{s}_0<\infty$, then Part~(2) gives $\breve{I}(\underline{s}_0)=0$
and hence
$$
g(\breve{I}(\underline{s}_0))=g(0)=f(0)=\underline{s_0}.
$$
If $\underline{s}_1<\infty$, then the same argument gives
$\breve{I}(-\underline{s}_1)=\underline{s}_1$ so that
$$
g(\breve{I}(-\underline{s}_1))=g(\underline{s_1})
=f(\underline{s_1})-\underline{s_1}=-\underline{s_1}.
$$
It follows in particular that $\breve{I}$ is strictly decreasing on
${]}{-}\underline{s}_1,\underline{s}_0{[}$. Finally, we note that
\beq
f\circ\breve{I}(s)=\breve{I}(s)+g\circ\breve{I}(s)=\breve{I}(s)+s=J(s),
\label{eq:ludwig}
\eeq
and that the last assertions are direct consequences of Parts~(2) and~(6).

\part{(8)} In the involutive case, an immediate consequence
of~\eqref{Eq:hatRate} is that $\hat e=e$, and hence that
$e(1-\alpha)=\hat{e}(\alpha)=e(\alpha)$ for all $\alpha\in\RR$.
It follows that
$$
\underline{s}_1=\partial^-e(1)=-\partial^+e(0)=\underline{s}_0,\qquad
\overline{s}_1=\partial^+e(1)=-\partial^-e(0)=\overline{s}_0,
$$
and by symmetry $s_\ast=\inf_{\alpha\in[0,1]}e(\alpha)=e(1/2)$.
Moreover, $\breve{I}(-s)=\breve{I}(s)+s$ for any $s\in\RR$, and hence
Relation~\eqref{eq:ludwig} gives $\breve{I}(-s)=f(\breve{I}(s))$, so that
$$
f\circ f(\breve{I}(s))=f(\breve{I}(-s))=\breve{I}(s).
$$
It follows that $f\circ f=\Id$ on
$[0,\underline{s}_0]=\breve{I}([-\underline{s}_0,\underline{s}_0])$.
\hfill\qed

\bigskip
We now proceed with the proof of Proposition~\ref{prop:Hoeffding}, and for this
we assume that $I$ and $\hat I$ are convex, {\sl i.e.,} that $I=\breve I$ and
$\hat I = J$. We use the notation of Proposition~\ref{Prop:s},  and follow the
arguments of~\cite[Section~6.4]{Jaksic2012}. The rate functions $I$ and $\hat I$
are depicted in Figure~\ref{Fig:IhatI}.

\begin{figure}
\begin{center}
\includegraphics[width=0.8\textwidth]{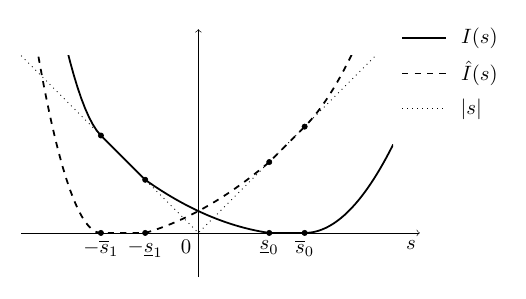}
\end{center}
\caption{The graph of $I$ and $\hat I$ in a situation where $I$ is convex (so
$I=\breve I$ and $\hat I = J$), and where $-\infty <-\overline s_1 <-\underline
s_1 <0<\underline s_0 < \overline s_0 < \infty$, in the notation of
Proposition~\ref{prop:Hoeffding}.}
\label{Fig:IhatI}
\end{figure}

First, we notice that~\eqref{eq:hhhufu} is immediate when $u<0$, in view
of~\eqref{eq:hone} and Proposition~\ref{Prop:s}~(6). We now
establish~\eqref{eq:hhhufu} in the case $u\geq 0$.

Consider a family $(\Gamma_\lambda)_{\lambda\in\cL}\subset\fF$. For all
$s\in\RR$ and $\varepsilon>0$, using first~\eqref{eq:NPgen}, then the
definition of $\sigma_\lambda$, and finally the LDP lower bound, we find
\begin{align*}
\liminf_{\lambda\in\cL}r_\lambda^{-1}
&\log\left(\PP_\lambda(\Gamma_\lambda^c)+\e^{sr_\lambda}\wP_\lambda(\Gamma_\lambda)\right)\\
\geq&\liminf_{\lambda\in\cL}r_\lambda^{-1}\log\left(\PP_\lambda\{\sigma_\lambda<sr_\lambda\} +\e^{sr_\lambda}\wP_\lambda\{\sigma_\lambda\ge sr_\lambda\}\right)\\
 \geq&\liminf_{\lambda\in\cL}r_\lambda^{-1}\log\left(\e^{-\varepsilon r_\lambda }\PP_\lambda\{ \sigma_\lambda\leq r_\lambda (s+\varepsilon)\}\right)
\ge-\inf_{s'< s+\varepsilon}I(s')-\varepsilon\geq -I(s)-\varepsilon.
\end{align*}
Since $\varepsilon>0$ is arbitrary, we conclude that
\begin{equation}
\liminf_{\lambda\in\cL}r_\lambda^{-1}\log\left(\PP_\lambda(\Gamma_\lambda^c) +\e^{sr_\lambda}\wP_\lambda(\Gamma_\lambda)\right)\ge -I(s).
\label{eq:lowerboundh}
\end{equation}
Then, since
$$
\liminf_{\lambda\in\cL}r_\lambda^{-1}\log\left(a_\lambda+b_\lambda\right)
\leq\max\left(\limsup_{\lambda\in\cL}r_\lambda^{-1}\log a_\lambda,
\liminf_{\lambda\in\cL}r_\lambda^{-1}\log b_\lambda\right)
$$
for any families $(a_\lambda)_{\lambda\in\cL}$ and $(b_\lambda)_{\lambda\in\cL}$ of
non-negative numbers, we conclude from~\eqref{eq:lowerboundh} that, if
$\limsup_{\lambda\in\cL}r_\lambda^{-1}\log\PP_\lambda(\Gamma_\lambda^c)<-I(s)$, then
$$
\liminf_{\lambda\in\cL}r_\lambda^{-1}\log\wP_\lambda(\Gamma_\lambda)\geq - I(s)-s=-\hat I(s).
$$
This shows that, for all $s\in\RR$,
\begin{equation}
\underline{\fh}(I(s))\geq -\hat I(s).
\label{eq:fhuis}
\end{equation}

If $\underline{s}_1<\infty$, then by Proposition~\ref{Prop:s}~(2)  we have
$\hat I(-\underline s_1)= 0$, and thus $I(-\underline s_1) = \underline s_1$.
Inequality~\eqref{eq:fhuis} then yields
$$
\underline{\fh}(\underline s_1)
=\underline{\fh}(I(-\underline s_1))
\geq -\hat I(-\underline s_1)= 0.
$$
By~\eqref{eq:htwo} and the fact that $\underline{\fh}$, $\overline{\fh}$
and~$\fh$ are non-decreasing, we conclude that
$$
\underline{\fh}(u)=\overline{\fh}(u)=\fh(u)=0
$$
for all $u\geq \underline s_1$.  By Proposition~\ref{Prop:s}~(6), we also have
 $f(u)=0$ for all $u\geq \underline s_1$ and thus~\eqref{eq:hhhufu} holds for
 all $u\geq\underline s_1$ if $\underline s_1<\infty$.

It thus remains to prove~\eqref{eq:hhhufu} for $u\in[0,\underline s_1[$. Since
the latter interval is empty when $\underline s_1=0$, we assume for the
remainder of this proof that $\underline s_1\in{]}0,\infty]$. By
Proposition~\ref{Prop:s}~(4)-(5), we have also $\underline s_0>0$. By
Part~(7) of the same proposition, the map
$$
S\coloneq{]}{-}\underline{s}_1,\underline{s}_0{[}\ni s
\mapsto I(s)\in{]}0,\underline{s}_1{[}
$$
is a decreasing homeomorphism. For $s\in\RR$ and $\alpha\in[0,1]$, we have
\begin{align*}
\PP_\lambda\{\sigma_\lambda<sr_\lambda\}+\e^{sr_\lambda}\wP_\lambda\{\sigma_\lambda\ge sr_\lambda\}
&=\EE_\lambda[1_{\{\sigma_\lambda<sr_\lambda\}}+\e^{sr_\lambda-\sigma_\lambda}1_{\{\sigma_\lambda\ge sr_\lambda\}}]\\
&=\e^{\alpha sr_\lambda}\EE_\lambda[\e^{-\alpha sr_\lambda}1_{\{\sigma_\lambda<sr_\lambda\}}
+\e^{(1-\alpha)sr_\lambda-\sigma_\lambda}1_{\{\sigma_\lambda\ge sr_\lambda\}}]\\
&\le\e^{\alpha sr_\lambda}\EE_\lambda[\e^{-\alpha\sigma_\lambda}1_{\{\sigma_\lambda<sr_\lambda\}}
+\e^{-\alpha\sigma_\lambda}1_{\{\sigma_\lambda\ge sr_\lambda\}}]\\
&=\e^{\alpha sr_\lambda}\EE_\lambda[\e^{-\alpha\sigma_\lambda}]=\e^{\alpha sr_\lambda+e_\lambda(\alpha)}.
\end{align*}
Invoking~\eqref{eq:invertLTrestr} and optimizing over $\alpha$, we further
obtain, for all $s\in S$, the upper bound
$$
\limsup_{\lambda\in\cL}r_\lambda^{-1}\log\left(\PP_\lambda\{\sigma_\lambda<sr_\lambda\}
+\e^{sr_\lambda}\wP_\lambda\{\sigma_\lambda\ge sr_\lambda\}\right)\le\inf_{\alpha\in[0,1]}(\alpha s+e(\alpha))=-I(s).
$$
In particular, for all $s\in S$,
\begin{align}
\limsup_{\lambda\in\cL}r_\lambda^{-1}\log\PP_\lambda\{\sigma_\lambda<sr_\lambda\}
&\le -I(s),\label{eq:upIs}\\
\limsup_{\lambda\in\cL}r_\lambda^{-1}\log\wP_\lambda\{\sigma_\lambda\ge sr_\lambda\}
&\le -I(s)-s=-\hat{I}(s).
\label{eq:uphatIs}
\end{align}
Let $s,s'\in S$ satisfy $s<s'$, so that $-I(s)<-I(s')$. By~\eqref{eq:upIs}, the family $(\Gamma_\lambda)_{\lambda\in\cL}$ given by
$\Gamma_\lambda =\{\sigma_\lambda \geq s r_\lambda\}$ satisfies
\begin{align*}
\limsup_{\lambda\in\cL}r_\lambda^{-1}\log\PP_\lambda(\Gamma_\lambda^c)&\le-I(s)<-I(s')
\end{align*}
and by applying \eqref{eq:fhuis} to $s'$ and using \eqref{eq:uphatIs}, we find
\begin{align*}
-\hat I(s')\leq \liminf_{\lambda\in\cL}r_\lambda^{-1}\log\wP_\lambda(\Gamma_\lambda)\leq \limsup_{\lambda\in\cL}r_\lambda^{-1}\log\wP_\lambda(\Gamma_\lambda)\leq -\hat I(s).
\end{align*}
Taking $s'\downarrow s$, the continuity of $\hat I$ gives
$$
\lim_{\lambda\in\cL}r_\lambda^{-1}\log\wP_\lambda(\Gamma_\lambda)=-\hat{I}(s),
$$
and hence $\fh(I(s)-\varepsilon)\le-\hat{I}(s)$ holds for $s\in S$ and $\varepsilon>0$.

If $I(s)$ is a point of continuity of $\fh$, then taking $\varepsilon\downarrow0$
gives, recalling~\eqref{eq:fhuis},
$$
-\hat{I}(s)\le\underline{\fh}(I(s))\le\overline{\fh}(I(s))\le \fh(I(s))\le-\hat{I}(s),
$$
and hence
\begin{equation}
\underline{\fh}(I(s))=\overline{\fh}(I(s))=\fh(I(s))=-\hat{I}(s).
\label{eq:pouti}
\end{equation}
Observing that the function
$$
u\mapsto
\fh_{(\Gamma_\lambda)_{\lambda\in\cL}}(u)
=\begin{cases}
\ds\liminf_{\lambda\in\cL}r_\lambda^{-1}\log\wP_\lambda(\Gamma_\lambda)
&\ds\text{if }\ \ \limsup_{\lambda\in\cL}r_\lambda^{-1}\log\PP(\Gamma_\lambda^c)<-u;\\[8pt]
+\infty&\text{otherwise,}
\end{cases}
$$
is upper semicontinuous, so is
$$
u\mapsto
\underline{\fh}(u)
=\inf_{(\Gamma_\lambda)_{\lambda\in\cL}}\fh_{(\Gamma_\lambda)_{\lambda\in\cL}}(u).
$$
Since $\underline{\fh}$ is non-decreasing, it is also right-continuous.
The same reasoning applies to the functions $\overline{\fh}$ and  $\fh$;
for the latter, one takes the infimum over all families
$(\Gamma_\lambda)_{\lambda\in\cL}$ such that
$\lim_{\lambda\in\cL}r_\lambda^{-1}\log\wP_\lambda(\Gamma_\lambda)$ exists.

Let $\cS\subset S$ be such that $I(\cS)$ is the set of discontinuity points of
$\fh$ in $I(S)$. One easily shows that $\cS$ is countable.
Relation~\eqref{eq:pouti} holds for any $s\in S\setminus\cS$, and by
right-continuity it holds for all $s\in S$. Hence, we have proved that
$$
\underline{\fh}(u)=\bar{\fh}(u)=\fh(u)=
-\hat{I}\circ I^{-1}(u)
$$
for all $u\in{]}0,\underline{s}_1{[}$. By  Proposition~\ref{Prop:s}~(7),
the right-hand side is equal to $-f(u)$ and hence we have
proved~\eqref{eq:hhhufu} for $u\in{]}0,\underline{s}_1{[}$.

In order to complete the proof, it only remains to show that~\eqref{eq:hhhufu}
holds for $u=0$, {\sl i.e.,} in view of  Proposition~\ref{Prop:s}~(6),
we need to show that
\begin{equation}
\underline{\fh}(0)=\overline{\fh}(0)=\fh(0)=-\underline s_0.
\label{eq:h0s0}
\end{equation}
If $\underline{s}_0<\infty$, then right-continuity gives
$$
\fh(0)=\lim_{u\downarrow0}\,\fh(u)=\lim_{s\uparrow\underline{s}_0}\,\fh(I(s))=
-\lim_{s\uparrow\underline{s}_0}\hat I(s)=-\hat{I}(\underline{s}_0)=-\underline{s}_0.
$$
In the opposite case, monotonicity gives
$$
\fh(0)\le\liminf_{u\downarrow0}\,\fh(u)=\liminf_{s\uparrow\infty}\,\fh(I(s))=
-\limsup_{s\uparrow\infty}\hat I(s)\le-\limsup_{s\uparrow\infty}s=-\infty,
$$
and hence $\fh(0)=-\infty=-\underline{s}_0$. The same arguments apply to
$\underline{\fh}$ and $\overline\fh$, and thus~\eqref{eq:h0s0} holds and the
proof of Proposition~\ref{prop:Hoeffding} is complete.\hfill\qed

\printbibliography[heading=bibintoc,title={References}]


\chapter{Illustrations}
\label{chap:Illustrations}

\abstract{This chapter provides illustrations of the various aspects of
Fluctuation Relations and Fluctuation Theorems. We revisit some paradigmatic
models of equilibrium and nonequilibrium statistical mechanics, deriving in each
case the Large Deviation Principle giving rise to a fluctuation theorem and
discussing its physical meaning and its particular features: convexity
properties, smoothness of the entropic pressure, phase transitions, etc.}

\abstract*{This chapter provides illustrations of the various aspects of
Fluctuation Relations and Fluctuation Theorems. We revisit some paradigmatic
models of equilibrium and nonequilibrium statistical mechanics, deriving in each
case the Large Deviation Principle giving rise to a fluctuation theorem and
discussing its physical meaning and its particular features: convexity
properties, smoothness of the entropic pressure, phase transitions, etc.}

\vskip 1cm

Let us introduce some standard notation to be repeatedly used in the following.
For $a,b\in\ZZ$,  $\llbracket a,b\rrbracket\coloneq[a,b]\cap\ZZ$, $\llbracket
a,b\llbracket\;\coloneq[a,b[\cap\ZZ$, etc. $\cP(\Omega)$ denotes the set of Borel
probability measures on a Polish space $\Omega$. Given a continuous map
$\varphi:\Omega\to\Omega$, we further denote by $\cP_\varphi(\Omega)$ the subset
of $\varphi$-invariant measures. $C(\Omega)$ and $C_b(\Omega)$ denote the space
of real-valued continuous functions on $\Omega$ and its subspace of bounded
elements. For $\mu\in\cP(\Omega)$ and $f\in C_b(\Omega)$ we write\label{ref:COmega}
$$
\langle f,\mu\rangle\coloneq\int_\Omega f\d\mu.
$$
Unless otherwise stated, $\cP(\Omega)$ is equipped with the (metrizable and separable)
topology of weak convergence. We recall that whenever $\Omega$ is compact, so is
$\cP(\Omega)$.

\section{Finite State Markov Chains}
\label{Ex:Markov_chain}

\subsection{Setup}

For a \ndex{Markov chain} on the state space~$\cA\coloneq\llbracket1,N\rrbracket$
we take $\cL=\NN$ as
the index set and, for each $t\in\NN$, denote by  $\Omega_t=\cA^t$  the $t$-step
path space of the process. Assuming that the transition matrix $P=[P_{xy}]_{x,y\in\cA}$
admits an invariant probability $p=[p_x]_{x\in\cA}$ ({\sl i.e.,} that $p=pP$), the path space
measure of the corresponding stationary Markov process $(p,P)$ is given by
$$
\Omega_t\ni \bsx=(x_1,\ldots,x_t)\mapsto
\PP_t(\bsx)\coloneq p_{x_1}P_{x_1x_2}\cdots P_{x_{t-1}x_t}.
$$
Given a second stationary Markov process $(\hat p,\widehat P)$ on the same state
space, the corresponding measure $\wP_t$ is equivalent to $\PP_t$ for all
$t\in\NN$ iff both conditions (i)~$p_x=0\Leftrightarrow\hat p_x=0$, and
(ii)~$P_{xy}=0\Leftrightarrow \widehat P_{xy}=0$ hold for all $x,y\in\cA$. Note
that if $P$ and $\widehat P$ are both irreducible,\footnote{$P$ is irreducible
iff, for any $x,y\in\cA$ there exists $N>0$ such that $(P^N)_{xy}>0$.} then the
Perron--Frobenius theorem implies that $p_x>0$ and $\hat p_x>0$ for all $x\in\cA$
so that Condition~(i) trivially holds. In the following, we shall assume that
$P$ and $\widehat P$ are irreducible and satisfy Condition~(ii).

\subsection{Transient Fluctuation Relation}

The entropy production random variable of the family
$(\PP_t,\wP_t)_{t\in\NN}$ is given by
\[
\sigma_t(\bsx)=\log\frac{\PP_t(\bsx)}{\wP_t(\bsx)}
=\log\frac{p_{x_1}}{\hat p_{x_1}}
+\sum_{s=1}^{t-1}\log\frac{P_{x_sx_{s+1}}}{\widehat P_{x_sx_{s+1}}}.
\]
Simple calculations yield that
\beq
\ep_t=\Ent(\PP_t|\wP_t)=\sum_{\bsx\in\Omega_t}\sigma_t(\bsx)\PP_t(\bsx)
=\Ent(p|\hat p)+(t-1)\sum_{x,y\in\cA}p_xP_{xy}\log\frac{P_{xy}}{\widehat P_{xy}},
\label{Eq:MarkovChainep}
\eeq
and
\begin{align}
e_t(\alpha)&=\Ent_\alpha(\PP_t|\wP_t)
=\log\sum_{\bsx\in\Omega_t}[\PP_t(\bsx)]^{1-\alpha}[\wP_t(\bsx)]^\alpha
=\log(q(\alpha)Q(\alpha)^{t-1}\boldsymbol{1}),\label{Eq:MarkovChainealpha}\\
\hat e_t(\alpha)&=\Ent_\alpha(\wP_t|\PP_t)
=\log\sum_{\bsx\in\Omega_t}[\PP_t(\bsx)]^{\alpha}[\wP_t(\bsx)]^{1-\alpha}
=\log(q(1-\alpha)Q(1-\alpha)^{t-1}\boldsymbol{1}),\nonumber
\end{align}
where the row vector $q(\alpha)$, the matrix $Q(\alpha)$, and the column
vector $\boldsymbol 1$ are given by
\[
q(\alpha)\coloneq[p_x^{1-\alpha}\hat p_x^\alpha]_{x\in\cA},\quad
Q(\alpha)\coloneq[P_{xy}^{1-\alpha}\widehat P_{xy}^\alpha]_{x,y\in\cA},\quad
\boldsymbol{1}\coloneq[1,1,\ldots,1]^T.
\]

As an important special case, let us consider the $\ZZ_2$-action on $\Omega_t$ induced
by the time-reversal map
$$
\Theta_t:(x_1,\ldots,x_t)\mapsto(x_t,\ldots,x_1),
$$
and set $\wP_t=\PP_t\circ\Theta_t$. Since $p_x>0$ for all $x\in\cA$, we can write
$$
\wP_t(\bsx)=p_{x_t}P_{x_tx_{t-1}}\cdots P_{x_1x_0}
=p_{x_1}\widehat P_{x_1x_2}\cdots\widehat P_{x_{t-1}x_t},
$$
with $\widehat P_{xy}=p_yP_{yx}p_x^{-1}$. It follows at once that $p\widehat P=p$.
Moreover, $\PP_t$ and $\widehat{\PP}_t$ are equivalent whenever
$P_{xy}=0\Leftrightarrow P_{yx}=0$, in which case $\widehat P$ is irreducible.
The family $(\PP_t,\widehat{\PP}_t)_{t\in\NN}$ is in involution and time-reversal
invariance $\PP_t=\wP_t$ holds iff $P=\widehat P$, which is a disguised form of the
\ndex{detailed balance} condition $p_xP_{xy}=p_yP_{yx}$. From~\eqref{Eq:MarkovChainep} one
 derives that the mean entropy production rate
$$
\lim_{t\to\infty}\frac{\ep_t}t
=\frac12\sum_{x,y\in\cA}p_x(P_{xy}-\widehat P_{xy})
\log\frac{P_{xy}}{\widehat P_{xy}}
$$
is strictly positive whenever the time-reversal symmetry is broken.

\subsection{Fluctuation Theorem}

We shall use~\eqref{Eq:MarkovChainealpha} to compute the entropic pressure of
the family $(\PP_t,\wP_t)_{t\in\NN}$. To this end, note that since the entries
of $Q(\alpha)$ are non-negative, one has
$$
\e^{-c(1+2|\alpha|)}\,\VERT Q(\alpha)^n\VERT\le q(\alpha)Q(\alpha)^n\boldsymbol{1}
\le\e^{c(1+2|\alpha|)}\,\VERT Q(\alpha)^n\VERT
$$
where the matrix norm is  defined as
$$
\VERT A\VERT\coloneq\sum_{x,y\in\cA}|A_{xy}|,
$$
and
$$
c\coloneq\max_{x\in\cA}\left(|\log p_x|,|\log\hat p_x|\right).
$$
It follows that
$$
\frac1t\left(\log\VERT Q(\alpha)^t\VERT-c(1+2|\alpha|)\right)
\le \frac1t e_t(\alpha)
\le\frac1t\left(\log\VERT Q(\alpha)^t\VERT+c(1+2|\alpha|)\right),
$$
and taking $t\to\infty$, Gelfand's formula yields
$$
e(\alpha)=\lim_{t\to\infty}\frac1t e_t(\alpha)=\log r(\alpha)
$$
where $r(\alpha)$ denotes the spectral radius of $Q(\alpha)$. One easily shows
that for all $\alpha\in\RR$ the matrix $Q(\alpha)$, which clearly has
non-negative entries, is irreducible. The Perron--Frobenius theorem then asserts
that $r(\alpha)$ is the dominant eigenvalue of $Q(\alpha)$. In particular,
$r(\alpha)$ is a strictly positive simple eigenvalue of $Q(\alpha)$. It follows
from analytic perturbation theory~\cite[Chapter~2]{Kato1966} that the function
$\RR\ni\alpha\mapsto r(\alpha)$ is real analytic. The same is true of
$\RR\ni\alpha\mapsto e(\alpha)$ and the Gärtner--Ellis theorem yields a global
FT with a rate function given by~\eqref{Eq:IasLegendreTransform}.\index{FT!for Markov chains}

\subsubsection{Level-3 Fluctuation Relation}

We conclude this example with a brief description of the Donsker--Varadhan process-level
LDP as an alternative approach to the FT for Markov chains. We refer
to~\cite{Deuschel1989,Dembo2000} for a more detailed exposition.\index{FR!level-3}

Let $\Omega\coloneq\cA^\NN$ be the set of infinite sequences $\bsx=(x_1,x_2,\cdots)$ of
elements of $\cA$. Equipped with the metric
$d(x,y)\coloneq2^{-\inf\{t\in\NN\,|\,x_t\not=y_t\}}$, $\Omega$ is a  compact metric
space. We denote the left shift $(x_1,x_2,\ldots)\mapsto(x_2,x_3,\ldots)$ on $\Omega$
by $\varphi$. The Borel $\sigma$-algebra $\cB$ of $\Omega$ is generated
by the cylinder sets $Z_I(\Gamma)\coloneq\{\bsx\in\Omega\,|\,(x_t)_{t\in I}\in\Gamma\}$,
where $I\subset\NN$ is finite and $\Gamma\subset\cA^I$. The set $\cP(\Omega)$, endowed
with the weak topology, is a compact metrizable space and $\cP_\varphi(\Omega)$ is closed
in $\cP(\Omega)$. We denote by $\PP$ and $\wP$ the elements of $\cP_\varphi(\Omega)$
induced by the sequences $(\PP_t)_{t\in\NN}$ and $(\wP_t)_{t\in\NN}$.

Analogous constructions apply to $\bar\Omega\coloneq\cA^\ZZ$,
and we shall also denote by $\varphi$ the left shift on this space. Any
$\QQ\in\cP_\varphi(\Omega)$ has a unique extension $\bar\QQ\in\cP_\varphi(\bar\Omega)$
which is characterized by
$$
\bar\QQ(Z_I(\Gamma))=\QQ(Z_{I+t}(\Gamma))
$$
for any finite $I\subset\ZZ$, any $\Gamma\subset\cA^I$, and any $t\in\NN$ such that
$I+t\subset\NN$. For $t\in\ZZ$, we denote by $\bar\cB_t$ the $\sigma$-algebra
generated by
$\{Z_I(\Gamma)\mid I\subset\rrbracket{-}\infty,t\rrbracket,I\text{ is finite},\Gamma\subset\cA^I\}$
and by $\bar\QQ_t$ the restriction of $\bar\QQ$ to $\bar\cB_t$. Finally, let
$\bar\QQ_0\otimes P$ be the probability measure on $\bar\cB_1$ defined by
$$
\int_{\bar\Omega} f(\ldots,x_0,x_1)\bar\QQ_0\otimes P(\d x)
\coloneq\sum_{y\in\cA}\int_{\bar\Omega}f(\ldots,x_0,y)P_{x_0y}\bar\QQ_0(\d x)
$$
for any bounded $\bar\cB_1$-measurable function $f:\bar\Omega\to\RR$.
If the matrix $P$ is irreducible, then the sequence of empirical
measures\footnote{$\delta_{\bsx}$ denotes the Dirac mass at $\bsx\in\Omega$.}
$$
\Omega\ni\bsx\mapsto\xi_t\coloneq\frac1t\sum_{s=0}^{t-1}\delta_{\varphi^s(\bsx)},
$$
viewed as $\cP(\Omega)$-valued random variables on $(\Omega,\cB,\PP)$,
satisfies the LDP
$$
-\inf_{\QQ\in\mathring{\Gamma}}\II(\QQ)
\le\liminf_{t\to\infty}\frac1t\log\PP\{\xi_t\in\Gamma\}
\le\limsup_{t\to\infty}\frac1t\log\PP\{\xi_t\in\Gamma\}
\le-\inf_{\QQ\in\bar{\Gamma}}\II(\QQ)
$$
for any Borel set $\Gamma\subset\cP(\Omega)$ with the good rate function
$$
\II(\QQ)\coloneq\left\{
\begin{array}{ll}
\Ent(\bar\QQ_1|\bar\QQ_0\otimes P)&\text{if }\QQ\in\cP_\varphi(\Omega);\\[4pt]
+\infty&\text{if }\QQ\in\cP(\Omega)\setminus\cP_\varphi(\Omega).
\end{array}
\right.
$$
If $\widehat P$ is irreducible, then the same LDP also holds w.r.t.\;the measure
$\wP$ with the corresponding rate function $\hat\II$. Moreover, the map
$$
\cP(\Omega)\ni\QQ\mapsto\varsigma(\QQ)\coloneq\langle\sigma,\QQ\rangle,\qquad
\sigma(\bsx)\coloneq\log\frac{P_{x_1x_2}}{\widehat P_{x_1x_2}},
$$
is continuous and one easily checks that
$$
t\varsigma(\xi_t)
=\sum_{s=1}^t\log\frac{P_{x_sx_{s+1}}}{\widehat P_{x_sx_{s+1}}}
=\sigma_t+o(t),\quad(t\to\infty).
$$
The contraction principle thus yields the FT with the rate functions
$$
I(s)\coloneq\inf\{\II(\QQ)\,|\,\QQ\in\cP(\Omega),\varsigma(\QQ)=s\},\qquad
\hat I(s)\coloneq\inf\{\hat\II(\QQ)\,|\,\QQ\in\cP(\Omega),\varsigma(\QQ)=s\}.
$$
Using the simple fact that, for any $\bsx\in\Omega$,
$$
\frac{\d\bar\QQ_0\otimes P}{\d\bar\QQ_0\otimes\widehat P}(\bsx)
=\frac{P_{x_0x_1}}{\widehat P_{x_0x_1}},
$$
one derives the process-level or level-3 FR
\begin{align*}
\hat{\II}(\QQ)=\Ent(\bar\QQ_1|\bar\QQ_0\otimes\widehat P)
&=\int\log\frac{\d\bar\QQ_1}{\d\bar\QQ_0\otimes\widehat P}\d\bar\QQ_1\\[4pt]
&=\int\left(\log\frac{\d\bar\QQ_1}{\d\bar\QQ_0\otimes P}
+\log\frac{\d\bar\QQ_0\otimes P}{\d\bar\QQ_0\otimes\widehat P}\right)\d\bar\QQ_1\\[4pt]
&=\II(\QQ)+\varsigma(\QQ)
\end{align*}
for $\QQ\in\cP_\varphi(\Omega)$. The same relation holds for
$\QQ\in\cP(\Omega)\setminus\cP_\varphi(\Omega)$, both sides of the resulting
identity being infinite.

\section{Mean-field Lattice Gas}
\label{Ex:Lattice_gas}

\subsection{Setup}

For $x=(x_1,\ldots,x_d)\in\ZZ^d$ we set $|x|\coloneq\max_i|x_i|$ and  denote by\index{mean-field lattice gas}
$|\lambda|$ the cardinality of a subset $\lambda\subset\ZZ^d$.
Let $\cL$ be the set of centered cubic boxes $\{x\in\ZZ^d\mid|x|\le n\}$
in $\ZZ^d$ ordered by inclusion and for $\lambda\in\cL$ let
$\Omega_\lambda=\{0,1\}^\lambda$ be the configuration space of a lattice gas
confined to the box $\lambda$. We shall consider the mean-field Hamiltonian
$$
\Omega_\lambda\ni\bsn=(n_x)_{x\in\lambda}\mapsto H_\lambda(\bsn)
\coloneq-\frac2{|\lambda|-1}\sum_{\substack{x,y\in\lambda\\ x\not=y}}n_x n_y
$$
with attractive interaction. The grand canonical
ensemble at inverse temperature $\beta$ and chemical potential $\mu$ is the
atomic measure on $\Omega_\lambda$ defined by
$$
\PP_\lambda(\{\bsn\})
\coloneq\e^{-\beta(H_\lambda(\bsn)-\mu N_\lambda(\bsn)+|\lambda|p_\lambda(\beta,\mu))},
$$
where $N_\lambda(\bsn)\coloneq\sum_{x\in\lambda}n_x$ is the total number of particles in the
box $\lambda$ and
\beq
p_\lambda(\beta,\mu)\coloneq\frac1{|\lambda|\beta}\log\sum_{\bsn\in\Omega_\lambda}
\e^{-\beta(H_\lambda(\bsn)-\mu N_\lambda(\bsn))}
\label{Eq:LatticeGasplambda}
\eeq
is the pressure of the gas. We denote expectation w.r.t.\;$\PP_\lambda$ by
$\langle\,\argdot\,\rangle_{\lambda,\beta,\mu}$.

The particle-hole symmetry
$$
\Theta_\lambda:(n_x)_{x\in\lambda}\mapsto(1-n_x)_{x\in\lambda}
$$
provides a $\ZZ_2$-action on $\Omega_\lambda$, and we set
$\wP_\lambda=\PP_\lambda\circ\Theta_\lambda$.

\subsection{Transient Fluctuation Relation}

A simple calculation gives that  the entropy production observable of the
involutive family $(\PP_\lambda,\widehat{\PP}_\lambda)_{\lambda\in\cL}$ is
\beq
\sigma_\lambda=2|\lambda|\beta(\mu+2)\left(\rho_\lambda-\frac12\right),
\label{Eq:MeanFieldsigmalambda}
\eeq
where $\rho_\lambda(\bsn)\coloneq|\lambda|^{-1}N_\lambda(\bsn)$ is the sample density.
For $\mu\neq -2$, Inequality~\eqref{Eq:Jarzynski_inequality} reduces to
$$
\begin{cases}
	\langle\rho_\lambda\rangle_{\lambda,\beta,\mu}\leq\frac12 & \text{if }\mu < -2;\\[6pt]
	\langle\rho_\lambda\rangle_{\lambda,\beta,\mu}\geq\frac12 & \text{if }\mu > -2,
\end{cases}
$$
which shows that $\ep_\lambda$ is an order parameter for
the particle-hole symmetry. Inequality~\eqref{Eq:Jarzy_and_Markov}
yields a quantitative version of this statement with the estimate
$$
\PP_\lambda\{\rho_\lambda-1/2<-a\}\le\e^{-2|\lambda|\beta(\mu+2)a},
$$
for $\mu>-2$ (and a similar estimate for $\mu<-2$).

The Rényi entropy of the lattice gas
\beq
e_\lambda(\alpha)=|\lambda|\beta\left(
p_\lambda(\beta,(1-2\alpha)(\mu+2)-2)-p_\lambda(\beta,\mu)
+\alpha(\mu+2)\right)
\label{Eq:MeanFieldealphalambda}
\eeq
vanishes identically for $\mu=-2$. The symmetry~\eqref{Eq:e_lambda_symmetry}
reduces, in this model, to the pressure identity
$$
p_\lambda(\beta,-2+\nu)-p_\lambda(\beta,-2-\nu)=\nu.
$$
Differentiation w.r.t.\,$\nu$ yields
$$
\langle\rho_\lambda\rangle_{\lambda,\beta,-2+\nu}
=1-\langle\rho_\lambda\rangle_{\lambda,\beta,-2-\nu},
$$
and in particular $\langle\rho_\lambda\rangle_{\lambda,\beta,-2}=1/2$.

\subsection{Fluctuation Theorem}

In terms of the sample density $\rho_\lambda$, the
pressure~\eqref{Eq:LatticeGasplambda} of the mean-field lattice gas reads
$$
p_\lambda(\beta,\mu)=\frac1{|\lambda|\beta}\log\sum_{\bsn\in\Omega_\lambda}
\e^{|\lambda|\beta\rho_\lambda(\bsn)(2\rho_\lambda(\bsn)+\mu)}
\e^{-\beta\rho_\lambda(\bsn)(1-\rho_\lambda(\bsn))\frac{2|\lambda|}{|\lambda|-1}},
$$
and since $\rho_\lambda(\bsn)\in[0,1]$ for all $\bsn\in\Omega_\lambda$, we have
$$
p(\beta,\mu)=\lim_{\lambda\in\cL}p_\lambda(\beta,\mu)
=\lim_{\lambda\in\cL}\frac1{|\lambda|\beta}\log\sum_{\bsn\in\Omega_\lambda}
\e^{|\lambda|\beta\rho_\lambda(\bsn)(2\rho_\lambda(\bsn)+\mu)}.
$$
Introducing the product $\BB^\lambda\coloneq\otimes_{x\in\lambda}\BB$ of the Bernoulli
distribution $\BB(0)=\BB(1)=1/2$, we can write
$$
\sum_{\bsn\in\Omega_\lambda}
\e^{|\lambda|\beta\rho_\lambda(\bsn)(2\rho_\lambda(\bsn)+\mu)}
=2^{|\lambda|}\BB^\lambda[\e^{|\lambda|f(\rho_\lambda)}]
$$
with $f(r)\coloneq\beta r(2r+\mu)$ and $\rho_\lambda=|\lambda|^{-1}\sum_{x\in\lambda}n_x$,
the $n_x$ being i.i.d.\;$\BB$-distributed random variables. Combining Cramér's
LDP~\cite[Theorem~2.2.3]{Dembo2000} with Vara\-dhan's lemma leads to the expression
\beq
p(\beta,\mu)=\frac1\beta\sup_{r\in[0,1]}F(\beta,\mu,r),
\label{Eq:MeanFieldPressureVar}
\eeq
where $F(\beta,\mu,r)\coloneq f(r)-r\log r-(1-r)\log(1-r)$. The right-hand side
of~\eqref{Eq:MeanFieldPressureVar} can be evaluated explicitly. This elementary
analysis (see~\cite[Section~30]{Dorlas1999} or~\cite[Section~IV.4]{Ellis2006})
yields that the set of critical points
$$
\fR(\beta,\mu)\coloneq\left\{r\in[0,1]\,\bigg|\,\partial_rF(\beta,\mu,r)=0\right\}
$$
is a singleton if either $\beta<\beta_{\rm c}=1$ or $\beta\ge\beta_{\rm c}$ and
\[
|1+\mu/2|>g(\beta)=\sqrt{\beta_{\rm c}^{-1}-\beta^{-1}}
-\beta^{-1}\mathrm{arth}(\sqrt{\beta_{\rm c}^{-1}-\beta^{-1}}).
\]
It contains $3$ elements if $\beta>\beta_{\rm c}$ and $|1+\mu/2|<g(\beta)$.
Moreover, one has
$$
p(\beta,\mu)=\frac1\beta F(\beta,\mu,\varrho(\beta,\mu)),
$$
where
$$
\varrho(\beta,\mu)\coloneq\left\{\bear{ll}
\min\,\fR(\beta,\mu)<\frac12,&\text{for }\mu<-2;\\[4pt]
\max\,\fR(\beta,\mu)>\frac12,&\text{for }\mu>-2.
\ear
\right.
$$
It follows that for $\beta>\beta_{\rm c}$,
$$
\lim_{\mu\uparrow-2}\varrho(\beta,\mu)<1/2
<\lim_{\mu\downarrow-2}\varrho(\beta,\mu).
$$
The function $(\beta,\mu)\mapsto p(\beta,\mu)$ is real analytic on the domain
$(\RR_+\times\RR)\setminus([\beta_{\rm c},\infty[\times\{-2\})$, and since
$$
\varrho(\beta,\mu)=\partial_\mu p(\beta,\mu),
$$
the pressure is non-differentiable as a function of $\mu$ at $\mu=-2$,
$\beta>\beta_{\rm c}$. The mean-field lattice gas thus exhibits a
spontaneous breaking of the particle-hole symmetry, see
Figure~\ref{Fig:MeanFieldPressure}. By~\eqref{Eq:MeanFieldealphalambda}
the entropic pressure is given by
\beq
e(\alpha)=\lim_{\lambda\in\cL}\frac1{|\lambda|}e_\lambda(\alpha)
=\beta\left(
p(\beta,(1-2\alpha)(\mu+2)-2)-p(\beta,\mu)
+\alpha(\mu+2)\right)
\label{Eq:MeanFieldealpha}
\eeq
which is a real analytic function of $\alpha\in\RR$ for $\beta<\beta_{\rm c}$,
but is non-differentiable at $\alpha=1/2$ for $\beta>\beta_{\rm c}$ and
$\mu\not=-2$. Thus, for $\beta<\beta_{\rm c}$ the Gärtner--Ellis theorem yields
a global FT with rate function given by~\eqref{Eq:IasLegendreTransform}. However,
due to the singularity appearing at $\alpha=1/2$, this approach fails for
$\beta>\beta_{\rm c}$.
\begin{figure}
\hskip-0.5cm\includegraphics[width=1.1\textwidth]{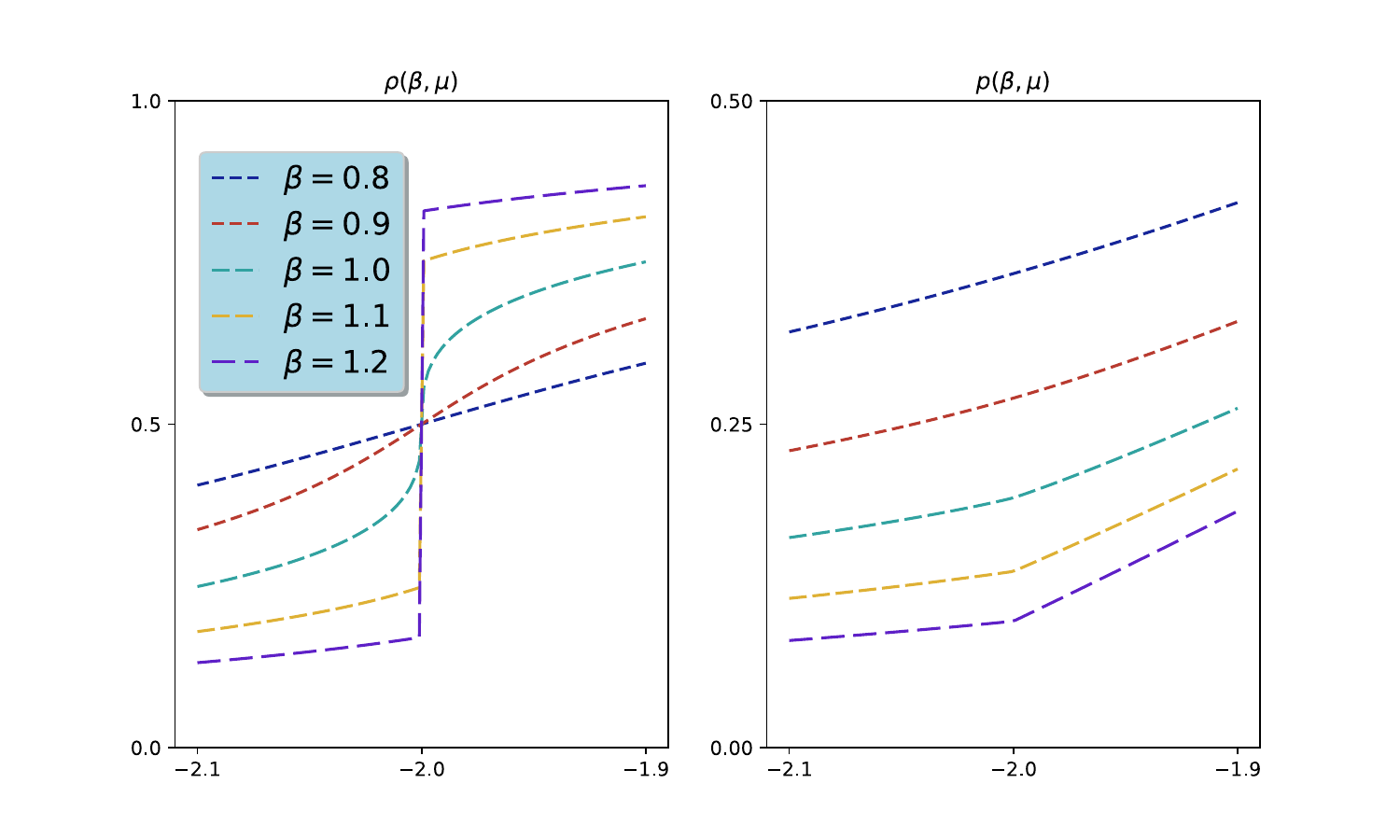}
\caption{The density $\varrho$ (left) and the pressure $p$ (right) of the
mean-field lattice gas as functions of $\mu$ near the critical value $\mu=-2$ for
various values of $\beta$ (shorter dashes mean lower $\beta$).}
\label{Fig:MeanFieldPressure}
\end{figure}

To overcome this, we shall now briefly sketch a derivation of the FT which does
not rely on the Gärtner--Ellis theorem and which works for any $\beta >0$. This
approach relies on standard techniques of large deviations theory, which easily
apply here due to the simple  structure of the mean-field lattice gas.

Let $\cA\coloneq\{0,1\}$. We shall identify without further notice $\cP(\cA)$ with the subset
$\{(1-r,r)\,|\,r\in[0,1]\}$ of $\RR^2$ and topologize accordingly.

For each $\lambda\in\cL$ define the level-2 empirical measure
$$
\Omega_\lambda\ni\bsn=(n_x)_{x\in\lambda}\mapsto\xi_\lambda
\coloneq\frac1{|\lambda|}\sum_{x\in\lambda}\delta_{n_x}\in\cP(\cA),
$$
where $\delta_{n}$ is the unit mass at $n\in\cA$. Considering
$\xi_\lambda$ as a $\cP(\cA)$-valued random variable on
$(\Omega_\lambda,\BB^\lambda)$, \ndex{Sanov's theorem}~\cite[Theorem~2.1.10]{Dembo2000}
yields the LDP
$$
-\inf_{\QQ\in\mathring{\Gamma}}\bI(\QQ)
\le\liminf_{\lambda\in\cL}\frac1{|\lambda|}
\log\BB^\lambda\{\xi_\lambda\in\Gamma\}
\le\limsup_{\lambda\in\cL}\frac1{|\lambda|}
\log\BB^\lambda\{\xi_\lambda\in\Gamma\}
\le-\inf_{\QQ\in\bar\Gamma}\bI(\QQ)
$$
for any Borel subset $\Gamma\subset\cP(\cA)$ with a rate function given by the
relative entropy
$$
\bI(\QQ)\coloneq\Ent(\QQ|\BB)=\log 2-S(\QQ),
$$
where
\beq
S(\QQ)\coloneq-\sum_{a\in\cA}\QQ(a)\log\QQ(a)
\label{equ:ShannonEnt}
\eeq
is the \ndex{Shannon entropy} of $\QQ$. Defining the map
$\cP(\cA)\ni\QQ\mapsto\bF(\QQ)\coloneq\langle G,\QQ\otimes\QQ\rangle\in\RR$, where
$$
G(a,b)\coloneq\beta a(2b+\mu)-\beta p(\beta,\mu),
$$
we observe that, for any $\lambda\in\cL$,
$$
2^{|\lambda|}\e^{-\frac12\frac{|\lambda|}{|\lambda|-1}}
\e^{|\lambda|\bF(\xi_\lambda)}\le
\frac{\d\PP_\lambda}{\d\BB^\lambda}
\le2^{|\lambda|}\e^{|\lambda|\bF(\xi_\lambda)}.
$$
It follows from Varadhan's lemma (more precisely from its tilted LDP
version~\eqref{Eq:Tilting}) that the LDP
$$
-\inf_{\QQ\in\mathring{\Gamma}}\II(\QQ)
\le\liminf_{\lambda\in\cL}\frac1{|\lambda|}
\log\PP_\lambda\{\xi_\lambda\in\Gamma\}
\le\limsup_{\lambda\in\cL}\frac1{|\lambda|}
\log\PP_\lambda\{\xi_\lambda\in\Gamma\}
\le-\inf_{\QQ\in\bar\Gamma}\II(\QQ)
$$
holds for $\Gamma\subset\cP(\cA)$ with the rate function
$$
\II(\QQ)\coloneq\bI(\QQ)-\bF(\QQ)-\inf_{\QQ'\in\cP(\cA)}(\bI(\QQ')-\bF(\QQ')).
$$
A simple calculation, using Relation~\eqref{Eq:MeanFieldPressureVar}, shows that
$$
\II(\QQ)=-\bF(\QQ)-S(\QQ).
$$
Finally, expressing the entropy production random
variable~\eqref{Eq:MeanFieldsigmalambda} as
$\sigma_\lambda=|\lambda|\varsigma(\xi_\lambda)$ with
\beq
\varsigma(\QQ)\coloneq\QQ\otimes\QQ(\sigma),\qquad\sigma\coloneq G-G\circ\Theta,
\label{Eq:MeanFieldvarsigma}
\eeq
(here $\Theta(a,b)\coloneq(1-a,1-b)$), the contraction principle shows that the global
FT holds with scale $r_\lambda=|\lambda|$ and rate function\index{FT!for the mean field lattice gas}
$$
I(s)=\inf\{\II(\QQ)\mid\QQ\in\cP(\cA),\varsigma(\QQ)=s\}.
$$
Observing that the constraint $\varsigma(\QQ)=s$ actually determines $\QQ$, with
$$
\QQ(1)=\frac12\left(1+\frac s{\beta(\mu+2)}\right)\eqcolon r,
$$
another straightforward calculation gives the following expression of the rate
function $I$ in terms of the variable $r$,
$$
I(s)=\left\{
\bear{ll}
\beta p(\beta,\mu)-F(\beta,\mu,r),&\text{for }r\in[0,1];\\[4pt]
+\infty,&\text{otherwise.}
\ear
\right.
$$

\remark[1] From the simple estimate
$$
|\e^{-\alpha\sigma_\lambda}|\le\e^{\beta|\lambda|\,|(\mu+2)\Re\alpha|}
$$
one easily deduces, invoking Vitali's convergence theorem, that
$$
e'(0)=\lim_{\lambda\in\cL}\frac1{|\lambda|}e'_\lambda(0)
=-\lim_{\lambda\in\cL}\frac1{|\lambda|}
\langle\sigma_\lambda\rangle_{\lambda,\beta,\mu}.
$$
Using~\eqref{Eq:MeanFieldealpha}, the left-hand side of this identity is
$$
e'(0)=2\beta(\mu+2)\left(\frac12-\partial_\mu p(\beta,\mu)\right),
$$
and recalling~\eqref{Eq:MeanFieldsigmalambda} we conclude that the
thermodynamic limit of the mean density is
$$
\lim_{\lambda\in\cL}\langle\rho_\lambda\rangle_{\lambda,\beta,\mu}=\varrho(\beta,\mu),
$$
and that of the mean entropy production rate is
$$
\epr\coloneq\lim_{\lambda\in\cL}\frac{\ep_\lambda}{|\lambda|}=2\beta(\mu+2)\left(\varrho(\beta,\mu)-\frac12\right).
$$

\remark[2] It follows from our explicit formula that $I'(s)=0$ iff
$r\in\fR(\beta,\mu)$. Since the latter set is not always a singleton, it may
happen that the rate function is not convex. Indeed, an elementary analysis shows
that $I$ is convex for $\beta<\beta_{\rm c}$, while for
$\beta>\beta_{\rm c}$ it is concave on the interval
$|s|<|\mu+2|\sqrt{\beta(\beta/\beta_{\rm c}-1)}$ and convex on its complement,
see Figure~\ref{Fig:MeanFieldRate}.

\remark[3]  The FR, $I(-s)=I(s)+s$, corresponds here to the elementary property
$$
F(\beta,\mu,1-r)=F(\beta,\mu,r)+\beta(\mu+2)(1-2r).
$$

\remark[4] Setting $\widehat{\QQ}(a)\coloneq\QQ(1-a)$ for $\QQ\in\cP(\cA)$, it follows from the
relation $\xi_\lambda\circ\Theta_\lambda=\hat{\xi}_\lambda$ that
$(\xi_\lambda)_{\lambda\in\cL}$ satisfies the LDP
w.r.t.\;$(\wP_\lambda)_{\lambda\in\cL}$ with the scale $r_\lambda=|\lambda|$ and
rate function $\hat{\II}$ given by $\hat{\II}(\QQ)\coloneq\II(\widehat{\QQ})$. Moreover,
rewriting Definition~\eqref{Eq:MeanFieldvarsigma} as
$$
\varsigma(\QQ)=\bF(\QQ)-\bF(\widehat{\QQ})
$$
immediately leads to the level-2 FR\index{FR!level-2}
$$
\hat{\II}(\QQ)=\II(\QQ)+\varsigma(\QQ),
$$
which is \eqref{Eq:FRlift}.

\begin{figure}
\hskip-1.2cm\includegraphics[width=1.2\textwidth]{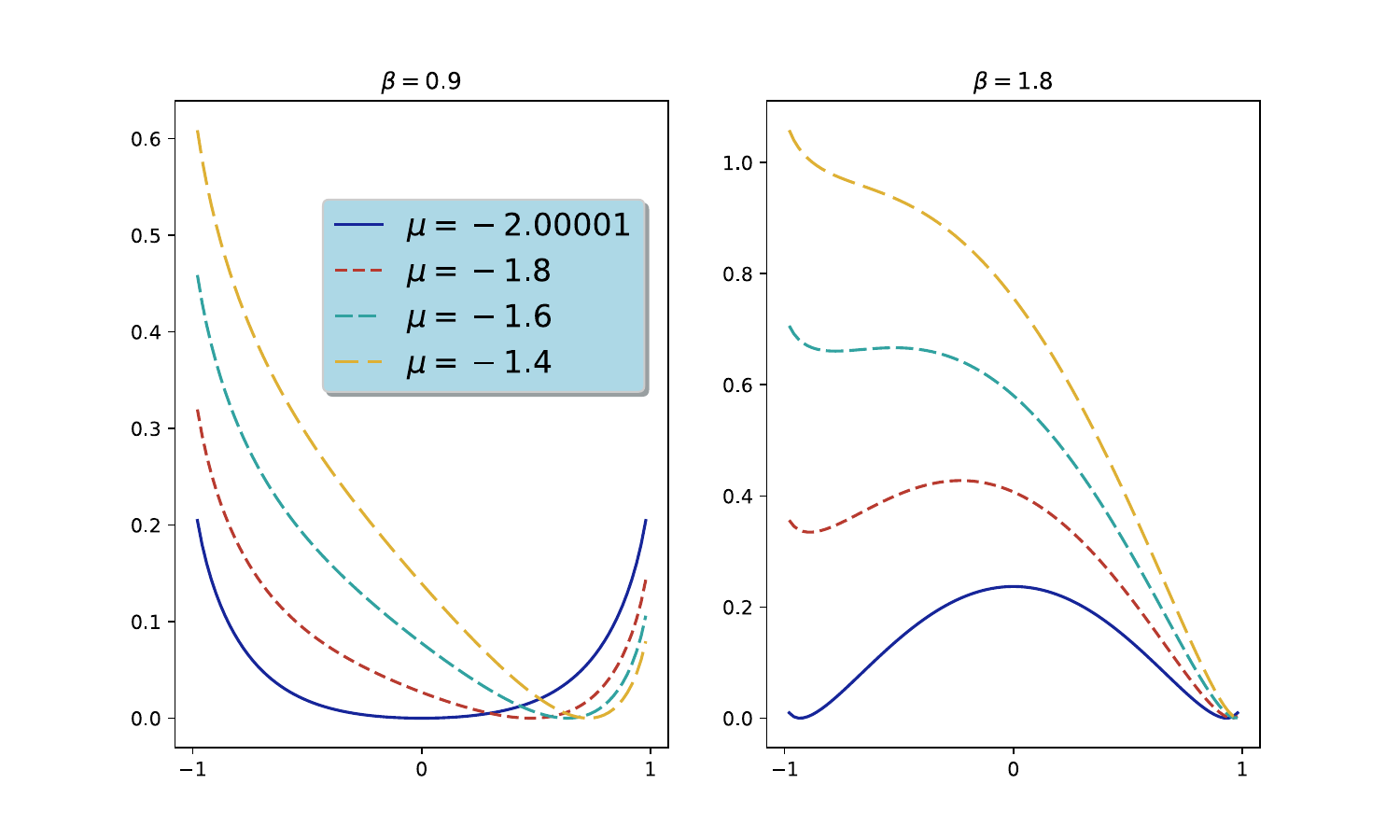}
\caption{The FT rate $I$ as a function of the rescaled variable
$\hat s=s/\beta(\mu+2)$ for $\beta=0.9<\beta_{\rm c}$ (left) and
$\beta=1.8>\beta_{\rm c}$ (right) and $\mu=-2,-1.8,-1.6,-1.4$.}
\label{Fig:MeanFieldRate}
\end{figure}

\begin{figure}
\begin{center}
\includegraphics[width=0.8\textwidth]{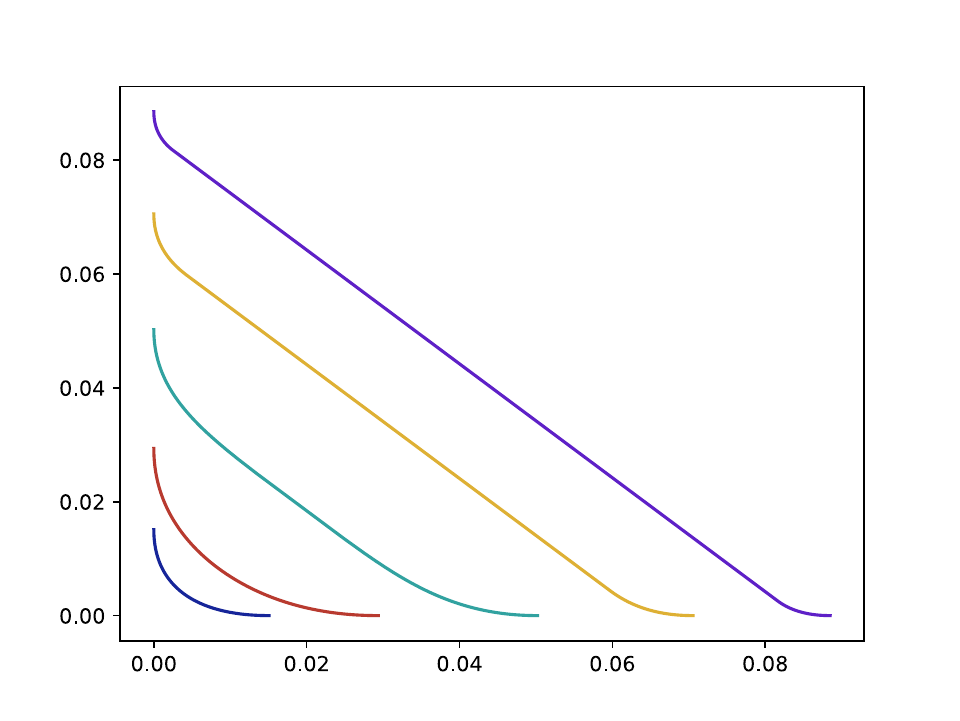}
\caption{The Hoeffding function $f$ (recall Equ.~\eqref{Eq:fDef}
and Proposition~\ref{prop:Hoeffding}) of the mean-field lattice gas for
$\beta=0.8~(\text{lowest curve}),0.9,1,1.1,1.2~(\text{highest curve})$
and $\mu=-1.9$.}
\end{center}
\label{Fig:MeanFieldf}
\end{figure}

\section{Ising Model}
\label{Ex:Ising}

\subsection{Setup}

With the same index set $\cL$ as in the previous example, let $\cA\coloneq\{-1,+1\}$ and
$\Omega_\lambda=\cA^\lambda$ be the configuration space of the \ndex{Ising
model} in the box $\lambda\in\cL$ and denote by
$$
\Omega_\lambda\ni\bss=(s_x)_{x\in\lambda}\mapsto H_\lambda(\bss)\coloneq
-\frac J2\sum_{\substack{x,y\in\lambda\\|x-y|=1}}s_xs_y
-h\sum_{x\in\lambda}s_x
$$
the ferromagnetic ($J>0$) Ising Hamiltonian with external magnetic
field~$h\in\RR$. The associated canonical ensemble at inverse
temperature~$\beta>0$ with free boundary condition is the atomic measure on
$\Omega_\lambda$ defined by
$$
\PP_\lambda(\{\bss\})\coloneq\e^{-\beta(H_\lambda(\bss)+|\lambda|p_\lambda(\beta,h))},
$$
where the real analytic function
\beq
p_\lambda(\beta,h)
\coloneq\frac1{|\lambda|\beta}\log\sum_{\bss\in\Omega_\lambda}\e^{-\beta H_\lambda(\bss)}
\label{Eq:Isingplambda}
\eeq
is the pressure.%
\footnote{The thermodynamic potential associated to the canonical ensemble
$\PP_\lambda$ is the Helmholtz free energy per spin
$f_\lambda(\beta,h)=-p_\lambda(\beta,h)$. We introduce the pressure
here in order to keep a more consistent terminology and notation.}
We shall denote by $\langle\,\argdot\,\rangle_{\lambda,\beta,h}$ the expectation
w.r.t.\;$\PP_\lambda$. A physically natural $\ZZ_2$-action on $\Omega_\lambda$
is provided by the spin-flip operation $\Theta_\lambda:\bss\mapsto-\bss$.
Defining then $\wP_\lambda=\PP_\lambda\circ\Theta_\lambda$, we obtain by
construction that the family
$(\PP_\lambda,\widehat{\PP}_\lambda)_{\lambda\in\cL}$ is in involution. We
note that $\wP_\lambda$ is obtained from $\PP_\lambda$ by replacing the external
field $h$ by $-h$.

\subsection{Transient Fluctuation Relation}

The entropy production random variable of the family
$(\PP_\lambda,\widehat{\PP}_\lambda)_{\lambda\in\cL}$ is given by
$$
\sigma_\lambda=\log\frac{\d\PP_\lambda}{\d\wP_\lambda}
=2|\lambda|\beta h m_\lambda,
$$
where
$$
m_\lambda(\bss)\coloneq\frac1{|\lambda|}\sum_{x\in\lambda}s_x
$$
is the specific magnetization. In this setting the general
Inequality~\eqref{Eq:Jarzynski_inequality} reads
$$
h\langle m_\lambda\rangle_{\lambda,\beta,h}\ge0,
$$
and implies  that for $h\not=0$ the mean magnetization is aligned
with the external field. Together with Markov's inequality, Jarzynski's
identity~\eqref{Eq:Jarzynski_identity}, which becomes
$$
\langle\e^{-2|\lambda|\beta h m_\lambda}\rangle_{\lambda,\beta,h}=1,
$$
allows us to deduce (see~\eqref{Eq:Jarzy_and_Markov}) the more precise statement
$$
\PP_\lambda\{hm_\lambda\le-E\}\le\e^{-2|\lambda|\beta E},
$$
showing that the probability of misaligned magnetization becomes exponentially
small in large boxes. The Rényi entropy~\eqref{Eq:e_alpha_def} is also easily
identified with
\beq
e_\lambda(\alpha)
=\log\langle\e^{-2|\lambda|\beta\alpha h m_\lambda}\rangle_{\lambda,\beta,h}
=|\lambda|\beta(p_\lambda(\beta,(1-2\alpha)h)-p_\lambda(\beta,h))
\label{Eq:Isingelambda}
\eeq
so that the symmetry~\eqref{Eq:e_lambda_symmetry} reduces to the pressure
identity\footnote{Note that this corresponds to Equ.~(3.9) in~\cite[Proposition~5.3.2]{Ruelle1969}.}
$$
p_\lambda(\beta,-h)=p_\lambda(\beta,h).
$$
Differentiation w.r.t\;$h$ yields the symmetry
$$
\langle m_\lambda\rangle_{\lambda,\beta,-h}
=-\langle m_\lambda\rangle_{\lambda,\beta,h},
$$
and in particular the absence of spontaneous magnetization
$\langle m_\lambda\rangle_{\lambda,\beta,0}=0$.

\subsection{Fluctuation Theorem}

The thermodynamic limit of the pressure~\eqref{Eq:Isingplambda}
$$
p(\beta,h)\coloneq\lim_{\lambda\in\cL}p_\lambda(\beta,h)
$$
exists for all $(\beta,h)\in\RR_+\times\RR$. It follows
from~\eqref{Eq:Isingelambda} that the entropic pressure is given by
$$
\RR\ni\alpha\mapsto e(\alpha)=\beta(p(\beta,(1-2\alpha)h)-p(\beta,h)).
$$
In dimension $d=1$, $p$ is given by (see~\cite[Chapter~14]{Huang1987})
$$
p(\beta,h)=J+\frac1\beta\log\left(
\cosh(\beta h)+\sqrt{\sinh^2(\beta h)+\e^{-4\beta J}}
\right).
$$
Thus, the entropic pressure
$$
e(\alpha)
=\log\left(\frac{\cosh((1-2\alpha)\beta h)
+\sqrt{\sinh^2((1-2\alpha)\beta h)
+\e^{-4\beta J}}}{\cosh(\beta h)+\sqrt{\sinh^2(\beta h)
+\e^{-4\beta J}}}\right)
$$
vanishes identically for $h=0$ and is a strictly convex real analytic function
for $h\not=0$. The global FT again follows from the Gärtner--Ellis theorem,
the rate function $I$ being given by the Legendre
transform~\eqref{Eq:IasLegendreTransform}.

In dimension $d>1$, no explicit formula is known for the pressure $p$.
However, it is known that a \ndex{phase transition} occurs at some critical inverse
temperature $0<\beta_{\rm c}<\infty$, see~\cite[Chapter~V]{Ellis2006}. For
$\beta<\beta_c$, $p(\beta,h)$ is a real analytic function of $h$ and so is the
entropic pressure: the global FT still follows from the Gärtner--Ellis theorem
in this case. For $\beta>\beta_c$, $p(\beta,h)$ is non-differentiable at $h=0$.
More precisely,
$$
(\partial_h^-p)(\beta,0)=-m(\beta)<0<m(\beta)=(\partial_h^+p)(\beta,0),
$$
where $m(\beta)$ is the spontaneous magnetization. Consequently, for $h\not=0$,
the entropic pressure is non-differentiable at $\alpha=1/2$, with
$$
(\partial^-e)(1/2)=-2\beta h m(\beta)<0<2\beta h m(\beta)=(\partial^+e)(1/2),
$$
and the Gärtner--Ellis route to the global FT fails. Due to the truly
interacting nature of the Ising model, a direct approach along the lines of our
analysis of the mean-field lattice gas seems more challenging. However, as we
shall now sketch, it is still within the reach of currently available
techniques. We refer to~\cite{Ruelle2004} for the needed facts on the
thermodynamics of lattice systems and to~\cite{Friedli2017} for the more
specific issues pertaining to the Ising model.

The configuration space of the infinitely extended $d$-dimensional Ising model
is $\Omega\coloneq\cA^{\ZZ^d}$ endowed with the product topology, a separable compact
metrizable space. It comes equipped with its Borel $\sigma$-algebra $\cB$. For
$\Lambda\subset\ZZ^d$, set $\Omega_\Lambda\coloneq\cA^\Lambda$ and let
$\cB_\Lambda$ be the associated Borel $\sigma$-algebra, which we identify with a
subalgebra of $\cB$. We denote the restriction of $\bss\in\Omega$ to $\Lambda$
by $\bss_\Lambda\in\Omega_\Lambda$ and the restriction of $\PP\in\cP(\Omega)$ to
$\cB_\Lambda$ by $\PP^{(\Lambda)}$. For $\bss\in\Omega_\Lambda$, the cylinder
subset of $\Omega$ based on $\bss$ is
$$
[\bss]\coloneq\{\bss'\in\Omega\,|\,\bss'_\Lambda=\bss\}.
$$
Let $\tau$ be the natural
action of $\ZZ^d$ on $\Omega$ and let $\cP_\tau(\Omega)$ be the set of
$\tau$-invariant elements of $\cP(\Omega)$. The pressure admits the variational
expression
\beq
\beta p(\beta,h)
=\sup_{\QQ\in\cP_\tau(\Omega)}\left(S(\QQ)+\beta\int G_h(s)\QQ(\d s)\right),
\label{Eq:IsingVariationalPressure}
\eeq
where
$$
\cP_\tau(\Omega)\ni\QQ\mapsto
S(\QQ)\coloneq\lim_{\lambda\in\cL}-\frac1{|\lambda|}\,
\sum_{\mathclap{\bss\in\Omega_\lambda}}\QQ([\bss])\log\QQ([\bss])
=\inf_{\lambda\in\cL}-\frac1{|\lambda|}\,
\sum_{\mathclap{\bss\in\Omega_\lambda}}\QQ([\bss])\log\QQ([\bss]),
$$
is the mean entropy, a non-negative, upper semicontinuous affine function, and
the potential
$$
\Omega\ni\bss\mapsto G_h(\bss)\coloneq s_0\left(h+\frac J2\sum_{|x|=1}s_x\right)
$$
is a continuous function.

A measure $\PP\in\cP(\Omega)$ is a \ndex{Gibbs state} if it satisfies the \ndex{DLR equation}
\beq
\PP([\bss_\lambda])=\int_{\Omega_{\lambda^c}}
\frac{\ds\e^{-\beta(H_\lambda(\bss_\lambda)+W_\lambda(\bss_\lambda,\bsr_{\lambda^c}))}}
{\ds\sum_{\bsr_\lambda\in\Omega_\lambda}
\e^{-\beta(H_\lambda(\bsr_\lambda)+W_\lambda(\bsr_\lambda,\bsr_{\lambda^c}))}}
\PP^{(\lambda^c)}(\d\bsr_{\lambda^c})
\label{Eq:DLR}
\eeq
for all $\bss_\lambda\in\Omega_\lambda$ and all $\lambda\in\cL$, where
$\lambda^c=\ZZ^d\setminus\lambda$ and
$$
W_\lambda(\bss,\bsr)\coloneq-J\sum_{(x,y)\in\partial\lambda}s_xr_y
$$
is the interaction energy across the boundary
$\partial\lambda\coloneq\{(x,y)\in\lambda\times\lambda^c\mid|x-y|=1\}$.
$\PP$ is an \ndex{equilibrium state} whenever it is $\tau$-invariant
and satisfies
$$
S(\PP)+\beta\langle G_h,\PP\rangle=\beta p(\beta,h).
$$
If $\PP\in\cP_\tau(\Omega)$, then $\PP$ is a Gibbs state iff it is an equilibrium state.
If $h\not=0$, then there is a unique Gibbs state $\PP_{\beta,h}$. Moreover,
$\PP_{\beta,h}\in\cP_\tau(\Omega)$ so that it is also the unique equilibrium state.

For each $\lambda\in\cL$ and $\bss\in\Omega$ define the empirical measure
$$
\xi_\lambda\coloneq\frac1{|\lambda|}\sum_{x\in\lambda}\delta_{\tau^x(\bss)}\in\cP(\Omega)
$$
where $\delta_\bss$ denotes the unit mass at $\bss$. It follows
from~\cite[Theorem~A]{Eizenberg1994} that the family
$(\xi_\lambda)_{\lambda\in\cL}$ satisfies a level-3 LDP
w.r.t.\;$\PP_{\beta,h}$ with scale $r_\lambda=|\lambda|$, {\sl i.e.,}
$$
-\inf_{\QQ\in\mathring{\Gamma}}\II(\QQ)
\le\liminf_{\lambda\in\cL}\frac1{|\lambda|}
\log\PP_{\beta,h}\{\xi_\lambda\in\Gamma\}
\le\limsup_{\lambda\in\cL}\frac1{|\lambda|}
\log\PP_{\beta,h}\{\xi_\lambda\in\Gamma\}
\le-\inf_{\QQ\in\bar\Gamma}\II(\QQ)	
$$
holds for all Borel sets $\Gamma\subset\cP(\Omega)$ with the good rate function
\beq
\II(\QQ)=\left\{\bear{ll}
\ds\beta p(\beta,h)-\beta\langle G_h,\QQ\rangle-S(\QQ),
&\text{if }\QQ\in\cP_\tau(\Omega);\\[12pt]
+\infty,&\text{if } \QQ\in\cP(\Omega)\setminus\cP_\tau(\Omega).
\ear
\right.
\label{Eq:IsingIIdef}
\eeq
Let $\Theta$ denote the spin flip on $\Omega$ and set
$\sigma(\bss)\coloneq G_h(\bss)-G_h\circ\Theta(\bss)=2hs_0$. One easily checks that
$$
|\lambda|\varsigma(\xi_\lambda)=\sigma_\lambda
$$
where $\varsigma$ is the continuous map
$$
\cP(\Omega)\ni\QQ\mapsto\varsigma(\QQ)\coloneq\beta\langle\sigma,\QQ\rangle.
$$
The contraction principle then implies that the LDP
\beq
\begin{split}
-\inf_{s\in\mathring{S}}I(s)
&\le\liminf_{\lambda\in\cL}\frac1{|\lambda|}
\log\PP_{\beta,h}\{|\lambda|^{-1}\sigma_\lambda\in S\}\\
&\le\limsup_{\lambda\in\cL}\frac1{|\lambda|}
\log\PP_{\beta,h}\{|\lambda|^{-1}\sigma_\lambda\in S\}
\le-\inf_{s\in\bar S}I(s)
\end{split}
\label{Eq:IsingGibbsLDP}
\eeq
holds for all Borel sets $S\subset\RR$ with the rate function
\beq
I(s)=\inf\left\{\II(\QQ)\mid\QQ\in\cP_\tau(\Omega),\varsigma(\QQ)=s\right\}.
\label{Eq:IsingRate}
\eeq
From the DLR equation~\eqref{Eq:DLR} we deduce that
$$
\frac{\d\PP_{\beta,h}^{(\lambda)}}{\d\PP_\lambda}(\bss_\lambda)
=\int_{\Omega_{\lambda^c}}
\frac{\ds\sum_{\bsr_\lambda\in\Omega_\lambda}
\e^{-\beta(H_\lambda(\bsr_\lambda)+W_\lambda(\bss_\lambda,\bsr_{\lambda^c}))}}
{\ds\sum_{\bsr_\lambda\in\Omega_\lambda}
\e^{-\beta(H_\lambda(\bsr_\lambda)+W_\lambda(\bsr_\lambda,\bsr_{\lambda^c}))}}
\,\PP_{\beta,h}^{(\lambda^c)}(\d\bsr_{\lambda^c}),
$$
and the estimate
$$
|W_\lambda(\bss_\lambda,\bsr_{\lambda^c})-W_\lambda(\bsr_\lambda,\bsr_{\lambda^c})|
\le J\sum_{(x,y)\in\partial\lambda}|(s_x-r_x)r_y|
\le 2J|\partial\lambda|
$$
implies that
$$
\e^{-2\beta J|\partial\lambda|}\le
\frac{\d\PP_{\beta,h}^{(\lambda)}}{\d\PP_\lambda}(\bss_\lambda)
\le\e^{2\beta J|\partial\lambda|}.
$$
The fact that $\lim_{\lambda\in\cL}|\lambda|^{-1}|\partial\lambda|=0$
immediately yields that~\eqref{Eq:IsingGibbsLDP} also holds with $\PP_{\beta,h}$
replaced by $\PP_\lambda$. Thus, the global FT holds. To identify the rate\index{FT!for the Ising model}
function $I$ we note that, due to the above mentioned properties of the mean
entropy, the function $\II$ is affine and lower semicontinuous on the compact
set $\cP_\tau(\Omega)$. Hence, the infimum in~\eqref{Eq:IsingRate} is achieved by
some $\QQ_s$. Thus, given $s_1,s_2\in\RR$, we can find
$\QQ_{s_1},\QQ_{s_2}\in\cP_\tau(\Omega)$ such that $I(s_i)=\II(\QQ_{s_i})$ and
$\varsigma(\QQ_{s_i})=s_i$. It follows that for any $\alpha\in[0,1]$ the measure
$\QQ_{[\alpha]}\coloneq\alpha\QQ_{s_1}+(1-\alpha)\QQ_{s_2}\in\cP_\tau(\Omega)$ is such
that $\II(\QQ_{[\alpha]})=\alpha I(s_1)+(1-\alpha)I(s_2)$ and
$\varsigma(\QQ_{[\alpha]})=\alpha s_1+(1-\alpha)s_2$. Hence,
\begin{align*}
I(\alpha s_1+(1-\alpha)s_2)&=\inf\left\{\II(\QQ)\mid
\QQ\in\cP_\tau(\Omega),\fs(\QQ)=\alpha s_1+(1-\alpha)s_2\right\}\\
&\le \II(\QQ_{[\alpha]})=\alpha I(s_1)+(1-\alpha)I(s_2),
\end{align*}
{\sl i.e.,} the rate function $I$ is convex. At this point we can either invoke
directly the relation between the rate function and the entropic pressure to conclude
that $I$ is given by the Legendre transform~\eqref{Eq:IasLegendreTransform}, or
compute explicitly the right-hand side of~\eqref{Eq:LegendreI},
using~\eqref{Eq:IsingRate} and~\eqref{Eq:IsingIIdef},
\begin{align*}
-\inf_{s\in\RR}(s\alpha+I(s))
&=-\inf_{s\in\RR}\,\inf\left\{s\alpha+\II(\QQ)\,|\,
\QQ\in\cP_\tau(\Omega),\varsigma(\QQ)=s\right\}\\
&=-\inf\left\{\alpha\varsigma(\QQ)+\II(\QQ)\,|\,\QQ\in\cP_\tau(\Omega)\right\}\\
&=\sup_{\QQ\in\cP_\tau(\Omega)}\left(
\beta\langle(1-\alpha)G_h+\alpha G_h\circ\Theta,\QQ\rangle+S(\QQ)\right)-\beta p(\beta,h).
\end{align*}
Since $(1-\alpha)G_h+\alpha G_h\circ\Theta=G_{(1-2\alpha)h}$, it follows
from the variational identity~\eqref{Eq:IsingVariationalPressure} that
$$
\sup_{s\in\RR}(\alpha s-I(-s))
=-\inf_{s\in\RR}(s\alpha+I(s))=\beta(p(\beta,(1-2\alpha)h)-p(\beta,h))=e(\alpha),
$$
and the convex function $I$ is again recovered
through~\eqref{Eq:IasLegendreTransform}.

\begin{figure}
\hskip-1.2cm\includegraphics[width=1.2\textwidth]{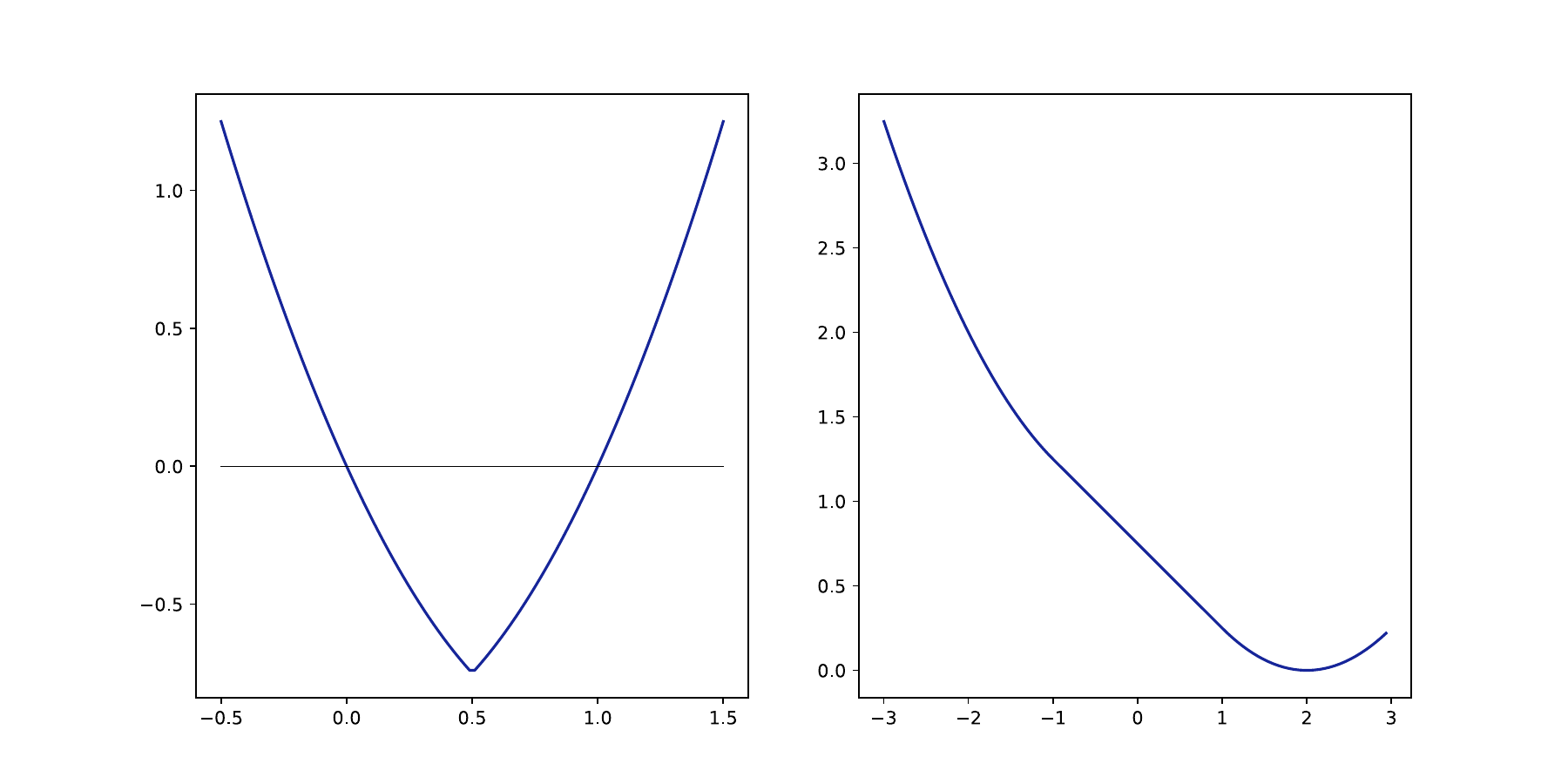}
\caption{A typical entropic pressure $e(\alpha)$ (left) and the corresponding
rate function $I(s)$ (right) for the Ising model in the phase transition regime
$\beta>\beta_{\rm c}$.}
\label{Fig:IsingRate}
\end{figure}

By Proposition~\ref{Prop:RateSymmetry}, we know that the FR, $I(-s)=I(s)+s$, holds
for all $s\in\RR$. The same conclusion immediately follows from
Equ.~\eqref{Eq:IasLegendreTransform} and the symmetry $e(1-\alpha)=e(\alpha)$ of
the entropic pressure which is equivalent to the symmetry $p(\beta,-h)=p(\beta,h)$ of
the pressure. Finally, the FR can also  be recovered from the level-3 fluctuation relation\index{FR!level-3}
$$
\II(\QQ\circ\Theta)=\II(\QQ)+\varsigma(\QQ),
$$
which holds for any $\QQ\in\cP(\Omega)$ as a direct consequence of
Definition~\eqref{Eq:IsingIIdef} and of the obvious fact that
$S(\QQ\circ\Theta)=S(\QQ)$.

The main lesson of this example is
that despite the fact that the entropic pressure is not everywhere differentiable,
the global FT holds with a rate function given by the Legendre transform of the
entropic pressure. Figure~\ref{Fig:IsingRate} shows the typical aspect of the
entropic pressure and rate function for the critical Ising Model. Note that,
due to the kink of the entropic pressure at $\alpha=1/2$, the rate function is
flat on the interval $[-\partial^+e(1/2),-\partial^-e(1/2)]$. The points in this
interval are said to be unexposed and the Gärtner--Ellis theorem fails to
provide the LDP lower bound at such points (see~\cite[Theorem~2.3.6]{Dembo2000}).

\section{Langevin Dynamics}
\label{Ex:Langevin}

Our previous examples are illustrations of the use of the contraction principle
in the derivation of the FR/FT, starting from the level-2 and/or level-3 LDP. The next example
shows that in some cases it may be relatively easy derive the LDP and prove a FR at
level-2 or level-3 using well-known techniques, but that the real difficulty appears
when trying to apply the contraction principle.\index{Langevin dynamics}

\subsection{Setup}

Let $\cG=(\cV,\cE)$ be a finite connected graph. For each vertex $i\in\cV$, let
$V_i\in C^\infty(\RR)$ be such that
$$
\inf_qV_i(q)\ge0,
\qquad
\lim_{|q|\to\infty}|V'_i(q)|=+\infty,
\qquad
\sup_q|V_i''(q)|<\infty.
$$
For each edge $e\in\cE$, let $W_{e}\in C_b^\infty(\RR)$%
\footnote{$C_b^\infty(\RR)$ denotes the space of smooth, bounded functions
whose derivatives of all orders are bounded.} be non identically vanishing
and such that
$$
W_e(-q)=W_e(q).
$$
Set
$$
U(q)\coloneq V(q)+W(q)\coloneq\sum_{i\in\cV}V_i(q_i)+\sum_{\{i,j\}\in\cE}W_{\{i,j\}}(q_i-q_j),
$$
and consider the Hamiltonian $H(q,p)\coloneq\frac12|p|^2+U(q)$ on the phase space
\[
\mathfrak{X}\coloneq\mathrm{T}^\ast\RR^\cV=\RR^\cV\oplus\RR^\cV
\]
equipped with its canonical
symplectic structure. We set $x\coloneq(q,p)$ and denote by $\d x=\d q\d p$ the
induced Liouville measure on $\fX$. Let $\Gamma$ and $T$ be diagonal matrices acting on
$\RR^\cV$ with strictly positive diagonal entries $\Gamma_i$ and $T_i$. The stochastic
differential equation (SDE)
\beq
\begin{split}
\d q(s)&=p(s)\d s;\\[2pt]
\d p(s)&=-\left(\nabla U(q(s))+T^{-1}\Gamma p(s)\right)\d s
+(2\Gamma)^{1/2}\d w(s),
\end{split}
\label{Eq:SDE}
\eeq
where $w$ denotes a standard Wiener process in $\RR^\cV$, provides a statistical
model for the dynamics of an open network $\cS$ whose internal dynamics is
governed by the Hamiltonian $H$ and where each node $i\in\cV$ is thermostated by
a reservoir $\cR_i$ in thermal equilibrium at temperature $T_i$. In the
physically more interesting case where $\Gamma$ is degenerate, the analysis of
the resulting dynamical system  becomes quite delicate and relies on the
detailed structure of the potential $U$. We refer the reader
to~\cite{Eckmann1999a,Eckmann1999b,Eckmann2000,Rey-Bellet2002a,Rey-Bellet2002b,Eckmann2003,Hairer2009} for
results pertaining to boundary driven chains of oscillators, and
to~\cite{Maes2003b,Eckmann2004,Bodineau2008,Cuneo2014,Jaksic2017,Cuneo2018} for more general networks.
Here, we shall only sketch the strategy that leads to the FT in the non-degenerate case
$\Gamma>0$. Our discussion heavily relies on the works~\cite{Wu2001,Bodineau2008} to
which we refer for more details.

The Markov generator associated to the SDE~\eqref{Eq:SDE} is given
by\footnote{Here $\{H,\,\argdot\,\}$ denotes a Poisson bracket.}
$$
L=\left(\nabla_p-T^{-1}p\right)\cdot\Gamma\nabla_p+\{H,\,\argdot\,\}.
$$
We denote by
$$
L^T=\nabla_p\cdot\Gamma\left(\nabla_p+T^{-1}p\right)-\{H,\,\argdot\,\}
$$
its formal adjoint, and by
\begin{align*}
\gamma(f,g)&\coloneq\frac12(Lfg-fLg-gLf)=\nabla_pf\cdot\Gamma\nabla_p g,\\[4pt]
\gamma(f)&\coloneq\gamma(f,f)=|\Gamma^{1/2}\nabla_pf|^2,
\end{align*}
the associated \ndex{carré du champ} operators. We recall the chain rule
\beq
L(F\circ f)=(F'\circ f)Lf+(F''\circ f)\gamma(f).
\label{eq:ChainRule}
\eeq
It follows that, for $\vartheta>T_{\rm max}\coloneq\max_i T_i$,
$$
L\,\e^{H/\vartheta}
=\vartheta^{-1}\left(
\tr(\Gamma)
-\Gamma^{1/2}p\cdot(T^{-1}-\vartheta^{-1})\Gamma^{1/2}p\right)
\,\e^{H/\vartheta}
\le\vartheta^{-1}\tr(\Gamma)\,\e^{H/\vartheta}.
$$
Invoking~\cite[Theorem~3.5]{Khasminskii2012} we can conclude that
SDE~\eqref{Eq:SDE} has global strong solutions which, together with the law
of the initial condition $x(0)$, assumed to be independent of the
driving Wiener process, define a Markov process with continuous path and hence,
for each $t\in\cL=\RR_+$, the law of $\bsx=(x(s))_{s\in[0,t]}$ is a probability
measure on $\Omega_t=C([0,t],\mathfrak{X})$. For $x\in\fX$, we shall denote by $\PP_{x,t}$
the law of the process started at $x(0)=x$ and by $\EE_{x,t}$ the corresponding
expectation functional. If $\mu\in\cP(\fX)$, then
$$
\PP_{\mu,t}\coloneq\int_{\fX}\PP_{x,t}\,\mu(\d x)
$$
is the law of the process started with $x(0)$ distributed according to $\mu$,
and we shall denote by $\mu_s$ the law of $x(s)$ under $\PP_{\mu,t}$.

Invoking Hörmander's commutator theorem~\cite[Theorem~22.2.1]{Hormander-III},
one easily establishes the micro-hypoellipticity of the operators $L$, $L^T$,
$\partial_t-L$ and $\partial_t-L^T$. It follows that for all $s>0$ the Markov
transition kernel of the process is strong Feller with a smooth density
$P^s(x,y)$ w.r.t.\;the Liouville measure on $\fX$, {\sl i.e.,} for any bounded
measurable function $f$ on $\fX$,
\[
\EE_{x,t}[f(x(t))]=\int_{\fX}P^t(x,y)f(y)\d y,
\]
where the function $(t,x,y)\mapsto P^t(x,y)$ is in
$C^\infty(]0,\infty[\times\fX\times\fX)$.

The system is controllable, {\sl i.e.,} for any $x,y\in\mathfrak{X}$ and any $t>0$
there exists a control $u\in C([0,t],\RR^\cV)$ such that the solution
$x_u=(q_u,p_u)$ of the system
\begin{align*}
\dot q(s)&=p(s),\\[2pt]
\dot p(s)&=-\left(\nabla U(q(s))+T^{-1}\Gamma p(s)\right)+u(s),
\end{align*}
with initial condition $x_u(0)=x$ satisfies $x_u(t)=y$. It follows from the
results of~\cite{Stroock1972}, see also~\cite{Eckmann1999a,Cuneo2018}, that for any
$(t,x,y)\in]0,\infty[\times\fX\times\fX$ one has $P^t(x,y)>0$. Moreover, the process
has a unique stationary measure $\mu_\st$ whose density
$\rho_\st=\frac{\d\mu_\st}{\d x}\in C^\infty(\fX)$ is everywhere positive and satisfies
$L^T\rho_\st=0$.

In the special case where the interaction potential $W$ vanishes
identically, the stationary measure is the {\sl local equilibrium state}
\beq
\nu(\d x)\coloneq Z^{-1}\e^{-S(x)}\d x, \qquad
S(x)\coloneq\frac12 p\cdot T^{-1}p+\sum_{i\in\cV}T_i^{-1}V_i(q_i),
\label{Eq:SleDef}
\eeq
which provides a convenient reference measure. Note that $\nu$ is invariant under
the time-reversal involution of $\fX$ given by $\theta:(q,p)\mapsto(q,-p)$.
A $\ZZ_2$-action on $\Omega_t$ is defined by
$$
\Theta_t:\bsx=(x(s))_{s\in[0,t]}\mapsto\hat\bsx=(\theta x(t-s))_{s\in[0,t]}.
$$
In the following, we shall consider the family $(\PP_t,\wP_t)_{t\in\RR_+}$ where
$$
\PP_t\coloneq\PP_{\mu_\st,t},\qquad \wP_t\coloneq\PP_t\circ\Theta_t.
$$
$\wP_t$ describes a stationary process with invariant measure
$\hat\mu_\st\coloneq\mu_\st\circ\theta$. For reasons which will become clear
below, we define a smooth function $\phi:\fX\to\RR$ by
\beq
\phi(x)\coloneq\log\frac{\d\hat\mu_\st}{\d\nu}(x).
\label{Eq:Langevinphidef}
\eeq
A simple calculation shows that the adjoint
$L^\ast$ of $L$ w.r.t.\;the inner product of the Hilbert space $L^2(\fX,\d\nu)$
satisfies the generalized \ndex{detailed balance} relation
\beq
\Theta L^\ast\Theta=L-\sigma
\label{Eq:LangevinGDB}
\eeq
where the operators $\Theta$ and $\sigma$ are defined, respectively, by
$f\mapsto f\circ\theta$ and multiplication with the function
$$
\sigma(x)\coloneq\{H,S\}(x)=-T^{-1}\!p\cdot\nabla W(q)=G(x)-G\circ\theta(x),\quad
G(x)=\frac14|T^{-1/2}\!(p-\nabla W(q))|^2.
$$
An immediate consequence of~\eqref{Eq:LangevinGDB} is that the
function~\eqref{Eq:Langevinphidef} satisfies
\beq
L\phi+\gamma(\phi)=\sigma.
\label{Eq:LangevinphiPDE}
\eeq

It follows from well-known
results on the time-reversal of diffusion processes (see~\cite{Anderson1982,Haussmann1986})
that $\wP_t$ is the law of the solutions
$\hat{\bsx}=(\hat q(s),\hat p(s))_{s\in[0,t]}$ of the SDE
\begin{align*}
\d \hat q(s)&=\hat p(s)\d s\\[2pt]
\d \hat p(s)&=-\left(\nabla U(\hat q(s))+T^{-1}\Gamma \hat p(s)
-2\Gamma(\nabla_p\phi)(\hat q(s),\hat p(s))\right)\d s
+(2\Gamma)^{1/2}\d\hat w(s)
\end{align*}
with initial condition $\hat x(0)$ distributed according to $\hat\mu_\st$, independently
of the standard Wiener process $\hat w$. In particular, the family
$(\wP_t)_{t\in\RR_+}$ describes a stationary Markov process with
generator $\hat L=L+2\gamma(\phi,\,\argdot\,)$ and transition kernel
\[\hat P^t(x,y)=\rho_\st(\theta y)P^t(\theta y,\theta x)\rho_\st(\theta x)^{-1}.\]
We shall denote by $\wP_{x,t}$ the probability induced on $\Omega_t$
by conditioning $\wP_t$ on the starting value $\hat x(0)=x$. Setting
$$
\eta(s)\coloneq\int_0^s (2\Gamma)^{1/2}\nabla_p\phi(x(r))\cdot\d w(r),
\qquad
[\eta](s)\coloneq\int_0^s|(2\Gamma)^{1/2}\nabla_p\phi(x(r))|^2\d r,
$$
Itô calculus and the generalized detailed balance, through
Relation~\eqref{Eq:LangevinphiPDE}, yield
$$
\eta(s)-\frac12[\eta](s)=\phi(x(s))-\phi(x(0))-\int_0^s\sigma(x(r))\d r.
$$
Following the lines of the proof of~\cite[Proposition~3.9]{Jaksic2017}, one shows
that the Doléans-Dade stochastic exponential
$$
[0,t]\ni s\mapsto Z(s)\coloneq\e^{\eta(s)-\frac12[\eta](s)}
$$
is a $\PP_{x,t}$-martingale. It follows from Girsanov's theorem that
$$
\frac{\d\PP_{x,t}}{\d\wP_{x,t}}(\bsx)=\exp\left[
\phi(x(0))-\phi(x(t))+\int_0^t\sigma(x(s))\d s\right].
$$

\subsection{Transient Fluctuation Relation}

Since
\[
\frac{\d\PP_t}{\d\wP_t}(\bsx)
=\frac{\d\PP_{\mu_\st,t}}{\d\wP_{\hat{\mu}_\st,t}}(\bsx)
=\frac{\d\PP_{\mu_\st,t}}{\d\wP_{\mu_\st,t}}(\bsx)\,
\frac{\d\mu_\st}{\d\hat\mu_\st}(x(0))
=\frac{\d\PP_{x(0),t}}{\d\wP_{x(0),t}}(\bsx)\,
\e^{\phi\circ\theta(x(0))-\phi(x(0))},
\]
the entropy production random variable of the pair $(\PP_t,\wP_t)$ is given
by
\beq
\sigma_t(\bsx)=\log\frac{\d\PP_t}{\d\wP_t}(\bsx)
=\phi\circ\theta(x(0))-\phi(x(t))+\int_0^t\sigma(x(s))\d s.
\label{Eq:LangevinEP}
\eeq
To connect this expression with thermodynamics, we note that by Itô's formula
the change in energy is given by
$$
\d H(x(t))=\sum_{i\in\cV}\left((2\Gamma_i)^{1/2} p_i(t)\d w_i(t)
+\Gamma_i(1-T_i^{-1}p_i(t)^2)\d t\right)
\eqcolon-\sum_{i\in\cV}\d\Phi_i(t).
$$
Interpreting $\d\Phi_i(t)$ as the instantaneous flux of energy dissipated in the
reservoir $\cR_i$, the total thermodynamic entropy dissipated in the environment
during the interval $[0,t]$ is given by
\begin{align*}
\fS_t(\bsx)&\coloneq\sum_{i\in\cV}\int_0^t\frac{\d\Phi_i(s)}{T_i}\\
&=-\int_0^t\left(
(2\Gamma)^{1/2} T^{-1} p(s)\cdot\d w(s)
-\frac12|(2\Gamma)^{1/2}T^{-1}p(s)|^2\d s
+\tr(\Gamma T^{-1})\d s\right).
\end{align*}
Applying Itô's formula to the function $S$ defined in~\eqref{Eq:SleDef} we
derive
$$
\d S(x(t))=(2\Gamma)^{1/2}T^{-1}p(t)\cdot\d w(t)
-\left(\frac12|(2\Gamma)^{1/2}T^{-1}p(t)|^2-\tr(\Gamma T^{-1})-\sigma(x(t))\right)\d t
$$
and hence
\beq
\fS_t(\bsx)=\log\frac{\d\nu}{\d x}(x(t))-\log\frac{\d\nu}{\d x}(x(0))
+\int_0^t\sigma(x(s))\d s.
\label{Eq:LangevinPEP}
\eeq
Comparison with~\eqref{Eq:LangevinEP} allows us to conclude that
$$
\sigma_t(\boldsymbol{x})=\log\rho_\st(x(0))
-\log\rho_\st\circ\theta(x(t))+\fS_t(\bsx).
$$
Thus, up to boundary terms, $\sigma_t$ coincides with the entropy dissipated
during the time interval $[0,t]$. To clarify the physical content of this
expression, we observe that it leads to the following form of the relative
entropy$$
\ep_t\coloneq\Ent(\PP_t|\wP_t)
=\EE_t[\sigma_t]=\Ent(\mu_\st|\hat\mu_\st)+\Delta S_\cR(t),
$$
where
$$
\Delta S_\cR(t)=\EE_{t}[\fS_t]=t\mu_\st(\sigma)
$$
is the expected increase in the entropy of the environment $\cR$.
Inequality~\eqref{Eq:Jarzynski_inequality} yields
$$
\mu_\st(\sigma)=\lim_{t\to\infty}\frac{\ep_t}{t}\ge0.
$$
Since the configurational entropy of the system $\cS$
\[
S_\cS(t)\coloneq-\int\rho_\st\log\rho_\st\d x
\]
is constant, we can rewrite the previous inequality as
$$
\Delta S_{\cS+\cR}(t)=\Delta S_\cS(t)+\Delta S_\cR(t)\ge0,
$$
which is a microscopic form of the $2^{\rm nd}$ law of thermodynamics.
Using Identity~\eqref{Eq:LangevinphiPDE},  we can further write
\[
\mu_\st(\sigma)=\mu_\st(L\phi+\gamma(\phi))=\mu_\st(\gamma(\phi)),
\]
from which we conclude that $\mu_\st(\sigma)=0$ iff
$\nabla_p\phi$ vanishes identically. Assume now that $\mu_\st(\sigma)=0$. Then
\beq
\rho_\st(q,p)=\e^{-\frac12 p\cdot T^{-1}p+f(q)}
\label{Eq:Langevinrhostspecial}
\eeq
for some $f\in C^\infty(\RR^\cV)$. Insertion into the equation $L^T\rho_\st=0$ yields
$$
p\cdot(\nabla f(q)+T^{-1}\nabla U(q))=0
$$
from which we deduce
$$
\nabla f+T^{-1}\nabla U=0,
$$
and hence
\beq
\left(\frac1{T_i}-\frac1{T_j}\right)
\frac{\partial^2 W}{\partial q_i\partial q_j}=0.
\label{Eq:LagevinThermalEquilibrium}
\eeq
Under our assumptions on the interaction potential $W$, this is only possible
whenever $T_i=T_j$ for all $\{i,j\}\in\cE$, {\sl i.e.,} $T_i=T_*$ for all $i\in\cV$.
This means that the environment $\cR$ is in thermal equilibrium at temperature
$T_*$. In this case, the process is stationary and in thermal equilibrium, with
\beq
\mu_\st(\d x)=Z^{-1}\e^{-H(x)/T_*}\d x,
\label{Eq:LangevinGibbsMeasure}
\eeq
the Gibbs measure at temperature $T_*$. Note that since $H\circ\theta=H$, the
measure~\eqref{Eq:LangevinGibbsMeasure} is time-reversal invariant and it follows
that the process is reversible, {\sl i.e.,} $\wP_t=\PP_t$. Reciprocally, if the process is reversible
then $\mu_\st\circ\theta=\mu_\st$ and the relation $L^T\rho_\st=0$ splits into the two
equations
$$
\nabla_p\cdot\Gamma(\nabla_p+T_*^{-1}p)\rho_\st=0,\qquad\{H,\rho_\st\}=0.
$$
The first one implies~\eqref{Eq:Langevinrhostspecial} and the second one leads again
to~\eqref{Eq:LagevinThermalEquilibrium}. Thus, the process is reversible iff the
environment $\cR$ is in thermal equilibrium and in the opposite case the mean
entropy production rate
$$
\epr\coloneq\lim_{t\to\infty}\frac{\ep_t}{t}
$$
is strictly positive and the Rényi entropy
$$
e_t(\alpha)=\Ent_\alpha(\PP_t|\wP_t)
$$
is non-vanishing for $\alpha\in]0,1[$.

\subsection{Fluctuation Theorems}

FTs are only known to hold for a few special classes of Langevin dynamical
systems. A local FT for boundary driven chains of anharmonic oscillators has
been obtained in~\cite{Rey-Bellet2002a}. The status of the global FT for general
harmonic networks has been discussed in detail in~\cite{Jaksic2017,Damak2020}.
Closely related examples are the local FT for Gaussian dynamical systems
obtained in~\cite{Jaksic2010b,Jaksic2016} and the local FT for strongly
dissipative stochastic partial differential equations proved
in~\cite{Jaksic2015a}. As we shall outline now, the main difficulty one
encounters in such derivations lies in the lack of compactness of the phase
space $\fX$ and in the unboundedness of the entropy production random variable.

Coming back to the setup of the Langevin equation~\eqref{Eq:SDE}, our current
assumptions imply that the Markov transition kernel $P^t$ and $\hat P^t$ are
strong Feller. Moreover, the Lyapunov functions described in
Lemma~\ref{Lem:Lyapunov} together with~\cite[Theorem~3.1]{Wu2001} give that the
Donsker--Varadhan theory~\cite{Donsker1983} applies to both the direct and the
time-reversed stationary processes. To formulate these results, we need to
introduce some additional objects and notations.

Equip the path spaces $\Omega\coloneq C(\RR_+,\fX)$ and $\bar\Omega\coloneq C(\RR,\fX)$ with
the topology of locally uniform convergence and the induced Borel structure. For
a subset $I\subset\RR$ and $\bsx=(x(s))_{s\in\RR}$, set $\bsx_I=(x(s))_{s\in
I}$ and let $\fF_I$ be the $\sigma$-subalgebra generated by $\bsx_I$. We denote
by $\tau$ the left shift on $\Omega$, {\sl i.e.,} the map
\[
\bsx=(x(t))_{t\in\RR_+}\mapsto\tau^s(\bsx)\coloneq(x(s+t))_{t\in\RR_+}.
\]
We shall use the same symbol to denote the left shift on $\bar\Omega$.

Any $\QQ\in\cP_\tau(\Omega)$ has a unique extension $\bar\QQ\in\cP_\tau(\bar\Omega)$,
the measure determined by $\bar\QQ(\Gamma)=\QQ\circ\tau^{-t}(\Gamma)$ for
$\Gamma\in\cF_{[-t,\infty[}$. The map $\QQ\mapsto\bar\QQ$ is bijective, its inverse
being the restriction
$\cP_\tau(\bar\Omega)\ni\QQ\mapsto\QQ|_{\fF_{[0,\infty[}}\in\cP_\tau(\Omega)$.
Let $\vartheta$ be the time reversal defined on $\bar\Omega$ by
$$
\bsx=(x(t))_{t\in\RR}\mapsto\vartheta(\bsx)\coloneq(\theta x(-t))_{t\in\RR},
$$
and denote by $\widehat{\QQ}\coloneq\QQ\circ\vartheta$ the involution induced on
$\cP_\tau(\bar\Omega)$. Setting $\widehat{\QQ}\coloneq\widehat{\bar\QQ}|_{\fF_{[0,\infty[}}$
defines an involution on $\cP_\tau(\Omega)$ such that
$\bar{\widehat{\QQ}}=\widehat{\bar\QQ}$. In particular, if $\PP\in\cP_\tau(\Omega)$ is
the measure induced by the family $(\PP_t)_{t\in\RR_+}$, then $\wP$ coincides
with the measure induced by the family $(\wP_t)_{t\in\RR_+}$.

For $\QQ\in\cP_\tau(\bar\Omega)$ and $t\in\RR$ we set
$\QQ_{t]}\coloneq\QQ|_{\fF_{]-\infty,t]}}$ and define
$$
\bar\QQ_{0]}\otimes\PP_t(\d\bsx)
\coloneq\bar\QQ(\d\bsx_{]-\infty,0]})\PP_{x(0),t}(\d\bsx_{[0,t]}).
$$

\part{Level-3 FT and FR.} The process-level empirical measures
$\boldsymbol{\xi}_t\in\cP(\Omega)$ defined by\index{FT!for Langevin dynamics}
$$
\Omega\ni\bsx\mapsto
\boldsymbol{\xi}_t\coloneq\frac1t\int_0^t\delta_{\tau^s(\bsx)}\d s
$$
satisfy the LDP w.r.t.\;the measures $\PP$ and $\wP$ with scale $r_t=t$, the
good rate functions being given by~\cite[Theorems~4.6 and~5.5]{Donsker1983}
$$
\II(\QQ)\coloneq\left\{
\begin{array}{ll}
\Ent(\bar\QQ_{1]}|\bar\QQ_{0]}\otimes\PP_1)
&\text{if }\QQ\in\cP_\tau(\Omega);\\[4pt]
+\infty&\text{otherwise},
\end{array}
\right.\quad
\hat{\II}(\QQ)\coloneq\left\{
\begin{array}{ll}
\Ent(\bar\QQ_{1]}|\bar\QQ_{0]}\otimes\wP_1)
&\text{if }\QQ\in\cP_\tau(\Omega);\\[4pt]
+\infty&\text{otherwise}.
\end{array}
\right.
$$
Since
$$
\log\frac{\d\bar\QQ_{0]}\otimes\PP_1}{\d\bar\QQ_{0]}\otimes\wP_1}
=\log\frac{\d\PP_{x(0),1}}{\d\wP_{x(0),1}}
=\phi(x(0))-\phi(x(1))+\int_0^1\sigma(x(s))\d s,
$$
it immediately follows that the level-3 FR\index{FR!level-3}
\beq
\hat{\II}(\QQ)=\II(\QQ)+\int\sigma(x(0))\QQ(\d\bsx)
\label{Eq:LangevinL3FR}
\eeq
holds for all $\QQ\in\cP_\tau(\Omega)$ such that both $\phi(x(0))$  and
$\sigma(x(0))$ belong to $L^1(\Omega,\QQ)$. Let us denote by $\phi_\pm$ the
positive/negative part of $\phi$. We shall see in the next paragraph that
$\sigma(x(0))$ and $\phi_+(x(0))$ belong to $ L^1(\Omega,\QQ)$ whenever
$\II(\QQ)<\infty$. Thus, \eqref{Eq:LangevinL3FR} holds  for all stationary
processes $\QQ$ such that $\II(\QQ)<\infty$ and\footnote{We are not aware of any
estimate of $\phi_-$ allowing us to deduce that
$\phi_-(x(0))\in L^1(\Omega,\QQ)$ from the fact that  $\II(\QQ)<\infty$ .}
$\phi_-(x(0))\in L^1(\Omega,\QQ)$. Moreover, invoking the uniqueness of the rate
function one easily derives the relation
\beq
\hat{\II}(\QQ)=\II(\widehat{\QQ})
\label{Eq:hatIIForm}
\eeq
for all $\QQ\in\cP_\tau(\Omega)$.

We shall now proceed to lower levels FR/FT via contractions. In order to keep the discussion
to a reasonable level of technicality, we will only outline the argument and leave two
technical lemmata to the~\hyperlink{AppA}{Appendix}.

\mypar{Level-2 FT and FR}
By contraction, the level-$2$ empirical measures $\xi_t\in\cP(\fX)$, defined by
$$
\Omega\ni\bsx\mapsto\xi_t\coloneq\frac1t\int_0^t\delta_{x(s)}\d s,
$$
satisfy a LDP w.r.t.\;the families $(\PP_t)_{t\in\RR_+}$ and $(\wP_t)_{t\in\RR_+}$
with scale $r_t=t$ and good rate functions
\begin{align*}
\bI(Q)=\inf\left\{\II(\QQ)\,\biggm|\,\QQ\in\cP_\tau(\Omega),
\int f(x(0))\QQ(\d\bsx)=\int f(x)Q(\d x)\text{ for all }f\in C_\mathrm{b}(\fX)
\right\},
\end{align*}
$\hat{\bI}$ being defined similarly. As immediate consequences of this
formula, we observe that Relation~\eqref{Eq:hatIIForm} yields that
\beq
\hat{\bI}(Q)=\bI(Q\circ\theta),
\label{Eq:LangevinLev2hat}
\eeq
and that $\II(\QQ)<\infty$ implies
$\QQ\in\cP_\tau(\Omega)$ with a marginal $Q\in\cP(\fX)$ satisfying
$\bI(Q)\le\bI(\QQ)<\infty$. The Donsker--Varadhan variational formula
(see~\cite{Donsker1976} or~\cite[Theorem 4.2.43]{Deuschel1989}),
$$
\bI(Q)=\sup_{\varphi}
\int\left(-L\varphi-\gamma(\varphi)\right)\d Q,
$$
and its analog for $\hat{\bI}$ hold, the supremum being taken over
all real-valued $C^2$-functions which are bounded below on $\fX$.
By~\cite[Proposition~2.8]{Jain1990}, $\bI(Q)<\infty$ implies that
$Q\ll\mu_\st$ and hence that $Q$ has a density w.r.t.\;the local
equilibrium measure $\nu$. Invoking~\cite[Corollary~2.2]{Wu2001},
it follows from the Lyapunov estimates of Lemma~\ref{Lem:Lyapunov} that
(recall Equ.~\eqref{Eq:SleDef})
$$
\int|\nabla S|^2\d Q<\infty
$$
whenever $\bI(Q)<\infty$ or $\hat\bI(Q)<\infty$. A direct consequence is that
$\sigma\in L^1(\fX,Q)$, and invoking the upper bound~\eqref{Eq:AppAphibound}
one can further deduce that $\phi_+\in L^1(\fX,Q)$.\footnote{This establishes
the claim made in the previous paragraph.} We shall prove in Lemma~\ref{Lem:L2FR}
that under the same condition $\gamma(\phi)\in L^1(\fX,Q)$ holds and that this
implies the level-2 FR\index{FR!level-2}
\beq
\hat{\bI}(Q)=\bI(Q)+\langle\sigma,Q\rangle,
\label{Eq:LangevinL2FR}
\eeq
for all $Q\in\cP(\fX)$ such that either $\bI(Q)<\infty$ or $\hat\bI(Q)<\infty$.

\remark For $Q\in\cP(\fX)$ such that $Q\ll\nu$ with
\beq
\log\frac{\d Q}{\d\nu}\in\cD\coloneq\left\{\varphi\in C^2(\fX)\,\biggm|\,
|\varphi|+|\nabla_p\varphi|\in L^2(\fX,\d Q)\right\},
\label{Eq:QQReg}
\eeq
one can decompose the level-2 rate function as
$$
\bI(Q)=\bK(Q)+I(Q|\nu)-\frac12\langle\sigma,Q\rangle,
$$
 (see~\cite[Proposition~2]{Bodineau2008}), with
$$
\bK(Q)=\sup_{\varphi\in\cD}\langle\{H,\varphi\}-\gamma(\varphi),Q\rangle,
$$
and the \ndex{relative Fisher information}
$$
I(Q|\nu)=4\int\gamma\left(\sqrt{\frac{\d Q}{\d\nu}}\right)\d\nu.
$$
One easily checks that both $\bK$ and $I(\,\argdot\,|\nu)$ are even under $\theta$,
so that the FR~\eqref{Eq:LangevinL2FR} follows for $Q$ satisfying~\eqref{Eq:QQReg}.

\part{Level-1 FT and FR.} Formally, the FT for our Langevin process follows
from~\eqref{Eq:LangevinEP} by contraction of the Donsker--Varadhan level-2 LDP,
the rate function being given by
\beq
I(s)=\inf\left\{\bI(Q)\,\big|\,Q\in\cP_\tau(\fX),\langle\sigma,Q\rangle=s \right\}.
\label{Eq:LangevinIDef}
\eeq
The corresponding FR~\eqref{Eq:RateSymmetry}-\eqref{Eq:hatRate} are directly
inherited from the level-2
relations~\eqref{Eq:LangevinL2FR}-\eqref{Eq:LangevinLev2hat} and the fact that
$\sigma\circ\theta=-\sigma$. The two obstructions to turning this formal
argument into a rigorous proof stem from the lack of compactness of the phase
space $\fX$:
\ben
\item Due to the unboundedness of $\phi$, the additive functional
$\widetilde{\sigma}_t$ defined by
$$
\widetilde{\sigma}_t(\bsx)\coloneq\int_0^t\sigma(x(s))\d s
$$
is {\sl not\/} physically equivalent to the entropy production $\sigma_t$ given
by~\eqref{Eq:LangevinEP}. We note that the same problem occurs when trying to
derive a fluctuation theorem for the dissipated thermodynamic entropy $\fS_t$,
the role of $\phi$ being there played by $S$.
\item The unboundedness of $\sigma$ makes the map
$$
\cP(\fX)\ni Q\mapsto\langle\sigma,Q\rangle
$$
discontinuous for the weak topology of $\cP(\fX)$.
\een

Thus, under the current assumptions, it is not known whether our Langevin
process satisfies the FT in the canonical form of Definition~\ref{Def:FT}. We
shall leave this important question for future investigations. On the positive
side, since $|\sigma(x)|\le C|\nabla S(x)|$ for some constant $C$,
invoking~\cite[Corollary~2.3]{Wu2001} and Lemma~\ref{Lem:Lyapunov} yields that
the functional $\widetilde{\sigma}_t$ satisfies the FT with rate
function~\eqref{Eq:LangevinIDef}.

From a more conceptual perspective, one may argue that the level-3 and level-2
FR~\eqref{Eq:LangevinL3FR} and~\eqref{Eq:LangevinL2FR}, which only depend on
time-reversal,  are more fundamental than the level-1
FR~\eqref{Eq:RateSymmetry}-\eqref{Eq:hatRate} which also depend on the choice of
an entropy production random variable. With this point of view, the natural
entropy production random variable is $\widetilde{\sigma}_t$, and it satisfies the
global FT. Rewriting~\eqref{Eq:LangevinPEP} as
$$
\widetilde{\sigma}_t(\bsx)=\fS_t(\bsx)+S(x(t))-S(x(0)),
$$
we express the total entropy production as the sum of the entropy dissipated into the
environment  and the increase in some local entropy. The latter vanishes on average since the
system is in a steady state, but this does not prevent fluctuations which contribute to
the fluctuations of $\widetilde{\sigma}_t$.

It is instructive to compare the situation just described with what happens in purely
harmonic networks.\footnote{Note that in such systems the potentials $V$ and $W$ are
quadratic functions.  Hence, $W$ is unbounded. Harmonic networks
therefore do not fall in the framework described in this subsection. }
There, one can show that the global FT holds for the
entropy production~$\sigma_t$ with a rate function satisfying the
FR~\eqref{Eq:RateSymmetry}-\eqref{Eq:hatRate}. The dissipated thermodynamic entropy
$\fS_t$ also satisfies the global LDP. However, the corresponding rate function only satisfies
the FR
$$
\Delta(s)\coloneq I(-s)-I(s)=s
$$
for $s$ in some finite interval around $0$. Outside of this interval, the {\sl symmetry function}
$\Delta$ becomes non-universal. We refer to~\cite{Jaksic2017} and references therein for a
discussion of this phenomenon which leads to what is sometimes referred to as\index{FT!extended}
the {\sl extended FT} in the physics literature. Invoking~\cite[Proposition~3.22]{Jaksic2017},
one deduces that a similar scenario holds for the additive functional $\widetilde{\sigma}_t$.
This is illustrated in Figure~\ref{Fig:LangevinSymmetry}. There, we consider a triangular
harmonic network with potential
$$
U(q)\coloneq\frac12(q_1^2+q_2^2+q_3^2)-\frac14(q_1q_2+q_2q_3+q_3q_1).
$$
The ratios of the reservoirs temperatures are given by
$$
T_1:T_2:T_3=2-\tau:3:1+\tau,\qquad\tau\in]0,1/2[,
$$
and $\Gamma$ is the unit matrix. On the interval $]1/2-\kappa_c(\tau),1/2+\kappa_c(\tau)[$
the entropic pressure is finite. It is infinite on the interior of its complement, and
its derivative diverges at its boundary. We have chosen $3$ values of $\tau$ which are the
largest solutions of  $\kappa_c(\tau)\in\{1,3/2,2\}$. Figure~\ref{Fig:LangevinSymmetry} displays
the departure of the symmetry function $\Delta(s)$ from the straight line $\Delta(s)=s$
for the rate function of the dissipated thermodynamic entropy $\fS_t$ (left pane)
and of the additive functional $\widetilde{\sigma}_t$ (right pane).

\begin{figure}
\centering
\includegraphics[width=5.6cm]{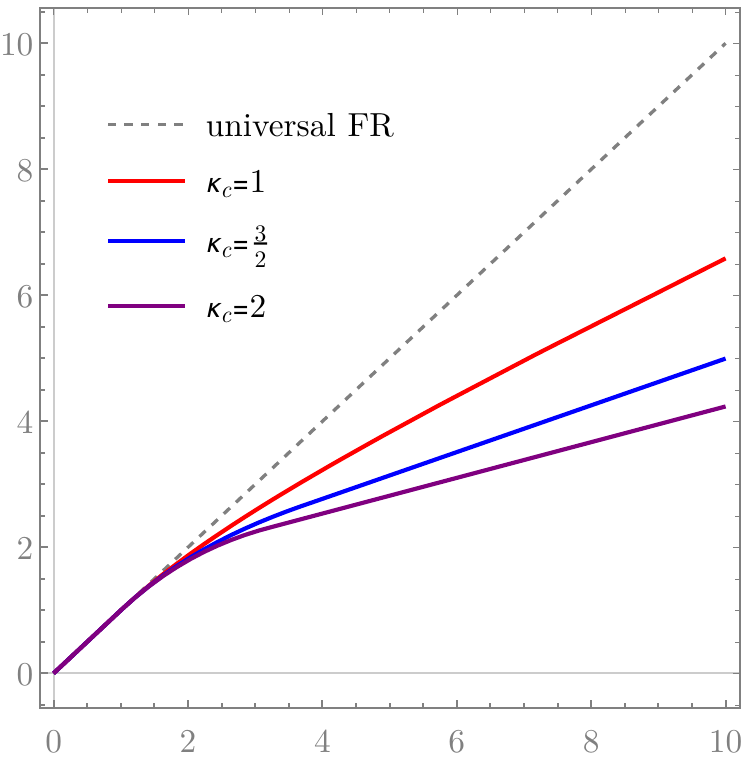}
\hskip 0.05cm
\includegraphics[width=5.6cm]{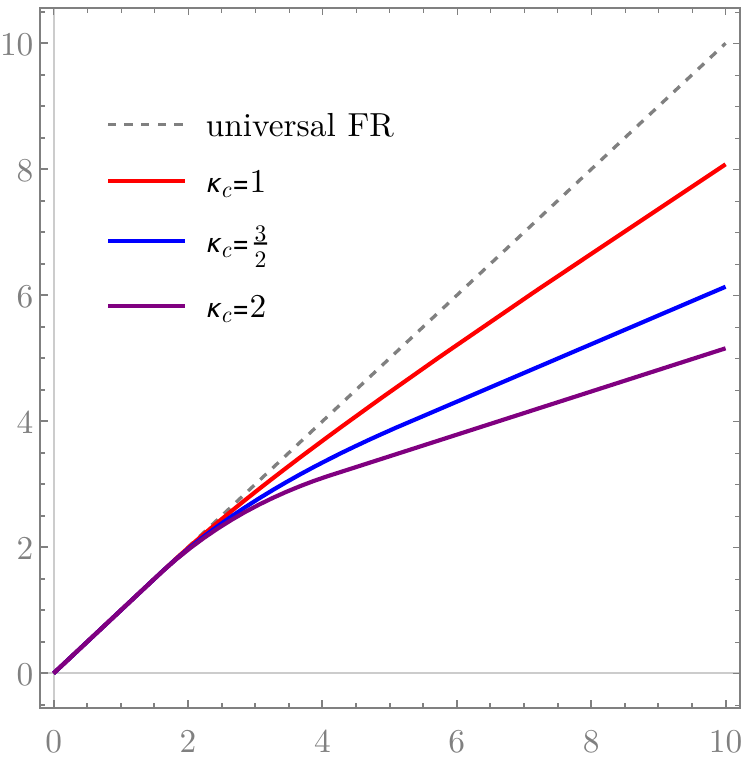}
\caption{The symmetry function $\Delta$ for the triangular network described in the text.
On the left: the case of the dissipated thermodynamic entropy $\fS_t$ in the steady state.
On the right: the case of the additive functional $\widetilde{\sigma}_t$.
The unit on both axes in both figures is the entropy production rate $\epr$.}
\label{Fig:LangevinSymmetry}
\end{figure}

\section{Reversible Smooth Dynamical Systems}
\label{Ex:RevDS}

\subsection{Setup}

Let $(\varphi^t)_{t\in\cL}$ be a group of homeomorphisms of  a Polish space $\Omega$,
with $\cL=\ZZ$ or $\RR$. In the second case, we also assume that $\varphi$ is
continuous as a map from $\cL\times\Omega$ to $\Omega$.  Assume that
$\mu\in\cP(\Omega)$ is such that $\mu_t\coloneq\mu\circ\varphi^{-t}$ is equivalent to
$\mu$ for all $t\in\cL$. For $\tau\in\cL_+\coloneq\cL\cap[0,\infty[$, set
$\Omega_\tau\coloneq C([0,\tau],\Omega)$ if $\cL=\RR$ and
$\Omega_\tau\coloneq\Omega^{\llbracket0,\tau\rrbracket}$ if $\cL=\ZZ$. Define the map
$\phi_\tau:\Omega\to\Omega_\tau$ by\index{smooth dynamical system}
$\phi_\tau(x)\coloneq(\varphi^t(x))_{t\in[0,\tau]\cap\cL}$ and the path space measure
$\PP_{\mu,\tau}$ on $\Omega_\tau$ by
$$
\PP_{\mu,\tau}\coloneq\mu\circ\phi_\tau^{-1}=\int \delta_{\phi_\tau(x)}\mu(\d x).
$$
Let $\theta$ be a time reversal, {\sl i.e.,} an involution of $\Omega$ such that
$\varphi^t\circ\theta=\theta\circ\varphi^{-t}$. Assume also that $\hat{\mu}\coloneq\mu\circ\theta$ is
equivalent to $\mu$. Define a $\ZZ_2$-action on the path space $\Omega_\tau$ by
$$
\bsx=(x(t))_{t\in[0,\tau]\cap\cL}\mapsto\Theta_\tau(\bsx)\coloneq(\theta x(\tau-t))_{t\in[0,\tau]\cap\cL}.
$$
Setting $\wP_{\mu,\tau}=\PP_{\mu,\tau}\circ\Theta_\tau$, we obtain a pair
$(\PP_{\mu,\tau},\widehat{\PP}_{\mu,\tau})_{\tau\in\cL_+}$
in involution. Its entropy production random variable is
$$
\sigma_\tau(\bsx)=\log\frac{\d\PP_{\mu,\tau}}{\d\wP_{\mu,\tau}}(\bsx)
=\log\frac{\d\mu}{\d\hat{\mu}_\tau}(x(0))
=\log\frac{\d\mu}{\d\hat{\mu}}(x(0))
-\log\frac{\d\mu_\tau}{\d\mu}(\theta x(0))
$$
where $\hat{\mu}_\tau=\mu_\tau\circ\theta = \hat{\mu}\circ \varphi^\tau$. Setting
$$
\kappa_\tau\coloneq\log\frac{\d\mu_\tau}{\d\mu},\qquad
\vartheta\coloneq\log\frac{\d\mu}{\d\hat{\mu}},
$$
one easily derives the cocycle relation
$$
\kappa_{\tau+t}=\kappa_\tau+\kappa_t\circ\varphi^{-\tau}.
$$
It follows that in the case $\cL=\ZZ$,
$$
\sigma_\tau(\bsx)=\vartheta(x(0))+\sum_{t=0}^{\tau-1}\widetilde{\sigma}(x(t)),
\qquad
\widetilde{\sigma}(x)\coloneq-\log\frac{\d\mu_1}{\d\mu}\circ\theta(x),
$$
and in the case $\cL=\RR$, under appropriate regularity assumptions, that
$$
\sigma_\tau(\bsx)=\vartheta(x(0))+\int_0^\tau\widetilde{\sigma}(x(t))\d t,
\qquad
\widetilde{\sigma}(x)\coloneq-\left.\frac{\d\ }{\d t}\frac{\d\mu_t}{\d\mu}\right|_{t=0}\circ\theta(x).
$$

Assuming that there exists a reference measure $\VV$ on $\Omega$ such that
$\VV\circ\theta=\theta$ and $\mu$ is equivalent to $\VV$, one can rewrite
the above formulas as
\beq
\sigma_\tau(\bsx)=
\log\rho(x(0))-\log\rho\circ\theta(x(\tau))
+\sum_{t=0}^{\tau-1}\sigma(x(t)),
\label{Eq:sigmadSD}
\eeq
with
$$
\sigma
\coloneq\log\frac{\d\VV\circ\varphi^{-1}}{\d\VV}\circ\varphi
\qquad
\rho\coloneq\frac{\d\mu}{\d\VV},
$$
in the case $\cL=\ZZ$, and
\beq
\sigma_\tau(\bsx)=
\log\rho(x(0))-\log\rho\circ\theta(x(\tau))
+\int_0^\tau\sigma(x(t))\d t
\label{Eq:sigmacSD}
\eeq
with
$$
\sigma\coloneq\left.\frac{\d\ }{\d t}\frac{\d\VV\circ\varphi^{-t}}{\d\VV}\right|_{t=0},
\qquad
\rho\coloneq\frac{\d\mu}{\d\VV},
$$
in the case $\cL=\RR$.

We note that the situation described in this example is similar to the one
discussed in~\cite{Jaksic2011}, with the only difference that we do not
require the measure $\mu$ to be $\theta$-invariant. The boundary terms
\beq
\log\rho(x(0))-\log\rho\circ\theta(x(\tau))
\label{Eq:BdryTerm}
\eeq
in~\eqref{Eq:sigmadSD} and~\eqref{Eq:sigmacSD} reflect the lack of time-reversal
invariance of $\mu$. Indeed, if $\mu\circ\theta=\mu$ then we can set $\VV=\mu$
and hence $\log\rho\equiv0$ in these formulas, and we recover the results
of~\cite{Jaksic2011} for the entropy production random variable.

\subsection{Phase Space Contraction Rate}
\label{Ex:DiffDS}

Continuing with the previous setting, let us further assume that $\Omega$ is equipped with the
structure of a smooth Riemannian manifold and that $\varphi$ is a diffeomorphism in the case
$\cL=\ZZ$, while the flow $t\mapsto\varphi^t$ is generated by a smooth vector
field $X$ in the case $\cL=\RR$. Denote by $\VV$ a Riemannian volume measure on
$\Omega$ and assume that $\VV\circ\theta=\VV$ and that $\mu$ and $\VV$ are
equivalent. Then, Relations~\eqref{Eq:sigmadSD} and~\eqref{Eq:sigmacSD}
hold with
\beq
\begin{split}
\rho=\frac{\d\mu}{\d\VV},\qquad
\sigma=\left\{
\begin{array}{ll}
-\log|\det(\varphi')|,&\text{if }\cL=\ZZ;\\[8pt]
-{\rm div}(X),&\text{if }\cL=\RR.
\end{array}
\right.
\end{split}
\label{equ:sigmadef}
\eeq
Thus, up to the boundary term~\eqref{Eq:BdryTerm}, the entropy production random variable
$\sigma_\tau$ is directly related to the \ndex{phase space contraction rate\/} $\sigma$.

\subsection{Transient Fluctuation Relations, Fluctuation Theorem and Steady States}
\label{ssect:Chaotic}

Physically interesting examples falling in the present context are Hamiltonian systems driven
out of equilibrium by external forces and coupled to Gaussian thermostats, see~\cite{Jaksic2011}
and references therein. It is precisely for such systems that {\sl Transient FRs\/} were first
discovered in the study of Second Law violating microstates~\cite{Evans1993,Evans1994}.
Inequality~\eqref{Eq:Jarzynski_inequality} can be written as\footnote{For simplicity we shall
only consider the case $\cL=\ZZ$, the other case being similar.}
$$
\int\left(\frac1\tau\sum_{t=0}^{\tau-1}\sigma\circ\varphi^t(x)\right)\mu(\d x)
\ge\frac1\tau\int\left(\log\rho\circ\theta\circ\varphi^\tau(x)-\log\rho(x)\right)\mu(\d x).
$$
Assuming that\footnote{Observe that $R=0$ whenever $\mu$ is a $\theta$-invariant
Riemannian volume.}
\beq
R=\sup_{x\in\Omega}|\log\rho(x)|<\infty,
\label{Eq:logrhobound}
\eeq
we derive the following asymptotic inequality for the time averaged phase space
contraction rate
$$
\liminf_{\tau\to\infty}\int\left(
\frac1\tau\sum_{t=0}^{\tau-1}\sigma\circ\varphi^t(x)\right)\mu(\d x)\geq 0.
$$
Here again Jarzynski's identity combined with Markov's inequality
(recall~\eqref{Eq:Jarzy_and_Markov}) yields a strengthening
$$
\mu\left(\left\{x\in\Omega\,\Bigg|\,
\frac1\tau\sum_{t=0}^{\tau-1}\sigma\circ\varphi^t(x)\le-s\right\}\right)
\le\e^{-s\tau+2R}.
$$

\remark The full phase space $\Omega=T^\ast\!M$ of a Hamiltonian
system  is unbounded and the natural reference measure there is the Liouville one.
This forces
$$
\inf_{x\in\Omega}\rho(x)=0,
$$
and incurs the failure of Assumption~\eqref{Eq:logrhobound}. The primary purpose of a thermostat is to
avoid this problem by constraining the motion to a compact submanifold
$\Sigma\subset\Omega$. Possible choices include surfaces of constant kinetic or
total energy. One expects that, in the thermodynamic limit, the statistical
properties of the motion do not depend too much on the choice of $\Sigma$, much
in the spirit of the equivalence of ensembles in equilibrium statistical
mechanics (see~\cite{Ruelle2000a} and references therein).

\medskip
The subtler FT and asymptotic FR were first
discussed in~\cite{Gallavotti1995b,Gallavotti1995c}, on the basis of the {\sl Chaotic
Hypothesis\/}. From our point of view, the two major outcomes of these works are
the formulation of the FT as a large deviation principle and its connection,
through the chaotic hypothesis, with the thermodynamic formalism of dynamical
systems (see also~\cite{Ruelle1999}).

Under the chaotic hypothesis of Gallavotti and Cohen, $(\Omega,\varphi,\mu)$ is a transitive
Anosov dynamical system: $\varphi$ is a diffeomorphism of class $C^2$ for which the compact
set $\Omega$ is hyperbolic, {\sl i.e.,} the tangent bundle of $\Omega$ has a $\varphi'$-invariant
decomposition
$$
T_x\Omega=E_x^{\,\rm u}\oplus E_x^{\,\rm s}
$$
into unstable and stable subbundles such that, for all $x\in\Omega$ and $t\in\cL_+$,
$$
\|(\varphi^{-t})'(x)|_{E_x^{\,\rm u}}\|\le C \lambda^t,\qquad
\|(\varphi^{t})'(x)|_{E_x^{\,\rm s}}\|\le C \lambda^t
$$
for some positive constants $C$ and $0<\lambda<1$. Transitivity further means that for any
nonempty open sets $U,V\subset\Omega$ and any $\tau\in\cL_+$ there is
$\cL_+\ni t>\tau$ such that $\varphi^{-t}(U)\cap V\not=\emptyset$.

For $\nu\in\cP_\varphi(\Omega)$, denote by $h_\varphi(\nu)$ the \ndex{Kolmogorov--Sinai entropy}
of $\nu$ w.r.t.\;$\varphi$. The \ndex{topological pressure} of $F\in C(\Omega)$ w.r.t.\;$\varphi$ is
defined by the variational expression
\beq
\fp_\varphi(F)\coloneq\sup_{\nu\in\cP_\varphi(\Omega)}\left(\langle F,\nu\rangle+h_\varphi(\nu)\right)
\label{Eq:CHvar}
\eeq
(see, e.g., \cite{Walters1982}).

Under the Chaotic Hypothesis, the Rényi entropy
$e_t(\alpha)=\Ent_\alpha(\PP_{\mu,\tau}|\wP_{\mu,\tau})$ has a large-time limit
$$
e(\alpha)=\lim_{t\to\infty}\frac1t e_t(\alpha)
$$
which defines a real-analytic function  on $\RR$. More precisely, setting
$$
v(x)\coloneq-\log \left|\det\left(\varphi'(x)|_{E_x^{\,\rm u}}\right)\right|,
$$
one has\footnote{See~\cite{Maes2003b} and~\cite[Proposition~11.2, Theorem~10.10]{Jaksic2011}.}
$$
e(\alpha)=\fp_\varphi(v-\alpha\sigma).
$$
By the Gärtner--Ellis theorem, the FT holds for the family
$(\PP_{\mu,\tau},\wP_{\mu,\tau})_{\tau\in\cL_+}$ with scale $r_t=t$ and
rate\index{FT!for chaotic dynamics}
\beq
I(s)=\sup_{\alpha\in\RR}\left(\alpha s-e(-\alpha)\right).
\label{Eq:CHrate}
\eeq
By Assumption~\eqref{Eq:logrhobound}, the additive functional
$$
\widetilde{\sigma}_\tau(\bsx)\coloneq\sum_{t=0}^{\tau-1}\sigma\circ\varphi^t(x(0))
$$
is physically equivalent to the entropy production $\sigma_\tau$.  It therefore also
satisfies the FT w.r.t.\;the family $(\PP_{\mu,\tau},\wP_{\mu,\tau})_{\tau\in\cL_+}$
with the same scale and rate.
Moreover, there is a unique SRB measure $\mu_+$, an ergodic element of
$\cP_\varphi(\Omega)$ characterized by the fact that
$$
\wlim_{\tau\to\infty}\frac1\tau\sum_{t=0}^{\tau-1}\delta_{\varphi^t(x)}=\mu_+
$$
for $\mu$-a.e.\;$x\in\Omega$. Thus, $\mu_+$ is a natural candidate for the
{\sl nonequilibrium steady state} (NESS)\index{NESS} of the system. It is
also the unique maximizer in~\eqref{Eq:CHvar} for the potential $v$,
{\sl i.e.,}
$$
\fp_\varphi(v)=\langle v,\mu_+\rangle+h_\varphi(\mu_+)=e(0)=0.
$$
In the language of topological dynamics, $\mu_+$ is the unique equilibrium state of
$(\Omega,\varphi)$ for the potential $v$. This state is typically quite different from $\mu$.
In fact, provided the entropy production rate is non-vanishing,
$$
\epr\coloneq\lim_{\tau\to\infty}\frac1\tau\sum_{t=0}^{\tau-1}\langle\sigma,\mu_t\rangle=\langle\sigma,\mu_+\rangle>0,
$$
$\mu$ and $\mu_+$ are mutually singular. Nevertheless, one can show that the
cumulant-generating function
$$
e_{+,\tau}(\alpha)=\log\langle\e^{-\alpha\widetilde{\sigma}_\tau},\mu_+\rangle
$$
satisfies
$$
\lim_{\tau\to\infty}\frac1\tau e_{+,\tau}(\alpha)=e(\alpha)
$$
for all $\alpha\in\RR$~\cite[Proposition~11.4]{Jaksic2011}. Thus, the additive functional
$\widetilde{\sigma}_\tau$ also
satisfies the FT w.r.t.\;the family $(\PP_{\mu_+,\tau},\wP_{\mu_+,\tau})_{\tau\in\cL_+}$
with the rate~\eqref{Eq:CHrate}.

\addcontentsline{toc}{section}{Appendix}
\hypertarget{AppA}{\section*{Appendix: Level-2 FR for Langevin Dynamics}}

Under the standing assumptions of Subsection~\ref{Ex:Langevin}, the following holds.

\bel
\label{Lem:Lyapunov}
Fix $\ell>2$ and $\delta\in]0,1/2[$. There exists $\varepsilon_0>0$ such that, for any
$\varepsilon\in]0,\varepsilon_0[$ there exists $c>0$ such that, for
any function $h\in C^2(\RR)$ that is bounded below and satisfies $0\le h'\le1$, the
functions
$$
\Psi(x)\coloneq\frac12 S(x)+\varepsilon p\cdot\Gamma^{-1}\nabla V(q)+h\circ\phi(x)
$$
and
$$
\widehat{\Psi}(x)\coloneq\frac12 S(x)+\varepsilon p\cdot\Gamma^{-1}\nabla V(q)-\frac1{1+\delta}\phi(x)
$$
are bounded below and satisfy
$$
G\coloneq-(L\Psi+\gamma(\Psi))\ge -\frac1c+c|\nabla S|^2
+(h'(1-\ell h')-h'')\circ\phi\,\gamma(\phi)
$$
and
$$
\widehat G\coloneq-(L\widehat{\Psi}+\gamma(\widehat{\Psi}))\ge-\frac1c+ c|\nabla S|^2.
$$
\eel

\proof We first show that $\Psi$ and $\widehat{\Psi}$ are bounded below.
Minimizing $\Psi$ over $p$, we get
\beq
\Psi\ge\frac12\left(\widetilde{V}
-\frac12|M\nabla\widetilde{V}|^2\right)+h\circ\phi,
\label{Eq:PSIlower}
\eeq
where
$$
\widetilde{V}\coloneq\sum_i\frac{V_i}{T_i},\qquad
M\coloneq2\varepsilon\Gamma^{-1}T^{3/2}.
$$
Setting $u\coloneq M^2\nabla\widetilde{V}(q)$ in the $2^{\rm nd}$-order Taylor formula
$$
\widetilde{V}(q-u)=\widetilde{V}(q)-u\cdot\nabla\widetilde{V}(q)
+\int_0^1u\cdot(D^2\widetilde{V})(q-su)u(1-s)\d s
$$
we get, with $C\coloneq\max_i\sup_q|T_i^{-1}V_i''(q)|$,
$$
\widetilde{V}(q)-|M\nabla\widetilde{V}(q)|^2
\ge\widetilde{V}(q-M^2\nabla\widetilde{V}(q))-\frac12C|M^2\nabla\widetilde{V}(q)|^2,
$$
which we rewrite as
$$
\widetilde{V}(q)-M\nabla\widetilde{V}(q)\cdot\left(
I-\frac12CM^2
\right)M\nabla\widetilde{V}(q)
\ge\widetilde{V}(q-M^2\nabla\widetilde{V}(q))\ge0.
$$
Now for small enough $\varepsilon$ one has $CM^2\le I$ and hence
$$
\widetilde{V}
-\frac12|M\nabla\widetilde{V}|^2\ge0.
$$
From~\eqref{Eq:PSIlower} we deduce that since $h$ is bounded below, so is $\Psi$.
To deal with $\widehat{\Psi}$ we note that, by~\cite[Proposition~3.5]{Eckmann1999b},
the function $\e^{S/2}\rho_\st=\e^{-S/2+\phi\circ\theta}$ belongs to the Schwartz class. It
follows that
\beq
\phi\le\widehat{C}+\frac12S
\label{Eq:AppAphibound}
\eeq
for some constant $\widehat C$ and hence
$$
\widehat{\Psi}\ge\frac{\delta}{1+\delta}\left(
\frac12S+\varepsilon(1+\delta^{-1})p\cdot\Gamma^{-1}\nabla V(q))\right)
-\frac{\widehat C}{1+\delta}.
$$
Repeating the previous argument gives that, for any given $\delta>0$,
$\widehat{\Psi}$ is bounded below provided $\varepsilon$ is small enough.

We now turn to the lower bounds for $G$. An explicit calculation gives
\begin{align*}
G(x)=&p\cdot\left(
\frac14T^{-1}\Gamma T^{-1}-\varepsilon\Gamma^{-1/2}(D^2V)(q)\Gamma^{-1/2}\right)p
+\varepsilon(1-\varepsilon)|\Gamma^{-1/2}(\nabla V)(q)|^2\\[4pt]
&+(h'(1-h')-h'')\circ\phi(x)\gamma(\phi)(x)\\[4pt]
&-\left(\Gamma^{1/2}T^{-1}p+2\varepsilon\Gamma^{-1/2}(\nabla V)(q)\right)
\cdot\left(\Gamma^{1/2}\nabla_p h\circ\phi\right)(x)
+\mathcal{O}(|(\nabla S)(x)|),
\end{align*}
where the remainder $\mathcal{O}(|\nabla S|)$ is uniform in $h$ for $0\le h'\le1$.
Invoking the Cauchy--Schwarz inequality on the first term of the last line in the
previous expression further yields, for arbitrary positive $\kappa$ and $\lambda$,
\begin{align*}
-\left(\Gamma^{1/2}T^{-1}p\right.&+\left.2\varepsilon\Gamma^{-1/2}(\nabla V)(q)\right)
\cdot\left(\Gamma^{1/2}\nabla_p h\circ\phi\right)(x)\\
\ge&-\frac1{2\kappa}|\Gamma^{1/2}T^{-1}p+2\varepsilon\Gamma^{-1/2}(\nabla V)(q)|^2
-\frac{\kappa}{2}(h'\circ\phi(x))^2\gamma(\phi)(x)\\
\ge&-\frac1{2\kappa}\left(1+\frac1\lambda\right)|\Gamma^{1/2}T^{-1}p|^2
-\frac{2\varepsilon^2}{\kappa}\left(1+\lambda\right)|\Gamma^{-1/2}(\nabla V)(q)|^2\\
&-\frac{\kappa}{2}(h'\circ\phi(x))^2\gamma(\phi)(x),
\end{align*}
so that
\begin{align*}
G(x)\ge&p\cdot\left(
\left(\frac14-\frac1{2\kappa}\left(1+\frac1\lambda\right)\right)
T^{-1}\Gamma T^{-1}-\varepsilon\Gamma^{-1/2}(D^2V)(q)\Gamma^{-1/2}
\right)p\\
&+\varepsilon\left(1-\varepsilon\left(1+\frac{2\varepsilon}{\kappa}\left( 1+\lambda\right)
\right)\right)|\Gamma^{-1/2}\nabla V|^2\\
&+\left(h'\left(1-\left(1+\frac{\kappa}{2}\right)h'\right)
-h''\right)\circ\phi(x)\gamma(\phi)(x)
+\mathcal{O}(|(\nabla S)(x)|).
\end{align*}
It follows that for $\kappa>2$, $\lambda>\kappa/(\kappa-2)$ and
small enough $\varepsilon$ there exists $c>0$ such that
$$
G(x)\ge c|(\nabla S)(x)|^2
+\left(h'\left(1-\left(1+\frac{\kappa}{2}\right)h'\right)
-h''\right)\circ\phi(x)\gamma(\phi)(x)
+\mathcal{O}(|(\nabla S)(x)|),
$$
which yields the claimed lower bound of $G$. The proof of the lower bound of
$\widehat{G}$ is similar, and we omit it.\hfill\qed

\bigskip
Setting
\[
\fF_Q[\varphi]\coloneq\langle L\varphi-\gamma(\varphi),Q\rangle,\qquad
\widehat{\fF}_Q[\varphi]\coloneq\langle\widehat{L}\varphi-\gamma(\varphi),Q\rangle,
\]
the Donsker--Varadhan level-2 rate functions for the direct and reversed
processes are given by\footnote{Since the functionals $\fF_Q$ and $\widehat{\fF}_Q$
are invariant under addition of a constant, the condition $\varphi\le0$ can be
replaced by $\varphi$ being bounded above. This will be used without further comment
in the sequel.}
\[
\bsI(Q)=\sup_{\varphi\in C^2(\fX),\varphi\le0}\fF_Q[\varphi],\qquad
\hat{\bsI}(Q)=\sup_{\varphi\in C^2(\fX),\varphi\le0}\widehat{\fF}_Q[\varphi].
\]
The following result gives the important properties of these rate functions.
\bel\label{Lem:L2FR}
 For any $Q\in\cP(\fX)$ the following hold:
\ben
\item If either $\bsI(Q)<\infty$ or $\hat\bsI(Q)<\infty$, then
$|\nabla S|\in L^2(\fX,\d Q)$.
\item $\hat{\bsI}(Q)=\bsI(Q\circ\theta)$.
\item If $\gamma(\phi)\in L^1(\fX,\d Q)$, then
\[
\bsI(Q\circ\theta)=\bsI(Q)+\langle\sigma,Q\rangle.
\]
\item If $\bsI(Q)<\infty$, then $\gamma(\phi)\in L^1(\fX,\d Q)$.
\een
\eel

\proof{\bf(1)} is a direct consequence of~\cite[Corollary~2.2]{Wu2001} and the previous
lemma.

\part{(2)} follows by contraction from the level-3 identity $\hat\II(\QQ)=\II(\widehat{\QQ})$.

\part{(3)} Recall that the generator of the time-reversed stationary process is given by
$\widehat{L}=L+2\gamma(\phi,\,\argdot\,)$. It follows that
\begin{align*}
\widehat{L}\varphi-\gamma(\varphi)&=L\varphi+2\gamma(\phi,\varphi)-\gamma(\varphi)
=L\varphi-\gamma(\varphi-\phi)+\gamma(\phi)\\
&=L(\varphi-\phi)-\gamma(\varphi-\phi)+L\phi+\gamma(\phi)
=L(\varphi-\phi)-\gamma(\varphi-\phi)+\sigma,
\end{align*}
and hence
$$
\widehat{\fF}_Q[\varphi]=\fF_Q[\varphi-\phi]+\langle\sigma,Q\rangle.
$$
Thus, by~(2), it is enough to show that
$$
\sup_{\varphi\in C_{\rm b}^2(\fX)}\fF_Q[\varphi-\phi]
=\sup_{\varphi\in C_{\rm b}^2(\fX)}\fF_Q[\varphi].
$$
Let $\psi\in C^\infty(\RR)$ be such that
$$
\psi(u)=\left\{\begin{array}{cl}
-1&\text{ for }u\le-3/2;\\
u&\text{ for }|u|\le1/2;\\
1&\text{ for }u\ge3/2,
\end{array}
\right.
$$
and set $\psi_r(u)=r\psi(u/r)$, $\chi_r(u)=u-\psi_r(u)$ for $r>0$. Then
$\phi_r=\psi_r\circ\phi\in C^2_{\rm b}(\fX)$,
and it suffices to show that $\fF_Q[\varphi-\phi_r]\to\fF_Q[\varphi-\phi]$ as
$r\to\infty$ for any $\varphi\in C^2_{\rm b}(\fX)$. From
$$
\fF_Q[\varphi-\phi_r]-\fF_Q[\varphi-\phi]
=\langle L(\chi_r\circ\phi)+\gamma(\chi_r\circ\phi)
+2\gamma(\phi_r-\varphi,\chi_r\circ\phi),Q\rangle,
$$
an explicit calculation, using the chain rule~\eqref{eq:ChainRule}, yields that
$$
\left|\fF_Q[\varphi-\phi_r]-\fF_Q[\varphi-\phi]\right|
\le4\int_{|\phi|>r/2}\left(|\sigma|+\gamma(\varphi)+\gamma(\phi)
\right)\d Q,
$$
and the result follows.

\part{(4)} For $R>0$, define $h_R\in C^2(\RR)$ by
$$
h_R(u)\coloneq\left\{\begin{array}{ll}
-3R/2&\text{ for }u<-2R;\\
-3R/2-4u-6u^2/R-3u^3/R^2-u^4/2R^3&\text{ for }-2R\le u\le-R;\\
u&\text{ for }u>-R.
\end{array}
\right.
$$
One easily shows that
\beq
-\frac{3R}{2}\le h_R(u),\qquad
0\le h_R'(u)\le 1,\qquad
0\le h_R''(u)\le\frac3{2R}.
\label{Eq:hRBounds}
\eeq
By Lemma~\ref{Lem:Lyapunov}, for sufficiently small $\varepsilon$ there is a positive
constant $c$ such that the $C^2$-function defined by
$\Psi(x)=S(x)/2+\varepsilon p\cdot\Gamma^{-1}\nabla V(q)+(h_R\circ\phi(x))/4$ is
bounded below and satisfies
$$
-(L\Psi+\gamma(\Psi))\ge-\frac1c+c|\nabla S|^2
+\frac14\left(h_R'\left(1-\frac34h_R'\right)-h_R''\right)\circ\phi\;\gamma(\phi)
$$
for all $R>0$. It follows that
$$
-\frac1c+c\int|\nabla S|^2\d Q
+\frac14\int\left(h_R'\left(1-\frac34h_R'\right)-h_R''\right)\circ\phi\;
\gamma(\phi)\d Q
\le\int(-L\Psi-\gamma(\Psi))\d Q\le\bI(Q).
$$
Thus, if $\bI(Q)<\infty$, then there is $\alpha<\infty$ such that
$$
\int\left(\frac14h_R'-h_R''\right)\circ\phi\,
\gamma(\phi)\d Q
\le
\int\left(h_R'\left(1-\frac34h_R'\right)-h_R''\right)\circ\phi\,
\gamma(\phi)\d Q\le\alpha
$$
for all $R>0$. Using the explicit form of $h_R$ and the
inequalities~\eqref{Eq:hRBounds}, one easily shows that
$$
\frac14h_R'(u)-h_R''(u)\ge f_R(u)\coloneq\left\{\begin{array}{ll}
0&\text{ for }u<-2R;\\
-\frac{3}{2R}&\text{ for }-2R\le u<-R;\\
\frac14&\text{ for }u\ge-R,
\end{array}
\right.
$$
from which we deduce that
$$
\langle f_R\circ\phi\,\gamma(\phi),Q\rangle\le\alpha
$$
for all $R>0$. Now we set
$$
f(u)\coloneq\sum_{n=0}^\infty 2^{-n}f_{2^nR}(u)
=\sum_{2^n\ge-u/R}2^{-n-2}-\sum_{-u/2R\le 2^n<-u/R}\frac3R 2^{-2n-1},
$$
so that
$$
\langle f\circ\phi\,\gamma(\phi),Q\rangle\le\frac\alpha2.
$$
Observing that $f(u)=\frac12$ for $u\ge-R$, while
$$
f(u)\ge\sum_{n\ge\log_2(|u|/R)}2^{-n-2}-\sum_{n\ge\log_2(|u|/R)-1}\frac3R 2^{-2n-1}
\ge \frac{R}{4|u|}-\frac{8R}{|u|^2}
=\frac{R}{4|u|}\left(1-\frac{32}{|u|}\right)
$$
is non-negative for $u<-R\le-32$, we conclude that for $R\ge32$ one has
$$
\frac12\int_{\phi\ge-R}\gamma(\phi)\d Q\le\langle f\circ\phi\,\gamma(\phi),Q\rangle\le\frac\alpha2.
$$
Letting $R\to\infty$ yields the result.\hfill\qed

\printbibliography[heading=bibintoc,title={References}]


\chapter{Level-2~LDP~and~Fluctuation~Theorem for~Chaotic~Maps}
\label{chap:Chaotic maps and FR}

\abstract{We restrict our attention to  chaotic dynamical systems on compact
metric spaces and derive FT/FR  for their periodic orbit ensembles. The
discussion is based on Ruelle--Walters thermodynamic formalism, which is reviewed
in some detail. This chapter is motivated by~\cite{Maes2003b}.}

\abstract*{We restrict our attention to  chaotic dynamical systems on compact
metric spaces and derive FT/FR  for their periodic orbit ensembles. The
discussion is based on Ruelle-Walters thermodynamic formalism, which is reviewed
in some detail. This chapter is motivated by~\cite{Maes2003b}.}

\vskip 1cm

\section{A Class of Continuous Dynamical Systems}
\label{sec:CDS}

Let $(M,d)$ be a compact metric space with Borel $\sigma$-algebra~$\fM$. We
recall that~$C(M)$ denotes the Banach space of
real-valued continuous functions on $M$ with the
sup-norm $\|\,\argdot\,\|$. The set $\cP(M)$ of Borel probability measures on~$M$
is equipped with the topology of weak convergence (denoted $\rightharpoonup$)
and the corresponding Borel $\sigma$-algebra. This makes
$\cP(M)$ a convex compact metrizable subspace of the dual of $C(M)$ endowed with
its weak-$\ast$ topology. Given $v\in C(M)$ and $\QQ\in\cP(M)$, we denote by $\bra
v,\QQ\ket$ the integral of~$v$ with respect to~$\QQ$.
The Dirac mass at $x\in M$ is denoted by
$\delta_x\in\cP(M)$.

In the following, we shall always assume:

\medskip\noindent
\hypertarget{hyp.C}{\textbf{(C)}}\label{C}
{\sl $\varphi:M\to M$ is a continuous map.}

\medskip\noindent
On occasions, we shall strengthen the above standing assumption to:

\medskip\noindent
\hypertarget{hyp.H}{\textbf{(H)}}\label{H}
{\sl $\varphi:M\to M$ is a homeomorphism,}

\medskip\noindent
and we shall always explicitly mention when Condition~\hH is assumed. We set
$\cT=\NN$ in case~\hC and $\cT=\ZZ$ in case~\hH, so that, in
both cases, the orbit of $x\in M$ is $(\varphi^t(x))_{t\in\cT}$. The set of
$\varphi$-invariant elements of $\cP(M)$ is denoted by~$\cP_\varphi(M)$. It is a
closed convex subset of $\cP(M)$ and a Choquet simplex whose extreme points are
the elements of $\cE_\varphi(M)$, the set of $\varphi$-ergodic measures. We
denote by $h_\mathrm{Top}(\varphi)$ the \ndex{topological entropy} of $\varphi$ and by
$h_\varphi(\PP)$ the \ndex{Kolmogorov--Sinai entropy} of $\PP\in\cP_\varphi(M)$
w.r.t.\;$\varphi$. The following assumptions will play a central role in our
discussion.

\medskip\noindent
\hypertarget{hyp.USCE}{\textbf{(USCE)}}\label{usce}
{\sl $h_\mathrm{Top}(\varphi)<\infty$, and the map
$\cP_\varphi(M)\ni\PP\mapsto h_\varphi(\PP)$ is upper semicontinuous.}

\medskip\noindent
\hypertarget{hyp.ED}{\textbf{(ED)}}\label{ed}
{\sl The set $\cE_\varphi(M)$ is entropy-dense in $\cP_\varphi(M)$, {\sl i.e.,} for any
$\PP\in\cP_\varphi(M)$ there exists a sequence $(\PP_n)_{n\in\NN}\subset\cE_\varphi(M)$
such that
\[
\PP_n\rightharpoonup\PP\quad\text{and}\quad h_\varphi(\PP_n)\to h_\varphi(\PP)
\]
as $n\to\infty$}

\medskip\noindent
The \ndex{topological pressure} of $v\in C(M)$ w.r.t.\,$\varphi$ is
\beq
\fp_\varphi(v)\coloneq\sup_{\PP\in\cP_\varphi(M)}\left(\bra v,\PP\ket+h_\varphi(\PP)\right).
\label{PressureDef}
\eeq
Moreover, $\PP\in\cP_\varphi(M)$ is called \ndex{equilibrium measure} {\sl for $v$} whenever it
achieves the supremum in this variational formula, {\sl i.e.,}
\beq
\bra v,\PP\ket+h_\varphi(\PP)=\fp_\varphi(v).
\label{EquiDef}
\eeq

\begin{proposition}\label{proppress}
Under Condition~\hC the following hold:
\ben
\item $\fp_\varphi(0)=h_\mathrm{Top}(\varphi)$.
\item Either $\fp_\varphi(v)$ is infinite for all $v\in C(M)$, or
$h_\mathrm{Top}(\varphi)$ is finite and so is $\fp_\varphi(v)$ for any
$v\in C(M)$. Moreover, in the latter case, the map
$C(M)\ni v\mapsto\fp_\varphi(v)\in\RR$ is convex and satisfies
\[
|\fp_\varphi(v)-\fp_\varphi(u)|\le\|v-u\|
\]
for any $u,v\in C(M)$.
\een

\medskip\noindent
If, in addition, Condition~\hUSCE holds, then one further has:

\ben\setcounter{enumi}{2}
\item The dual variational expression
$$
h_\varphi(\PP)=\inf_{v\in C(M)}\left(\fp_\varphi(v)-\bra v,\PP\ket\right)
$$
holds for all\/ $\PP\in\cP_\varphi(M)$. \item For any $v\in C(M)$, the set of
equilibrium measures for $v$ is a non-empty compact Choquet simplex. It is also
a face of $\cP_\varphi(M)$ whose extreme points are ergodic measures. \item For
any\/ $\PP\in\cE_\varphi(M)$ there exists $v\in C(M)$ such that $\PP$ is the
unique equilibrium measure for $v$.
\een
\end{proposition}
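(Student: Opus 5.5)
Parts~(1) and~(2) follow from the variational principle, and I would start there. Part~(1) is the variational principle for topological entropy, $h_\mathrm{Top}(\varphi)=\sup_{\PP\in\cP_\varphi(M)}h_\varphi(\PP)$, which I would take from Walters. It is exactly~\eqref{PressureDef} with $v=0$.

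For~(2), observe that $|\bra v,\PP\ket|\le\|v\|$ for every $\PP$. Hence
\[
h_\mathrm{Top}(\varphi)-\|v\|\le\fp_\varphi(v)\le h_\mathrm{Top}(\varphi)+\|v\|,
\]
so $\fp_\varphi(v)$ is finite for all $v$ iff $h_\mathrm{Top}(\varphi)<\infty$, and infinite for all $v$ otherwise. When it is finite, $\fp_\varphi$ is a supremum of the affine functions $v\mapsto\bra v,\PP\ket+h_\varphi(\PP)$, so it is convex. Each of these affine functions is $1$-Lipschitz in the sup-norm, and a supremum of $1$-Lipschitz functions is $1$-Lipschitz, which gives $|\fp_\varphi(v)-\fp_\varphi(u)|\le\|v-u\|$.

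For~(3), extend $h_\varphi$ to the whole dual space $C(M)^\ast$ (finite signed measures with the weak-$\ast$ topology) by setting it to $-\infty$ off $\cP_\varphi(M)$. On $\cP_\varphi(M)$ it is affine, and it is upper semicontinuous by~\hUSCE. Since $\cP_\varphi(M)$ is closed and convex, the extension is a proper, concave, u.s.c. function on a locally convex space. Under the pairing $\bra v,\PP\ket$, $\fp_\varphi$ is its concave conjugate, since $\fp_\varphi(v)=\sup_\PP(\bra v,\PP\ket+h_\varphi(\PP))$. The Fenchel--Moreau biconjugation theorem, whose dual of $(C(M)^\ast,\text{weak-}\ast)$ is $C(M)$, then gives $h_\varphi(\PP)=\inf_v(\fp_\varphi(v)-\bra v,\PP\ket)$.

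For~(4), existence is immediate: $\PP\mapsto\bra v,\PP\ket+h_\varphi(\PP)$ is u.s.c. on the compact set $\cP_\varphi(M)$, so it attains its maximum. The set of maximizers is closed, hence compact, and convex because the functional is affine. It is a face of $\cP_\varphi(M)$: if a nontrivial convex combination of $\PP_1,\PP_2$ attains the maximum, then by affinity each $\PP_i$ must attain it too. A face of a Choquet simplex is a Choquet simplex, and its extreme points are extreme in $\cP_\varphi(M)$, hence ergodic.

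For~(5), I would use Israel's theorem approach via tangent functionals, as in Walters. First, by the argument of~(4) and~(3), the equilibrium measures for $v$ are exactly the tangent functionals to $\fp_\varphi$ at $v$, i.e. the $\PP$ with $\fp_\varphi(v+u)\ge\fp_\varphi(v)+\bra u,\PP\ket$ for all $u$. Given ergodic $\PP$, I would apply~(3) and a Bishop--Phelps/Israel type density argument. This yields, for each $\varepsilon$, some $v$ with a unique tangent $\PP_v$ close to $\PP$, but that is not exact.

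I expect the exact statement to be the main obstacle. The standard route (Walters, Theorem~9.16 and the corollary on ergodic measures) goes as follows:
\begin{itemize}
\item[(a)] $\PP$ is an extreme point of the compact convex set $\cP_\varphi(M)$ and $h_\varphi$ is affine and u.s.c.
\item[(b)] Hence $\PP$ is an exposed point relative to the concave u.s.c. function $h_\varphi$: there is $v\in C(M)$ such that $\bra v,\QQ\ket+h_\varphi(\QQ)<\bra v,\PP\ket+h_\varphi(\PP)$ for all $\QQ\neq\PP$.
\end{itemize}
To prove (b), I would use that the set of points of $\cP_\varphi(M)$ that are unique maximizers for some $v$ contains all ergodic measures. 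This follows from Jenkinson's or Walters' argument: ergodic measures are mutually singular, and a u.s.c. affine $h_\varphi$ lets one separate $\PP$ strictly from the closed convex hull of $\{\QQ:\QQ\ne\PP\}$ restricted to a superlevel set. I would follow Walters' proof, adapted to the present assumption~\hUSCE.
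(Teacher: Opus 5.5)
Parts (1)–(4) are fine. Your Fenchel–Moreau argument in $(C(M)^\ast,\text{weak-}\ast)$ for (3) is the same separation argument as in Walters, which the paper cites. In (4) you correctly use that the set of maximizers is a \emph{closed} face, hence again a Choquet simplex.

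\textbf{The gap is in (5).} Your step (b) is not an intermediate lemma; it is (5) restated. The tangent-functional results in Walters (Israel's theorem via Bishop--Phelps) only give unique equilibrium measures \emph{converging} to $\PP$, which is exactly the non-exact conclusion you reached. The exact statement is Phelps' theorem, which the paper cites. Jenkinson's result is only the analogue for maximizing measures, with no entropy term.

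The mechanism you suggest for (b) cannot work as stated. Since $(1-\frac1n)\PP+\frac1n\QQ\rightharpoonup\PP$ for any $\QQ\in\cP_\varphi(M)$, the measure $\PP$ lies in the closed convex hull of $\cP_\varphi(M)\setminus\{\PP\}$. It even lies in the closed hull of the intersection of that set with any superlevel set $\{h_\varphi\ge c\}$, $c<h_\varphi(\PP)$. So no continuous functional separates $\PP$ strictly from the rest. Mutual singularity of ergodic measures is invisible to the weak topology, so separation plus singularity cannot turn approximate tangency into exact, strict tangency; a structural input is needed.

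One way to supply it is the simplex property of $\cP_\varphi(M)$, used through Edwards' separation theorem: on a Choquet simplex, a convex u.s.c.\ function lying below a concave l.s.c.\ function can be separated from it by a continuous affine function.
\begin{enumerate}[label=(\roman*)]
\item Let $\lambda(\QQ)\coloneq\max\{t\in[0,1]\mid\QQ-t\PP\ge0\}$ be the weight of $\PP$ in the ergodic decomposition of $\QQ$. It is affine and u.s.c., and $\lambda(\QQ)=1$ iff $\QQ=\PP$.
\item Fix $C>h_\mathrm{Top}(\varphi)$ and set $g\coloneq C-(C-h_\varphi(\PP))\lambda$, which is affine and l.s.c. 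Writing $\QQ=\lambda\PP+(1-\lambda)\QQ'$ with $\QQ'\in\cP_\varphi(M)$, affinity of entropy gives $g(\QQ)-h_\varphi(\QQ)=(1-\lambda)(C-h_\varphi(\QQ'))$. Hence $g\ge h_\varphi$, with equality only at $\PP$.
\item Write $g=\sup_n b_n$ with countably many continuous affine $b_n$; this is possible since $g$ is affine l.s.c.\ and $\cP_\varphi(M)$ is metrizable. Apply Edwards' theorem to $\max(h_\varphi,b_n)\le g$; this is where \hUSCE enters. You get continuous affine $a_n$ with $\max(h_\varphi,b_n)\le a_n\le g$.
\item Set $a\coloneq\sum_{n\ge1}2^{-n}a_n$, which converges uniformly since $0\le a_n\le C$. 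It is continuous and affine, $a\ge h_\varphi$, and $a(\QQ)=h_\varphi(\QQ)$ iff $\QQ=\PP$.
\item Every continuous affine function on $\cP_\varphi(M)$ has the form $\QQ\mapsto\bra w,\QQ\ket$ with $w\in C(M)$. Indeed, the induced map from $C(M)$ modulo the annihilator of $\cP_\varphi(M)$ is isometric, because Jordan parts of invariant signed measures are invariant, and it has dense range.
\end{enumerate}
Taking $v=-w$ then gives $\bra v,\QQ\ket+h_\varphi(\QQ)<\bra v,\PP\ket+h_\varphi(\PP)$ for all $\QQ\neq\PP$.
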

We refer the reader to~\cite{Walters1982} for more details and basic properties
of the pressure and equilibrium states. Parts~(1)--(3) of
Proposition~\ref{proppress} can be found there. For Parts~(4)--(5)
see~\cite[Theorem~2]{Ruelle2004} and~\cite[Theorem~1]{Phelps2002}.

\medskip\noindent
We finish this section with a brief discussion of our main assumptions~\hUSCE and~\hED.

Given $\varepsilon>0$ and an integer $t\ge1$ we denote by
\[
B_t(x,\varepsilon)\coloneq\{y\in M\mid d(\varphi^s(x),\varphi^s(y))<\varepsilon\text{ for }0\le s<t\}
\]
the Bowen ball of radius $\varepsilon$ centered at $x\in M$.
 A map $\varphi$ satisfying Condition~\hC (respectively Condition~\hH) is
called {\sl forward expansive} (respectively {\sl expansive}) whenever there exists $r>0$
such that,
\beq
\Gamma(x,r)\coloneq\bigcap_{t\in\cT}\overline{B_t(x,r)}=\{x\}
\label{equ:GammaDef}
\eeq
for any $x\in M$. The number $r$ is called the expansiveness constant of $\varphi$. It depends on
the metric $d$, but \ndex{expansiveness} itself only depends on the induced topology of $M$.
Expansiveness is sufficient to ensure Condition~\hUSCE, see~\cite[Corollary~7.11.1]{Walters1982}
and~\cite[Section~2.4]{Barreira2011}.

For $I=\llbracket a,b\rrbracket\subset\cT$, we denote orbit segments by
\[
\varphi^I(x)\coloneq\left(\varphi^t(x)\right)_{t\in I},
\]
and call \ndex{specification} a finite family of such segments
\[
\xi=\left(\varphi^{I_i}(x_i)\right)_{i\in\llbracket 1,n\rrbracket}.
\]
The integers $n$ and
$L(\xi)\coloneq\max\{|t-t'|:t,t'\in\cup_{i\in\llbracket 1,n\rrbracket}I_i\}$
are called the {\sl rank} and the {\sl length} of $\xi$ respectively.
The specification $\xi$ is $N${\sl-separated} whenever
$$d(I_i,I_j)=\min_{t_i\in I_i,t_j\in I_j}|t_i-t_j|\ge N$$
for all distinct $i,j\in\llbracket 1,n\rrbracket$. It is
$\varepsilon${\sl-shadowed} by $x\in M$ whenever
\[
\max_{i\in\llbracket 1,n\rrbracket}
\max_{t\in I_i}d(\varphi^t(x),\varphi^t(x_i))<\varepsilon.
\]

\medskip\noindent
\hypertarget{hyp.S}{\textbf{(S)}}\label{s}
{\sl $\varphi$ has the {\rm Specification Property}\footnote{Many
variants of the specification property appear in the literature;
see~\cite{Kwietniak2016} for a review.}if, for any $\delta>0$, there is
$N(\delta)\ge1$ such that any $N(\delta)$-separated specification
$\xi=\left(\varphi^{I_i}(x_i)\right)_{i\in\llbracket 1,n\rrbracket}$
is $\delta$-shadowed by some point $x\in M$.}

\medskip
Property~\hS implies Condition~\hED,
see~\cite[Theorem~B]{Eizenberg1994} whose proof, given for case~\hH, extends
without change to case~\hC. For a detailed discussion of Property~\hED, we
refer the reader to~\cite{Comman2017}.

\section{A Large Deviation Principle for Empirical Measures}
\label{sec:LDP for EM}

To any $x\in M$ we associate the sequence $(\mu_t^x)_{t\in\NN}$
of empirical measures, defined by
\beq
\mu_t^x\coloneq\frac1t\sum_{0\le s< t}\delta_{\varphi^s(x)},
\label{equ:mutDef}
\eeq
so that
\[
\bra v,\mu_t^x\ket=\frac1t S_tv(x)
\]
where
\[
S_tv\coloneq v+v\circ\varphi+\cdots+v\circ\varphi^{t-1}.
\]
Given $\PP\in\cP_\varphi(M)$, we shall consider $\mu_t^\argdot$ as a
$\cP(M)$-valued random variable. The main result in this section is the  LDP for
the sequence $(\mu_t^\argdot)_{t\in\NN}$ w.r.t\;appropriate sequences of
measures.

\begin{definition}
Given $v\in C(M)$, we say that $(\PP_t)_{t\in\NN}\subset\cP(M)$ is a
$(v,\varphi)$-sequence whenever
\[
\lim_{t\to\infty}\frac1t\log\bra\e^{S_tu},\PP_t\ket=\fp_\varphi(v+u)-\fp_\varphi(v)
\]
holds for all $u\in C(M)$.

\end{definition}

The following is a direct application of~\cite[Theorem 5.2]{Comman2009}.

\begin{theorem}\label{thm:LDP-2}
Assume~\hC, \hUSCE, \hED and let $(\PP_t)_{t\in\NN}$ be a $(v,\varphi)$-sequence
for some $v\in C(M)$. Then, w.r.t.\;this sequence, the empirical measures
$(\mu_t^\argdot)_{t\in\NN}$ satisfy the LDP
\[
-\inf_{\QQ\in\mathring{\Gamma}}\II(\QQ)
\le\liminf_{t\to\infty}\frac1t\log\PP_t\{\mu_t^\argdot\in\Gamma\}
\le\limsup_{t\to\infty}\frac1t\log\PP_t\{\mu_t^\argdot\in\Gamma\}
\le-\inf_{\QQ\in\bar\Gamma}\II(\QQ)
\]
for any Borel set $\Gamma\subset\cP(M)$.\footnote{Recall that  $\mathring{\Gamma}/\bar\Gamma$ denotes
the interior/closure of\/ $\Gamma$.} The rate  $\II:\cP(M)\to[0,\infty]$ is given by
$$
\II(\QQ)\coloneq\left\{\bear{ll}
\fp_\varphi(v)-\bra v,\QQ\ket-h_\varphi(\QQ)&\text{ if }\,\QQ\in\cP_\varphi(M);\\[6pt]
+\infty&\text{otherwise.}
\ear\right.
$$
\end{theorem}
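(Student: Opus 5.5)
The plan is to derive the LDP from the abstract result of Comman, the route flagged just before the statement. That result gives a full LDP for a net of measures under two conditions: the scaled cumulant generating functional exists on a dense subspace (here on all of $C(M)$), and the set of exposed points of the Legendre transform has a suitable richness (entropy density). The argument has three steps: compute the limiting functional, identify its Legendre dual with $\II$, and verify the exposed-point condition using \hUSCE and \hED.

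\emph{Step 1: the limiting functional.} Use $\bra u,\mu_t^x\ket=\frac1t S_tu(x)$. The definition of a $(v,\varphi)$-sequence then gives, for every $u\in C(M)$,
$$
\Lambda(u)\coloneq\lim_{t\to\infty}\frac1t\log\bra\e^{t\bra u,\mu_t^\argdot\ket},\PP_t\ket=\fp_\varphi(v+u)-\fp_\varphi(v).
$$
By Proposition~\ref{proppress}(2), $\Lambda$ is finite, convex and $1$-Lipschitz on $C(M)$. The standard Gärtner--Ellis/Baldi upper bound then yields the LDP upper bound for compact sets, with rate equal to the Legendre transform
$$
\Lambda^*(\QQ)=\sup_{u\in C(M)}\bigl(\bra u,\QQ\ket-\Lambda(u)\bigr).
$$
Since $\cP(M)$ is compact, exponential tightness is automatic, so this bound extends to closed sets.

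\emph{Step 2: $\Lambda^*=\II$.} Substituting $w=v+u$ gives $\Lambda^*(\QQ)=\fp_\varphi(v)-\bra v,\QQ\ket+\sup_w(\bra w,\QQ\ket-\fp_\varphi(w))$. There are two cases.
\begin{itemize}
\item If $\QQ\in\cP_\varphi(M)$, the dual variational principle of Proposition~\ref{proppress}(3), which requires \hUSCE, identifies the last supremum with $-h_\varphi(\QQ)$. This gives $\II(\QQ)$.
\item If $\QQ\notin\cP_\varphi(M)$, choose $g$ with $\bra g\circ\varphi-g,\QQ\ket\ne0$ and take $w=c(g\circ\varphi-g)$. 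Then $\fp_\varphi(w)=\fp_\varphi(0)$, because invariant measures do not see coboundaries. Letting $c\to\pm\infty$ shows the supremum is $+\infty$.
\end{itemize}

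\emph{Step 3: the lower bound, which is the main obstacle.} Because $\Lambda$ need not be Gâteaux differentiable, Gärtner--Ellis does not directly give the lower bound. The first move is tilting, in the spirit of \eqref{Eq:Tilting}. For an ergodic $\QQ$, Proposition~\ref{proppress}(5) provides $w$ such that $\QQ$ is the \emph{unique} equilibrium measure for $w$. Uniqueness is equivalent to differentiability of $\fp_\varphi$ at $w$, so $\QQ$ is an exposed point of $\Lambda^*$ with exposing hyperplane $u=w-v$.

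The tilted measures $\e^{S_tu}\PP_t/\bra\e^{S_tu},\PP_t\ket$ form a $(w,\varphi)$-sequence. Under them the upper bound from Step~1 has a rate vanishing only at $\QQ$, so $\mu_t^\argdot$ concentrates near $\QQ$. Undoing the tilt gives the local lower bound $\liminf\frac1t\log\PP_t\{\mu_t^\argdot\in G\}\ge-\II(\QQ)$ for every open $G\ni\QQ$.

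For a general $\QQ\in G\cap\cP_\varphi(M)$, \hED supplies ergodic $\QQ_n\rightharpoonup\QQ$ with $h_\varphi(\QQ_n)\to h_\varphi(\QQ)$. Since $\bra v,\cdot\ket$ is continuous, this gives $\II(\QQ_n)\to\II(\QQ)$, and $\QQ_n\in G$ for large $n$, so the lower bound extends to all of $G$. Points outside $\cP_\varphi(M)$ have $\II=+\infty$ and are trivial. This is exactly the content of \cite[Theorem~5.2]{Comman2009}, which I would cite once the hypotheses above are verified.
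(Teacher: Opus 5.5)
Your proposal is correct and follows the paper's route. The paper's proof consists entirely of invoking \cite[Theorem~5.2]{Comman2009}, and your three steps are a faithful unpacking of what that theorem does: the Baldi-type upper bound with rate $\Lambda^*$, the identification $\Lambda^*=\II$ via Proposition~\ref{proppress}(3) and the coboundary argument, and the lower bound at ergodic measures as exposed points via Proposition~\ref{proppress}(5), extended to all of $\cP_\varphi(M)$ by \hED. In this unpacking, \hUSCE and \hED enter exactly where they are needed.
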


\medskip
We note that since the entropy function $\QQ\mapsto h_\varphi(\QQ)$ is affine on
$\cP_\varphi(M)$,\footnote{See~\cite[Theorem~8.1]{Walters1982}.} so is the rate
$\II$. The inequality $\II(\QQ)\ge0$ follows from
Proposition~\ref{proppress}~(3). Moreover, comparing with~\eqref{EquiDef} we
conclude that $\II(\QQ)=0$ iff $\QQ$ is an equilibrium measure for $v$. Finally,
we observe that $\II$ is the restriction to $\cP(M)$ of the Fenchel--Legendre
transform of $u\mapsto\fp_\varphi(v+u)-\fp_\varphi(v)$.

By contraction, one immediately obtains the following

\begin{corollary}\label{cor:LDP-1}
Under the assumptions of Theorem~\ref{thm:LDP-2}, and for any $u\in C(M)$, one has the LDP
\[
     -\inf_{r\in\mathring{B}}I(r)
\le\liminf_{t\to\infty}\frac1t\log\PP_t\{t^{-1}S_tu\in B\}
\le\limsup_{t\to\infty}\frac1t\log\PP_t\{t^{-1}S_tu\in B\}
\le-\inf_{r\in\bar B}I(r)
\]
for any Borel set $B\subset\RR$. The good convex rate  $I:\RR\to[0,\infty]$ is given by
\[
I(r)=\inf\left\{\II(\QQ)\mid \QQ\in\cP_\varphi(M),\bra u,\QQ\ket=r\right\}.
\]
 \end{corollary}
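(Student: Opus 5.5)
The plan is to apply the contraction principle to Theorem~\ref{thm:LDP-2}. Consider the map
\[
\phi_u:\cP(M)\to\RR,\qquad \QQ\mapsto\bra u,\QQ\ket .
\]
Since $u\in C(M)$ and $\cP(M)$ carries the weak topology, $\phi_u$ is continuous. Moreover $\phi_u(\mu_t^x)=t^{-1}S_tu(x)$ by the identity following~\eqref{equ:mutDef}. Hence $\{t^{-1}S_tu\in B\}=\{\mu_t^\argdot\in\phi_u^{-1}(B)\}$.

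To invoke the contraction principle we need $\II$ to be a good rate. Because $\cP(M)$ is compact, it suffices that the level sets of $\II$ are closed, i.e.\ that $\II$ is lower semicontinuous. On the closed set $\cP_\varphi(M)$ this follows from \hUSCE, which makes $\QQ\mapsto\fp_\varphi(v)-\bra v,\QQ\ket-h_\varphi(\QQ)$ lower semicontinuous. Outside $\cP_\varphi(M)$, which is open in $\cP(M)$, we have $\II=+\infty$, so lower semicontinuity holds there as well. (Alternatively, the LDP of Theorem~\ref{thm:LDP-2} already comes with a rate having closed level sets, and compactness of $\cP(M)$ upgrades it to a good rate.)

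The contraction principle then gives the stated LDP with good rate
\[
I(r)=\inf\{\II(\QQ)\mid \bra u,\QQ\ket=r\}.
\]
Since $\II=+\infty$ off $\cP_\varphi(M)$, the infimum may be restricted to $\cP_\varphi(M)$, as in the statement; the convention $\inf\emptyset=+\infty$ covers values of $r$ not attained.

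It remains to prove convexity of $I$, which I expect to be the only step needing care. I would repeat the argument used for the Ising model. The set $\cP_\varphi(M)$ is compact and $\II$ is lower semicontinuous on it, so for each $r$ with $I(r)<\infty$ the infimum is attained at some $\QQ_r$. Given $r_1,r_2$ with finite values and $\alpha\in[0,1]$, set $\QQ=\alpha\QQ_{r_1}+(1-\alpha)\QQ_{r_2}\in\cP_\varphi(M)$. Then $\bra u,\QQ\ket=\alpha r_1+(1-\alpha)r_2$. Since $h_\varphi$ is affine on $\cP_\varphi(M)$, $\II$ is affine there too, so $\II(\QQ)=\alpha I(r_1)+(1-\alpha)I(r_2)$. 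This gives $I(\alpha r_1+(1-\alpha)r_2)\le\alpha I(r_1)+(1-\alpha)I(r_2)$; when either value is infinite the inequality is trivial.

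Finally, $I\ge0$ follows from $\II\ge0$. This completes the plan.
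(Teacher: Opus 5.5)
Your proposal is correct and follows the same route as the paper: the paper obtains the corollary "by contraction" of the level-2 LDP of Theorem~\ref{thm:LDP-2} through the continuous map $\QQ\mapsto\bra u,\QQ\ket$. In your version, goodness of the rate comes from compactness of $\cP(M)$, and convexity of $I$ comes from the affinity of $\II$ on $\cP_\varphi(M)$ (inherited from $h_\varphi$), exactly as in the Ising-model argument; your checks of lower semicontinuity of $\II$ and of attainment of the infimum simply fill in details the paper leaves implicit.
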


We finish this section with some general remarks on $(v,\varphi)$-sequences.

\remark[1] If $(\PP_t)_{t\in\NN}$ is a $(v,\varphi)$-sequence and
$v'\in C(M)$, then one easily checks that the sequence
$(\PP^{(v')}_t)_{t\in\NN}$ defined by
\[
\frac{\d\PP^{(v')}_t}{\d\PP_t}=\frac{\e^{S_tv'}}{\bra\e^{S_tv'},\PP_t\ket}
\]
is a $(v+v',\varphi)$-sequence.

\remark[2] If $(\PP_t)_{t\in\NN}$ is a $(v,\varphi)$-sequence and
$(v_t)_{t\in\NN} \subset C(M)$ is such that
\[
\|v_t-S_tv\|=o(t)
\]
as $t\to\infty$, then it is again straightforward to check that the sequence
$(\widetilde{\PP}_t)_{t\in\NN}$ defined by
\[
\frac{\d\widetilde{\PP}_t}{\d\PP_t}=\frac{\e^{v_t-S_tv}}{\bra\e^{v_t-S_tv},\PP_t\ket}
\]
is also a $(v,\varphi)$-sequence. In view of the equivalence of asymptotically
additive sequences with additive ones~\cite{Cuneo2020}, this shows that
Theorem~\ref{thm:LDP-2} and Corollary~\ref{cor:LDP-1} also apply to asymptotically
additive sequences.

\section{Periodic Orbit Ensemble}

In order to exploit the  results of the preceding section, we need some concrete
scheme to construct $(v,\varphi)$-sequences. In this section we shall invoke the
following elementary result:

\begin{lemma}
If the sequence $(E_t)_{t\in\NN}$ of subsets of $M$ is such that,
for all $u\in C(M)$,
\beq
\lim_{t\to\infty}\frac1t\log\sum_{x\in E_t}\e^{S_tu(x)}=\fp_\varphi(u),
\label{pseqdef}
\eeq
then, for any $v\in C(M)$, setting
\beq
\PP_t\coloneq Z_t^{-1}\sum_{x\in E_t}\e^{S_tv(x)}\delta_x,
\qquad
Z_t=Z_t(v)\coloneq\sum_{x\in E_t}\e^{S_tv(x)},
\label{pseqmeasure}
\eeq
defines a $(v,\varphi)$-sequence.
\end{lemma}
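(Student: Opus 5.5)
The plan is a direct computation from the definition of a $(v,\varphi)$-sequence. Fix $v\in C(M)$ and define $\PP_t$ by~\eqref{pseqmeasure}. For any $u\in C(M)$ we need the limit of $t^{-1}\log\bra\e^{S_tu},\PP_t\ket$.

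First, note that $\PP_t$ is a well-defined probability measure for large $t$. Applying~\eqref{pseqdef} with $u=v$ gives $t^{-1}\log Z_t(v)\to\fp_\varphi(v)$. Under~\hC, Proposition~\ref{proppress}~(2) leaves two cases: either $h_\mathrm{Top}(\varphi)<\infty$ and $\fp_\varphi(v)\in\RR$, or the pressure is identically $+\infty$. In the finite case, $Z_t(v)$ is eventually positive and finite, so $E_t$ is eventually nonempty. Implicitly the sums in~\eqref{pseqdef} should be finite, so $E_t$ is a finite set, and $\PP_t$ is an atomic probability measure for all large $t$. The degenerate infinite case makes the definition of a $(v,\varphi)$-sequence meaningless, so it is excluded by the implicit finiteness of $\fp_\varphi$.

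Second, compute the exponential moment. By additivity of Birkhoff sums, $S_tu+S_tv=S_t(u+v)$, hence
$$
\bra\e^{S_tu},\PP_t\ket=Z_t^{-1}\sum_{x\in E_t}\e^{S_t(u+v)(x)}=\frac{Z_t(u+v)}{Z_t(v)}.
$$
Taking $t^{-1}\log$ and applying~\eqref{pseqdef} to both $u+v$ and $v$, the two limits exist and are finite. Their difference is therefore $\fp_\varphi(v+u)-\fp_\varphi(v)$, which is exactly the defining property of a $(v,\varphi)$-sequence.

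There is no real obstacle. The only point needing care is the finiteness of the pressure, which Proposition~\ref{proppress}~(2) supplies. That finiteness is what makes the difference of limits legitimate and guarantees $0<Z_t<\infty$ eventually.
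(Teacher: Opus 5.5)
Your proof is correct and is the argument the paper has in mind; the paper calls the lemma ``elementary'' and gives no proof. The key identity is $\bra\e^{S_tu},\PP_t\ket=Z_t(u+v)/Z_t(v)$, after which one applies~\eqref{pseqdef} to $u+v$ and to $v$. You also do not need to treat finiteness of $E_t$ as an implicit assumption: since $\e^{S_tv(x)}\ge\e^{-t\|v\|}$, a finite limit in~\eqref{pseqdef} already forces $E_t$ to be finite and nonempty for all large $t$.
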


Among the possible candidates for the subsets $E_t$, we will consider the sets
\[
\Fix(\varphi^t)\coloneq\{x\in M\mid\varphi^t(x)=x\}
\]
of $t$-periodic orbits, and call \ndex{periodic orbit ensemble} the corresponding
measures~\eqref{pseqmeasure}. The significance of periodic orbits in the modern
theory of dynamical systems goes back to seminal works of Bowen~\cite{Bowen1970}
and Manning~\cite{Manning1971}. In the context of the FT, periodic orbits played
an important role in the early numerical works~\cite{Evans1993}. Ruelle’s proof
of the Gallavotti–Cohen fluctuation theorem for Anosov
diffeomorphisms~\cite{Ruelle1999} was centered around periodic orbits. Further
insights were obtained in~\cite{Maes2003b} where, following the general scheme
of~\cite{Lebowitz1999,Maes1999}, the pairs $(\PP_t,\wP_t)$ and the entropy
production observable $\sigma$ were introduced, and the transient fluctuation
relation was discussed. The work~\cite{Maes2003b} primarily concerned Gibbsian
type FT for regular potentials $v$. Finally, let us remark that in the context
of lattice spin systems periodic orbits arise naturally from the consideration
of periodic boundary conditions. Other boundary conditions lead to
asymptotically additive perturbations to which the last remark of
Section~\ref{sec:LDP for EM} apply. Thus, the periodic orbit ensemble coincides
with the canonical ensemble of statistical mechanics.

To formulate the main result of this section, let us introduce the following
condition.

\medskip\noindent
\hypertarget{hyp.PAP}{\textbf{(PAP)}}\label{pap} {\sl $\varphi$ has the {\rm
Periodic Approximation Property} whenever $E_t=\Fix(\varphi^t)$ is finite for
all $t\ge1$ and satisfies Relation~\eqref{pseqdef} for all $u\in C(M)$.}

\medskip
Theorem~\ref{thm:LDP-2} has the immediate

\begin{corollary}\label{cor:per}
Suppose that Conditions~\hC, \hED, \hUSCE and \hPAP are satisfied. Then, for any
$v\in C(M)$, the conclusions of Theorem~\ref{thm:LDP-2} and Corollary~\ref{cor:LDP-1}
hold for the periodic orbit ensemble
\[
\PP_t\coloneq Z_t^{-1}\sum_{x\in\Fix(\varphi^t)}\e^{S_tv(x)}\delta_x,
\qquad
Z_t=Z_t[v]\coloneq\sum_{x\in\Fix(\varphi^t)}\e^{S_tv(x)}.
\]
\end{corollary}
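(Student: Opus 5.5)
The plan is to reduce Corollary~\ref{cor:per} to Theorem~\ref{thm:LDP-2} and Corollary~\ref{cor:LDP-1}. Both already hold under \hC, \hUSCE, \hED for any $(v,\varphi)$-sequence. So the only thing to check is that, for every $v\in C(M)$, the periodic orbit ensemble $(\PP_t)_{t\in\NN}$ is a $(v,\varphi)$-sequence. For this I would invoke the elementary lemma stated just before, applied with $E_t=\Fix(\varphi^t)$. Its hypothesis~\eqref{pseqdef} is exactly what \hPAP asserts, and \hPAP also guarantees that each $E_t$ is finite. Hence $\PP_t$ is a well-defined atomic probability measure, provided $E_t\neq\emptyset$ so that $Z_t>0$.

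Next I would record the one-line verification of that lemma. For $u\in C(M)$,
\[
\bra\e^{S_tu},\PP_t\ket=\frac{\sum_{x\in\Fix(\varphi^t)}\e^{S_t(u+v)(x)}}{\sum_{x\in\Fix(\varphi^t)}\e^{S_tv(x)}}=\frac{Z_t[u+v]}{Z_t[v]}.
\]
Taking $t^{-1}\log$ and applying~\eqref{pseqdef} to both $u+v$ and $v$ gives the limit $\fp_\varphi(u+v)-\fp_\varphi(v)$. Both pressures are finite, since \hUSCE gives $h_\mathrm{Top}(\varphi)<\infty$, and Proposition~\ref{proppress}(2) then makes $\fp_\varphi$ finite everywhere. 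This is precisely the definition of a $(v,\varphi)$-sequence.

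The only point needing care is nonemptiness of $\Fix(\varphi^t)$, since the LDP statements require $\PP_t$ to be a probability measure. Condition~\eqref{pseqdef} with $u=0$ gives $t^{-1}\log|\Fix(\varphi^t)|\to h_\mathrm{Top}(\varphi)>-\infty$. So $\Fix(\varphi^t)\neq\emptyset$ for all large $t$, and this suffices because the LDP is asymptotic; one may start the sequence at such a $t$. With this settled, Theorem~\ref{thm:LDP-2} yields the level-2 LDP for $(\mu_t^\argdot)$ with the stated rate $\II$. The contraction principle, applied to the continuous map $\QQ\mapsto\bra u,\QQ\ket$, then gives Corollary~\ref{cor:LDP-1} verbatim. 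I expect no genuine obstacle: the substance lies in the cited result of~\cite{Comman2009}, and here only the hypotheses need matching.
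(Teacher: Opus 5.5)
Your proposal is correct and follows the paper's route exactly: \hPAP supplies the hypothesis of the preceding lemma with $E_t=\Fix(\varphi^t)$, so the periodic orbit ensemble is a $(v,\varphi)$-sequence, and Theorem~\ref{thm:LDP-2} and Corollary~\ref{cor:LDP-1} then apply directly. Your extra remark is a small point the paper leaves implicit: $\Fix(\varphi^t)$ is only guaranteed nonempty for large $t$, so $\PP_t$ is defined only eventually.
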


As seen in Section~\ref{sec:CDS}, if $\varphi$ is forward expansive or expansive
and has the specification property~\hS, then Conditions~\hED and~\hUSCE hold. To
describe a condition ensuring~\hPAP we need to introduce a variant of the
specification property~\hS.

\medskip\noindent
\hypertarget{hyp.WPS}{\textbf{(WPS)}}\label{wps}
{\sl $\varphi$ has the {\rm Weak Periodic Specification Property} if, for any $\delta>0$,
there is a sequence of integers $(m_\delta(t))_{t\ge t_0}$ such that for all $t\ge t_0$,
\[
0\le m_\delta(t )<t,\qquad\lim_{\delta\downarrow0}\lim_{t\to\infty}\frac{m_\delta(t)}t=0,
\]
and for any $x\in M$, we have
\[
\Fix(\varphi^t)\cap B_{t-m_\delta(t)}(x,\delta)\not=\emptyset.
\]}
\begin{theorem}\label{thm:pap}
If $\varphi$ is forward expansive or expansive and satisfies Condition~\hWPS, then
Condition~\hPAP holds.
\end{theorem}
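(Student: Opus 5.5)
We have to show two things: that $\Fix(\varphi^t)$ is finite for every $t\ge1$, and that \eqref{pseqdef} holds with $E_t=\Fix(\varphi^t)$ for every $u\in C(M)$. We use the standard description of the topological pressure through separated sets (see~\cite{Walters1982}):
$$
\fp_\varphi(u)=\lim_{\varepsilon\downarrow0}\limsup_{t\to\infty}\frac1t\log
\sup\Bigl\{\sum_{x\in E}\e^{S_tu(x)}\Bigm| E\text{ is }(t,\varepsilon)\text{-separated}\Bigr\}.
$$
For an expansive (or forward expansive) map with expansiveness constant $r$, the same formula holds without the limit in $\varepsilon$, for any fixed $\varepsilon<r$. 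The plan is to prove an upper bound and a lower bound separately.

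\textbf{Finiteness and upper bound.} Let $x\neq y$ be two points of $\Fix(\varphi^t)$. Their orbits are $t$-periodic, so in the homeomorphism case they are periodic in both time directions. If $d(\varphi^s x,\varphi^s y)\le r$ held for $0\le s<t$, it would then hold for all $s\in\cT$. This would give $y\in\Gamma(x,r)=\{x\}$, a contradiction. Hence $\Fix(\varphi^t)$ is $(t,r)$-separated, and it is finite because $M$ is compact. Taking $\varepsilon<r$ small enough that $\Fix(\varphi^t)$ is also $(t,\varepsilon)$-separated, the formula above gives
$$
\limsup_{t\to\infty}\frac1t\log\sum_{x\in\Fix(\varphi^t)}\e^{S_tu(x)}\le\fp_\varphi(u).
$$
This bound is valid for every $u\in C(M)$.

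\textbf{Lower bound.} Fix $\eta>0$ and choose $\delta<r/4$ such that $|u(a)-u(b)|<\eta$ whenever $d(a,b)<\delta$. Let $E$ be a maximal $(t,4\delta)$-separated set. Because $\Fix(\varphi^t)$ is finite, for large $t$ we may use the same $4\delta$ in the separated-set formula and obtain
$$
\limsup_{t\to\infty}\frac1t\log\sum_{y\in E}\e^{S_tu(y)}\ge\fp_\varphi(u).
$$
By \hWPS, each $y\in E$ has a point $z_y\in\Fix(\varphi^t)\cap B_{t-m}(y,\delta)$, where $m=m_\delta(t)$.

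\emph{Comparison of Birkhoff sums.} For $0\le s<t-m$ the points $\varphi^s z_y$ and $\varphi^s y$ are within $\delta$, so the choice of $\delta$ controls those terms. The remaining $m$ terms are bounded by $2\|u\|$ each. This gives $S_tu(z_y)\ge S_tu(y)-t\eta-2m\|u\|$.

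\emph{Multiplicity of $y\mapsto z_y$.} Suppose $z_y=z_{y'}$. Then $y$ and $y'$ are $2\delta$-close along $[0,t-m)$, while they are $4\delta$-separated on $[0,t)$. So the two points must separate during the last $m$ steps. Split $[t-m,t)$ into blocks on which $\varphi$ is uniformly continuous. The number of $y$ that are $2\delta$-close on $[0,t-m)$ and pairwise $4\delta$-separated on $[0,t)$ is then at most $N_\delta^{m}$, where $N_\delta$ is a covering number of $M$ at a scale fixed by uniform continuity. Hence the map $y\mapsto z_y$ is at most $N_\delta^m$-to-one.

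Putting these together,
$$
\sum_{x\in\Fix(\varphi^t)}\e^{S_tu(x)}\ge N_\delta^{-m}\e^{-t\eta-2m\|u\|}\sum_{y\in E}\e^{S_tu(y)}.
$$
Divide by $t$ and take $\liminf$ in $t$, then let $\delta\downarrow0$ and finally $\eta\downarrow0$. Since $\lim_{\delta\downarrow0}\lim_{t\to\infty}m_\delta(t)/t=0$, this should give the lower bound $\fp_\varphi(u)$.

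\textbf{Main obstacle.} The difficulty is the interplay between $\liminf$ and $\limsup$, and the $\delta$-dependence of the constants. The separated-set formula only controls a $\limsup$, whereas \eqref{pseqdef} requires a genuine limit. To resolve this, we plan to show that
$$
\liminf_{t\to\infty}\frac1t\log\sup\Bigl\{\sum_{x\in E}\e^{S_tu(x)}\Bigm|E\text{ is }(t,\varepsilon)\text{-separated}\Bigr\}=\fp_\varphi(u).
$$
This follows from the standard construction of an equilibrium-type measure along every subsequence via Misiurewicz's argument, using expansiveness so that $\varepsilon$ stays fixed. Separately, the factor $N_\delta^{m}$ must not grow faster than $\e^{o(t)}$ uniformly. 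So we must check that $\log N_\delta\cdot m_\delta(t)/t\to0$ in the iterated limit. If it does not, we will instead bound the multiplicity by $|E'_m|$, where $E'_m$ is a maximal $(m,4\delta)$-separated set, whose growth is at most $\e^{m(h_{\mathrm{Top}}(\varphi)+1)}$ with $h_{\mathrm{Top}}(\varphi)<\infty$ by expansiveness. The growth rate is then independent of $\delta$, and the factor $\e^{m(h_{\mathrm{Top}}(\varphi)+1)}$ is absorbed by the condition on $m_\delta(t)$.
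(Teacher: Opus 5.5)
Your finiteness and upper-bound step is the paper's argument. One small fix: argue with the strict inequality. If $d(\varphi^sx,\varphi^sy)<r$ for $0\le s<t$, then by periodicity this holds for all $s$, so $y\in B_{t'}(x,r)$ for every $t'$ and hence $y\in\Gamma(x,r)$. With ``$\le r$'' you only get membership in the sets $\{d_{t'}\le r\}$, which can be larger than $\overline{B_{t'}(x,r)}$.

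The lower bound is where you diverge, along a heavier route. The paper uses \emph{spanning} sets. Condition~\hWPS says exactly that $\Fix(\varphi^t)$ is $(\varepsilon,t-m_\varepsilon(t))$-spanning. Hence $\sum_{x\in\Fix(\varphi^t)}\e^{S_tu(x)}\ge \e^{-m_\varepsilon(t)\|u\|}\,\underline{P}_{t-m_\varepsilon(t)}(u,\varepsilon)$. Together with $\lim_{\varepsilon\downarrow0}\liminf_{t}\frac1t\log\underline{P}_t(u,\varepsilon)=\fp_\varphi(u)$ and the condition on $m_\varepsilon(t)$, this finishes the proof. A spanning-set sum is an infimum over covering sets, so no injectivity is needed. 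There is no map $y\mapsto z_y$, no Birkhoff-sum comparison, no multiplicity count, and no appeal to $h_{\mathrm{Top}}(\varphi)<\infty$. Your separated-set version gains nothing over this and has to pay for the non-injectivity of $y\mapsto z_y$.

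As written, two steps in your route need repair.

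\textbf{The multiplicity bound.} The bound $N_\delta^{m}$ does fail in general. \hWPS gives no rate for $\lim_t m_\delta(t)/t\to0$, so $\log N_\delta\cdot\lim_t m_\delta(t)/t$ need not vanish. Your fallback is the right one and works, stated precisely as follows. If $z_y=z_{y'}$, then $y,y'$ are $2\delta$-close on $[0,t-m)$, so the points $\varphi^{t-m}y$ over a fiber form an $(m,4\delta)$-separated set. The fiber therefore has at most $s_m(4\delta)$ elements, where $s_m(4\delta)$ is the \emph{maximal} cardinality of such sets, not the size of one particular maximal set. Since $\limsup_m\frac1m\log s_m(4\delta)\le h_{\mathrm{Top}}(\varphi)<\infty$, you get $s_m(4\delta)\le C_\delta\,\e^{m(h_{\mathrm{Top}}(\varphi)+1)}$. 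The constant $C_\delta$ is harmless because $\delta$ is fixed while $t\to\infty$.

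\textbf{The $\liminf$ step.} Your justification here points the wrong way. Misiurewicz's construction of an invariant measure from near-optimal separated sets shows that every limit point of $\frac1t\log\overline{P}_t(u,\varepsilon)$ is at most $\fp_\varphi(u)$. That is the upper direction, even if you run it along every subsequence. The lower $\liminf$ bound comes from the other half of the variational principle, the partition argument, which holds for every $t$. The simplest fix is to note that your maximal $(t,4\delta)$-separated set $E$ is $(t,4\delta)$-spanning. Then $\sum_{y\in E}\e^{S_tu(y)}\ge\underline{P}_t(u,4\delta)$, and you can use the spanning-set $\liminf$ characterization. Once you do this, applying the same observation directly to $\Fix(\varphi^t)$ makes the detour through $E$ unnecessary.

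You also do not need the fixed-scale formula from expansiveness. The remark ``because $\Fix(\varphi^t)$ is finite'' does not justify it anyway, and letting $\delta\downarrow0$ at the end already handles the scale.
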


\proof
The argument is based on characterizations of the pressure that we recall now,
referring the reader to~\cite[Section~9.1]{Walters1982} for details.

Let $E$ be a finite subset of $M$ and $\varepsilon>0$. $E$ is called
$(\varepsilon,t)$-{\sl separated\/} if $y\notin B_t(x,\varepsilon)$ for any
distinct $x,y\in E$, and $(\varepsilon,t)$-{\sl spanning\/} if the family
$(B_t(x,\varepsilon))_{x\in E}$ covers~$M$.
For $v\in C(M)$ define
\beq
\begin{split}
\underline{P}_t(v,\varepsilon)&\coloneq\inf\biggl\{\sum_{x\in E}\e^{S_tv(x)}\,\bigg|\,
E\subset M\text{ is
 }(\varepsilon,t)\text{-spanning}\biggr\},\\
\overline{P}_t(v,\varepsilon)&\coloneq\sup\biggl\{\sum_{x\in E}\e^{S_tv(x)}\,\bigg|\,
E\subset M\text{ is
 }(\varepsilon,t)\text{-separated}\biggr\}.
\end{split}
\label{defPbars}
\eeq
For any $v\in C(M)$, the four limiting quantities
\beq
\begin{split}
\lim_{\varepsilon\downarrow0}\limsup_{t\to\infty}\frac1t\log\underline{P}_t(v,\varepsilon),
&\qquad\lim_{\varepsilon\downarrow0}\liminf_{t\to\infty}\frac1t\log\underline{P}_t(v,\varepsilon),\\
\lim_{\varepsilon\downarrow0}\limsup_{t\to\infty}\frac1t\log\overline{P}_t(v,\varepsilon),
&\qquad\lim_{\varepsilon\downarrow0}\liminf_{t\to\infty}\frac1t\log\overline{P}_t(v,\varepsilon),
\end{split}
\label{limitPbars}
\eeq
coincide with the topological pressure $\fp_\varphi(v)$.

Let $r$ denote the expansiveness constant of $\varphi$. For fixed $t\ge1$,
uniform continuity and periodicity imply that there is $\delta>0$ such that if
$x,y\in\Fix(\varphi^t)$  and $d(x,y)\le\delta$, then $d(\varphi^t
(x),\varphi^t(y))\le r$ for all $t\in\cT$. It follows that $d(x,y)>\delta$ for
any distinct $x,y\in\Fix(\varphi^t)$ and compactness implies that
$\Fix(\varphi^t)$ may contain only finitely many points. Let $v\in C(M )$, and
let $\varepsilon\in]0,r[$. Then, in view of periodicity, for any $t\ge1$ the
set $\Fix(\varphi^t)$ is $(\varepsilon, t)$-separated. It follows that
\[
\sum_{x\in\Fix(\varphi^t)}\e^{S_tv(x)}\le\overline{P}_t(v,\varepsilon)
\]
whence, by the above representation of the pressure, we conclude that
\beq
\limsup_{t\to\infty}\frac1t\log\sum_{x\in\Fix(\varphi^t)}\e^{S_tv(x)}\le\fp_\varphi(v).
\label{equ:limsupper}
\eeq
Since $\varphi$ satisfies Condition~\hWPS, the set $\Fix(\varphi^t)$ is
$(\varepsilon,t-m_\varepsilon(t))$-spanning. It follows that
\[
\sum_{x\in\Fix(\varphi^t)}\e^{S_tv(x)}\ge\underline{P}_{t-m_\varepsilon(t)}(v,\varepsilon)
\]
Combining this with the above representation of the pressure, and using the condition on
$m_\varepsilon(t)$, we see that
\[
\liminf_{t\to\infty}\frac1t\log\sum_{x\in\Fix(\varphi^t)}\e^{S_tv(x)}\ge\fp_\varphi(v),
\]
which, together with~\eqref{equ:limsupper}, ends the proof.\hfill\qed

\medskip
By Proposition~\ref{proppress}~(4), Condition~\hUSCE ensures that the set of
equilibrium states for $v\in C(M)$ is never empty. A non-constructive proof just
argues that an upper semicontinuous function achieves its maximum on a compact
set. The following result provides an explicit construction.
\begin{proposition}
If Conditions~\hUSCE and~\hPAP hold, then any weak limit point of the periodic
orbit ensemble~$(\PP_t)_{t\in\NN}$ is an equilibrium state for $v$.
\end{proposition}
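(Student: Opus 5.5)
Given the weak limit point $\PP$, I will establish the equilibrium identity $\bra v,\PP\ket+h_\varphi(\PP)=\fp_\varphi(v)$ directly, without going through the full LDP of Corollary~\ref{cor:per}. That corollary also uses \hED, which is not assumed here. The plan uses only \hPAP, through the identity $\lim_{t\to\infty}t^{-1}\log\bra\e^{S_tu},\PP_t\ket=\fp_\varphi(v+u)-\fp_\varphi(v)$ valid for every $u\in C(M)$, and \hUSCE, through the dual variational formula of Proposition~\ref{proppress}~(3).

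\textbf{Step 1: $\PP$ is invariant.} Each $\PP_t$ is supported on $\Fix(\varphi^t)$, a finite set invariant under $\varphi$. Moreover, for $x\in\Fix(\varphi^t)$ the weight $\e^{S_tv(x)}$ depends only on the orbit of $x$, since $S_tv(\varphi(x))=S_tv(x)$ by periodicity. Hence $\PP_t\circ\varphi^{-1}=\PP_t$ for every $t$. Invariance passes to weak limits by continuity of $\varphi$: $\bra f\circ\varphi,\PP\ket=\lim\bra f\circ\varphi,\PP_{t_k}\ket=\lim\bra f,\PP_{t_k}\ket=\bra f,\PP\ket$. Thus $\PP\in\cP_\varphi(M)$.

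\textbf{Step 2: an upper bound on $\bra u,\PP\ket$.} Fix $u\in C(M)$. Invariance of $\PP_t$ gives $\bra u,\PP_t\ket=t^{-1}\bra S_tu,\PP_t\ket$. By Jensen's inequality,
$$\bra S_tu,\PP_t\ket\le\log\bra\e^{S_tu},\PP_t\ket.$$
Dividing by $t$, passing to the limit along the subsequence $t_k$, and using the \hPAP identity above yields
$$\bra u,\PP\ket\le\fp_\varphi(v+u)-\fp_\varphi(v)\qquad\text{for all }u\in C(M).$$

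\textbf{Step 3: conclude with the dual formula.} Substitute $u=w-v$ for arbitrary $w\in C(M)$. This gives $\fp_\varphi(v)+\bra w,\PP\ket-\bra v,\PP\ket\le\fp_\varphi(w)$, that is,
$$\fp_\varphi(v)-\bra v,\PP\ket\le\fp_\varphi(w)-\bra w,\PP\ket.$$
Taking the infimum over $w$ and applying Proposition~\ref{proppress}~(3), which requires \hUSCE and $\PP\in\cP_\varphi(M)$, we obtain $\fp_\varphi(v)-\bra v,\PP\ket\le h_\varphi(\PP)$. The reverse inequality is immediate from the definition~\eqref{PressureDef}, so equality~\eqref{EquiDef} holds and $\PP$ is an equilibrium state for $v$.

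The only point needing care is Step 1, namely the observation that the Gibbs weights are constant along periodic orbits, so that each $\PP_t$ is exactly invariant rather than only asymptotically. This exact invariance is what allows Jensen's inequality to be applied to $S_tu$ and turned into the statement about $\bra u,\PP\ket$ with no error term. The remaining steps are routine once \hPAP has been recast as the limit identity for $(v,\varphi)$-sequences.
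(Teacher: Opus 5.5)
Your proof is correct and is essentially the paper's argument: the paper obtains $\log Z_t[v+u]-\log Z_t[v]\ge\bra S_tu,\PP_t\ket$ from the convexity of $\alpha\mapsto\log Z_t[v+\alpha u]$, which is the same inequality as your Jensen step because $\bra\e^{S_tu},\PP_t\ket=Z_t[v+u]/Z_t[v]$. It then concludes, as you do, by passing to the limit along the subsequence and invoking the dual variational formula of Proposition~\ref{proppress}~(3), with an infimum over $u$ as you correctly write; your explicit check that each $\PP_t$ is $\varphi$-invariant supplies a detail the paper only asserts.
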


\proof Let the sequence $t_n\to\infty$ be such that
$\PP_{t_n}\rightharpoonup\PP$. Since $\PP_t\in\cP_\varphi(M)$ for all $t$, so
does $\PP$. In view of Hölder’s inequality, the function
$v\mapsto\log Z_t[v]\in\RR$ is convex. Moreover, for any $u,v\in C(M)$, the
function $\alpha\mapsto f(\alpha)\coloneq\log Z_t[v+\alpha u]$ is differentiable, and
its derivative at zero is given by $f'(0)=\bra S_tu,\PP_t\ket$. By convexity,
we have $f(1)-f(0)\ge f'(0)$, which gives
$\log Z_t[v+u]-\log Z_t[v]\ge\bra S_tu,\PP_t\ket$.
Setting $t=t_n$ in the above inequality, dividing it by $t_n$ , and passing to
the limit yields $\fp_\varphi(v+u)-\fp_\varphi(v)\ge\bra u,\PP\ket$ which
can be rewritten as
\[
\fp_\varphi(v+u)-\bra v+u,\PP\ket\ge\fp_\varphi(v)-\bra v,\PP\ket.
\]
Taking the supremum over $u\in C(M)$ and invoking
Proposition~\ref{proppress}~(3) completes the proof.\hfill\qed

\section{Fluctuation Theorems for the Periodic Orbit Ensemble}
\label{sec:FT for Periodic Ensemble}

In this subsection we describe FRs and FTs for the periodic orbit ensemble. We
consider first the general setting of Definition~\ref{Def:FT}: let $v,\hat v\in C(M)$
and denote by $(\PP_t)_{t\in\NN}$ and $(\wP_t)_{t\in\NN}$ the corresponding periodic orbit
ensemble for the dynamical system $(M,\varphi)$. Without loss of generality, we
assume that $\fp_\varphi(v)=\fp_\varphi(\hat v)=0$. Under the assumptions of
Corollary~\ref{cor:per}, we deduce from Theorem~\ref{thm:LDP-2} that the empirical
measures $(\mu_t^\argdot)_{t\in\NN}$ satisfy the LDP w.r.t.\;both
$(\PP_t)_{t\in\NN}$ and $(\wP_t)_{t\in\NN}$, with respective rate functions
\[
\II(\QQ)=-\bra v,\QQ\ket-h_\varphi(\QQ),\qquad
\widehat{\II}(\QQ)=-\bra \hat v,\QQ\ket-h_\varphi(\QQ).
\]
Thus, we have the level-2 FT with the FR\index{FR!level-2}
\[
\widehat{\II}(\QQ)=\II(\QQ)+\ep(\QQ),
\]
where, obviously, $\ep(\QQ)=\bra v-\hat v,\QQ\ket$. Recalling that $\II(\QQ)=0$
iff $\QQ$ is an equilibrium measure for $v$, we deduce that $\ep(\QQ)\ge0$ for
any such measure, while $\ep(\widehat{\QQ})\le0$ whenever $\widehat{\QQ}$ is an
equilibrium measure for $\hat v$.

The entropy production associated to the pair $(\PP_t,\wP_t)$ is
$$
\sigma_t=\log\frac{\d\PP_t}{\d\wP_t}=S_t(v-\hat v)-\log\frac{Z_t[v]}{Z_t[\hat v]}.
$$
Condition~\hPAP yields
\[
\lim_{t\to\infty}\frac1t\log\frac{Z_t[v]}{Z_t[\hat v]}
=\fp_\varphi(v)-\fp_\varphi(\hat v)=0,
\]
which shows that $(\sigma_t)_{t\in\NN}$ is physically equivalent to
$(\widetilde{\sigma}_t)_{t\in\NN}$, where
\[
\widetilde{\sigma}_t\coloneq S_t(v-\hat v).
\]
Thus, $(\frac1t\sigma_t)_{t\in\NN}$ and $(\frac1t\widetilde{\sigma}_t)_{t\in\NN}$
satisfy the same LDP w.r.t.\;$(\PP_t)_{t\in\NN}$ and $(\wP_t)_{t\in\NN}$
with respective rate functions
\[
I(s)=\inf\{\II(\QQ)\mid\QQ\in\cP_\varphi(M),\ep(\QQ)=s\},
\qquad
\hat I(s)=\inf\{\widehat{\II}(\QQ)\mid\QQ\in\cP_\varphi(M),\ep(\QQ)=s\}.
\]
Moreover, the FR
$$
\hat I(s)=I(s)+s
$$
holds for all $s\in\RR$. The entropic pressures
\[
e(\alpha)=\lim_{t\to\infty}\frac1t
\log\bra\e^{-\alpha\sigma_t},\PP_t\ket,\qquad
\hat e(\alpha)=\lim_{t\to\infty}\frac1t
\log\bra\e^{\alpha\sigma_t},\wP_t\ket,
\]
exist for all $\alpha\in\RR$, are given by
\[
e(\alpha)=\fp_\varphi\left((1-\alpha)v+\alpha\hat v\right),\qquad
\hat e(\alpha)=\fp_\varphi\left((1-\alpha)\hat v+\alpha v\right),
\]
and satisfy
\[
\hat e(\alpha)=e(1-\alpha).
\]
They are related to the rate function by Fenchel--Legendre transformation
$$
I(s)=\sup_{\alpha\in\RR}\left(\alpha s-e(-\alpha)\right),\qquad
\hat I(s)=\sup_{\alpha\in\RR}\left(\alpha s-\hat e(\alpha)\right).
$$

The interpretation of these results rests on the general picture described in
Section~\ref{sec:FR_Meaning}: let $\PP$ and $\wP$ denote equilibrium states for
$v$ and $\hat v$ such that $\PP_{t_n}\rightharpoonup\PP$ and
$\wP_{t_n}\rightharpoonup\wP$ for some subsequence $t_n\to\infty$.\footnote{In
absence of phase transition, {\sl i.e.,} when both $v$ and $\hat v$ have a unique
equilibrium measure, one has $\PP_t\rightharpoonup\PP$ and
$\wP_t\rightharpoonup\wP$ as $t\to\infty$.} Then $\PP_{t_n}$ and $\wP_{t_n}$ are
equivalent measures for all $n$, but their limit points $\PP$ and $\wP$ may be
mutually singular, and they definitely are whenever $\ep(\PP)>0$. Then the
various exponents described in Section~\ref{sec:FR_Meaning} quantify the rate at
which the supports of $\PP_{t_n}$ and $\wP_{t_n}$ separate.

\medskip
Next, let us specialize the discussion to pairs of periodic orbit ensemble in involution. To
this end, we make the following additional assumption:
\newcommand{\hSym}{\hyperlink{hyp.Sym}{{\bf (Sym)}}\xspace}

\medskip\noindent
\hypertarget{hyp.Sym}{\textbf{(Sym)}}\label{reversal} {\sl There is a continuous involution $\theta:M\to M$
such that:
\begin{itemize}
\item $\varphi\circ\theta=\theta\circ\varphi$ in case~\hC;
\item $\varphi\circ\theta=\theta\circ\varphi^{-1}$ in case~\hH.
\end{itemize}
 }

\medskip
For $v\in C(M)$ and $\QQ\in\cP_\varphi(M)$, we then set $\hat v\coloneq v\circ\theta$
and $\widehat{\QQ}\coloneq\QQ\circ\theta^{-1}$, so that
$\bra\hat v,\QQ\ket=\bra v,\widehat{\QQ}\ket$. Let us remark that
$\widehat\QQ\in\cP_\varphi(M)$. Indeed, for any $v\in C(M)$ we have
\[
\langle v\circ\varphi,\widehat\QQ\rangle
=\langle v\circ\varphi\circ\theta,\QQ\rangle
=\langle v\circ\theta\circ\varphi^{\pm1},\QQ\rangle
=\langle v\circ\theta,\QQ\rangle
=\langle v,\widehat\QQ\rangle.
\]
In case~\hH, Assumption~\hSym is the standard time-reversal invariance condition
that appears in most of the  works on the FR and FT. To the best of our
knowledge, it was not previously observed that Assumption~\hSym also leads to FT
in case~\hC. Since in the latter case $\varphi$ is not required to be
invertible, this allows us to extend FR/FT to a large class of irreversible
dynamical systems.

Observe that Condition~\hSym implies
\begin{equation}
S_t(v\circ\theta)=\sum_{0\le s<t}v\circ\theta\circ\varphi^s
=\sum_{0\le s<t}v\circ\varphi^{\pm s}\circ\theta
=(S_tv)\circ\theta_t
\label{eq:SnVtheta}
\end{equation}
where
\[
\theta_t\coloneq\left\{
\bear{ll}
\theta&\text{in case }\hC;\\[4pt]
\theta\circ\varphi^{t-1}&\text{in case }\hH,
\ear
\right.
\]
is involutive. In the case~\hC this is immediate, and in the case~\hH it follows from
\[
\theta_t^{-1}=\varphi^{1-t}\circ\theta^{-1}=\varphi^{1-t}\circ\theta
=\theta\circ\varphi^{t-1}=\theta_t.
\]
In addition, we have
\begin{lemma}\label{lem:Sym}
Under Condition~\hSym, the following holds for any $\QQ\in\PP_\varphi(M)$ and any
$v\in C(M)$:
\begin{align}
h_\varphi(\widehat\QQ)&=h_\varphi(\QQ),\label{Eq:htheta}\\[2mm]
\fp_\varphi(\hat v)&=\fp_\varphi(v).\label{eq:Psymtheta}
\end{align}
Moreover, $\theta$ and $\theta_t$ act bijectively on $\Fix(\varphi^t)$.
\end{lemma}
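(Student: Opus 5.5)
The plan is to reduce everything to the fact that $\theta$ conjugates $\varphi$ either to itself (case~\hC) or to $\varphi^{-1}$ (case~\hH), and that Kolmogorov--Sinai entropy is invariant under such conjugacies.

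\textbf{Step 1: the entropy identity~\eqref{Eq:htheta}.}
\begin{itemize}
\item Let $\QQ\in\cP_\varphi(M)$. In case~\hC, $\theta$ is a homeomorphism (a continuous involution) with $\theta\circ\varphi=\varphi\circ\theta$. It is therefore a measure-theoretic isomorphism from $(M,\varphi,\QQ)$ onto $(M,\varphi,\QQ\circ\theta^{-1})$, and isomorphism invariance of $h_\varphi$ gives $h_\varphi(\widehat\QQ)=h_\varphi(\QQ)$.
\item In case~\hH, the relation $\theta\circ\varphi=\varphi^{-1}\circ\theta$ makes $\theta$ an isomorphism from $(M,\varphi,\QQ)$ onto $(M,\varphi^{-1},\widehat\QQ)$. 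Hence $h_\varphi(\QQ)=h_{\varphi^{-1}}(\widehat\QQ)$.
\item The standard identity $h_{\varphi^{-1}}(\widehat\QQ)=h_\varphi(\widehat\QQ)$ for invertible measure-preserving maps (Walters) then concludes this case.
\end{itemize}

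\textbf{Step 2: the pressure identity~\eqref{eq:Psymtheta}.} We insert Step 1 into the variational formula~\eqref{PressureDef}. Since $\QQ\mapsto\widehat\QQ$ is a bijection of $\cP_\varphi(M)$ onto itself (shown just before the lemma, and it is involutive), and $\bra\hat v,\QQ\ket=\bra v,\widehat\QQ\ket$, we get
\[
\fp_\varphi(\hat v)=\sup_{\QQ}\bigl(\bra v,\widehat\QQ\ket+h_\varphi(\widehat\QQ)\bigr)=\fp_\varphi(v).
\]

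\textbf{Step 3: $\theta$ and $\theta_t$ act bijectively on $\Fix(\varphi^t)$.} Let $x\in\Fix(\varphi^t)$.
\begin{itemize}
\item In case~\hC, $\varphi^t(\theta x)=\theta\varphi^t x=\theta x$.
\item In case~\hH, $\varphi^t(\theta x)=\theta\varphi^{-t}x=\theta x$, because $\varphi^{-t}x=x$.
\end{itemize}
So $\theta$ maps $\Fix(\varphi^t)$ into itself, and being an involution it is a bijection there. In case~\hH, $\varphi^{t-1}$ also maps $\Fix(\varphi^t)$ into itself and is injective (it is a homeomorphism), so it is a bijection of this set. Hence $\theta_t=\theta\circ\varphi^{t-1}$ is a composition of bijections. (Alternatively, $\theta_t$ is an involution preserving the set.)

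\textbf{Main obstacle.} The only nontrivial ingredient is the invertible case of Step 1, namely the equality $h_{\varphi^{-1}}=h_\varphi$. This is a classical fact: the entropy of a partition under $\varphi$ and under $\varphi^{-1}$ coincides, since $\bigvee_{0\le s<t}\varphi^{-s}\xi$ and $\bigvee_{0\le s<t}\varphi^{s}\xi$ differ by applying $\varphi^{t-1}$, which preserves the measure. We quote it rather than reprove it.
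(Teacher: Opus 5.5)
Your proof is correct and follows the paper's own argument. It uses invariance of Kolmogorov--Sinai entropy under the conjugacy $\theta$, supplemented by $h_{\varphi^{-1}}=h_\varphi$ in case~\hH, then the variational formula for the pressure, then $\theta$-invariance of $\Fix(\varphi^t)$. The only loose point is deducing that $\varphi^{t-1}$ is a bijection of $\Fix(\varphi^t)$ from injectivity alone, which is insufficient if that set is infinite; your alternative argument (that $\theta_t$ is an involution preserving $\Fix(\varphi^t)$) settles it.
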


\proof Since the Kolmogorov--Sinai entropy is a conjugacy invariant
(see~\cite[Theorem 4.11]{Walters1982}), we deduce
\[
h_{\varphi^{\pm1}}(\QQ)=h_\varphi(\QQ\circ \theta^{-1})=h_\varphi(\widehat\QQ)
\]
from the fact that $\theta^{-1}\circ\varphi\circ\theta=\varphi^{\pm1}$, which
proves~\eqref{Eq:htheta} in case~\hC. To deal with case~\hH, it suffices to
take into account~\cite[Theorem~4.13]{Walters1982} which states that
$h_{\varphi^{-1}}(\QQ)=h_\varphi(\QQ)$.

In order to prove~\eqref{eq:Psymtheta}, we observe that, by~\eqref{PressureDef}
and~\eqref{Eq:htheta},
\begin{align*}
\fp_\varphi(\hat v)
&=\sup_{\QQ\in\cP_\varphi(M)}\bigl(\langle v\circ\theta,\QQ\rangle+h_\varphi(\QQ)\bigr)
=\sup_{\QQ\in\cP_\varphi(M)}\bigl(\langle v,\widehat\QQ\rangle+h_\varphi(\widehat\QQ)\bigr)\\
&=\sup_{\QQ\in\cP_\varphi(M)}\bigl(\langle v,\QQ\rangle+h_\varphi(\QQ)\bigr)
=\fp_\varphi(v).
\end{align*}
To prove the last statement, it suffices to show that $\Fix(\varphi^t)$ is
invariant under $\theta$, a property that follows immediately from
Condition~\hSym.\hfill\qed

\medskip
In the following we shall again assume, without loss of generality, that $v\in C(M)$
is such that $\fp_\varphi(v)=0$. Relation~\eqref{eq:SnVtheta} and the last
statement of Lemma~\ref{lem:Sym} immediately imply that $Z_t[\hat v]=Z_t[v]=Z_t$
and\footnote{The notation $\wP_t=\PP_t\circ\theta_t^{-1}$ is not conflicting
with our previous convention $\widehat{\QQ}=\QQ\circ\theta^{-1}$ which we have
restricted to $\QQ\in\cP_\varphi(M)$. Indeed, for such $\QQ$ one has
$\QQ\circ\theta_t^{-1}=\QQ\circ\theta^{-1}$.}
\begin{equation*}
\wP_t=\PP_t\circ\theta_t^{-1}
=Z_t^{-1}\sum_{x\in\Fix(\varphi^t)}\e^{S_t\hat v(x)}\delta_x
\end{equation*}
for all $t$. In particular, the measures~$\PP_t$ and~$ \wP_t$ are equivalent, and
the entropy production
\begin{equation}
\label{4.31}
\sigma_t=\log\frac{\d\PP_t}{\d\wP_t}=S_t(v-\hat v)
\end{equation}
satisfies $\sigma_t\circ\theta_t=-\sigma_t$. For $\alpha\in\RR$, the entropic pressure is
\[
e(\alpha)=\lim_{t\to\infty}\frac1t
\log\bra\e^{-\alpha\sigma_t},\PP_t\ket
=\fp_\varphi\left((1-\alpha)v+\alpha\hat v\right).
\]
For $\QQ\in\cP_\varphi(M)$ one has $\ep(\QQ)=\langle v-\hat v,\QQ\rangle
=\langle v,\QQ-\widehat\QQ\rangle$, so that
$$
\ep(\widehat \QQ) = -\ep(\QQ).
$$

\begin{theorem}\label{thm:per FT}\index{FT!for chaotic maps}
In addition to the hypotheses of Corollary~\ref{cor:per}, suppose that
Condition~\hSym holds. Then the rate function~$\II$ of the LDP
for the empirical measures~\eqref{equ:mutDef} under the periodic orbit ensemble
$(\PP_t)_{t\in\NN}$ satisfies the FR
$$
\II(\widehat\QQ)=\II(\QQ)+\ep(\QQ)
$$
for any $\QQ\in\cP_\varphi(M)$. Furthermore, the sequence
$(\frac1t\sigma_t)_{t\in\NN}$ under the laws $(\PP_t)_{t\in\NN}$  satisfies the
LDP with the good convex rate function given by
$$
I(s)=\inf\{\II(\QQ):\QQ\in\cP_\varphi(M), \ep(\QQ)=s\},
$$
which is also the Legendre transform of the function $\alpha \mapsto e(-\alpha)$.
Moreover, the FR
\beq
I(-s)=I(s)+s
\label{FRagain}
\eeq
holds for all $s\in\RR$.
\end{theorem}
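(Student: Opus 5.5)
The plan is to work first at level 2, where everything follows from the explicit form of $\II$, and then pass to level 1 by contraction.

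\textbf{Level-2 FR.} By Corollary~\ref{cor:per}, with the normalization $\fp_\varphi(v)=0$, the rate for the empirical measures under $(\PP_t)_{t\in\NN}$ is $\II(\QQ)=-\bra v,\QQ\ket-h_\varphi(\QQ)$ for $\QQ\in\cP_\varphi(M)$. For $\QQ\in\cP_\varphi(M)$ we have $\widehat\QQ\in\cP_\varphi(M)$, and Lemma~\ref{lem:Sym} gives $h_\varphi(\widehat\QQ)=h_\varphi(\QQ)$. Hence
$$
\II(\widehat\QQ)=-\bra v,\widehat\QQ\ket-h_\varphi(\QQ)=-\bra\hat v,\QQ\ket-h_\varphi(\QQ)=\II(\QQ)+\bra v-\hat v,\QQ\ket=\II(\QQ)+\ep(\QQ).
$$

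\textbf{Level-1 LDP.} By \eqref{4.31}, $\sigma_t=S_t(v-\hat v)$ exactly, so no physical-equivalence argument is needed. Corollary~\ref{cor:LDP-1} applied with $u=v-\hat v\in C(M)$ then gives the LDP for $t^{-1}\sigma_t$ under $(\PP_t)_{t\in\NN}$, with good convex rate $I(s)=\inf\{\II(\QQ)\mid\QQ\in\cP_\varphi(M),\ep(\QQ)=s\}$. Goodness follows from the contraction principle, because $\cP(M)$ is compact and $\QQ\mapsto\ep(\QQ)$ is continuous. Convexity follows from the affinity of $\II$ and $\ep$ on the convex set $\cP_\varphi(M)$ together with attainment of the infimum, as in the Ising argument. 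Attainment holds because $\II$ is lower semicontinuous (by \hUSCE) on a compact set, and the constraint set is closed.

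\textbf{Legendre duality.} Since $I$ is convex and lower semicontinuous, \eqref{Eq:IasLegendreTransform} applies once we know $e(\alpha)=\sup_s(\alpha s-I(-s))$. This I would verify directly:
$$
\sup_s(\alpha s-I(-s))=-\inf_{\QQ\in\cP_\varphi(M)}\bigl(\alpha\ep(\QQ)+\II(\QQ)\bigr)=\sup_{\QQ}\bigl(\bra(1-\alpha)v+\alpha\hat v,\QQ\ket+h_\varphi(\QQ)\bigr)=\fp_\varphi\bigl((1-\alpha)v+\alpha\hat v\bigr),
$$
which equals $e(\alpha)$. The same identity also follows from Varadhan's lemma, since $\sigma_t/t$ is bounded. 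Biconjugation then gives $I(s)=\sup_\alpha(\alpha s-e(-\alpha))$.

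\textbf{Level-1 FR.} The relation \eqref{FRagain} follows from the level-2 FR and the involution $\QQ\mapsto\widehat\QQ$ on $\cP_\varphi(M)$, which satisfies $\ep(\widehat\QQ)=-\ep(\QQ)$:
$$
I(-s)=\inf_{\ep(\QQ)=-s}\II(\QQ)=\inf_{\ep(\QQ')=s}\II(\widehat{\QQ'})=\inf_{\ep(\QQ')=s}\bigl(\II(\QQ')+s\bigr)=I(s)+s.
$$
As a cross-check, the symmetry $e(1-\alpha)=e(\alpha)$ follows from $\fp_\varphi(\hat w)=\fp_\varphi(w)$ applied to $w=(1-\alpha)v+\alpha\hat v$, using $\hat{\hat v}=v$. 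The main point needing care is that $\widehat{\ }$ maps $\cP_\varphi(M)$ bijectively onto itself in both cases \hC and \hH; this was already remarked before Lemma~\ref{lem:Sym}, and the map is its own inverse since $\theta$ is an involution.
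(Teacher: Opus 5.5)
Your proposal is correct and follows essentially the argument the paper assembles in the discussion preceding the theorem: the level-2 FR from the explicit form of $\II$ together with $h_\varphi(\widehat\QQ)=h_\varphi(\QQ)$ (Lemma~\ref{lem:Sym}), the level-1 LDP by contraction via Corollary~\ref{cor:LDP-1} with $u=v-\hat v$ using the exact identity $\sigma_t=S_t(v-\hat v)$, the Legendre duality through $e(\alpha)=\fp_\varphi((1-\alpha)v+\alpha\hat v)$, and the level-1 FR by contracting the level-2 FR along the involution $\QQ\mapsto\widehat\QQ$, which reverses the sign of $\ep$. Your direct variational computation of $\sup_s(\alpha s-I(-s))$, as in the Ising example, is a clean way to make the identification of the entropic pressure with the Legendre transform of $s\mapsto I(-s)$ explicit.
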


\section{Weak Gibbs Measures}
\label{sec:WeakGibbs}

We continue our study of FT/FR in chaotic dynamical systems, now focusing on
Gibbs and weak Gibbs measures.
\begin{definition}\label{def:weak Gibbs}
We say that~$\PP\in\cP(M)$ is a \ndex{weak Gibbs measure} for $v\in C(M)$ if
for any $t\ge1$ and any $\varepsilon>0$ there is $K_t(\varepsilon)\ge1$ such that
\beq
K_t(\varepsilon)^{-1}\,\e^{S_tv(x)-t\fp_\varphi(v)}\le \PP\bigl(B_t(x,\varepsilon)\bigr)
\le K_t(\varepsilon)\,\e^{S_tv(x)-t\fp_\varphi(v)},
\label{equ:weakGibbs}
\eeq
for every $x\in M$, and
\beq
\lim_{\varepsilon\downarrow0}\limsup_{t\to\infty}\frac1t\log K_t(\varepsilon)=0.
\label{equ:weakGibbsconst}
\eeq
If $K_t(\varepsilon)$ does not depend on $t$ and $\varepsilon$, then $\PP$ is a
\ndex{Gibbs measure} for $v$.
\end{definition}

We emphasize that the definition of (weak) Gibbs measure does not require
$\PP\in\cP_\varphi(M)$. Notice also that it follows from~\eqref{equ:weakGibbs} that the
support of~$\PP$ coincides with~$M$. The following lemma shows that if the
latter property is satisfied, then it suffices to require the validity
of~\eqref{equ:weakGibbs} for $\PP$-a.e.\;$x\in M$. This observation is technically useful when
transfer operators are used to construct weak Gibbs measures;
see~\cite[Section 2]{Kesseboehmer2001}, \cite[Appendix B]{Climenhaga2010}.

\begin{lemma}
Let $v\in C(M)$ and $\PP\in\cP(M)$ be such that $\supp(\PP)=M$. Assume that for all $t\ge1$ and
all $\varepsilon>0$, \eqref{equ:weakGibbs} holds for $\PP$-a.e.\;$x\in M$, with~$K_t(\varepsilon)$
satisfying~\eqref{equ:weakGibbsconst}. Then $\PP$ is weak Gibbs for $v$.
\end{lemma}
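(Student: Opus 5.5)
Fix $t\ge1$, $\varepsilon>0$ and an arbitrary $x\in M$. The plan is to transfer the bounds~\eqref{equ:weakGibbs} from a nearby "good" point $y$ to $x$, at the price of enlarging the constant by a factor that is sub-exponential in $t$ as $\varepsilon\downarrow0$.

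\emph{Choice of $y$.} The Bowen ball $B_t(x,\varepsilon/2)$ is open, since it is a finite intersection of preimages of open balls under the continuous maps $\varphi^s$, $0\le s<t$. It is nonempty because it contains $x$. Since $\supp(\PP)=M$, it therefore has positive $\PP$-measure. Let $G$ be the set of $y$ for which~\eqref{equ:weakGibbs} holds at both scales $\varepsilon/2$ and $2\varepsilon$; its complement is $\PP$-null (a union of two null sets). Hence we can pick $y\in B_t(x,\varepsilon/2)\cap G$.

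\emph{Nesting of Bowen balls.} From $d(\varphi^s x,\varphi^s y)<\varepsilon/2$ for $0\le s<t$ and the triangle inequality, applied for each $s$, we get
\[
B_t(y,\varepsilon/2)\subset B_t(x,\varepsilon)\subset B_t(y,2\varepsilon).
\]

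\emph{Comparing Birkhoff sums.} Let $\omega(\delta)\coloneq\sup\{|v(a)-v(b)|: d(a,b)<\delta\}$. By uniform continuity of $v$ on the compact space $M$, $\omega(\delta)\to0$ as $\delta\downarrow0$. Then $|S_tv(x)-S_tv(y)|\le t\,\omega(\varepsilon/2)$.

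\emph{The two bounds at $x$.} Combining the inclusions with the bounds at $y$ gives
\[
\PP(B_t(x,\varepsilon))\ge K_t(\varepsilon/2)^{-1}\e^{S_tv(y)-t\fp_\varphi(v)}\ge K_t(\varepsilon/2)^{-1}\e^{-t\omega(\varepsilon/2)}\e^{S_tv(x)-t\fp_\varphi(v)},
\]
and similarly
\[
\PP(B_t(x,\varepsilon))\le K_t(2\varepsilon)\,\e^{t\omega(\varepsilon/2)}\,\e^{S_tv(x)-t\fp_\varphi(v)}.
\]
Thus~\eqref{equ:weakGibbs} holds for every $x\in M$ with the constant
\[
K'_t(\varepsilon)\coloneq\max\bigl(K_t(\varepsilon/2),K_t(2\varepsilon)\bigr)\,\e^{t\omega(\varepsilon/2)}\ge1 .
\]

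\emph{The growth condition.} We have
\[
\frac1t\log K'_t(\varepsilon)\le \max\Bigl(\frac1t\log K_t(\varepsilon/2),\frac1t\log K_t(2\varepsilon)\Bigr)+\omega(\varepsilon/2).
\]
Taking $\limsup_{t\to\infty}$ and then $\varepsilon\downarrow0$, condition~\eqref{equ:weakGibbsconst} for $K_t$ at the scales $\varepsilon/2$ and $2\varepsilon$, together with $\omega(\varepsilon/2)\to0$, yields~\eqref{equ:weakGibbsconst} for $K'_t$.

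The only delicate point is securing a good point $y$ inside every Bowen ball around $x$. This is exactly where the openness of Bowen balls and the full-support assumption are used. Everything else is routine triangle-inequality bookkeeping, the main care being to select the radii so that the nesting goes in both directions.
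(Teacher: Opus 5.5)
Your proof is correct and is essentially the paper's argument: you pick a good point $y\in B_t(x,\varepsilon/2)$ (which exists by full support), use $B_t(y,\varepsilon/2)\subset B_t(x,\varepsilon)\subset B_t(y,2\varepsilon)$ together with the modulus of continuity of $v$, and take the new constant $\max(K_t(\varepsilon/2),K_t(2\varepsilon))$. You are in fact slightly more explicit than the paper on two points it leaves implicit: you require the good set to work at both scales $\varepsilon/2$ and $2\varepsilon$, and you absorb the factor $\e^{t\omega(\varepsilon/2)}$ into the new constant.
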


\proof Given $t\geq 1$ and $\varepsilon>0$, let~$A\subset M$ be a set of full $\PP$-measure
on which~\eqref{equ:weakGibbs} holds and let $x\in M$ be an arbitrary point. Since
$\overline{A}=M$, there is $x'\in A\cap B_t(x,\varepsilon/2)$, such that
$$
\PP(B_t(x, \varepsilon))\geq\PP(B_t(x', \varepsilon/2))
\geq K_t(\varepsilon/2)^{-1}\e^{S_tv(x')-t\fp_\varphi(v)}
\geq K_t'(\varepsilon)^{-1}\e^{S_tv(x)-t\fp_\varphi(v)-t\gamma_\varepsilon(v)},
$$
and
$$
\PP(B_t(x, \varepsilon))\leq\PP(B_t(x',2\varepsilon))
\leq K_t(2\varepsilon)\e^{S_tv(x')-t\fp_\varphi(v)}
\leq K_t'(\varepsilon)\e^{S_tv(x)-t\fp_\varphi(v)+t\gamma_\varepsilon(v)},
$$
where
\begin{equation*}
K_t'(\varepsilon)\coloneq\max\left(K_t(2\varepsilon), K_t(\varepsilon/2)\right)
\end{equation*}
and
\beq
\gamma_\varepsilon(v)\coloneq\max\left\{|v(x)-v(y)|\mid x,y\in M,d(x,y)\le\varepsilon\right\}.
\label{equ:gammaDef}
\eeq
The condition~\eqref{equ:weakGibbsconst} and the uniform continuity of $v$ give
\[
\lim_{\varepsilon\downarrow 0}\limsup_{t\rightarrow\infty}\frac{1}{t}\log K_t'(\varepsilon)=0,\qquad
\lim_{\varepsilon\downarrow 0}\gamma_\varepsilon(v)=0,
\]
from which the statement follows.\hfill\qed

\medskip
We next show that invariant weak Gibbs measures are equilibrium states
(the converse is not true in general).

\begin{lemma}\label{lem:WGestequil}
Let $v\in C(M)$ and assume that $\PP\in\cP_\varphi(M)$ is a weak Gibbs measure for $v$.
Then $\PP$ is an equilibrium measure for $v$.
\end{lemma}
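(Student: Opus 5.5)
We need to show $\bra v,\PP\ket+h_\varphi(\PP)\ge\fp_\varphi(v)$; the reverse inequality is built into the variational definition~\eqref{PressureDef}. Without loss of generality $\fp_\varphi(v)=0$, after replacing $v$ by $v-\fp_\varphi(v)$. The plan is to bound the Kolmogorov--Sinai entropy from below by a Brin--Katok type argument, using the upper Gibbs bound in~\eqref{equ:weakGibbs}. By the Choquet/ergodic decomposition and affinity of both $\QQ\mapsto h_\varphi(\QQ)$ and $\QQ\mapsto\bra v,\QQ\ket$, one could hope to reduce to ergodic $\PP$. However, ergodic components need not be weak Gibbs, so I would instead argue directly with a finite partition.

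Fix $\varepsilon>0$ and a finite Borel partition $\xi$ of $M$ into sets of diameter $<\varepsilon$. Let $\xi^t=\bigvee_{s<t}\varphi^{-s}\xi$. Each atom $A\in\xi^t$ containing a point $x$ lies in $B_t(x,\varepsilon)$, so $\PP(A)\le K_t(\varepsilon)\e^{S_tv(x)}$ for every $x\in A$. Hence
\[
H_\PP(\xi^t)=-\sum_{A}\PP(A)\log\PP(A)\ge-\log K_t(\varepsilon)-\sum_A\int_A S_tv\,\d\PP
=-\log K_t(\varepsilon)-t\bra v,\PP\ket,
\]
using $\varphi$-invariance of $\PP$ in the last step. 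Dividing by $t$ and letting $t\to\infty$ gives
\[
h_\varphi(\PP)\ge h_\varphi(\PP,\xi)\ge-\bra v,\PP\ket-\limsup_{t\to\infty}\frac1t\log K_t(\varepsilon).
\]
Letting $\varepsilon\downarrow0$ and invoking~\eqref{equ:weakGibbsconst} then yields $\bra v,\PP\ket+h_\varphi(\PP)\ge0=\fp_\varphi(v)$, as required.

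The only delicate point is the pointwise bound $\PP(A)\le K_t(\varepsilon)\e^{S_tv(x)}$ for $x\in A$. This rests on the inclusion $A\subset B_t(x,\varepsilon)$, which holds because any $y\in A$ shares with $x$ the same $\xi$-atom at each time $s<t$, so $d(\varphi^s x,\varphi^s y)<\varepsilon$ for $0\le s<t$. Such a partition exists by compactness. The argument avoids the lower Gibbs bound entirely, and needs neither \hUSCE nor expansiveness.
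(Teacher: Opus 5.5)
Your argument is correct, and it follows a different route from the paper's.

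The paper argues pointwise. The weak Gibbs bounds give $\lim_{\varepsilon\downarrow0}\limsup_{t\to\infty}\frac1t\bigl(S_tv(x)-\log\PP(B_t(x,\varepsilon))\bigr)=\fp_\varphi(v)$ for every $x$; this is \eqref{equ:wG1}. The Brin--Katok local entropy formula, combined with Birkhoff averages, identifies the same limit $\PP$-a.e.\ with $h_\varphi(\PP,x)+\overline v(x)$; this is \eqref{equ:wG2}. Integrating then yields \eqref{EquiDef}.

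You instead estimate $H_\PP(\xi^t)$ directly, using the fact that atoms of $\xi^t$ lie inside Bowen balls. This is essentially the classical partition argument for Gibbs measures. It needs only the partition definition of $h_\varphi$, the $\varphi$-invariance of $\PP$, and the upper half of \eqref{equ:weakGibbs}.

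What your route buys:
\begin{itemize}
\item It is elementary and self-contained, and avoids the Brin--Katok theorem, a substantially deeper result than the lemma itself.
\item It shows that any invariant measure satisfying only the upper weak Gibbs bound is an equilibrium measure. In the paper's argument too, only the inequality $\ge$ in \eqref{equ:wG1} is really needed, since the reverse inequality is automatic from \eqref{PressureDef}.
\end{itemize}
The paper's route, in exchange, is shorter once Brin--Katok is taken as a black box. It also yields the finer, local identity $h_\varphi(\PP,x)+\overline v(x)=\fp_\varphi(v)$ for $\PP$-a.e.\ $x$, rather than only its integrated form.

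One minor point: your normalization $\fp_\varphi(v)=0$ tacitly uses $\fp_\varphi(v)<\infty$. This is implicit in \eqref{equ:weakGibbs}. Otherwise the upper bound would make every ball $B_1(x,\varepsilon)$ $\PP$-null, while finitely many such balls cover $M$.
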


\proof By~\eqref{equ:weakGibbs} we have, for all $x\in M$,
\beq
\lim_{\varepsilon\downarrow0}\limsup_{t\to\infty}\frac1t
\left(S_tv(x)-\log\PP(B_t(x,\varepsilon))\right)=\fp_\varphi(v).
\label{equ:wG1}
\eeq
On the other hand, the Brin--Katok local entropy formula~\cite{Brin1983}
implies that for $\PP$-a.e.~$x\in M$,
\beq
\lim_{\varepsilon\downarrow0}\limsup_{t\to\infty}\frac1t
\left(S_tv(x)-\log\PP(B_t(x,\varepsilon))\right)=h_\varphi(\PP,x)+\overline{v}(x),
\label{equ:wG2}
\eeq
where $h_\varphi(\PP,\,\argdot\,)$ and $\overline{v}$ are $\varphi$-invariant
functions such that $\langle h_\varphi(\PP,\,\argdot\,),\PP\rangle=h_\varphi(\PP)$ and
$\langle\overline{v},\PP\rangle=\langle v,\PP\rangle$.
Equating the right-hand sides of~\eqref{equ:wG1} and~\eqref{equ:wG2} and integrating
with respect to~$\PP$, we arrive at~\eqref{EquiDef}.
\hfill\qed

\medskip
\begin{lemma}\label{lem:WGPress}
Suppose that Condition~\hUSCE holds and that $\PP\in\cP(M)$ is a  weak Gibbs measure
for $v\in C(M)$. Then, setting  $\PP_t=\PP$ for all $t\in\NN$ yields a $(v,\varphi)$-sequence.
\end{lemma}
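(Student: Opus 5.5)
We must show $\lim_{t\to\infty}\frac1t\log\bra\e^{S_tu},\PP\ket=\fp_\varphi(v+u)-\fp_\varphi(v)$ for every $u\in C(M)$. The plan is to compare the integral with sums over finite sets: partition $M$ into pieces of Bowen balls, use \eqref{equ:weakGibbs} on each piece, and then recognize the characterizations \eqref{defPbars}--\eqref{limitPbars} of the pressure. Fix $u$ and $\varepsilon>0$. On a Bowen ball $B_t(x,\varepsilon)$ one has $|S_tu(y)-S_tu(x)|\le t\gamma_\varepsilon(u)$, with $\gamma_\varepsilon$ as in \eqref{equ:gammaDef}. The same bound holds for $v$.

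\emph{Upper bound.} Choose a maximal $(\varepsilon,t)$-separated set $E$. It is $(\varepsilon,t)$-spanning, so $M\subset\bigcup_{x\in E}B_t(x,\varepsilon)$. Then
\[
\bra\e^{S_tu},\PP\ket\le\sum_{x\in E}\e^{S_tu(x)+t\gamma_\varepsilon(u)}\PP(B_t(x,\varepsilon))
\le K_t(\varepsilon)\e^{t\gamma_\varepsilon(u)-t\fp_\varphi(v)}\sum_{x\in E}\e^{S_t(u+v)(x)}.
\]
The last sum is at most $\overline{P}_t(u+v,\varepsilon)$. Take $\frac1t\log$ and $\limsup_t$, then let $\varepsilon\downarrow0$. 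By \eqref{equ:weakGibbsconst}, uniform continuity, and \eqref{limitPbars}, this gives $\limsup\le\fp_\varphi(u+v)-\fp_\varphi(v)$.

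\emph{Lower bound.} Take an $(2\varepsilon,t)$-separated set $E$. The balls $B_t(x,\varepsilon)$, $x\in E$, are pairwise disjoint. Hence
\[
\bra\e^{S_tu},\PP\ket\ge\sum_{x\in E}\e^{S_tu(x)-t\gamma_\varepsilon(u)}\PP(B_t(x,\varepsilon))
\ge K_t(\varepsilon)^{-1}\e^{-t\gamma_\varepsilon(u)-t\fp_\varphi(v)}\sum_{x\in E}\e^{S_t(u+v)(x)}.
\]
Taking the supremum over such $E$ yields $\overline{P}_t(u+v,2\varepsilon)$. Then take $\liminf_t$ and let $\varepsilon\downarrow0$; the $\liminf$ version in \eqref{limitPbars} gives $\liminf\ge\fp_\varphi(u+v)-\fp_\varphi(v)$. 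Note that $\liminf_t\frac1t\log K_t(\varepsilon)^{-1}\ge-\limsup_t\frac1t\log K_t(\varepsilon)$, which tends to $0$.

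\emph{Remaining checks.} Two points need care. First, \eqref{limitPbars} requires finite pressures. Condition \hUSCE gives $h_\mathrm{Top}(\varphi)<\infty$, and then Proposition~\ref{proppress}~(2) makes $\fp_\varphi$ finite. Second, Proposition~\ref{proppress} uses the variational definition \eqref{PressureDef} of the pressure, whereas \eqref{limitPbars} is the Walters separated/spanning one. Identifying the two is the variational principle, taken from \cite{Walters1982} as in the proof of Theorem~\ref{thm:pap}. The main obstacle is therefore bookkeeping: making the $\varepsilon$-dependence of $K_t(\varepsilon)$ and $\gamma_\varepsilon$ vanish in the correct order of limits, first $t\to\infty$ and then $\varepsilon\downarrow0$. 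Both bounds are arranged so that this order works.
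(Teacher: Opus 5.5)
Your proof is correct and follows the paper's argument. For the upper bound you cover $M$ by Bowen balls and apply \eqref{equ:weakGibbs} on each ball; for the lower bound you use pairwise disjoint Bowen balls around a separated set; you then conclude via \eqref{limitPbars} after taking $t\to\infty$ first and $\varepsilon\downarrow0$ second. The differences are minor: you bound the covering sum by $\overline{P}_t(u+v,\varepsilon)$ using a maximal separated set, whereas the paper takes the infimum over spanning sets to get $\underline{P}_t(v+u,\varepsilon)$; and your choice of $(2\varepsilon,t)$-separated sets, which makes the radius-$\varepsilon$ balls disjoint, is slightly more careful than the paper's use of $(\varepsilon,t)$-separated sets with those same balls.
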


\proof The proof follows that of~\cite[Proposition~3.2]{Kifer1990} (see
also~\cite[Proposition~10.3]{Jaksic2011}).

For any $u\in C(M)$, $\varepsilon>0$, $t\ge1$, and any $(\varepsilon,t)$-spanning set $E$,
using~\eqref{equ:weakGibbs} we derive
\begin{align*}
\bra\e^{S_tu},\PP\ket
&\le\sum_{x\in E}\bra\boldsymbol{1}_{B_t(x,\varepsilon)}\e^{S_tu},\PP\ket
\leq\e^{t\gamma_\varepsilon(u)}\sum_{x\in E}\e^{S_tu(x)}\PP(B_t(x,\varepsilon))\\
&\leq K_t(\varepsilon)\e^{-t(\fp_\varphi(v)-\gamma_\varepsilon(u))}\sum_{x\in E}
\e^{S_t(v+u)(x)},
\end{align*}
where the modulus of continuity $\gamma_\varepsilon$ is as in~\eqref{equ:gammaDef}.
Taking the infimum over all $(\varepsilon,t)$-spanning set $E$ and recalling
the definition~\eqref{defPbars} yields
\[
\frac1t\log\bra\e^{S_tu},\PP\ket\le\gamma_\varepsilon(u)-\fp_\varphi(v)+\frac1t\log K_t(\varepsilon)+\frac1t\log\underline{P}_t(v+u,\varepsilon).
\]
By uniform continuity, $\gamma_\varepsilon(u)\to0$ as $\varepsilon\downarrow0$. Together
with~\eqref{equ:weakGibbsconst} and~\eqref{limitPbars}, this gives
\begin{equation*}
\limsup_{t\to\infty}\frac1t\log\bra\e^{S_tu},\PP\ket
\leq\fp_\varphi(v+u)-\fp_\varphi(v).
\end{equation*}
To prove the reverse inequality, we proceed similarly, observing that for any
$(\varepsilon,t)$-separated set $E$ we have
\[
\bra\e^{S_tu},\PP\ket
\geq K_t^{-1}(\varepsilon)\e^{-t(\fp_\varphi(v)+\gamma_\varepsilon(u))}
\sum_{x\in E}\e^{S_t(v+u)(x)}.
\]
Taking the supremum over all $(\varepsilon,t)$-separated sets and repeating the above argument
we obtain the desired result.
\hfill\qed

\medskip
The following result is again a direct consequence of Lemma~\ref{lem:WGPress},
Theorem~\ref{thm:LDP-2} and Corollary~\ref{cor:LDP-1}.
\begin{proposition}\label{thm:wG LDP-2}
Assume that Conditions~\hC, \hUSCE and~\hED hold and that\/ $\PP$ is a weak
Gibbs measure for $v\in C(M)$. Then the conclusions of Theorem~\ref{thm:LDP-2}
and Corollary~\ref{cor:LDP-1} hold with~$\PP_t$ replaced by\/~$\PP$.
\end{proposition}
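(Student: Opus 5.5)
The plan is to reduce the statement to Theorem~\ref{thm:LDP-2} by checking its single hypothesis on the measures, namely that the sequence in question is a $(v,\varphi)$-sequence. Set $\PP_t\coloneq\PP$ for all $t\in\NN$. Since $\PP$ is weak Gibbs for $v$ and Condition~\hUSCE holds, Lemma~\ref{lem:WGPress} applies verbatim and gives that $(\PP_t)_{t\in\NN}$ is a $(v,\varphi)$-sequence, that is,
\[
\lim_{t\to\infty}\frac1t\log\bra\e^{S_tu},\PP\ket=\fp_\varphi(v+u)-\fp_\varphi(v)
\]
for every $u\in C(M)$. We do not need invariance of $\PP$ here: neither the definition of weak Gibbs measure nor Lemma~\ref{lem:WGPress} uses it.

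With this in hand, the standing assumptions \hC, \hUSCE and \hED of Theorem~\ref{thm:LDP-2} are exactly those of the proposition. The theorem then yields the level-2 LDP for the empirical measures $(\mu_t^\argdot)_{t\in\NN}$ under the constant sequence $\PP_t=\PP$, with rate
\[
\II(\QQ)=\fp_\varphi(v)-\bra v,\QQ\ket-h_\varphi(\QQ)
\]
on $\cP_\varphi(M)$ and $+\infty$ elsewhere. Since $\PP_t=\PP$, the probabilities $\PP_t\{\mu_t^\argdot\in\Gamma\}$ are just $\PP\{\mu_t^\argdot\in\Gamma\}$, which is the claimed replacement of $\PP_t$ by $\PP$.

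For the level-1 statement, apply Corollary~\ref{cor:LDP-1}. It follows from the level-2 LDP by the contraction principle through the continuous map $\QQ\mapsto\bra u,\QQ\ket$, using that $\II$ is a good rate (as $\cP(M)$ is compact), so no further argument is needed.

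There is no real obstacle: the only substantive input is Lemma~\ref{lem:WGPress}, which is already proved. The one point I would double-check is that the constants $K_t(\varepsilon)$ in \eqref{equ:weakGibbs} enter only through \eqref{equ:weakGibbsconst}, so that they do not spoil the exponential asymptotics used in that lemma.
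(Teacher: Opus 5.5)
Your proposal is correct and matches the paper's argument. The paper also treats the proposition as a direct consequence of Lemma~\ref{lem:WGPress}, which makes the constant sequence $\PP_t=\PP$ a $(v,\varphi)$-sequence without requiring invariance, after which Theorem~\ref{thm:LDP-2} and Corollary~\ref{cor:LDP-1} apply as stated.
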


Recall that $\sigma_t$ is defined by~\eqref{4.31}. By Lemma~\ref{lem:WGPress},
the limit
\[
e(\alpha)\coloneq\lim_{t\to\infty}\frac{1}{t}
\log\int_{M}\e^{-\alpha \sigma_t}\d\PP
\]
exists for all $\alpha\in\RR$ and is given by
\[
e(\alpha)=\fp_\varphi\left((1-\alpha)v+\alpha\,v\circ\theta\right)-\fp_\varphi(v) .
\]
The proof of the following result is exactly the same as that of
Theorem~\ref{thm:per FT}.
\begin{theorem}
If in  addition to the hypotheses of Theorem~\ref{thm:wG LDP-2},
Condition~\hSym holds, then all the conclusions of
Theorem~\ref{thm:per FT} hold with $\PP_t$ replaced by $\PP$.
\end{theorem}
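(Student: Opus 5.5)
The plan is to rerun the proof of Theorem~\ref{thm:per FT} with the constant sequence $\PP_t=\PP$ in place of the periodic orbit ensemble. Proposition~\ref{thm:wG LDP-2} (which rests on Lemma~\ref{lem:WGPress}) supplies the level-2 LDP for $(\mu_t^\argdot)$ under $\PP$. Its rate is $\II(\QQ)=\fp_\varphi(v)-\bra v,\QQ\ket-h_\varphi(\QQ)$ on $\cP_\varphi(M)$ and $+\infty$ elsewhere. The level-2 FR is then a direct computation. For $\QQ\in\cP_\varphi(M)$ we have $\widehat\QQ\in\cP_\varphi(M)$, and by Lemma~\ref{lem:Sym} $h_\varphi(\widehat\QQ)=h_\varphi(\QQ)$. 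Using $\bra v,\widehat\QQ\ket=\bra\hat v,\QQ\ket$ we get
$$
\II(\widehat\QQ)-\II(\QQ)=\bra v,\QQ\ket-\bra v,\widehat\QQ\ket=\bra v-\hat v,\QQ\ket=\ep(\QQ).
$$

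Next I would identify the entropy production observable in this setting. Since $\PP$ need not be $\theta$-invariant, I would interpret $\sigma_t$ as in~\eqref{4.31}, that is $\sigma_t=S_t(v-\hat v)=t\,\ep(\mu_t^x)$, as the definition of $e(\alpha)$ preceding the theorem indicates. The map $\QQ\mapsto\ep(\QQ)=\bra v-\hat v,\QQ\ket$ is continuous on $\cP(M)$, and $\cP(M)$ is compact, so $\II$ is a good rate. The contraction principle, or equivalently Corollary~\ref{cor:LDP-1} with $u=v-\hat v$, then gives the LDP for $\frac1t\sigma_t$ under $\PP$ with rate $I(s)=\inf\{\II(\QQ):\QQ\in\cP_\varphi(M),\ep(\QQ)=s\}$. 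This $I$ is convex because $\II$ is affine and $\ep$ is linear, by the same mixing argument used in the Ising example. Lemma~\ref{lem:WGPress} shows that $(\PP)$ is a $(v,\varphi)$-sequence, so the limit $e(\alpha)=\fp_\varphi(v-\alpha(v-\hat v))-\fp_\varphi(v)$ exists and is finite for all $\alpha$. Varadhan's lemma together with convexity and lower semicontinuity of $I$ then yields~\eqref{Eq:IasLegendreTransform}, namely $I(s)=\sup_\alpha(\alpha s-e(-\alpha))$.

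The FR~\eqref{FRagain} follows by applying the level-2 FR inside the infimum. Since $\QQ\mapsto\widehat\QQ$ is a bijection of $\cP_\varphi(M)$ and $\ep(\widehat\QQ)=-\ep(\QQ)$,
$$
I(-s)=\inf_{\ep(\QQ)=s}\II(\widehat\QQ)=\inf_{\ep(\QQ)=s}\bigl(\II(\QQ)+s\bigr)=I(s)+s.
$$
An alternative route is through $e(1-\alpha)=e(\alpha)$, which follows from $\fp_\varphi(\hat w)=\fp_\varphi(w)$ in Lemma~\ref{lem:Sym} applied to $w=(1-\alpha)v+\alpha\hat v$.

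The main obstacle is conceptual rather than technical. Since $\PP$ need not be $\theta$-invariant, $\sigma_t$ is not literally $\log\frac{\d\PP}{\d\PP\circ\theta_t^{-1}}$, and the transient FR of Proposition~\ref{Prop:Transient_FR} is not available. For this reason the argument must derive the asymptotic FR purely from the rate-function symmetry above rather than from Proposition~\ref{Prop:RateSymmetry}. With that choice, every step is a direct transcription of the periodic-orbit proof.
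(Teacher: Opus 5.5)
Your proposal is correct and follows the paper's route. The paper only says the proof is identical to that of Theorem~\ref{thm:per FT}, and the ingredients of that proof are the ones you transcribe: the level-2 FR from Lemma~\ref{lem:Sym}, contraction via Corollary~\ref{cor:LDP-1} with $u=v-\hat v$, existence of $e(\alpha)$ from Lemma~\ref{lem:WGPress} followed by Legendre duality, and $I(-s)=I(s)+s$ from the bijection $\QQ\mapsto\widehat\QQ$ of $\cP_\varphi(M)$ with $\ep(\widehat\QQ)=-\ep(\QQ)$. You also rightly do not use the transient FR (Proposition~\ref{Prop:RateSymmetry}), since for the single measure $\PP$ the observable $\sigma_t=S_t(v-\hat v)$ is in general not an exact log Radon--Nikodym derivative, so the level-2 argument is what transfers verbatim.
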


\remark[1] Our assumptions do not exclude the situation where the sequence
$(\frac1t\sigma_t)_{t\in\NN}$ is exponentially equivalent to 0, {\sl i.e.,}
the situation where $I(0)=0$ and $I(s)=+\infty$ for all $s\neq 0$. In this case
the symmetry~\eqref{FRagain} is trivial, and $\ep(\PP)=0$ (note
that~\eqref{FRagain} always implies that $\ep(\PP)\geq0$). It is a non-trivial
question to determine whether a given system is truly out of equilibrium, {\sl
i.e.,} if $\ep(\PP)>0$. For $\PP\in\cP_\varphi(M)$, we have
$\ep(\PP)=\langle\sigma,\PP\rangle$, where $\sigma=v-v\circ\theta$. In the
context of the chaotic hypothesis of Gallavotti--Cohen,
Section~\ref{ssect:Chaotic}, when $\PP$ is the SRB measure, the inequality
$\ep(\PP)>0$ is called {\em dissipativity} and is often {\em assumed}
(see~\cite{Bonetto2000} for a model where dissipativity can be proved
and~\cite[Proposition~11.5]{Jaksic2011} for generic results in this direction).
See also~\cite{Ruelle1996,Bonetto2006} for discussions and~\cite[Theorem
5.2]{Maes2003b}, \cite[Corollary 10.16 (7)]{Jaksic2011}.

\remark[2] Weak Gibbs measures have been extensively studied in the recent
literature on multifractal formalism; see~\cite{Cuneo2019} for references and
additional information.

\printbibliography[heading=bibintoc,title={References}]


\chapter{The Tent Map}
\label{Chap:Tent}

\abstract{To illustrate the variety and depth of mathematical tools needed to
study  FR/FT in the realm of chaotic dynamics, in this chapter we present a
detailed study of the tent map. This example is rich enough to allow  for the
discussion of  interesting phenomena such as phase transitions and their
consequences, and provides a blueprint for related study of other interval
maps.}

\abstract*{To illustrate the variety and depth of mathematical tools needed to
study  FR/FT in the realm of chaotic dynamics, in this chapter we present a
detailed study of the tent map. This example is rich enough to allow  for the
discussion of  interesting phenomena such as phase transitions and their
consequences, and provides a blueprint for related study of other interval
maps.}

\vskip1cm

\section{Basic Properties}

The \ndex{tent map} is the piecewise affine unimodal map of the unit interval
$M\coloneq[0,1]$ defined by
\[
\varphi(x)\coloneq 1-|1-2x|.
\]
It has $0$ and $2/3$ as unstable fixed points and the orbit of the critical
point $c=1/2$ is $c\to1\to0\to0\cdots$. Since $2/7$ has period $3$, $\varphi$
has periodic orbits of any period by \v Sarkovskii's
theorem~\cite[Theorem~II.3.10]{Collet1980}. Moreover, $|\varphi'(x)|=2>1$ for
$x\not=c$ implies that all these periodic orbits are unstable. One easily shows
that, for $t\in\NN$ and $j\in\llbracket0,2^{t-1}\rrbracket$,
\[
|2^{t-1}x-j|\le\frac12\Rightarrow\varphi^t(x)=2|2^{t-1}x-j|,
\]
from which one deduces
\[
\Fix(\varphi^t)=\left\{\frac{2(j-1)}{2^t-1}\bigg| j\in\llbracket1,2^{t-1}\rrbracket\right\}
\bigcup
\left\{\frac{2j}{2^t+1}\bigg| j\in\llbracket1,2^{t-1}\rrbracket\right\},
\]
and in particular $|\Fix(\varphi^t)|=2^t$. The set of $\varphi$-periodic points
is dense in $M$, more precisely $\dist(x,\Fix(\varphi^t))\le2^{-t}$
holds for any $x\in M$. For later reference, let us introduce
$$
\cC\coloneq\bigcup_{n\in\ZZ_+}\varphi^{-n}(\{c\})
=\left\{j 2^{- k}\,\bigg|\, k\in\NN, j\in\llbracket0,2^k\rrbracket,j\text{ odd}\right\},
$$
the dense set of preimages of the critical point $c$, and $M_0\coloneq M\setminus\cC$.

\begin{figure}
\centering
\includegraphics[width=12cm]{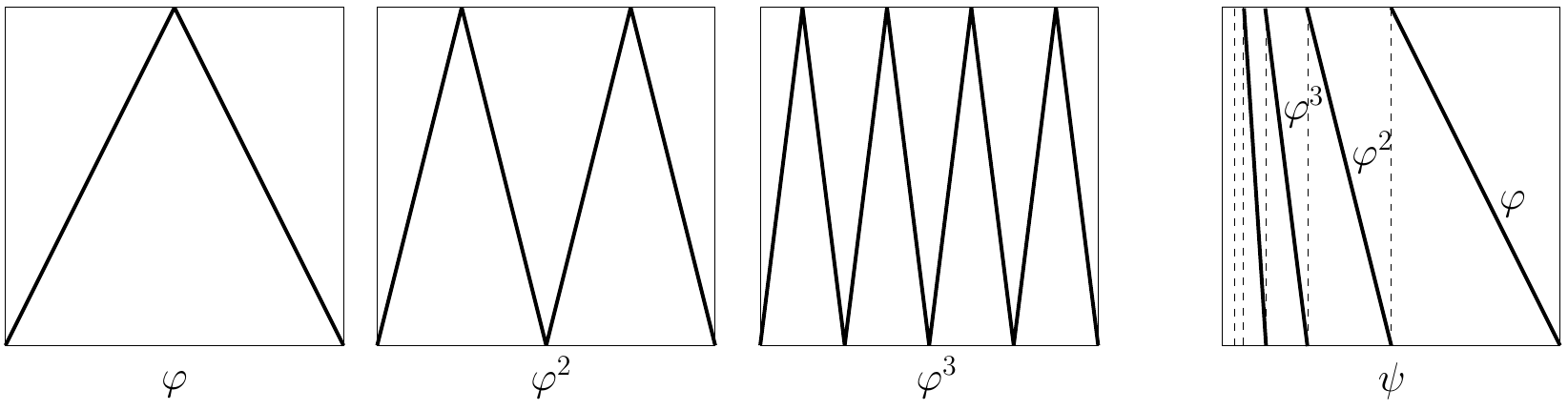}
\caption{The tent map $\varphi$, its first iterates, and the induced map $\psi$
which will be central in Section~\ref{sec:inducing}.}
\label{Fig:Tent}
\end{figure}

Since $\varphi^t$ has $2^t$ intervals of monotonicity, the Misiurewicz
formula~\cite[Theorem~1]{Misiurewicz1980} gives the topological entropy of the
tent map
\beq
h_{\rm Top}(\varphi)=\log2.
\label{eq:Tenthtop}
\eeq
The elementary calculation
\[
\int_0^1f\circ\varphi(x)\d x
=\int_0^{1/2}f(2x)\d x+\int_{1/2}^1f(2(1-x))\d x=\int_0^1f(x)\d x
\]
shows that Lebesgue measure $\lambda$ is $\varphi$-invariant. By Rokhlin's formula
\beq
h_\varphi(\lambda)
=\int_0^{1/2}\log\frac{\d\lambda\circ\varphi|_{[0,1/2]}}{\d\lambda}\d\lambda
+\int_{1/2}^1\log\frac{\d\lambda\circ\varphi|_{[1/2,1]}}{\d\lambda}\d\lambda
=\log2,
\label{TentRokhlin}
\eeq
so $\lambda$ has maximal entropy. One easily infers from the above mentioned
structure of periodic orbits that $\varphi$ is topologically transitive, {\sl
i.e.,} for any open subsets $U,V\subset M$ there exists $t\ge0$ such that
$U\cap\varphi^t(V)\not=\emptyset$. By~\cite[Theorem~4]{Hofbauer1981}, $\varphi$
is intrinsically ergodic: $\lambda$ is the unique $\varphi$-invariant measure of
maximal entropy. In particular, $\lambda$ is the only $\varphi$-ergodic measure
which is absolutely continuous w.r.t.\;the Lebesgue measure on $M$.

Since $\varphi$ has periodic points of period $3$, it follows
from~\cite[Theorem~2.20]{Ruette2017} that it is topologically mixing: for any
open subsets $U,V\subset M$ there exists $t\ge0$ such that
$U\cap\varphi^s(V)\not=\emptyset$ holds for all $s\ge t$.
By~\cite[Theorem~3.4]{Ruette2017}, it follows that $\varphi$ has the
specification property~\hS and hence satisfies Condition~\hED.

The symmetry $\varphi(1-x)=\varphi(x)$ implies that, for any $r>0$,
$1-x\in\Gamma(x,r)$ whenever $|x-c|\le r/2$ (recall the
definition~\eqref{equ:GammaDef}). Hence, the tent map fails to be forward
expansive. It is also easy to see that $\Fix(\varphi^t)$ is not
$(\varepsilon,t)$-separated for $\varepsilon>2/(4^t-1)$. Thus, we cannot apply
the results of Chapter~\ref{chap:Chaotic maps and FR} directly.

We shall see how to recover Properties~\hUSCE and~\hPAP in the next section. In
Sections~\ref{sec:inducing}--\ref{sec:GCP} we will focus on the basic properties
of the tent map's pressure $\fp_\varphi(v)$. This will lead us, in
Section~\ref{sec:PhaseTransitions}, to formulate a simple criterion for the occurrence of
phase transitions, {\sl i.e.,} singularities of the pressure
$\fp_{\varphi}(\kappa v)$, as a function of the coupling strength $\kappa$. We
will encounter an obstruction to the fluctuation theorem for the tent map
in Section~\ref{sec:square}, when discussing the last necessary ingredient
of Theorem~\ref{thm:per FT}, namely Assumption~\hSym. We will overcome
this problem by ``squaring'' the system. In the last two sections we will
discuss a specific example.

\section{Symbolic Dynamics}
\label{Ssec:TentSymbolic}

The tent map admits a simple and well-known coding associated to the Markov
partition of $M$ given by $J_0\coloneq[0,1/2]$ and $J_1\coloneq[1/2,1]$. Denote by
\beq
x=0.b_0b_1b_2\cdots=\sum_{k\in\ZZ_+}\frac{b_k}{2^{k+1}}
\label{Eq;binexp}
\eeq
the binary expansion of $x\in M$. We will call the sequence
$\bsb=(b_k)_{k\in\ZZ_+}$ the $\bsb$-code of $x$. Note  that
$$
1-x=\sum_{k\in\ZZ_+}\frac{1-b_k}{2^{k+1}}=0.\hat b_0\hat b_1\hat b_2\cdots,
$$
where $\hat b_k\coloneq 1-b_k$. Since $x\in J_{b_0}$, one has
\[
\varphi(x)=\begin{cases}
0.b_1b_2b_3\cdots&\text{ if }b_0=0;\\[4pt]
0.\hat b_1\hat b_2\hat b_3\cdots&\text{ if }b_0=1.
\end{cases}
\]
Thus, denoting by $j$ the surjection
$\Sigma=\{0,1\}^{\ZZ_+}\ni\bsb=(b_k)_{k\in\ZZ_+}\mapsto x\in M$
defined in~\eqref{Eq;binexp}, we can write
$$
\varphi\circ j=j\circ\zeta,
$$
where $\zeta:\Sigma\to\Sigma$ is given by $\zeta(\bsb)\coloneq(b_0\oplus
b_{k+1})_{k\in\ZZ_+}$, the symbol $\oplus$ denoting the ``exclusive or'' (xor)
boolean operator.\footnote{Note that $(\{0,1\},\oplus)$, as an Abelian group, is
isomorphic to $(\ZZ_2,+)$. In particular, $b_1\oplus b_2\oplus\cdots\oplus b_n$
is $1$ whenever the word $b_1\cdots b_n$ contains an even number of $1$'s, and
$0$ otherwise.}

The so-called Gray coding $\varkappa:\Sigma\to\Sigma$ defined by
$\bss=\varkappa(\bsb)$, with
$$
s_0\coloneq b_0,\qquad s_k\coloneq b_{k-1}\oplus b_k,
$$
maps $\bsb$-code into $\bss$-code in such a way that
$\varkappa\circ\zeta(\bsb)=\varkappa(\bsb')=\bss'$, where
$$
s_0'\coloneq b_0'=b_0\oplus b_1=s_1,
\quad
s_k'\coloneq b_{k-1}'\oplus b_k'=(b_0\oplus b_k)\oplus(b_0\oplus b_{k+1})=b_k\oplus b_{k+1}=s_{k+1}.
$$
Clearly,
$$
\varkappa\circ\zeta=\sigma\circ\varkappa,
$$
where $\sigma$ denotes the left shift on $\Sigma$ and one easily shows that
$\varkappa$ is a bijection whose inverse is given by
$\bsb=\varkappa^{-1}(\bss)$, with
$$
b_0\coloneq s_0,\qquad b_k\coloneq s_k\oplus b_{k-1}.
$$
Setting $\ell=j\circ\varkappa^{-1}$, we finally get
\beq
\varphi\circ\ell=\ell\circ\sigma.
\label{Eq:TentSemiConj}
\eeq

The product topology makes $\Sigma$ a compact metrizable space, and we shall
equip this space with the compatible metric
$$
\Sigma\times\Sigma\ni(\bsa,\bsb)\mapsto d(\bsa,\bsb)\coloneq 2^{-\inf\{k\in\ZZ_+\mid a_k\not=b_k\}}.
$$
We further denote by $\Sigma^\ast\coloneq\{w=b_1\cdots b_k\in\{0,1\}^k\mid k\in\NN\}$
the set of finite-length words,  and by $|w|$ the length of such a word. Given
$w_1\in\Sigma^\ast$ and $w_2\in\Sigma^\ast\cup\Sigma$, we write their
concatenation as $w_1w_2$ and the periodic repetition of $w_1$ as
$\overline{w_1}\in\Sigma$.

The following lemma summarizes the basic properties of the map $\ell$.
\begin{lemma}\label{TentEll}
\ben
\item $\ell:\Sigma\to M$ is onto. For any $x\in M$, $x=\ell(\bss)$
iff $\varphi^k(x)\in J_{s_k}$ holds for all $k\in\ZZ_+$.
\item For any $\bss,\bss'\in\Sigma$, $|\ell(\bss')-\ell(\bss)|\le d(\bss',\bss)$.
\item $\Sigma_0\coloneq\ell^{-1}(M_0)=\Sigma\setminus
\{w1\bar0\in\Sigma\mid w\in\Sigma^\ast,|w|\ge1\}$.
\item The map $\ell:\Sigma_0\to M_0$ is a continuous bijection with a measurable inverse.
\item $\Fix(\sigma^t)=\{\overline{w}\mid w\in\Sigma^\ast,|w|=t\}\subset\Sigma_0$ and
 $\Fix(\varphi^t)=\ell(\Fix(\sigma^t))$.
 \item For any $\PP\in\cP_\varphi(M)$ one has $\PP(\cC)=0$ and
  $h_\varphi(\PP)=h_\sigma(\PP\circ\ell)$.
\een
\end{lemma}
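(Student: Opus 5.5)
The plan is to reduce everything to two elementary facts: the binary-expansion map $j$ is $1$-Lipschitz for $d$, and the Gray coding $\varkappa$ is a $d$-isometry. The second holds because $s_0,\dots,s_k$ determine $b_0,\dots,b_k$ and conversely, so $\bsb,\bsb'$ and $\varkappa(\bsb),\varkappa(\bsb')$ first differ at the same index. If $\bsb,\bsb'$ agree up to index $n-1$, then $|j(\bsb)-j(\bsb')|\le\sum_{k\ge n}2^{-k-1}=2^{-n}$. Composing gives (2), and surjectivity of $\ell$ in (1) follows from that of $j$ together with bijectivity of $\varkappa$.

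For the itinerary characterization in (1), first note $\ell(\bss)\in J_{s_0}$: writing $\bss=\varkappa(\bsb)$, the point $\ell(\bss)=j(\bsb)$ has first binary digit $b_0=s_0$. Iterating the semiconjugacy~\eqref{Eq:TentSemiConj} as $\varphi^k\circ\ell=\ell\circ\sigma^k$ then gives $\varphi^k(\ell(\bss))\in J_{s_k}$ for every $k$. For the converse, I use that $\varphi$ maps each $J_i$ affinely onto $M$ with slope $\pm2$. By induction, the set $\{x\mid\varphi^k(x)\in J_{s_k},\,0\le k<n\}$ is a closed interval of length $2^{-n}$. It contains $\ell(\bss)$, so any $x$ with itinerary $\bss$ equals $\ell(\bss)$.

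For (3) and (4), the key observation is that the itinerary of $x$ is ambiguous exactly at those times $k$ with $\varphi^k(x)=c=J_0\cap J_1$. Since $c\mapsto1\mapsto0\mapsto0\cdots$, a point $x\in\cC$ with $\varphi^k(x)=c$ has codes whose entries after position $k$ are forced to be $1\bar0$, while position $k$ is free. Thus $\ell^{-1}(\cC)$ consists precisely of the sequences $w1\bar0$ with $|w|=k+1\ge1$. Conversely, every such sequence satisfies $\varphi^{|w|}(\ell(w1\bar0))=\ell(1\bar0)=1$, hence $\varphi^{|w|-1}(\ell(w1\bar0))=c$ and $\ell(w1\bar0)\in\cC$. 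Note that $1\bar0$ itself maps to $1\notin\cC$, which is why $|w|\ge1$ is required. On $M_0$ the itinerary is unique, so $\ell:\Sigma_0\to M_0$ is a continuous bijection. Its inverse is measurable because each coordinate is given by $x\mapsto\mathbf 1_{]1/2,1]}(\varphi^k(x))$ on $M_0$.

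For (5), $\Fix(\sigma^t)$ consists of the periodic words $\overline w$ with $|w|=t$. Such a sequence is eventually $\bar0$ only if $w=0\cdots0$, and $\bar0\in\Sigma_0$, so $\Fix(\sigma^t)\subset\Sigma_0$. The semiconjugacy gives $\ell(\Fix(\sigma^t))\subset\Fix(\varphi^t)$. Injectivity of $\ell$ on $\Sigma_0$ and the count $|\Fix(\varphi^t)|=2^t=|\Fix(\sigma^t)|$ then give equality.

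For (6), invariance gives $\PP(\{0\})=\PP(\varphi^{-1}\{0\})=\PP(\{0,1\})$, so $\PP(\{1\})=0$. Then $\PP(\{c\})=\PP(\varphi^{-1}\{1\})=\PP(\{1\})=0$, and for every $n$, $\PP(\varphi^{-n}\{c\})=\PP(\{c\})=0$. Hence $\PP(\cC)=0$ by countable additivity. Consequently $\QQ\coloneq\PP\circ\ell$, defined via the measurable inverse on $M_0$, is a $\sigma$-invariant probability giving full measure to $\Sigma_0$. The map $\ell$ is then a measure-theoretic isomorphism of $(\Sigma_0,\sigma,\QQ)$ onto $(M_0,\varphi,\PP)$ modulo null sets, and isomorphism invariance of Kolmogorov--Sinai entropy \cite[Theorem~4.11]{Walters1982} yields $h_\varphi(\PP)=h_\sigma(\QQ)$. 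I expect the main obstacle to be the bookkeeping in (3): pinning down exactly which coordinates are ambiguous, and checking the edge case $|w|\ge1$ against the point $1\bar0$.
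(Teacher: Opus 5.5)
Your proposal is correct. It shares the paper's skeleton for (2), for the measurability of the inverse in (4), and for the entropy identity in (6), but it handles the non-injectivity of $\ell$ differently.

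The paper works at the level of binary expansions. The ambiguity of $\ell$ is inherited from $j^{-1}(\cC)$, the dyadic points with the two expansions $w0\bar1$ and $w1\bar0$. These are then pushed through the Gray code, where both families land on sequences ending in $1\bar0$. Injectivity of $j$ off these expansions is checked by a direct digit comparison.

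You instead work with itineraries with respect to the Markov partition $\{J_0,J_1\}$. Your nested-interval argument actually proves the ``iff'' in (1), which the paper leaves as a consequence of the construction. Both (3) and the injectivity in (4) then follow from (1): the only ambiguous coordinate sits at the hitting time of $c$, and the tail $1\bar0$ is forced by $c\mapsto1\mapsto0$. This avoids any Gray-code bookkeeping and makes the restriction $|w|\ge1$ transparent. It would also transfer verbatim to other full-branch Markov interval maps. The paper's route is more computational but stays within the explicit coding already set up.

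In (5) you close by counting, which leans on the enumeration $|\Fix(\varphi^t)|=2^t$ established before the lemma. The paper instead observes that periodic points avoid $\cC$ and applies the injectivity from (4), so it is self-contained at that step.

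In (6) your chain $\PP(\{1\})=0\Rightarrow\PP(\{c\})=0$ uses the specific preimage structure $\varphi^{-1}\{0\}=\{0,1\}$ and $\varphi^{-1}\{1\}=\{c\}$. The paper's argument only needs the sets $\varphi^{-n}(\{c\})$ to be pairwise disjoint with equal mass, i.e.\ that $c$ is not periodic.
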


\proof {\bf (1)} Is a direct consequence of the above construction.

\part{(2)} The claim follows from
$$
|j(\bss')-j(\bss)|\le\sum_{k\ge-\log_2d(\bss',\bss)}2^{-k-1}=d(\bss',\bss),
$$
after noticing that $\varkappa$ and its inverse are isometries.

\part{(3)} One has $\ell^{-1}(M\setminus\cC)=\Sigma\setminus\ell^{-1}(\cC)$ and
$\bss\in\ell^{-1}(\cC)$ iff
\beq
\varkappa^{-1}(\bss)\in j^{-1}(\cC)=\{w0\bar1,w1\bar0\mid w\in\Sigma^\ast\},
\label{Eq:jinvc}
\eeq
and hence $\bss\in\{\varkappa(w0\bar1),\varkappa(w1\bar0)\mid w\in\Sigma^\ast\}$
from which the result immediately follows.

\part{(4)} Given~(2), the isometry of $\kappa$ and its inverse mentioned in its
proof, and the identity in~\eqref{Eq:jinvc}, it suffices to show that the map
$j:\Sigma\setminus\{w0\bar1,w1\bar0\mid w\in\Sigma^\ast\}\to M_0$ is injective
with a measurable inverse. To deal with the first property, suppose that
$\bss,\bss'\in\Sigma$ are such that $j(\bss')=j(\bss)$. W.l.o.g.\;we can assume
that for some $n\in\ZZ_+$ one has $d(\bss',\bss)=2^{-n}$, $s'_n=0$ and $s_n=1$. It
follows that
$$
\frac{s_n-s'_n}{2^{n+1}}=\frac1{2^{n+1}}=\sum_{k>n}\frac{s'_k-s_k}{2^{k+1}},
$$
from which one easily concludes that,  for any $k>n$, $s'_k-s_k=1$ and hence
$s'_k=1$ and $s_k=0$. This means that $\{\bss',\bss\}\subset j^{-1}(\cC)$. The
measurability of the inverse map follows at once from the fact that
$\ell(\bss)=x\in M_0$ iff $s_k=1_{]1/2,1]}\circ\varphi^k(x)$.

\part{(5)} The first identity is obvious. The inclusion following it is a direct
consequence of~(3). Since $\Fix(\varphi^t)\cap\cC=\emptyset$, the last assertion
follows from~(4).

\part{(6)} Let $\PP\in\cP_\varphi(M)$ and set $m\coloneq\PP(\{1/2\})$. For $n\in\ZZ_+$,
further set $\cC_n\coloneq\varphi^{-n}(\{1/2\})$. It follows that $\PP(\cC_n)=m$ for
all $n\in\ZZ_+$, and since $\cC$ is the disjoint union of all $\cC_n$
$$
1\ge\PP(\cC)=\sum_{n\in\ZZ_+}\PP(\cC_n)=\sum_{n\in\ZZ_+}m,
$$
which gives  that $m=0$ and hence $\PP(\cC)=0$. Hence,  for any
$\PP\in\cP_\varphi(M)$,  the map $\ell^{-1}:M_0\to\Sigma_0$ induces an
isomorphism between the measurable dynamical systems $(M,\varphi,\PP)$ and
$(\Sigma,\sigma,\PP\circ\ell)$, and the result follows. \hfill\qed

\medskip
It follows from~\eqref{Eq:TentSemiConj} and the previous lemma that  the tent
map $(M,\varphi)$ is a (topological) factor of the full shift $(\Sigma,\sigma)$.
Consequently, the inequalities~\cite[Section~4.4 and
Theorem~9.8~(4)]{Walters1982}
$$
h_\varphi(\PP)\le h_\sigma(\PP\circ\ell^{-1}),\qquad
\fp_\varphi(v)\le\fp_\sigma(v\circ\ell),
$$
hold for all $\PP\in\cP_\varphi(M)$ and all $v\in C(M)$. Since the shift
$\sigma$ is forward expansive and has the specification property~\hS, it
satisfies condition~\hPAP and
$$
\fp_\sigma(v\circ\ell)=\lim_{t\to\infty}\frac1t\log
\sum_{x \in\Fix(\sigma^t)}\e^{v\circ\ell(x)+v\circ\ell\circ\sigma(x)+\cdots
+v\circ\ell\circ\sigma^{t-1}(x)}.
$$
Invoking Lemma~\ref{TentEll}~(5), we can write
\beq
\fp_\varphi(v)\le\fp_{\sigma}(v\circ\ell)
=\lim_{t\to\infty}\frac1t\log\sum_{x \in\Fix(\varphi^t)}\e^{S^tv(x)}.
\label{Tentpressle}
\eeq
\begin{theorem}
\label{TenPAP}
The tent map $(M,\varphi)$ satisfies both the~\hUSCE and the~\hPAP properties.
\end{theorem}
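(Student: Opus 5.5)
The plan is to transfer everything from the full shift $(\Sigma,\sigma)$ to the tent map through the semiconjugacy $\ell$ of~\eqref{Eq:TentSemiConj}, using Lemma~\ref{TentEll}. The central claim is the identity
$$
\fp_\varphi(v)=\fp_\sigma(v\circ\ell)\qquad\text{for all } v\in C(M).
$$
Once it holds, \hPAP follows at once from~\eqref{Tentpressle}: that relation already shows the periodic sums $\sum_{x\in\Fix(\varphi^t)}\e^{S_tv(x)}$ have exponential growth rate $\fp_\sigma(v\circ\ell)$. Finiteness of $\Fix(\varphi^t)$ is explicit, since $|\Fix(\varphi^t)|=2^t$.

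\emph{Step 1: the map $\QQ\mapsto\QQ\circ\ell^{-1}$ is an entropy-preserving bijection.} I will show that it maps $\cP_\sigma(\Sigma)$ onto $\cP_\varphi(M)$ and preserves Kolmogorov--Sinai entropy.
\begin{itemize}
\item Continuity of $\ell$ and~\eqref{Eq:TentSemiConj} give $\QQ\circ\ell^{-1}\in\cP_\varphi(M)$.
\item Conversely, for $\PP\in\cP_\varphi(M)$, Lemma~\ref{TentEll}(6) gives $\PP(\cC)=0$. By Lemma~\ref{TentEll}(4), $\ell:\Sigma_0\to M_0$ is a measurable isomorphism, so $\QQ\coloneq\PP\circ\ell$, i.e.\ the pushforward of $\PP|_{M_0}$ under $\ell^{-1}$, is a $\sigma$-invariant probability on $\Sigma$ with $\QQ\circ\ell^{-1}=\PP$.
\item Lemma~\ref{TentEll}(6) then gives $h_\varphi(\PP)=h_\sigma(\QQ)$.
\end{itemize}
Injectivity needs a little care: a $\sigma$-invariant $\QQ$ could a priori charge $\Sigma\setminus\Sigma_0=\ell^{-1}(\cC)$. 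But this set is $\ell^{-1}$ of a countable union of preimages of $c$, so $\QQ(\ell^{-1}(\cC))=\QQ\circ\ell^{-1}(\cC)=0$ by the same argument applied to $\QQ\circ\ell^{-1}\in\cP_\varphi(M)$. Hence $\QQ$ is determined by its image.

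\emph{Step 2: the pressure identity.} With this bijection, the variational formula~\eqref{PressureDef} gives
$$
\fp_\varphi(v)=\sup_{\PP}\bigl(\langle v,\PP\rangle+h_\varphi(\PP)\bigr)=\sup_{\QQ}\bigl(\langle v\circ\ell,\QQ\rangle+h_\sigma(\QQ)\bigr)=\fp_\sigma(v\circ\ell).
$$
So the inequality in~\eqref{Tentpressle} is an equality, and \hPAP holds.

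\emph{Step 3: \hUSCE.} Finiteness of the topological entropy is~\eqref{eq:Tenthtop}. For upper semicontinuity, take $\PP_n\rightharpoonup\PP$ in $\cP_\varphi(M)$ and set $\QQ_n=\PP_n\circ\ell$. By compactness of $\cP_\sigma(\Sigma)$ I may pass to a subsequence along which $\limsup_n h_\varphi(\PP_n)$ is attained and $\QQ_n\rightharpoonup\QQ'$. Continuity of $\ell$ gives $\QQ'\circ\ell^{-1}=\lim_n\PP_n=\PP$, so injectivity from Step 1 forces $\QQ'=\PP\circ\ell$. Since the full shift is expansive, its entropy map is upper semicontinuous, and therefore
$$
\limsup_n h_\varphi(\PP_n)=\limsup_n h_\sigma(\QQ_n)\le h_\sigma(\QQ')=h_\varphi(\PP).
$$

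I expect the main obstacle to be the bijectivity in Step 1, specifically checking that the measurable inverse from Lemma~\ref{TentEll}(4) really yields a $\sigma$-invariant measure. This needs $\ell^{-1}(\varphi^{-1}A)\cap\Sigma_0=\sigma^{-1}(\ell^{-1}A\cap\Sigma_0)$ up to null sets, which should follow from invariance of $M_0$ and $\Sigma_0$ modulo the countable exceptional sets. The rest is routine.
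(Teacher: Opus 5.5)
Your proof is correct. For \hPAP it follows the paper. The paper also observes that $\QQ\mapsto\QQ\circ\ell^{-1}$ maps $\cP_\sigma(\Sigma)$ onto $\cP_\varphi(M)$, rewrites $\fp_\sigma(v\circ\ell)$ as a supremum over $\PP\in\cP_\varphi(M)$ using $h_\sigma(\PP\circ\ell)=h_\varphi(\PP)$, and concludes via~\eqref{Tentpressle}. Your check that no $\sigma$-invariant measure charges $\ell^{-1}(\cC)$ makes explicit the injectivity that this rewriting uses tacitly. The $\sigma$-invariance of $\PP\circ\ell$ that you flag as the main obstacle is immediate: $\varphi(M_0)\subset M_0$ and injectivity of $\ell$ on $\Sigma_0$ give $\ell(\sigma^{-1}B\cap\Sigma_0)=M_0\cap\varphi^{-1}\bigl(\ell(B\cap\Sigma_0)\bigr)$, and $\PP(\cC)=0$ together with $\varphi$-invariance does the rest.

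The genuine difference is in \hUSCE. The paper proves directly that $\PP_k\rightharpoonup\PP$ in $\cP_\varphi(M)$ implies $\PP_k\circ\ell\rightharpoonup\PP\circ\ell$. By density it tests against cylinder functions $f(\bss)=F(s_0,\ldots,s_{n-1})$. It notes that $f\circ\ell^{-1}$ agrees on $M_0$ with a step function $F\circ S_n$ whose discontinuities lie in the null set $\cC$, and then applies the continuous mapping theorem.

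You obtain the same continuity by a soft argument. The pushforward $\QQ\mapsto\QQ\circ\ell^{-1}$ is a continuous bijection from the compact space $\cP_\sigma(\Sigma)$ onto the Hausdorff space $\cP_\varphi(M)$, and your subsequence argument is exactly the proof that such a map is a homeomorphism. Your route needs nothing about the coding beyond the bijection of Step~1. It therefore transfers unchanged to other continuous semiconjugacies that are Borel isomorphisms off a set null for all invariant measures. The paper's route is more concrete and gets continuity of $\PP\mapsto\PP\circ\ell$ without invoking compactness or injectivity of the pushforward.
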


\proof Since the full shift on $\Sigma$ is forward expansive with expansiveness
constant $r=1/2$,  the map $\QQ\mapsto h_\sigma(\QQ)$ is upper
semicontinuous~\cite[Theorem~8.2]{Walters1982}.  We claim that
$\PP_k\circ\ell\rightharpoonup\PP\circ\ell$ in $\cP_\sigma(\Sigma)$ whenever
$\PP_k\rightharpoonup\PP$ in $\cP_\varphi(M)$. The upper semicontinuity of the
map $\PP\mapsto h_\varphi(\PP)$ then follows from Lemma~\ref{TentEll}~(6). To
prove our claim, we have to show that
$\langle f\circ\ell^{-1},\PP_k\rangle\to\langle f\circ\ell^{-1},\PP\rangle$
for all $f\in C(\Sigma)$. By a density argument, it suffices to consider
functions of the form $f(s)=F(s_0,s_1,\ldots,s_{n-1})$ for $n\ge1$. For $x\in
M_0$, we then have $f\circ\ell^{-1}(x)=F\circ S_n(x)$ where
$$
S_n(x)\coloneq\left(1_{[1/2,1]}(x),1_{[1/2,1]}\circ\varphi(x),\ldots,
1_{[1/2,1]}\circ\varphi^{n-1}(x)\right).
$$
Observing that the function $S_n$ is constant on each
$I_m=]m2^{-n},(m+1)2^{-n}[$, we deduce that the set of discontinuities of
$F\circ S_n$ is $\{m2^{-n}\mid m\in\rrbracket0,2^n\llbracket\}\subset\cC$. Since
$\PP(\cC)=0$, applying~\cite[Theorem~5.2]{Billingsley1968}\footnote{Theorem~5.2 of the 1st edition of Billingsley's
book disappeared from the 2nd edition. It is, however, a simple consequence of~\cite[Theorem~2.7]{Billingsley1999}.}
leads to the desired relation
$$
\langle f\circ\ell^{-1},\PP_k\rangle=\int_M F(S_n(x))\PP_k(\d x)\to
\int_M F(S_n(x))\PP(\d x)=\langle f\circ\ell^{-1},\PP\rangle.
 $$

By Lemma~\ref{TentEll}~(4) $\cP_\varphi(M)$ is the image of $\cP_\sigma(\Sigma)$
by the map $\QQ\mapsto\PP=\QQ\circ\ell^{-1}$. Therefore,
$$
\fp_{\sigma}(v\circ\ell)=\sup_{\QQ\in\cP_\sigma(\Sigma)}
\left(\langle v,\QQ\circ\ell^{-1}\rangle+h_\sigma(\QQ)\right)
=\sup_{\PP\in\cP_\varphi(M)}
\left(\langle v,\PP\rangle+h_\sigma(\PP\circ\ell)\right),
$$
and Lemma~\ref{TentEll}~(6) yields that $\fp_{\varphi}(v)=\fp_{\sigma}(v\circ\ell)$.
The result follows from the identity~\eqref{Tentpressle}.\hfill\qed

\medskip
The tent map thus satisfies all the requirements of
Corollary~\ref{cor:per}.\footnote{Recall that~\hS implies~\hED.} Consequently,
the LDPs of Theorem~\ref{thm:LDP-2} and Corollary~\ref{cor:LDP-1} hold for the
corresponding periodic orbit ensemble. Before discussing a related fluctuation
theorem, we shall discuss some interesting features of the pressure map.

\section{Inducing}
\label{sec:inducing}

This section introduces the inducing scheme which will allow us, through the
representation of the tent map dynamics as a shift over a countable alphabet, to
achieve some control over the pressure $\fp_\varphi$. This will give us the opportunity
to discuss some interesting features of the LDP associated to the periodic orbit
ensemble of the tent map.

Setting $I_t\coloneq{}]2^{-t-1},2^{-t}]$ for $t\in\ZZ_+$ and $I_\infty\coloneq\{0\}$ we have
$$
I_{t+1}=[0,1/2]\cap\varphi^{-1}(I_t),
$$
and so, whenever $x\in I_t$ for some $t\ge1$, one has $\varphi^s(x)\in I_{t-s}$
for $s\in\llbracket0,t\rrbracket$, {\sl i.e.,} $t$ is the first return time of
$x$ to $I_0$.

For $x\in M$, let
$$
\tau(x)\coloneq\sum_{n\in\ZZ_+}1_{I_n}(x)(n+1),
$$
and define $\psi:[0,1[{}\to[0,1[{}$ by setting\footnote{Note that
$\tau(0)=0$, so that $\psi(0)=0$. Note also that $\varphi^{\tau(x)}(x)<1$ for
all $x\in M$. See Figure~\ref{Fig:Tent}.} $\psi(x)\coloneq\varphi^{\tau(x)}(x)$. With
$\MM\coloneq\ZZ_+\cup\{\infty\}$, denote by $\fs$ the left shift on the
subshift
$$
\fS\coloneq\left\{\bst=(t_k)_{k\in\ZZ_+}\in\MM^{\ZZ_+}\,\bigg|\, A_{t_kt_{k+1}}=1
\text{ for all }k\in\ZZ_+\right\}
$$
associated to the transition matrix
$$
A_{st}\coloneq\begin{cases}
1&\text{if }s=0\text{ and }t\in\ZZ_+;\\
1&\text{if }s\in\NN\text{ and }t\in\MM;\\
1&\text{if }s=t=\infty;\\
0&\text{otherwise}.
\end{cases}
$$
Define the $\bst$-code through the map $T:[0,1[{}\to\fS$,
$$
T(x)=(t_k(x))_{k\in\ZZ_+},\qquad
t_k(x)\coloneq\begin{cases}
s\in\ZZ_+&\text{ if }\psi^k(x)\in I_s;\\[6pt]
\infty&\text{ if }\psi^k(x)=0,
\end{cases}
$$
and the map $X:\fS\to[0,1[{}$ by
$$
X(\bst)\coloneq\sum_{k\in\ZZ_+}2^{-(t_0+\cdots+t_k)}(-2)^{-k},
$$
where it is understood that $2^{-\infty}=0$.

The following result states that $T=X^{-1}$ provides a bijective coding which conjugates the
induced map $\psi$ to the left shift $\fs$.
\bep\label{PropInducing}
\ben
\item $T\circ\psi=\fs\circ T$.
\item $\psi\circ X=X\circ\fs$.
\item For any $\bst\in\fS$, one has
$$
\{X(\bst)\}=\bigcap_{k \in\ZZ_+}\psi^{-k}(I_{t_k}).
$$
\item $X\circ T=\Id$, in particular, $X$ is surjective and $T$ is injective.
\item $T\circ X=\Id$, in particular, $T$ is surjective and $X$ is injective.
\item  For $t_0,t_1,\ldots,t_{s-1}\in\ZZ_+$ one has\footnote{Here, as above, $\overline{w}$
denotes the periodic repetition of the word $w$.}
\[
X (\overline{t_0t_1\cdots t_{s-1}})
=\frac{\ds\sum_{r=0}^{s-1}2^{-(t_0+\cdots+t_r)}(-2)^{-r}}{1-2^{-(t_0+\cdots+t_{s-1})}(-2)^{-s}},
\]
and for $u\in\ZZ_+$
\[
X (u\overline{t_0t_1\cdots t_{s-1}})
=2^{-u}\left(1-\frac{1}{2}X (\overline{t_0t_1\cdots t_{s-1}}) \right) .
\]
\een
\eep

\proof {\bf (1)} Is an immediate consequence of the definition of the map $T$.

\part{(2)} Using the crude estimate
$$
|X(\bst)|\le\sum_{k\in\ZZ_+}2^{-k}=2
$$
which holds for $\bst\in\fS$, and observing that the inequality is  strict
whenever $\bst\not=\bar0$ while $X(\bar0)=2/3$, we conclude that $|X(\bst)|<2$.
Invoking twice the relation
\beq
X(\bst)=2^{-t_0}\left(1-\frac{1}{2}X\circ\fs(\bst)\right),
\label{EqXshift}
\eeq
one shows that
\beq
X(\bst)\in2^{-t_0}\left(1-2^{-t_1}{}]0,1[{}\right),
\label{EqX1}
\eeq
from which one easily deduces that $X(\bst)\in I_{t_0}$, provided $t_1\not=0$.
In the case $t_1=0$, Relation~\eqref{EqX1} gives $X(\bst)\in{]}0,2^{-t_0}{[}$.
Since we must have $t_2<\infty$, repeated applications of~\eqref{EqXshift} yield
$X\circ\fs^2(\bst)>0$, $X\circ\fs(\bst)<1$, and $X(\bst)>2^{-t_0-1}$, so that
finally, $X(\bst)\in I_{t_0}$ and hence $\tau\circ X(\bst)=t_0+1$. In
particular, using the fact that
$\varphi^{t_0+1}|_{I_{t_0}}:x\mapsto2^{t_0+1}(2^{-t_0}-x)$, it follows
from~\eqref{EqXshift} that, for any $\bst\in\fS$,
$$
\psi\circ X(\bst)=\varphi^{t_0+1}(X(\bst))=2^{t_0+1}(2^{-t_0}-X(\bst))=X\circ\fs(\bst).
$$

\part{(3)} The last relation  gives $\psi^k\circ X(\bst)=X\circ\fs^k(\bst)\in
I_{t_k}$, and hence
\[
X(\bst)=\sum_{k\in\ZZ_+}2^{-(t_0+\cdots+t_k)}(-2)^{-k}\in\bigcap_{k\in\ZZ_+}\psi^{-k}(I_{t_k}).
\]
To prove the stated equality, we have to show that whenever
\beq
\{x,y\}\subset\bigcap_{k \in\ZZ_+}\psi^{-k}(I_{t_k}),
\label{EqXTlink}
\eeq
we must have $x=y$. If $t_0=\infty$, then $x=y=0$. In the case $t_r=\infty$ for
some $r>0$ with finite $t_1,\ldots,t_{r-1}$, it is easy to conclude that in
$\psi^k(x)=\psi^k(y)=2^{-t_k}$ for $k<r$, and in particular $x=y$. Let us
consider now the case where all $t_k$ are finite. From the fact that
$\{x,y\}\subset I_{t_0}$ we deduce
\beq
|x-y|<2^{-t_0-1},\qquad|\psi(x)-\psi(y)|=2^{t_0+1}|x-y|.
\label{Eqpsi1}
\eeq
In the same way, since $\{\psi(x),\psi(y)\}\subset I_{t_1}$, one has
\beq
|\psi(x)-\psi(y)|<2^{-t_1-1},\qquad|\psi^2(x)-\psi^2(y)|=2^{t_1+1}|\psi(x)-\psi(y)|.
\label{Eqpsi2}
\eeq
Combining the last identity of~\eqref{Eqpsi1} with the first inequality
of~\eqref{Eqpsi2} we conclude that $|x-y|<2^{-t_0-t_1-2}$. Iterating this
argument leads to
\beq
t_k(x)=t_k(y) \text{ for }0\le k<n\quad\Longrightarrow\quad
|x-y|<2^{-t_0-t_1-\cdots-t_{n-1}-n}\le2^{-n},
\label{eq:Nikola}
\eeq
and letting $n\to\infty$ yields again that $x=y$.

\part{(4)} Let $x\in[0,1[{}$. By definition
$\bst=(t_k)_{k\in\ZZ_+}=T(x)$ is such that
\beq
x\in\bigcap_{k \in\ZZ_+}\psi^{-k}(I_{t_k}).
\label{Eqxincap}
\eeq
By Part~(3), $y=X\circ T(x)=X(\bst)$ satisfies
$$
y\in\bigcap_{k \in\ZZ_+}\psi^{-k}(I_{t_k}).
$$
Thus, \eqref{EqXTlink} holds and by the proof of Part~(3) one has $y=x$.

\part{(5)} Let $\bst=(t_k)_{k\in\ZZ_+}\in\fS$. By Part~(3), $x=X(\bst)$
satisfies~\eqref{Eqxincap} and hence $T(x)=\bst$.

\part{(6)}  Write
\[
X(\overline{t_0t_1\cdots t_{s-1}})
=\sum_{n\in\ZZ_+}\sum_{k=ns}^{(n+1)s-1}2^{-(t_0+\cdots+t_k)} (-2)^{-k} ,
\]
with $t_j=t_r$ for $j\in\ZZ_+$, $r\in\llbracket0,s-1\rrbracket$ and $j\equiv r$
(mod $s$). Then, setting $k=ns+r$,
\begin{eqnarray*}
X (\overline{t_0t_1\cdots t_{s-1}})
&=&\sum_{n\in\ZZ_+}\sum_{r=0}^{s-1}
2^{-(t_0+\cdots+t_{s-1})n-(t_0+\cdots+t_r)}(-2)^{-ns-r}\\
&=&\sum_{n\in\ZZ_+}\left[2^{-(t_0+\cdots+t_{s-1})}(-2)^{-s}\right]^n
\sum_{r=0}^{s-1}2^{-(t_0+\cdots+t_r)}(-2)^{-r}\\
&=&\frac{\ds\sum_{k=0}^{s-1}2^{-(t_0+\cdots+t_k)}(-2)^{-k}}
{1-2^{-(t_0+\cdots+t_{s-1})}(-2)^{-s}}
\end{eqnarray*}
yields the first formula. The second one now follows from~\eqref{EqXshift}.
\qed

\medskip
\remark[1] The following construction is left as an
exercise. Given the Gray-code $\bss=\ell^{-1}(x)$ of $x\in M_0$, one gets its
$\bst$-code $\bst=T(x)$ by counting the number of $0$'s before the first $1$ and
the number of $0$'s separating the following $1$'s. For example the $\bss$-code
\[
\ell^{-1}(2^{-1/2})=11101111100001101000101010\cdots
\]
yields
\begin{center}
\includegraphics[scale=0.47]{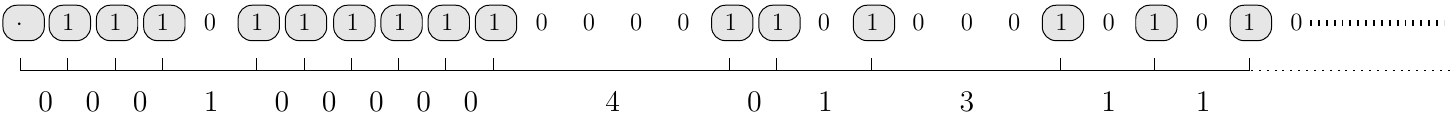}
\end{center}
and hence the corresponding $\bst$-code is $T(2^{-1/2})=00010000401311\cdots$.

\remark[2] From the properties of the intervals $I_t$
($t\in\ZZ_+$) described at the beginning of this section, one easily sees that
the tent map on the non-compact set $M_0$ can also be coded by the {\sl renewal
shift:} the left shift on the space
$$
\fR\coloneq
\left\{\bsu=(u_k)_{k\in\ZZ_+}\in\MM^{\ZZ_+}\,\bigg|\, u_{k+1}=u_k-1\text{ if }u_k>0\right\}.
$$
The $\bsu$-code is defined by
\[
\bsu=U(x)=(u_k(x))_{k\in\ZZ_+},\qquad u_k(x)\coloneq\begin{cases}
s\in\ZZ_+&\text{ if }\varphi^k(x)\in I_s;\\[6pt]
\infty&\text{ if }\varphi^k(x)=0,
\end{cases}
\]
and  is obtained from the $\bst$-code by substituting each $t_k$ by the word
\[t_k(t_k-1)\cdots 10,\]
for example
\[
U(2^{-1/2})=0001000004321001032101010\cdots.
\]

\section{Grand Canonical Picture}
\label{sec:GCP}

Let $v\in C(M)$. Considering the Lebesgue measure $\lambda$ as well as the Dirac
masses $\delta_0$ and $\delta_{2/3}$, Definition~\eqref{PressureDef} and
Formulas~\eqref{eq:Tenthtop}, \eqref{TentRokhlin} yield the bounds
\[
\max(\log 2+\langle v,\lambda\rangle ,v(0),v(2/3))
\le\fp_\varphi(v)\le\max_{x\in M}(\log 2+v(x)).
\]
By Theorem~\ref{TenPAP},
\[
\fp_\varphi(v)=\limsup_{t\to\infty}\frac1t\log
\sum_{x\in\Fix(\varphi^t)}\e^{S_tv(x)},
\]
and since
\[
\sum_{x\in\Fix(\varphi^t)}\e^{S_tv(x)}=\e^{tv(0)}
+\sum_{x\in\Fix(\varphi^t)\setminus\{0\}}\e^{S_tv(x)},
\]
we can write
\begin{equation}
\fp_{\varphi}(v)
=\max\left(v(0),\limsup_{t\to\infty}\frac1t\log
\sum_{x\in\Fix(\varphi^t)\setminus \{0\}}\e^{S_tv(x)}\right) .
\label{TentPressureForm}
\end{equation}
Focusing on the second argument in the above $\max$, let us
denote by $\Orb_t(\varphi)$ the set of $t$-periodic orbits of $\varphi$.
We have
\[
\sum_{x\in\Fix(\varphi^t)}\e^{S_tv(x)}
=\sum_{o\in\Orb_t(\varphi)}|o|\,\e^{\sum_{x\in o}v(x)},
\]
and
\[
\sum_{x\in\Fix(\varphi^t)\cap I_0}\e^{S_tv(x)}
=\sum_{o\in\Orb_t(\varphi)}|o\cap I_0|\,\e^{\sum_{x\in o}v(x)}.
\]
Noticing that $1\le|o\cap I_0|\le|o|\le t$ for any $o\in\Orb_t(\varphi)$
distinct from the orbit of $0$, one easily shows that
\[
\limsup_{t\to\infty}\frac1t\log
\sum_{x\in\Fix(\varphi^t)\setminus\{0\}}\e^{S_tv(x)}=
\limsup_{t\to\infty}\frac1t\log\sum_{x\in\Fix(\varphi^t)\cap I_0}\e^{S_tv(x)}.
\]
To proceed, we further partition the set $\Fix(\varphi^t)\cap I_0$ according to
the total number of visits of the segment $(\varphi^r(x))_{r\in\llbracket
0,t-1\rrbracket}$ to $I_0$, {\sl i.e.,}
\[
\Fix(\varphi^t)\cap I_0=\bigcup_{s=1}^{t}M_{s,t},
\]
where
\[
M_{s,t}\coloneq\left\{\varphi^{\tau(x)-1}(x)\,\bigg|\,x\in\Fix(\psi^s),
\sum_{k=0}^{s-1}\tau\circ\psi^k(x)=t\right\}.
\]
In terms of $\bst$ and $\bsu$-codes, one has
\begin{gather*}
T(M_{s,t})=\left\{\bst=0\overline{t_0\cdots t_{s-1}}\,\bigg|\,\sum_{k=0}^{s-1}t_k=t-s\right\},\\
U(M_{s,t})=\left\{\bsu=0\overline{t_0(t_0-1)\cdots10\cdots t_{s-1}(t_{s-1}-1)\cdots10}\,\bigg|\,\sum_{k=0}^{s-1}t_k=t-s\right\}.
\end{gather*}
Associating to $v$ the function
\beq
\fS\ni\bst=t_0t_1\cdots\mapsto
v^\# (\bst)\coloneq\sum_{r=0}^{t_0} v\circ\varphi^r\circ X(\bst),
\label{hatV}
\eeq
and setting
\beq
S^sv^\#\coloneq\sum_{r=0}^{s - 1}v^\#\circ\fs^r,
\label{TkDef}
\eeq
one easily checks that, for $\bst=0\overline{t_0\cdots t_{s-1}}\in T(M_{s,t})$,
\[
(S_tv)\circ X(\bst)=S^sv^\#(\overline{t_0\cdots t_{s-1}}),
\]
each side of this identity being the sum of the values taken by $v$ on a segment
of length $t$ of the orbit of $X(\bst)$. Thus, we can write
\[
\sum_{x\in\Fix(\varphi^t)\cap I_0}\e^{S_tv (x)}
=\sum_{s=1}^{t}\sum_{\bst\in T(M_{s,t})}\e^{(S_tv)\circ X(\bst)}
=\sum_{s=1}^{t}\sum_{\substack{t_0\ldots,t_{s-1}\in\ZZ_+\\ t_0+\dots+t_{s-1}=t-s}}
\e^{S^sv^\#(\overline{t_0\cdots t_{s-1}})}.
\]
Invoking the Cauchy--Hadamard formula allows us to express the pressure
$\fp_\varphi(v)$ in terms of the radius of convergence $\rho_{\mathrm{gc}}(v)$
of the (grand canonical) partition function
\[
\fP_{\varphi}(v, z)\coloneq\sum_{t=1}^{\infty} z^t
\sum_{x\in\Fix(\varphi^t)\cap I_0}\e^{S_tv(x)},
\]
namely
\beq
\fp_{\varphi}(v)=\max (v(0),-\log\rho_{\mathrm{gc}}(v)) .
\label{pphiForm}
\eeq

To control the function $\fP_{\varphi}$ we shall express it in terms of the
countable Markov shift $(\fS, \fs)$,
\begin{equation}
\begin{aligned}
\mathfrak{P}_{\varphi}(v,z)
&=\sum_{t=1}^{\infty}z^t\sum_{x\in\Fix(\varphi^t)\cap I_0}\e^{S_tv(x)}\\
&=\sum_{t=1}^{\infty}z^t\sum_{s=1}^t\sum_{\bst\in T(M_{t,s})}\e^{S^sv^\#(\bst)}\\
&=\sum_{s=1}^{\infty}z^s\sum_{t=s}^\infty z^{t-s}\sum_{\bst\in T(M_{t,s})}\e^{S^sv^\#(\bst)}\\
&=\sum_{s=1}^{\infty}z^s\,\Xi_s (v,z),
\end{aligned}
\label{GCseries}
\end{equation}
with
\beq
\Xi_s(v,z)\coloneq\sum_{\bst\in\Fix(\fs^s)}z^{t_0+\cdots+t_{s-1}}\e^{S^sv^\#(\bst)}
=\sum_{t_0,\ldots,t_{s-1}\in\ZZ_+}
z^{t_0+\cdots+t_{s-1}}\e^{S^sv^\#(\overline{t_0\cdots t_{s-1}})}.
\label{XIdef}
\eeq

Besides this standard combinatorial device, we need to impose some more stringent
regularity conditions on the potential $v$. The $k$-variation of a function
$f:\fS\to\RR$ is defined as
\beq
\mathrm{Var}_k (f)\coloneq\sup\left\{|f (\bst)-f (\bss)|\,
\big|\,\bst,\bss\in\fS,t_j =s_j\text{ for } 0\le j < k\right\} .
\label{eq:Varkdef}
\eeq
This function has summable variations whenever
\[
\mathrm{Var}(f)\coloneq\sum_{k\ge1}\mathrm{Var}_k (f)<\infty,
\]
and is $\theta$-Hölder, for $\theta\in{]}0,1{[}$, whenever
\[
\sup_{k \ge1}\,\theta^{-k}\mathrm{Var}_k (f)<\infty.
\]
Clearly, the latter condition implies the former.

\bep\label{PropGC}
Suppose that the potential $v\in C(M)$ is such that $v^\#$ has summable variations.
\ben
\item For $s\in\NN$, the radius of convergence of the series~\eqref{XIdef}
is given by $\e^{-v(0)}$. The limit
\beq
p(v,\mu)\coloneq\lim_{s\to\infty}\frac1s\log\Xi_s(v,\e^{-\mu})
\label{rhoDef}
\eeq
exists for $\mu>v(0)$ and defines a non-increasing function of $\mu$.
\item The function $\mu\mapsto p(v,\mu)-\mu$ is strictly decreasing, moreover
\beq
\lim_{\mu\to+\infty}p(v,\mu)-\mu
=\inf_{\mu>v(0)}p(v,\mu)-\mu
=-\infty,
\label{Xikzero}
\eeq
and
\beq
\Delta(v)\coloneq\lim_{\mu\downarrow v(0)}p(v,\mu)-\mu
=\sup_{\mu>v(0)}p(v,\mu)-\mu\in{]}{-}\infty,+\infty].
\label{DeltaDef}
\eeq
\item The following alternative holds:
\begin{itemize}
\item  either
$\Delta(v)\le0$, $\rho_\mathrm{gc}(v)\ge\e^{-v(0)}$, and $\fp_{\varphi}(v)=v(0)$;
\item or $\Delta(v)>0$,
$$
\rho_\mathrm{gc}(v)=\e^{-\mu^\ast}
\coloneqq\inf\{\e^{-\mu}\mid p(v,\mu)-\mu>0\}<\e^{-v(0)},
$$
and $\fp_\varphi(v)=\mu^\ast>v(0)$.
\end{itemize}
\item The function $\kappa\mapsto\Delta(\kappa v)$ is convex and takes the value
$+\infty$ at $\kappa=0$. If $v^\#$ is $\theta$-Hölder for some
$\theta\in{]}0,1{[}$, then
\[
\kappa_\mathrm{c}(v)\coloneqq\sup\{\kappa\ge0\mid\Delta(\kappa v)>0\}\in{]}0,\infty],
\]
and the function $\kappa\mapsto\fp_{\varphi}(\kappa v)$ is real-analytic on the interval
$]0,\kappa_\mathrm{c}(v)[$.
\een
\eep

\remark In the dynamical system literature on countable Markov shifts, $p(v,0)$ is
called \ndex{Gurevich pressure} of the potential $v^\#$ for the induced system
$$
p(v,0)\eqcolon P_G(v^\#).
$$
Thus, the function $\mu\mapsto p(v,\mu)-\mu$, which plays a central role in the
preceding result, is related to the Gurevich pressure by
$$
P_G((v-\mu)^\#)=p(v,\mu)-\mu.
$$
In particular
$$
\Delta(v)=P_G((v-v(0))^\#)
$$
is the so-called {\sl discriminant} of the potential $v$
(see~\cite{Sarig2001,Pesin2006}). In the cases where $\Delta(v)>0$, the
potential $v$ is said to be {\sl positive recurrent.}

\proof {\bf(1)} We first consider the case $s=1$. The radius of convergence of the
series
$$
\Xi_1(v,z)=\sum_{t=0}^\infty z^t \e^{v^\#(\bar t)},
$$
is given by the Cauchy--Hadamard formula
$$
\varrho=\liminf_{t\to\infty}\,\e^{-v^\#(\bar t)/t}.
$$
By~\eqref{hatV} and Proposition~\ref{PropInducing}~(6) we have
\beq
\frac1t v^\#(\bar t)=\frac1t\sum_{s=0}^tv\circ\varphi^s\circ X(\bar t)
=\frac1t\sum_{s=0}^tv\left(\frac{2^{-s}}{1+2^{-t-1}}\right).
\label{EqDropZone}
\eeq
Noticing that
$$
\max_{s\in\llbracket0,t\rrbracket}\left|\frac{2^{-s}}{1+2^{-t-1}}-2^{-s}\right|\le2^{-t},
$$
and invoking the uniform continuity of $v$ and Cesàro's limit theorem we deduce
from~\eqref{EqDropZone} that
\beq
\lim_{t\to\infty}\frac1t v^\#(\bar t)=v(0).
\label{vhashlim}
\eeq
This yields $\varrho=\e^{-v(0)}$, as claimed.

To deal with the general case $s\in\NN$, observe that
\[
\Xi_{s+s'}(v,\e^{-\mu})
=\sum_{\bst\in\ZZ_+^s}\e^{-\mu(t_0+\cdots+t_{s-1})}
\sum_{\bsr\in\ZZ_+^{s'}}\e^{-\mu(r_0+\cdots+r_{s'-1})}
\e^{S^sv^\#(\overline{\bst\bsr})}\e^{S^{s'}v^\#(\overline{\bsr\bst})}.
\]
Since
\[
S^sv^\#(\bar\bst)-\sum_{j=1}^s\mathrm{Var}_j (v^\#)
\le S^sv^\#(\overline{\bst\bsr})
\le S^sv^\#(\bar\bst)+\sum_{j = 1}^s\mathrm{Var}_j(v^\#),
\]
setting $a_s\coloneq\log\Xi_s(v,\e^{-\mu})$ and $c\coloneq2\mathrm{Var}(v^\#)$, we derive
\beq
a_s+a_{s'}-c\le a_{s+s'}\le a_s+a_{s'}+c,
\label{SubSuper}
\eeq
from which one easily infers that
$$
\Xi_1(v,\e^{-\mu})^s\e^{-c(s-1)}\le\Xi_s(v,\e^{-\mu})
\le\Xi_1(v,\e^{-\mu})^s\e^{c(s-1)},
$$
and hence that the radius of convergence of the series~\eqref{XIdef} is equal to
$\e^{-v(0)}$ for all $s\in\NN$. The existence of the limit~\eqref{rhoDef}
follows from the sub/super-additivity properties expressed in ~\eqref{SubSuper}.
Fekete's lemma~\cite[Part~I, Problem~99]{Polya1978} provides the explicit
formulas
$$
p(v,\mu)
=\inf_{s\in\NN}\frac1s\left(\log\Xi_s(v,\e^{-\mu})+c\right)
=\sup_{s\in\NN}\frac1s\left(\log\Xi_s(v,\e^{-\mu})-c\right).
$$
Thus
\beq
\left|\log\Xi_s(v,\e^{-\mu})-sp(v,\mu)\right|\le c,
\label{vege}
\eeq
and it follows from~\eqref{GCseries} that, for $|z|=\e^{-\mu}<\rho_{\mathrm{gc}}(v)$,
\beq
\frac{\e^{-c}}{1-\e^{p(v,\mu)-\mu}}
\le\left|\mathfrak{P}_{\varphi}(v,z)\right|\le\frac{\e^{c}}{1-\e^{p(v,\mu)-\mu}}.
\label{eq:Pbrack}
\eeq
As a direct consequence of~\eqref{XIdef}, the (positive) function
$\mu\mapsto\Xi_s(v,\e^{-\mu})^{1/s}$ is non-increasing.
By Definition~\eqref{rhoDef}, the same is true of $\mu\mapsto p(v,\mu)$,

\part{(2)} Given the latter property, the claimed strict monotony is obvious.
Invoking~\eqref{vege}, the simple fact that $\Xi_s(v,0)^{1/s}=\e^{v(2/3)}$
yields
$$
\lim_{\mu\to\infty}p(v,\mu)\le
\lim_{\mu\to\infty}\frac1s\log\Xi_s(v,\e^{-\mu})+\frac{c}s
=\frac1s\log\Xi_s(v,0)+\frac{c}s=v(2/3)+\frac{c}s,
$$
which proves~\eqref{Xikzero}. Starting again with~\eqref{vege}, and keeping only
the first term in the series $\Xi_1$, we derive
$$
p(v,\mu)-\mu\ge \log\Xi_1(v,\e^{-\mu})-\mu-c\ge v(2/3)-\mu-c,
$$
which, in the limit $\mu\downarrow v(0)$, proves~\eqref{DeltaDef}.

\part{(3)} Suppose that $\Delta(v)\le0$, fix $\mu>v(0)$ so that
$p(v,\mu)-\mu=-\delta<0$, and notice that it follows from~\eqref{eq:Pbrack} that
$$
\sup_{|z|<\e^{-\mu}}\left|\mathfrak{P}_{\varphi}(v,z)\right|
\le\frac{\e^{c}}{1-\e^{-\delta}}<\infty,
$$
from which we infer that $\rho_{\mathrm{gc}}(v)\ge\e^{-\mu}$.
Letting $\mu\downarrow v(0)$ gives $\rho_{\mathrm{gc}}(v)\ge\e^{-v(0)}$
and~\eqref{pphiForm} allows us conclude that $\fp_{\varphi}(v)=v(0)$.

In the case  $\Delta(v)>0$,  $\mu^\ast=\sup\{\mu\mid v(0)<\mu< p(v,\mu)\}>v(0)$
and for any $v(0)<\mu<\mu^\ast$ one has $p(v,\mu)-\mu>0$ and therefore
\[
\limsup_{s\to\infty}\left(\e^{-\mu s}\Xi_s(v,\e^{-\mu})\right)^{1/s}=\e^{p(v,\mu)-\mu}>1.
\]
By Cauchy's root test, the grand canonical partition function~\eqref{GCseries} diverges for
$|z|\ge\e^{-\mu}$. Letting $\mu\uparrow\mu^\ast$, we derive
$\rho_\mathrm{gc}(v)\le\e^{-\mu^\ast}$. Reciprocally, if $\mu>\mu^\ast$, then
\[
p(v,\mu)-\mu<0,
\]
and hence
\[
\limsup_{s\to\infty}\left(\e^{-\mu s}\Xi_s(v,\e^{-\mu})\right)^{1/s}<1.
\]
Invoking again the root test, we conclude that~\eqref{GCseries} converges for
$|z|\le\e^{-\mu}$. Letting $\mu\downarrow\mu^\ast$, we derive
$\rho_\mathrm{gc}(v)\ge\e^{-\mu^\ast}$.

\part{(4)} For reasons of space, we only sketch the proof. The reader should
consult~\cite[Theorem~5]{Sarig2001} and references therein for more details.

For $\kappa=0$, an elementary calculation gives $p(0,\mu)-\mu=-\log(\e^\mu-1)$,
from which we conclude
$$
\Delta(0)=\sup_{\mu>0}\log\frac1{\e^\mu-1}=+\infty.
$$
By~\eqref{rhoDef} and~\eqref{DeltaDef}, we have
\[
\Delta(\kappa v)=\sup_{\nu>0}\lim_{s\in\NN}
\left(\frac1s\log\Xi_s(\kappa v,\e^{-\nu-\kappa v(0)})-\nu-\kappa v(0)\right).
\]
Observing that
$\Xi_s(\kappa v,\e^{-\nu-\kappa v(0)})=\e^{s\kappa v(0)}\Xi_s(\kappa(v-v(0)),\e^{-\nu})$
yields
$$
\Delta(\kappa v)
=\sup_{\nu>0}\lim_{s\in\NN}\left(\frac1s\log\Xi_s(\kappa(v-v(0)),\e^{-\nu})-\nu\right).
$$
From~\eqref{XIdef} we infer that the function
$\kappa\mapsto\Xi_s(\kappa(v-v(0)),\e^{-\nu})$ is log-convex, so that the
previous identity immediately implies the convexity of
$\kappa\mapsto\Delta(\kappa v)$. We also note, for later reference, that this
function is non-increasing provided $\max_{x\in M}v(x)=v(0)$.

By~\eqref{vhashlim}, given $\varepsilon>0$ there is $N$ such that, for any
$n\ge N$ and $\bst\in\fS$ one has
\begin{align*}
\mu n-\kappa v^\#(n\bst)&\ge\mu n-\kappa v^\#(\bar n)-\kappa\mathrm{Var}_1(v^\#)\\
&\ge\left(\mu-\kappa v(0)\right)n-\left|\frac1n v^\#(\bar n)-v(0)\right|n-\kappa\mathrm{Var}_1(v^\#)\\
&\ge\left(\mu-\kappa v(0)-\varepsilon\right)n-\kappa\mathrm{Var}_1(v^\#).
\end{align*}
It follows that the Ruelle--Perron--Frobenius transfer operator
$L_{\kappa,\mu}$, defined on $C_b(\fS)$ by
\[
(L_{\kappa,\mu} f)(\bst)\coloneq\sum_{n\in\ZZ_+}\e^{-\mu n+\kappa v^\#(n\bst)}f(n\bst),
\]
satisfies
$$
\|L_{\kappa,\mu}1\|_\infty<\infty
$$
provided $\mu>\kappa v(0)$. It $v^\#$ is $\theta$-Hölder and  $\mu>\kappa v(0)$,
we deduce that  $L_{\kappa,\mu}$ acts as a bounded operator on the Banach space
\[
\cL_\theta\coloneq\left\{
f\in C_b(\fS)\,\bigg|\,\|f\|_\theta<\infty\right\},
\]
where 
\[
\|f\|_\theta=\|f\|_\infty
+\sup_{\bst\not=\bss}\theta^{-\inf\{k\in\ZZ_+\mid t_k\not=s_k\}}|f(\bst)-f(\bss)|.
\]
Since, for any $\bst\in\fS$ and $s\in\NN$,
\begin{align*}
  (L^s_{\kappa,\mu}1)(\bst)&=\sum_{t_0,\ldots,t_{s-1}\in\ZZ_+}
\e^{-\mu(t_0+\cdots+t_{s-1})+\kappa S^sv^\#(t_0\cdots t_{s-1}\bst)}\\
&=\Xi_s(\kappa v,\e^{-\mu})(1+o(s)),
\end{align*}
the Perron--Frobenius theorem yields that the logarithm of the spectral radius
of the  operator $L_{\kappa,\mu}$ is
\[
p(\kappa v,\mu)=\limsup_{n\to\infty}\frac1n\log(L_{\kappa,\mu}^n1)(\bst).
\]
The crucial point is that one can show that $\e^{p(\kappa,\mu)}$ is a simple
discrete eigenvalue of  $L_{\kappa,\mu}$, so that the claimed analyticity
follows from regular perturbation theory.

\qed

\section{Phase Transitions}
\label{sec:PhaseTransitions}

Whenever the critical coupling $\kappa_\mathrm{c}(v)$ is finite,
Proposition~\ref{PropGC} implies that the function
$\kappa\mapsto\fp_\varphi(\kappa v)$ is real-analytic on
${]}0,\kappa_\mathrm{c}(v){[}\cup{]}\kappa_\mathrm{c}(v),\infty{[}$, but fails
to be analytic at $\kappa=\kappa_\mathrm{c}(v)$. In thermodynamics, such a
phenomenon is related to phase transition and is accompanied by a multiplicity
of the equilibrium states. In this section we elaborate on  the occurrence of
such a phase transition for the tent map.

To this end, we shall consider lower and upper bounds on the partition function.
For $v\in C(M)$ and $t\in\ZZ_+$, let
\beq
v_-(t)\coloneq\inf_{x\in I_t}v(x),\qquad
v_+(t)\coloneq\sup_{x\in I_t}v(x).
\label{eq:wpmDef}
\eeq From the definition~\eqref{hatV} and Proposition~\ref{PropInducing} we
deduce, for $\bst\in\fS$,
\[
\underline{v}(t_0)\coloneq\sum_{s=0}^{t_0}v_-(s)\le v^\#(\bst)
\le\sum_{s=0}^{t_0}v_+(s)\eqcolon\overline{v}(t_0),
\]
and hence, by~\eqref{TkDef},
\[
\sum_{j=0}^{s-1} \underline{v}(t_j)
\le S^sv^\#(\bst)\le\sum_{j=0}^{s-1}\overline{v}(t_j).
\]
By Definition~\eqref{XIdef},
\[
\left(\sum_{t\in\ZZ_+}\e^{-\mu(t+1)+\kappa\underline{v}(t)}\right)^s
\le\e^{-\mu s}\Xi_s(\kappa v,\e^{-\mu})
\le\left(\sum_{t\in\ZZ_+}\e^{-\mu(t+1)+\kappa\overline{v}(t)}\right)^s,
\]
and we derive from~(\ref{rhoDef}--\ref{DeltaDef}) that
$$
\log\left(\sum_{t\in\ZZ_+}\e^{-\mu(t+1)+\kappa\underline{v}(t)}\right)
\le p(\kappa,\mu)-\mu
\le\log\left(\sum_{t\in\ZZ_+}\e^{-\mu(t+1)+\kappa\overline{v}(t)}\right).
$$
In particular, setting $w=v-v(0)$, we obtain
\begin{equation}
\log\left (\sum_{t\in\ZZ_+}\e^{\kappa\underline{w}(t)}\right)
\le\Delta(\kappa)
\le\log\left(\sum_{t\in\ZZ_+}\e^{\kappa\overline{w}(t)} \right) .
\label{DeltaBracket}
\end{equation}
The following are elementary consequences of these inequalities.
\bel\label{lem:phaseornot}
If\, $\kappa_\mathrm{c}(v)<\infty$, then
\[
\lim_{t\to\infty}\underline{v}(t)=-\infty.
\]
Moreover, if\/ $\overline{v}(t)<0$ for any $t\in\ZZ_+$ and
$$
\sum_{t\in\ZZ_+}\e^{\kappa\overline{v}(t)}<\infty
$$
for some $\kappa>0$, then $\kappa_\mathrm{c}(v)<\infty$.

\eel

\section{A Map of the Square}
\label{sec:square}

One easily checks that the map of the unit interval defined by
$$
x=0.b_0b_1b_2b_3\cdots\mapsto \theta(x)\coloneq0.\hat b_0b_1\hat b_2b_3\cdots,
$$
is involutive and satisfies $\varphi\circ\theta=\theta\circ\varphi$.
However, this map is discontinuous on the dense set $\cC$.
In fact, there is no non-trivial {\sl continuous} involution
$\theta$ of the unit interval satisfying Assumption~\hSym for the tent map.
Indeed, injectivity forces such a $\theta$ to be either strictly increasing or
strictly decreasing. The only increasing involution is the identity, and hence
is trivial.\footnote{A continuous $\theta:[0,1]\to[0,1]$ is involutive iff its
graph is symmetric w.r.t.\;the diagonal.}\;If $\theta$ is a decreasing
involution, then surjectivity demands $\theta(0)=1$ and since
$$
\varphi\circ\theta(0)=\varphi(1)=0\neq1=\theta(0)=\theta\circ\varphi(0),
$$
Assumption~\hSym fails.\footnote{One faces a similar issue in the study of Anosov systems, see~\cite[Section~6]{Maes2003b}}

Thus, even though the tent map satisfies all the requirements for
the results of Section~\ref{sec:FT for Periodic Ensemble} to
apply,\footnote{Recall the discussion at the end of
Section~\ref{Ssec:TentSymbolic}.} the only possible application turns out to
be trivial. To reach a non-trivial conclusion, we shall consider the map of the
unit square $M\times M$ defined by $\phi(x,y)=(\varphi(x),\varphi(y))$, for which all
the requirements of Corollary~\ref{cor:per} also hold. In this case,
Assumption~\hSym is verified by the involution
$$
\theta(x,y)\coloneq(y,x).
$$
Given $v_1,v_2\in C(M)$, the potential
$$
v(x,y)\coloneq v_1(x)+ v_2(y),
$$
and given the associated periodic orbit ensemble, the empirical measure
$$
\mu_t^{x,y}\coloneq\frac1t\sum_{0\le s< t}\delta_{\phi^s(x,y)},
$$
satisfies the LDP with rate
$$
\II(\QQ)=\fp_\phi(v)-\langle v,\QQ\rangle-h_\phi(\QQ).
$$
The conclusions of Theorem~\ref{thm:per FT}, and in particular the FR
$$
\II(\widehat{\QQ})=\II(\QQ)+\ep(\QQ),
$$
hold, with the entropy production
$$
\ep(\QQ)=\int(v_1-v_2)(x)\QQ(\d x,\d y)-\int(v_1-v_2)(y)\QQ(\d x,\d y).
$$
Since $\Fix(\phi^t)=\Fix(\varphi^t)\times\Fix(\varphi^t)$, one has
$$
\fp_\phi(v)=\lim_{t\to\infty}\frac1t\log\sum_{x,y\in\Fix(\varphi^t)}\e^{S_tv_1(x)+S_tv_2(y)}
=\fp_\varphi(v_1)+\fp_\varphi(v_2),
$$
from which we deduce that the entropic pressure is given by
$$
e(\alpha)=\fp_\varphi((1-\alpha)v_1+\alpha v_2)+
\fp_\varphi((1-\alpha)v_2+\alpha v_1)
-\fp_\varphi(v_1)-\fp_\varphi(v_2).
$$

\section{Bracketing the Critical Point and the Pressure}

We finish this chapter with an analysis of a specific potential
exhibiting phase transitions, {\sl i.e.,} singularities of the pressure function.
Numerical calculation of the critical point $\kappa_c$ and of the pressure is a
delicate matter. Rigorous estimates on these quantities are more tractable.
It is fairly easy to get lower and upper bounds on the partition function and
to derive from these estimates bounds on the critical point and the pressure.

The \ndex{Hurwitz zeta function} is defined by the convergent expansion
\begin{equation}
\zeta(p,z)\coloneq\sum_{n\in\ZZ_+}(n+z)^{-p},\qquad
\Re p>1, \Re z> 0,
\label{HurwitzDef}
\end{equation}
and extends to an analytic function on
$(p,z)\in\CC^2\setminus(\{1\}\times{]}{-}\infty,0{]})$. The special value
$\zeta(p,1)$ coincides with \ndex{Riemann zeta function} $\zeta(p)$ on
$\CC\setminus\{1\}$. The functions
\begin{equation}
D_-(\kappa,\mu)\coloneq\sum_{n\in\ZZ_+}\e^{-\mu(n+1)-\kappa(\zeta(p)-\zeta (p, n + 3)-1)}-1, \label{DminusDef}
\end{equation}
\begin{equation}
D_+(\kappa,\mu)\coloneq\sum_{n\in\ZZ_+}\e^{- \mu(n+1)-\kappa(\zeta (p)-\zeta (p,n+2))}-1, \label{DplusDef}
\end{equation}
play the central role in

\begin{lemma}
Let the potential $v\in C (M)$ be given by
\begin{equation}
v(x)\coloneq-(1-\log_2x)^{-p},
\label{Vdef}
\end{equation}
with constant $p \in] 0, 1 [$.
\ben
\item The induced function $v^\#$ defined in~\eqref{hatV} is $\theta$-Hölder for
$\theta\ge1/2$, and hence has summable variations, more precisely
\[
\mathrm{Var}_k(v^\#)\le\frac{2p}{\log2}\zeta(p+1,2)2^{-k},\qquad
\mathrm{Var}(v^\#)\le\frac{3}{2}\frac{p}{\log2}\zeta(p+1,2).
\]
\item The function $(\kappa,\mu)\mapsto D_{\pm}(\kappa,\mu)$ is
analytic in $\cO\coloneq\{(\kappa,\mu)\in \CC^2\mid\Re(\mu)>0\}$ and continuous on
$\overline{\cO}\cap\{(\kappa,\mu)\in \CC^2\mid\Re(\kappa)>0\}$. The function
$\kappa\mapsto D_{\pm}(\kappa,0)$ is strictly decreasing on ${]}0,\infty{[}$,
diverges to $+\infty$ as $\kappa\downarrow 0$ and tends to $-1$ as $\kappa\rightarrow+\infty$.
In particular, it has a unique zero $\kappa_{\pm}(p)\in{]}0,\infty{[}$.
For fixed $\kappa<\kappa_{\pm}(p)$, the function
$\mu\mapsto D_{\pm}(\kappa,\mu)$ is strictly decreasing on ${]}0,\infty{[}$, tends to
$D_{\pm}(\kappa,0)>0$ as $\mu\downarrow0$ and to $-1$ as $\mu\rightarrow+\infty$.
In particular, it has a unique zero $\mu_{\pm}(p,\kappa)\in{]}0,\infty{[}$.
\item The following estimates hold for the critical coupling strength
$\kappa_c(p)$ and the pressure $\fp_{\varphi}(\kappa V)$,
\[
\kappa_-(p)\le\kappa_c(p)\le\kappa_+(p),
\]
\[
\min(\mu_-(p,\kappa),\mu_+(p,\kappa))
\le\fp_{\varphi}(\kappa v)\le\max(\mu_-(p,\kappa),\mu_+(p,\kappa))
\]
for $\kappa\le\kappa_c(p)$.
\een
\end{lemma}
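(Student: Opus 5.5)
The plan is to reduce everything to the bracketing inequalities \eqref{DeltaBracket} and their $\mu$-dependent version established just before Lemma~\ref{lem:phaseornot}, once $v_\pm(t)$ from \eqref{eq:wpmDef} are computed explicitly. For $x\in I_t={]}2^{-t-1},2^{-t}{]}$ one has $1-\log_2x\in[t+1,t+2{[}$. Since $v$ is increasing, this gives $v_-(t)=-(t+1)^{-p}$ and $v_+(t)=-(t+2)^{-p}$ (the latter as a supremum). Note also $v(0)=0$, so $w=v$.

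\textbf{Step 1: identify $\underline v$ and $\overline v$ with the exponents in $D_\pm$.} Summing over $s\in\llbracket0,t\rrbracket$ gives
\[
\underline v(t)=-\sum_{m=1}^{t+1}m^{-p}=-\bigl(\zeta(p)-\zeta(p,t+2)\bigr),\qquad
\overline v(t)=-\sum_{m=2}^{t+2}m^{-p}=-\bigl(\zeta(p)-\zeta(p,t+3)-1\bigr).
\]
These identities use $\zeta(p)-\zeta(p,N+1)=\sum_{m=1}^Nm^{-p}$, which holds for $\Re p>1$ and extends to $p\in{]}0,1{[}$ by analytic continuation. The displayed bracket then becomes
\[
\log\bigl(1+D_+(\kappa,\mu)\bigr)\le p(\kappa v,\mu)-\mu\le\log\bigl(1+D_-(\kappa,\mu)\bigr),
\]
with the roles of $D_\pm$ to be matched to the lower and upper sums by inspection. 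In particular $D_+\le D_-$ here, and the ordering of $\kappa_\pm(p)$ must be read off accordingly.

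\textbf{Step 2: Part (1), variations of $v^\#$.} Take $\bst,\bss$ agreeing in their first $k$ symbols and set $x=X(\bst)$, $y=X(\bss)$. By \eqref{eq:Nikola}, $|x-y|<2^{-(t_0+\cdots+t_{k-1})-k}\le2^{-t_0-k}$. For $r\le t_0$, $\varphi^r$ is affine with slope $\pm2^r$ on $I_{t_0}$ and maps $x,y$ into $I_{t_0-r}$. On that interval
\[
|v'(z)|=\frac{p\,(1-\log_2z)^{-p-1}}{z\log2}\le\frac{p}{\log2}\,2^{t_0-r+1}(t_0-r+1)^{-p-1}.
\]
Hence
\[
|v^\#(\bst)-v^\#(\bss)|\le\sum_{r=0}^{t_0}\frac{p}{\log2}\,2^{t_0+1}(t_0-r+1)^{-p-1}\,2^{-t_0-k},
\]
which yields the stated $2^{-k}$ bound up to the constant. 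To get exactly $\zeta(p+1,2)$, and then the constant $\tfrac32$ in the bound on $\mathrm{Var}(v^\#)$, I would bound the $r=t_0$ term and $\mathrm{Var}_1$ separately and more carefully.

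\textbf{Step 3: Part (2), properties of $D_\pm$.} Since $\sum_{m\le n}m^{-p}\sim n^{1-p}/(1-p)\to\infty$, the $n$-th term of each series is bounded by $\e^{-\Re\mu(n+1)}\e^{-c\Re\kappa\,n^{1-p}}$ for some $c>0$. Locally uniform convergence then gives analyticity for $\Re\mu>0$, and continuity up to $\Re\mu=0$ when $\Re\kappa>0$. Each term is strictly decreasing in $\kappa>0$ and in $\mu$. Monotone convergence as $\kappa\downarrow0$ gives divergence, since the sum of $\e^{0}=1$ over $n$ is infinite. Dominated convergence as $\kappa\to\infty$ or $\mu\to\infty$ gives the limit $-1$. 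The intermediate value theorem then produces the unique zeros $\kappa_\pm(p)$ and $\mu_\pm(p,\kappa)$.

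\textbf{Step 4: Part (3), bounds on $\kappa_c$ and $\fp_\varphi(\kappa v)$.} Letting $\mu\downarrow0$ and using the continuity from Step 3, $\Delta(\kappa v)$ is bracketed between $\log(1+D_\pm(\kappa,0))$. Hence $\Delta(\kappa v)>0$ when both $D_\pm(\kappa,0)>0$, and $\Delta(\kappa v)\le0$ when both are $\le0$. This places $\kappa_c$ between $\kappa_-(p)$ and $\kappa_+(p)$. For the pressure with $\kappa<\kappa_c$, Proposition~\ref{PropGC}(3) gives $\fp_\varphi(\kappa v)=\mu^\ast$. By the strict monotonicity in Proposition~\ref{PropGC}(2), $\mu^\ast$ is the zero of $\mu\mapsto p(\kappa v,\mu)-\mu$. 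The bracket functions are strictly decreasing in $\mu$ with zeros $\mu_\pm$, so evaluating the bracket at $\mu_\pm$ forces $\mu^\ast\in[\min\mu_\pm,\max\mu_\pm]$.

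\textbf{Main obstacle.} I expect the hardest part to be the regime between $\kappa_c$ and the larger of $\kappa_\pm(p)$. There one of $D_\pm(\kappa,0)$ is nonpositive, so the corresponding $\mu_\pm$ is not defined and the pressure is $v(0)=0$; the statement must be interpreted with care there, and similarly at the boundary $\kappa=\kappa_c$, which needs a continuity argument. The second delicate point is getting the precise constants in Part~(1).
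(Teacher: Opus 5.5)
Your route is the same as the paper's. You compute $v_\pm$, match $\underline v,\overline v$ with the exponents of $D_\pm$ via $\zeta(p)-\zeta(p,m+1)=\sum_{k\le m}k^{-p}$, feed this into \eqref{DeltaBracket} and its $\mu$-dependent version, and use a majorant for $D_\pm$. Your matching $1+D_+(\kappa,\mu)=\sum_n e^{-\mu(n+1)+\kappa\underline v(n)}$ and $1+D_-(\kappa,\mu)=\sum_n e^{-\mu(n+1)+\kappa\overline v(n)}$ is correct, but you should state its consequence outright rather than ``read off'' the ordering. For $\kappa>0$ one has $D_+(\kappa,0)<D_-(\kappa,0)$, which forces $\kappa_+(p)<\kappa_-(p)$. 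So what you prove is $\kappa_+(p)\le\kappa_c(p)\le\kappa_-(p)$, and the displayed $\kappa_-(p)\le\kappa_c(p)\le\kappa_+(p)$ is false with the given definitions. The paper's proof writes $\log(1+D_-)\le\Delta\le\log(1+D_+)$, which swaps the labels.

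This ordering also shows where your Step~4 breaks. For $\kappa_+(p)\le\kappa<\kappa_c(p)$ (and likewise at $\kappa=\kappa_c$) you have $D_+(\kappa,\mu)<D_+(\kappa,0)\le0$ for every $\mu>0$. Hence $\mu_+(p,\kappa)$ does not exist, and evaluating the bracket yields only $\fp_\varphi(\kappa v)\le\mu_-(p,\kappa)$. The lower bound must be replaced by the trivial $\fp_\varphi(\kappa v)\ge v(0)=0$, i.e.\ one sets $\mu_+:=0$ there. Your Step~4 tacitly uses both zeros for all $\kappa<\kappa_c$. Your ``main obstacle'' places the problem on $[\kappa_c,\max\kappa_\pm]$, which lies outside the range $\kappa\le\kappa_c$ of the statement. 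Step~4 also needs two small facts you do not mention. First, $\kappa\mapsto\Delta(\kappa v)$ is non-increasing (proof of Proposition~\ref{PropGC}(4), since $v\le v(0)=0$); this is what gives $\Delta(\kappa v)>0$ for \emph{every} $\kappa<\kappa_c$. Second, the convex map $\mu\mapsto p(\kappa v,\mu)$ is continuous, which you need in order to call $\mu^\ast$ a zero.

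In Part~(1), your bound $\mathrm{Var}_k(v^\#)\le\frac{2p}{\log2}\zeta(p+1)2^{-k}$ is rigorous and already gives the H\"older and summable-variation claims. The sharpening you plan cannot succeed for all $p$, because the stated constants are false for $p$ close to $1$. For $x\in I_{t_0}$ one has $\varphi^r(x)=2^rx$ for $r\le t_0$, so on $I_{t_0}$
\[
v^\#\circ T(x)=-\sum_{r=0}^{t_0}(1-\log_2x-r)^{-p}.
\]
This ranges from $-\sum_{m=1}^{t_0+1}m^{-p}$ at $x=2^{-t_0}$ up to the supremum $-\sum_{m=2}^{t_0+2}m^{-p}$ as $x\downarrow2^{-t_0-1}$. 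Hence $\mathrm{Var}_1(v^\#)\ge\sup_{t_0}\bigl(1-(t_0+2)^{-p}\bigr)=1$. By contrast, $\frac{p}{\log2}\zeta(p+1,2)\to(\pi^2/6-1)/\log2\approx0.93$ as $p\uparrow1$.

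The paper reaches $\zeta(p+1,2)$ through a convexity claim giving the Lipschitz constant $|v'(2^{-t-1})|$ on $I_t$. That claim fails on $I_0$, where $|v'(1)|=p/\log2>2^{-p}p/\log2=|v'(1/2)|$. Its own summation also produces the factor $2$, not $\frac32$, in the bound on $\mathrm{Var}(v^\#)$. So the constant $\zeta(p+1)$ you obtain is what this argument can actually deliver.

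Two minor points:
\begin{itemize}
\item $v$ is decreasing, not increasing, on $]0,1]$, although your values of $v_\pm$ are correct.
\item In Step~3 your majorant is not a bound when $\Re\kappa<0$. Use that the exponents grow only like $n^{1-p}$, so the factor $e^{-(n+1)\Re\mu}$ still dominates.
\end{itemize}
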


\proof {\bf (1)} By~\eqref{hatV} and~\eqref{eq:Varkdef} one has
\[
\mathrm{Var}_k(v^\#)\le\sup\left\{\sum_{q=0}^{t_0(x)}
|v\circ\varphi^q(x)-v\circ\varphi^q(y)|
\,\bigg|\, x,y\in M, t_j(x)=t_j(y)\text{ for } 0\le j< k\right\}.
\]
Taking into account the convexity of $v$, we can write
\[
\sup_{x,y\in I_t}\frac{|v(x)-v(y)|}{|x-y|}\le|v'(2^{-t-1})|
=\frac{p}{\log 2}\frac{2^{t+1}}{(t+2)^{p+1}},
\]
so that, for $t_0(x)=t_0(y)$ and $0\le q\le t_0(x)$,
$$
|v\circ\varphi^q(x)-v\circ\varphi^q(y)|
\le\frac{p}{\log 2}\frac{2^{t_0(x)-q+1}}{(t_0(x)-q+2)^{p+1}}
|\varphi^q(x)-\varphi^q(y)|.
$$
For $x,y\in M$ such that $t_j(x)=t_j(y)$ for $0\le j<k$,
taking~\eqref{eq:Nikola} into account, we derive
\begin{align*}
|v\circ\varphi^q(x)-v\circ\varphi^q(y)|
&\le\frac{p}{\log 2}\frac{2^{t_0(x)-q+1}}{(t_0(x)-q+2)^{p+1}}2^{-(t_0(x)-q+t_1(x)+\cdots+t_{k-1}(x)+k)}\\
&\le\frac{2p}{\log 2}\frac{1}{(t_0(x)-q+2)^{p}}2^{-k},
\end{align*}
and hence
\[
\sum_{q=0}^{t_0(x)}|v\circ\varphi^q(x)-v\circ\varphi^q(y)|
\le\frac{2p}{\log2}2^{-k}\sum_{q=2}^{t_0(x)+2}\frac{1}{q^{p+1}}\le\frac{2p}{\log2}\zeta(p+1,2)2^{-k},
\]
{\sl i.e.,}
\[
\mathrm{Var}_k(v^\#)\le\frac{2p}{\log2}\zeta(p+1,2)2^{-k}.
\]
Summing over $k\ge1$ yields
\[
\mathrm{Var}(v^\#)\le\frac{2p}{\log2}\zeta(p+1,2).
\]

\part{(2)} We consider $D_+$, the analysis of $D_-$ is similar.
The relation
\begin{equation}
\zeta(p)-\zeta(p,m+1)=\sum_{k=1}^\infty k^{-p}-\sum_{k=0}^\infty(k+m+1)^{-p}
=\sum_{k=1}^m k^{-p},
\label{ZetaForm}
\end{equation}
which, in view of~\eqref{HurwitzDef}, holds for $\Re p>1$ and $m\ge1$,
extends by analyticity to any $p\in\CC\setminus\{1\}$ and $m\ge0$. Thus
\[
\phi_n\coloneq\zeta(p)-\zeta(p,n+2)=\sum_{k=1}^{n+1}k^{-p}
\ge\int_1^{n+2}x^{-p}\d x=\frac{(n+2)^{1-p}-1}{1-p},
\]
and the series in Definition~\eqref{DplusDef} admits the majorant
\[
\sum_{n=0}^\infty\e^{-(n+1)\Re\mu-\phi_n\Re\kappa},
\]
which is convergent whenever $\Re\mu>0$ or $\Re\mu=0$ and $\Re\kappa>0$. This
shows that $D_+$ is analytic in $\cO$ and continuous in
$\overline{\cO}\cap\{(\kappa,\mu)\in \CC^2\mid\Re(\kappa)>0\}$. The remaining
statements are simple consequences of the fact that the above majorant coincides
with the series in~\eqref{DplusDef} for real values of $\kappa$ and $\mu$.

\part{(3)} For the potential~\eqref{Vdef}, Definition~\eqref{eq:wpmDef} gives
\[
v_-(t)=-(t+1)^{-p}, \qquad
v_+(t)=-(t+2)^{-p},
\]
so that
\[
\underline{v}(t)=-\sum_{n=1}^{t+1}n^{-p}
\]
and
\[
\overline{v}(t)=1-\sum_{n=1}^{t+2}n^{-p}.
\]
Invoking Relation~\eqref{ZetaForm}, we can write
\begin{align*}
\underline{v}(t)&=-(\zeta (p)-\zeta(p,t+2)),\\[4pt]
\overline{v}(t)&=-(\zeta(p)-1-\zeta(p,t+3)),
\end{align*}
and, in view of~\eqref{DeltaBracket}, \eqref{DminusDef} and~\eqref{DplusDef},
$$
\log(1+D_-(\kappa,0))\le\Delta(\kappa v)\le\log(1+D_+(\kappa,0)),
$$
and more generally
$$
\log(1+D_-(\kappa,\mu))\le p(\kappa v,\mu)-\mu\le\log(1+D_+(\kappa,\mu)).
$$
According to the discussion in the previous section, the critical point
$\kappa_c (p)$ is determined by the condition $\Delta (\kappa )=0$. Since
the function $\kappa \mapsto \Xi_k (\kappa v, r)$ is obviously non-increasing,
recalling that
$$
\Delta(\kappa v)=\lim_{r\uparrow1}\liminf_{k\to\infty}\;\Xi_k(\kappa,r)^{-k},
$$
we deduce that the function $\kappa\mapsto\Delta(\kappa v)$ is
non-decreasing and it follows that
$$
\kappa_-(p) \le\kappa_c(p)\le\kappa_+(p).
$$
The same analysis applies to the pressure $\fp_{\varphi} (\kappa v)$,
which is the zero of the function $\mu\mapsto\rho(\kappa v,\mu)\e^{\mu}-1$
and is bracketed by the zeros of $D_{\pm}(\kappa,\mu)$.\hfill\qed

\section{Numerical Calculation of the Pressure}

Direct application of~\eqref{TentPressureForm} for numerical evaluation
converges slowly and, due to the exponential growth $|M_t|=2^t$, becomes
impractical. Selecting the subsequence of prime $n$'s largely improves the
situation. For prime $t$, the set $\mathrm{Fix}(\varphi^t)$ decomposes into the
two fixed points $0$, $2/3$ and periodic points of {\sl primitive period} $t$.
Thus
\[
\sum_{x\in\mathrm{Fix}(\varphi^t)}\e^{\kappa S_tv(x)}
=\e^{t\kappa v(0)}+\e^{t\kappa v(2/3)}+\sum_{x\in\mathrm{Per}_t}\e^{\kappa S_tv(x)},
\]
where $\mathrm{Per}_t$ denotes the set of points with primitive period $t$.
Denoting by $\Orb'_t$ the set of periodic orbits of primitive period $t$, we get
\[
\sum_{x\in\mathrm{Per}_t}\e^{\kappa S_tv(x)}=
\sum_{o\in\Orb'_t}\e^{\kappa v(o)},
\]
where
\[
v(o)\coloneq\sum_{x\in o}v(x).
\]
Thus
\[
\fp_{\varphi} (\kappa v)=\lim_{\substack{t\to\infty\\ t\text{ prime}}}
\frac1t\log\left(\e^{t\kappa v(0)}+\e^{t\kappa v(2/3)}
+t\sum_{o\in\Orb'_t}\e^{\kappa v(o)}\right),
\]
and hence
\[
\fp_{\varphi}(\kappa v)=\max\left(\kappa v(0),\kappa v(2/3),
\lim_{\substack{t\to\infty\\ t\text{ prime}}}
\frac1t\log\left(t\sum_{o\in\Orb'_t}\e^{\kappa v(o)}\right)\right).
\]
It turns out that the sequence
\[
\text{prime } t \mapsto p_t (\kappa)
\coloneq\frac1t\log\left(t\sum_{o\in\Orb'_t}\e^{\kappa v(o)}\right)
\]
is rapidly converging; The plots of $p_{11} (\kappa)$ and $p_{13} (\kappa)$
are indistinguishable for $\kappa\le1/2$, and slight departures near
$\kappa_c (p)$ appear for $\kappa>1/2$, see Figure~\ref{Fig:Convergence}.
Consistency with our rigorous lower/upper bounds is clearly displayed in
Figure~\ref{Fig:Consistency}; observe that the curvature is  small, but
increases with $p$.
The critical point $\kappa_c(p)$ is easily computed from this approximation and
Figure~\ref{Fig:kappacritical} shows the result of this calculation.

Going back to the square map of Section~\ref{sec:square}, consider the potential
$$
v(x,y)=\kappa_c(p_1)v_1(x)+\kappa_c(p_2)v_2(y),
$$
where $v_1$ and $v_2$ are of the form~\eqref{Vdef} with respective exponents
$p_1,p_2\in{]}0,1{[}$.

Figure~\ref{Fig:EntropicPressure} shows the entropic pressure $e(\alpha)$
and the associate rate $I(s)$ for $(p_1,p_2)=(0.9,0.1)$,
illustrating the occurrence of phase transitions. The singularities of the entropic
pressure materialize as jumps in its first derivative, located on the vertical lines.
To each of them corresponds a segment of unexposed points of the rate.

\printbibliography[heading=bibintoc,title={References}]

\newpage

\begin{figure}
\includegraphics[width=1.1\textheight,angle=90]{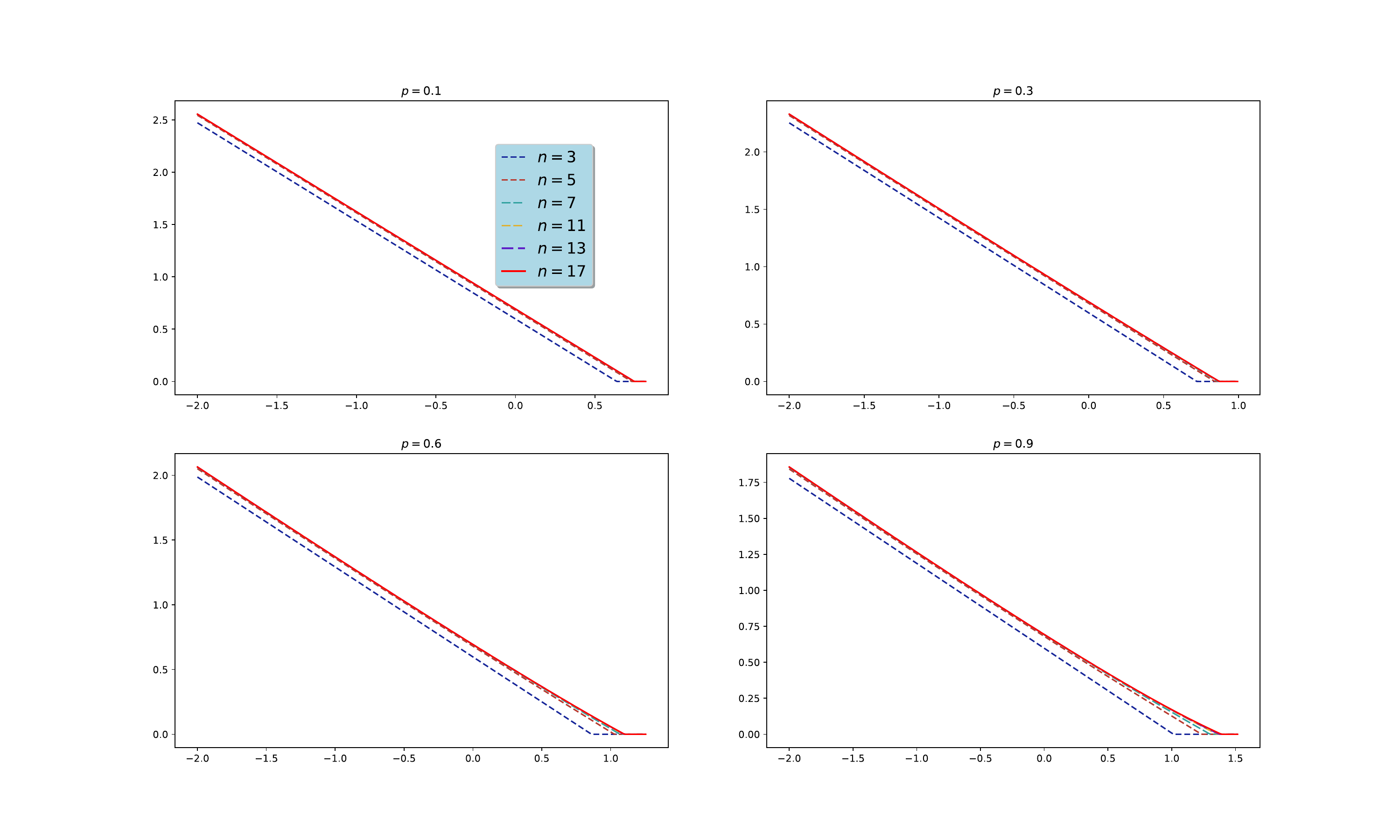}
\caption{The sequence $\{p_n(\kappa)\,|\,n=3,5,7,11,13,17\}$ for some values of
the exponent~$p$.}
\label{Fig:Convergence}
\end{figure}
\begin{figure}
\includegraphics[width=1.1\textheight,angle=90]{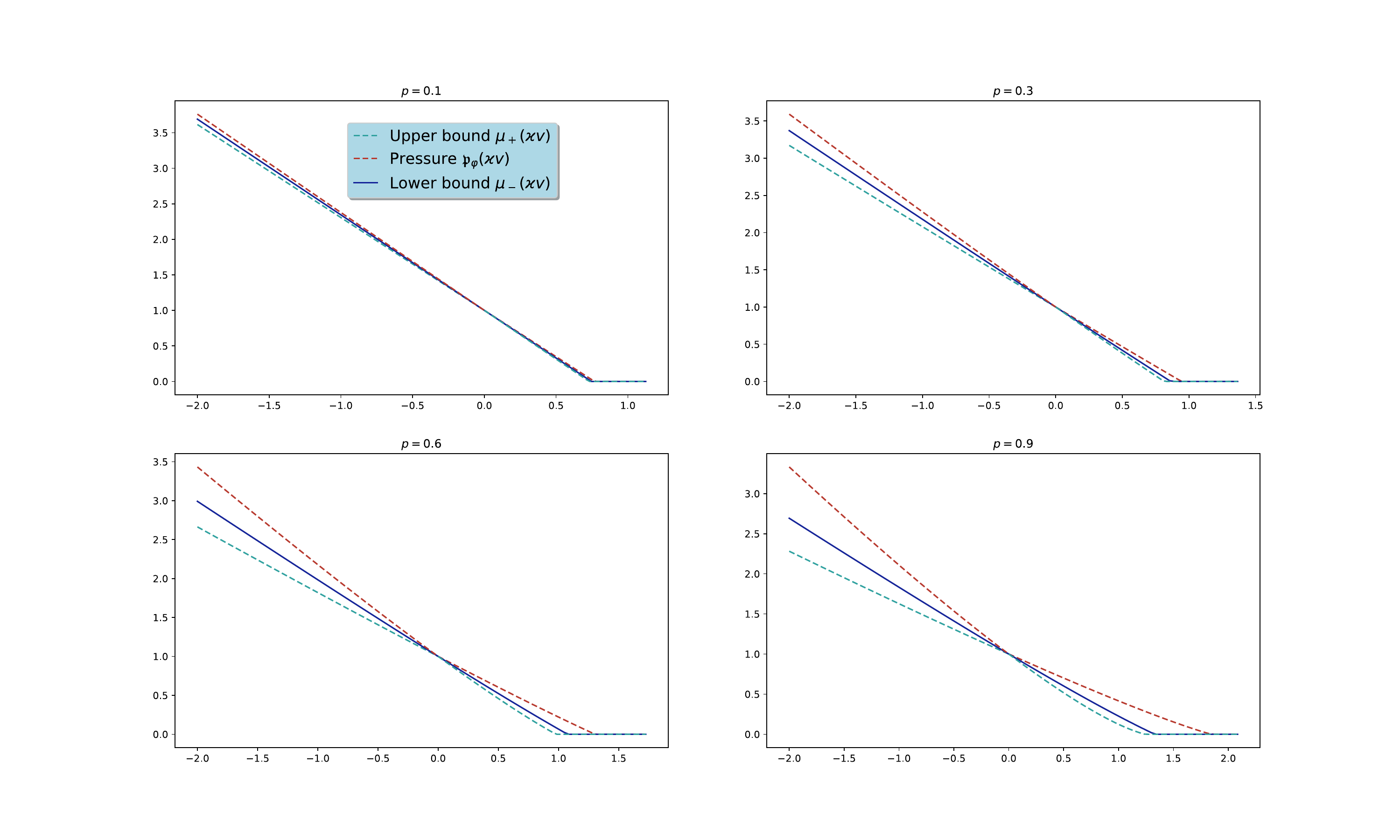}
\caption{The approximation of the pressure $\fp_\varphi(\kappa v)$ from
$p_{23}(\kappa)$ (solid) and the lower/upper bounds $\mu_\pm(\kappa)$ (dashed)
for some values of the exponent $p$.}
\label{Fig:Consistency}
\end{figure}

\begin{figure}
\begin{center}
\includegraphics[width=0.8\textwidth]{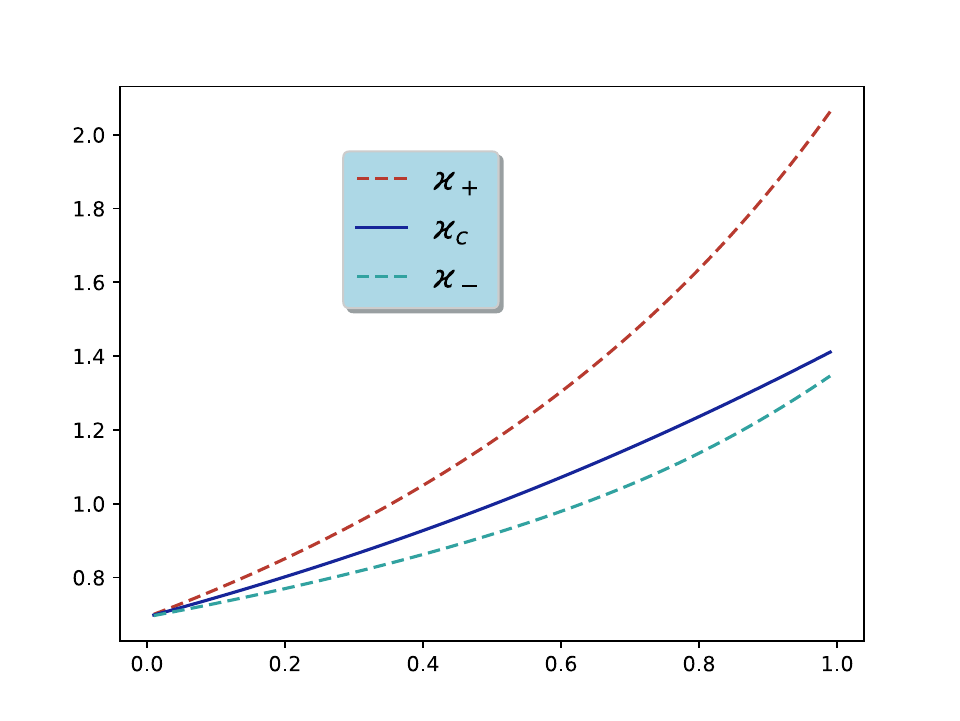}
\end{center}
\caption{The critical point $\kappa_c(p)$ computed from $p_{23}(\kappa)$ (solid) and the lower/upper bounds $\kappa_\pm(p)$ (dashed).}
\label{Fig:kappacritical}
\end{figure}

\begin{figure}
\begin{center}
\includegraphics[trim=5.0cm 1cm 0.0cm 1cm,width=1.3\textwidth]{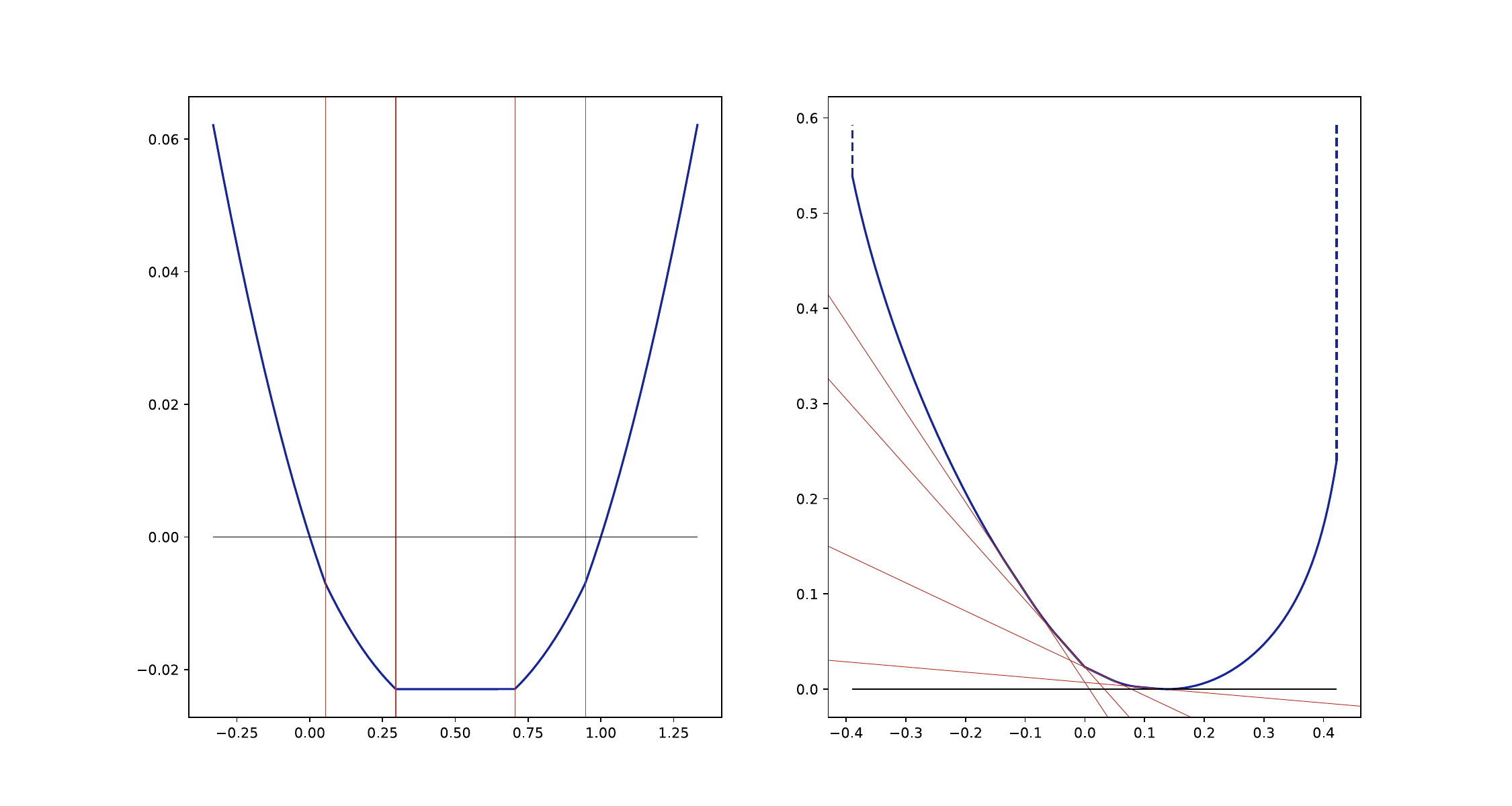}
\hspace{3cm}
\end{center}
\caption{The entropic pressure and the rate for $(p_1,p_2)=(0.9,0.1)$ (see main text for details).}
\label{Fig:EntropicPressure}
\end{figure}


\chapter*{Epilogue}

\addcontentsline{toc}{chapter}{Epilogue}

The panorama of FRs and FTs presented in this  short monograph arose through the
attempts of mathematical physicists to unravel the general structure behind the
respective  numerical and theoretical physics discoveries of the 1990s, and to
test the proposed structure on various  models of physical interest. The
emerging mathematical problems are in the field of  Large Deviation Theory. They
are  typically difficult and poorly understood. This should not come as a
surprise, since the rigorous proofs of FRs and FTs amount to the justification
of the Second Law in a given setting.

Boltzmann's ergodic hypothesis provides a historical parallel to our topic.
Roughly speaking, it states that the time and ensemble averages are the same,
and emerged in Boltzmann's attempts to understand the foundations of statistical
mechanics. It has led to the birth of ergodic theory as a mathematical
discipline. FRs and FTs deal with fluctuations that accompany  time-reversal
symmetry breaking  in non-equilibrium statistical mechanics.  Unlike in the case
of Boltzmann's ergodic hypothesis, the mathematical discipline to which they
relate, the theory of large deviations, was well-developed in early 1990s at the
time of the first discoveries, and has provided the natural structural setting
for their study. It remains to be seen whether the novel mathematical problems
brought by FRs and FTs will lead to  major new mathematical developments.

The follow-up monograph~\cite{Cuneo2025x} is primarily addressed to a  pure
mathematics audience. Just like ergodicity, FRs and FTs can be formulated
mathematically in great generality, and studied without any reference to
underlying physics. However, the existing Large Deviation Theory technology is
far from sufficient to address FRs and FTs in most models of physical interest.
The monograph~\cite{Cuneo2025x}  is focused  on the state-of-the-art
mathematical  techniques related to this study.

The two monographs are complementary to each other and can be read
independently. Neither is encyclopedic nor even a comprehensive review of the
subject, and perhaps at this point such a treatise cannot be written
anymore.\footnote{Among important missing topics, we mention the derivation  of  the
linear response theory (Green--Kubo formulas, Onsager reciprocity relation,
Fluctuation--Dissipation theorem) from FRs and FTs, see~\cite{Jaksic2011} and
references therein, and a discussion of  extensions of FRs and FTs to the
quantum domain,
see~\cite{Benoist2024e,Benoist2024f,Benoist2025a,Benoist2025b,Jaksic2010b} and
references therein.} We do hope, though, that they may provide a blueprint for
future studies of FRs and FTs by both mathematical physicists and
mathematicians.

\printbibliography[heading=bibintoc,title={References}]

\backmatter

\Extrachap{Index of Notations}

\subsection*{Abbreviations}

\runinhead{FR} fluctuation relation
\runinhead{FT} fluctuation theorem
\runinhead{LDP} large deviation principle
\runinhead{NESS} nonequilibrium steady state
\runinhead{SDE} stochastic differential equation

\subsection*{Assumptions}

\runinhead{(C)}  $\varphi$ is continuous, page~\pageref{C}
\runinhead{(H)} $\varphi$ is a homeomorphism, page~\pageref{H}
\runinhead{(USCE)}  upper semicontinuity of entropy, page~\pageref{usce}
\runinhead{(S)} specification property, page~\pageref{s}
\runinhead{(WPS)}  weak periodic specification property, page~\pageref{wps}
\runinhead{(ED)} the set $\cE_\varphi(M)$ is entropy-dense in $\cP_\varphi(M)$, page~\pageref{ed}
\runinhead{(PAP)}  periodic approximation of pressure, page~\pageref{pap}
\runinhead{(Sym)} reversal hypothesis, page~\pageref{reversal}

\subsection*{Sets}

\subruninhead{$\NN$} the set of natural numbers
\subruninhead{$\ZZ$} the set of integers
\subruninhead{$\llbracket a,b\rrbracket$} the set of integers in $[a,b]$
\subruninhead{$\ZZ_+$} the set of non-negative integers
\subruninhead{$\RR$} the set of real numbers
\subruninhead{$\RR_+$} the set of non-negative real numbers
\subruninhead{$\CC$} the set of complex numbers
\subruninhead{$C(X)$} space of continuous functions $f:X\to\RR$
\subruninhead{$C_b(X)$} subspace of bounded $f\in C(X)$, with the supremum norm~$\|\cdot\|$
\subruninhead{$\cP(X)$} set of Borel probability measures on $X$
\subruninhead{$\cP_\varphi(X)$} subset of $\varphi$-invariant elements of $\cP(X)$ (for $\varphi:X\to X$)
\subruninhead{$\cE_\varphi(X)$} subset of ergodic elements of $\cP_\varphi(X)$
\subruninhead{$(\Omega_\lambda,\fF_\lambda)_{\lambda\in\cL}$} a directed family of measurable spaces
\subruninhead{$\Fix(\varphi)$} set of fixed points of the map $\varphi$

\subsection*{Maps}

\subruninhead{$\PP,\wP,\QQ,\widehat\QQ$} probability measures
\subruninhead{$(\PP_\lambda,\wP_\lambda)$} a pair of probability measures on $(\Omega_\lambda,\fF_\lambda)$,
\subruninhead{$S(\PP)$} Shannon entropy of $\PP$, Equ.~\eqref{equ:ShannonEnt}
\subruninhead{$\Ent(\PP\,|\,\QQ)$} relative entropy, Equ.~\eqref{equ:EntDef}
\subruninhead{$\Ent_\alpha(\PP\,|\,\QQ)$} Rényi relative entropy, Equ.~\eqref{equ:EntalphaDef}
\subruninhead{$\sigma_\lambda$} entropy production observable, Equ.~\eqref{Eq:sigma_lambda}
\subruninhead{$\ep_\lambda$} entropy production, Equ.~\eqref{equ:epDef}
\subruninhead{$e_\lambda(\alpha)$} cumulant generating function of $-\sigma_\lambda$, Equ.~\eqref{Eq:e_alpha_def}
\subruninhead{$e(\alpha)$} entropic pressure, Equ.~\eqref{Eq:ealphaDef}
\subruninhead{$\sigma$} phase space contraction rate, Equ.~\eqref{equ:sigmadef}
\subruninhead{$\mu_t^x$} empirical measures, Equ.\eqref{equ:mutDef}

\subruninhead{$h_\mathrm{Top}(\varphi)$}  topological entropy of a map~$\varphi$
\subruninhead{$h_\varphi(\QQ)$} Kolmogorov--Sinai entropy of $\QQ$ w.r.t.\;$\varphi$

\subruninhead{$\fp_\varphi(v)$}  topological pressure of continuous potential $v$ w.r.t.\;$\varphi$, Equ.~\eqref{PressureDef}

\subsection*{Misc Symbols}

\subruninhead{$\partial^\pm$} left/right derivative, page~\pageref{eq:derlim}
\subruninhead{$\mathring{X}/\overline X$} the interior/closure of a set $X$
\subruninhead{$\QQ\ll\PP$} absolute continuity of $\QQ$ w.r.t.\;$\PP$
\subruninhead{$\frac{\d\QQ}{\d\PP}$} Radon--Nikodym derivative
\subruninhead{$\langle\,\argdot\,,\,\argdot\,\rangle$} Euclidean inner product on $\RR^d$
\subruninhead{$\Dom(f)$} essential domain of a convex function $f$
\subruninhead{$d(\,\argdot\,,\,\argdot\,)$} metric on a Polish space
\subruninhead{$\PP\otimes\QQ$} the product measure
\subruninhead{$\langle f,\PP\rangle$} the integral of $f$ w.r.t.\;$\PP$
\subruninhead{$|X|$} cardinality of the set $X$
\subruninhead{$1_X$} indicator function of the set $X$
\subruninhead{$\{\,\argdot\,,\,\argdot\,\}$} Poisson bracket

\printindex
\end{document}